\documentclass[11pt,letter]{article}
\usepackage{natbib}

\pdfoutput=1 

\usepackage{fullpage}
\usepackage{times}
\usepackage{soul}
\usepackage{url}
\usepackage[utf8]{inputenc}
\usepackage{color}
\usepackage{amsmath,amssymb,amsthm}
\usepackage{algorithm,empheq}
\usepackage{algpseudocode}
\algtext*{EndIf}
\algtext*{EndWhile}
\algtext*{EndFor}

\usepackage{libertine}

\usepackage[a4paper, total={170mm,257mm}, left=20mm, top=20mm]{geometry}

\newlength\mylen

\newcommand{\OPT}{\ensuremath{\mathrm{OPT}}}

\newcommand{\np}{\ensuremath{\mathrm{NP}}}

\newcommand{\MNSW}{\ensuremath{\mathsf{MNSW}}}
\newcommand{\MESW}{\ensuremath{\mathsf{MESW}}}
\newcommand{\MPSW}{\ensuremath{\mathsf{PMSW}}}

\newcommand{\PSW}{\operatorname{PSW}}
\newcommand{\ESW}{\operatorname{ESW}}
\newcommand{\NSW}{\operatorname{NSW}}

\newtheorem{theorem}{Theorem}

\newtheorem{proposition}{Proposition}
\newtheorem{lemma}{Lemma}

\newtheorem{remark}{Remark}
\newtheorem{claim}{Claim}
\newtheorem{corollary}{Corollary}

\title{Maximizing $p$-Mean Social Welfare in the High-Multiplicity Setting: Few Agent Types and Few Item Types}
\author{Trung Thanh Nguyen\thanks{DataOptLab, National Economics University, Hanoi 11616, Vietnam} \and Khaled Elbassioni\thanks{Khalifa University of Science and Technology, Abu Dhabi 127788, UAE}}
\date{\today}

\begin{document}

\maketitle

\begin{abstract}
The $p$-mean welfare objective unifies several classical social welfare criteria
for the allocation of indivisible goods. We study its maximization under
additive nonnegative utilities when item and agent multiplicities are
encoded in binary. For every fixed finite rational $p<1$, $p\neq0$,
we show that the problem is NP-hard with only two item types. Nash
welfare ($p=0$) and egalitarian welfare ($p=-\infty$) are NP-hard with
three item types. These results hold both for computing an optimal
allocation and for rational-threshold decision, and the utilities and
threshold can be required to be positive integers. We also show that maximizing \(p\)-mean social welfare is strongly NP-hard with one agent type and an unrestricted number of item types, for every fixed finite rational \(p<1\) and for \(p=-\infty\). A quantitative gap in this reduction
rules out an FPTAS in the latter setting unless
$\mathsf{P}=\mathsf{NP}$. On the positive side, for a fixed number of
item types and an arbitrary number of agent types, we give an FPTAS for
every fixed $p\in\mathbb Q\cup\{-\infty\}$. Its running time is
polynomial in the compact input length and in $1/\varepsilon$, and it
returns a compressed allocation. For a fixed number of agent types and
an unrestricted number of item types, we give a PTAS for every fixed
finite rational $p<1$, also in the fully compact model. We further
give explicit compact-model proofs of the classical exact allocation
algorithms for one item type, and for egalitarian welfare with two
item types. These results essentially settle the complexity and approximability of \(p\)-mean welfare maximization with few item types
and/or few agent types, leaving only the exact complexity of Nash welfare maximization with two item types unresolved in the small-item-type classification.
\end{abstract}

\newpage
\tableofcontents

\clearpage
\newpage

\section{Introduction}
Fair division studies how a set of resources can be allocated among agents with possibly conflicting preferences so that the resulting allocation is fair in a well-defined sense. Much of the modern literature, motivated by applications such as course allocation, inheritance division, spectrum and bandwidth allocation, and the assignment of computational resources, has focused on the setting of indivisible goods, where each item must be allocated in its entirety to a single agent \cite{lan-rot:b-2nd-edition:economics-and-computation-fair-division-of-indivisible-goods,bou-che-mau:b:handbook-comsoc-fair-allocation-of-indivisible-goods}. Prominent fairness notions include {\em envy-freeness} and its relaxations EF1 and EFX, {\em proportionality} and its relaxation PROP1, and the {\em maximin share} (MMS) guarantee; we refer to the survey \cite{AmanatidisABFLMVW23} and the references therein for formal definitions and an overview of the extensive literature on these notions. Another natural fairness objective is to maximize the \emph{egalitarian social welfare} (ESW), defined as the minimum utility attained by any agent, thereby directly promoting the welfare of the least well-off agent.

Alongside fairness, efficiency is a fundamental consideration in resource allocation. A desirable allocation should make effective use of the available resources and, ideally, be Pareto optimal or optimize an appropriate social welfare objective. Two of the most prominent objectives are the \emph{utilitarian social welfare} (USW), defined as the sum of the agents' utilities, and the \emph{Nash social welfare} (NSW), defined as the geometric mean of their utilities. While USW emphasizes aggregate efficiency and ESW focuses on the least well-off agent, NSW is celebrated for achieving a compelling balance between fairness and efficiency \cite{CaragiannisKMP016}. This interplay naturally motivates the study of welfare objectives that provide a unified framework encompassing these classical criteria. One such framework is given by the family of \emph{generalized (or $p$-mean) welfare} functions, formally defined in Section~\ref{sec:notation}. By varying $p\in[-\infty,1]$, this family smoothly interpolates between the three classical welfare objectives: $p=1$ gives USW divided by $n$ and hence the same maximization problem, $p=0$ gives NSW, and $p=-\infty$ gives ESW. Moreover, decreasing $p$ places progressively greater emphasis on agents receiving lower utility, thereby providing a natural parameter for controlling the trade-off between aggregate efficiency and distributive fairness. Consequently, $p$-mean welfare has attracted growing research interest over the last  decade~\cite{BarmanBKS20,ChaudhuryGM21,barman2020uniform,GargMV22,BarmanVerma2022,BarmanKhanMaiti2022,Viswanathan25,Procaccia_2024,Eckart_EC-2024,krishna2025pmean,huang2025long,Suksompong26}.

Exact $p$-mean welfare maximization is computationally intractable in general, motivating the study of restrictions on the heterogeneity of items and agents. In many applications, the population of agents and the catalogue of items consist of a modest number of distinct \emph{types}, each replicated many times. For example, course seats, shifts, machine time, and standardized resources can have many interchangeable copies, while agents sharing a role may have identical valuations. Such an instance can be described by the numbers of agent types $K$ and item types $q$, the per-type utility values, and the corresponding multiplicities. Because the total numbers $n$ of agents and $m$ of items can be exponential in the length of this description, running time polynomial in $n$ and $m$ need not be polynomial in the compact input length. This distinction motivates the study of parameterized and high-multiplicity fair division; see, e.g.,~\cite{Woeginger97,ngu-rot:j:how-to-decrease-envy,bli-bre-nie:c:complexity-of-efficient-and-envy-free-resource-allocation-few-agents,BarmanKV18a,Anari_2018,jan-maa:j:eptas-max-min,NguyenRothe2023AIJ,GargMV22,NER24}. We also note that a substantial body of work has studied fair division in the high-multiplicity setting with respect to fairness notions such as EF, EF1, EFX, and MMS; see,  e.g.~~\cite{azizchore23,Mahara23,Chaudhury24,DavidiukEtAl2026AAAI,GargMurhekarQin2024IJCAI,MaEtAl2025ISAAC,Prakash_EC25,Garg_EFX_25}.

We study how the numbers of item and agent types determine the complexity and approximability of $p$-mean welfare maximization in this compact model. Our principal contributions are as follows.

\begin{itemize}
    \item We establish $\np$-hardness with two item types for every fixed
    finite rational $p<1$, $p\neq0$, and with three item types for Nash
    and egalitarian welfare. The reductions use positive rational
    utilities and rational welfare thresholds; both can also be made
    positive integers. They prove hardness of exact allocation and of
    rational-threshold decision already for explicitly listed agents.
    We also establish
    strong $\np$-hardness with one agent type
    for every fixed finite rational $p<1$ and for $p=-\infty$, with an
    unrestricted number of item types. A quantitative gap in this
    reduction rules out an FPTAS in the latter setting unless
    $\mathsf{P}=\mathsf{NP}$.
    \item For a fixed number of item types and an arbitrary number of
    agent types, with both multiplicities encoded in binary, we give an
    FPTAS for every fixed $p\in\mathbb Q\cup\{-\infty\}$. The running
    time is polynomial in the compact input length and in
    $1/\varepsilon$, and the output is a compressed allocation.
    \item For a fixed number of agent types and an unrestricted number
    of item types, with both multiplicities encoded in binary, we give a
    PTAS for every fixed finite rational $p<1$.
\end{itemize}

For completeness, we also give explicit proofs in the compact model of
two exact allocation results following from classical methods. With
one item type, an optimal allocation is computable in polynomial time
for every fixed $p\in\mathbb Q\cup\{-\infty\}$. The nonzero finite
welfare parameters $p$ below one use separable concave resource allocation~\cite{GalilM79};
the remaining cases admit direct constructions. With two item types,
egalitarian welfare is exactly solvable in polynomial time by the
two-job-type methods of McCormick, Smallwood, and
Spieksma~\cite{McCormickSmallwoodSpieksma2001}. Both results allow an
arbitrary number of agent types and binary multiplicities.

Table~\ref{tab:small-type-complexity} summarizes the resulting complexity
of exact allocation. The
polynomial-time entries concern constructing an optimal allocation;
they do not automatically imply polynomial-time comparison of its
welfare with an arbitrary rational threshold. This distinction is
necessary even with one item type, as explained in
Remark~\ref{rem:one-item-type-threshold}.

\begin{table}[t]
\centering
\small
\begin{tabular}{c|c|c|c}
 $q$ & Fixed rational $p<1$, $p\neq0$ & Nash: $p=0$ & Egalitarian: $p=-\infty$\\
 \hline
 $1$ & Polynomial time~\cite{GalilM79} & Polynomial time [here] & Polynomial time~\cite{McCormickSmallwoodSpieksma2001}\\
 $2$ & $\mathsf{NP}$-hard [here] & Not settled here & Polynomial time~\cite{McCormickSmallwoodSpieksma2001}\\
 $3$ & $\mathsf{NP}$-hard [here] & $\mathsf{NP}$-hard [here] & $\mathsf{NP}$-hard [here]\\
 $4$ & $\mathsf{NP}$-hard [here] & $\mathsf{NP}$-hard [here] & $\mathsf{NP}$-hard [here]
\end{tabular}
\caption{Status established here for computing an exact optimal allocation,
with at most $q$ item types, an unrestricted number of agent types, and binary item and agent
multiplicities. The hardness entries also
hold for rational-threshold decision. ``Not settled here'' denotes the case not resolved by the results in this paper and thus remains open. References identify the
classical methods underlying the polynomial-time results; their compact
implementations are proved here. The marker [here] indicates a proof in
this paper, including the elementary one-item-type Nash result, which
is not claimed as new.}
\label{tab:small-type-complexity}
\end{table}

The polynomial-time entries follow from
Theorems~\ref{thm:one-item-type-exact} and
\ref{thm:two-item-types-egalitarian}; the hardness entries follow from
Theorem~\ref{np-hard-pmsw}. All fixed values of $q$ in the table admit
the compact FPTAS of Theorem~\ref{thm:fixed-item-types-ptas}.
The utilitarian case $p=1$ is polynomial-time solvable for every $q$
by assigning each item type to an agent who values it most.
Consequently, the item-type boundaries for exact allocation are sharp
for every fixed finite rational $p<1$, $p\neq0$, and for egalitarian
welfare: hardness starts at two and three types, respectively. Among
the cases in the table, only two-type Nash welfare remains unsettled
by our results.

\section{Related Work}

The study of fair and efficient allocations has yielded a substantial body
of algorithmic work. We express approximation guarantees as fractions of the
optimal welfare; an approximation loss of $\alpha\geq1$ corresponds to a
welfare guarantee of $1/\alpha$. For $\MNSW$, Nguyen and
Rothe~\cite{ngu-rot:j:how-to-decrease-envy} gave a $1/m$ welfare guarantee,
and Cole and Gkatzelis~\cite{col-gka:c:approximating-nash} obtained the first
constant-factor guarantee. Barman, Krishnamurthy, and
Vaish~\cite{bar-kri-vai:c:finding-fair-and-efficient-allocations} gave an
approximation loss of $1.45$, corresponding to a welfare guarantee of
$1/1.45\approx0.68966$. On the hardness side, Garg, Hoefer, and
Mehlhorn~\cite{GargHoeferMehlhorn2024} proved that achieving any fixed welfare
guarantee greater than $\sqrt{7/8}$ is $\np$-hard.

For general $\MESW$, Chakrabarty, Chuzhoy, and
Khanna~\cite{ChakrabartyCK09} gave an approximation loss of
$\widetilde O(m^\epsilon)$ in time $m^{O(1/\epsilon)}$ for
$\epsilon\geq8\log\log m/\log m$, where $m$ is the number of items.
In particular, every fixed $\epsilon>0$ gives a polynomial-time welfare
guarantee of $\widetilde\Omega(m^{-\epsilon})$, whereas taking $\epsilon$
near the stated lower bound yields a polylogarithmic approximation in
quasipolynomial time. Bateni, Charikar, and
Guruswami~\cite{bat-cha-gur:c:maxmin-allocation} obtained related guarantees
for special cases. Bez\'{a}kov\'{a} and
Dani~\cite{bez-dan:j:allocating-indivisible-goods} showed that achieving any
fixed welfare guarantee greater than $1/2$ is $\np$-hard.

For $p$-mean welfare, Barman et al.~\cite{BarmanBKS20} proved that, for
every fixed $p\in(0,1)$, computing a $1/(m-n+1)$-approximation is
$\np$-hard even under additive valuations. Their reduction sets $m=n$, so
the required factor is one and the result establishes hardness of exact
optimization. They also gave an $\Omega(1/n)$ welfare guarantee for all
$p\leq1$, including subadditive valuations. Chaudhury et
al.~\cite{ChaudhuryGM21} obtained a comparable welfare guarantee together
with approximate EFX.

A separate line of work studies instances with few agent or item types.
For a fixed number of machine types, Jansen and
Maack~\cite{jan-maa:j:eptas-max-min} gave an EPTAS for maximizing the minimum
load, corresponding to $\ESW$, extending earlier work for identical
agents~\cite{Woeginger97}. Their input lists jobs individually, so this
running-time guarantee should be distinguished from polynomial time in
binary item multiplicities. In the high-multiplicity setting, Nguyen,
Elbassioni, and Rothe~\cite{NER24} gave a PTAS for $\NSW$ with a fixed
number of agent types. They also established $\np$-hardness for $\ESW$ and
$\NSW$ with seven item types, using a modification of a scheduling
reduction of Kouteck\'{y} and Zink~\cite{KouteckyZ20}, and gave a PTAS for
$\ESW$ with two item types. Our hardness results in
Section~\ref{sec:hardness} reduce the item-type bound to three for both
$\NSW$ and $\ESW$, and establish hardness with two item types for every
fixed finite rational $p<1$, $p\neq0$.

Exact algorithms are available for some smaller type restrictions.
Galil and Megiddo~\cite{GalilM79} reduce separable concave integer
resource allocation to selection in implicit sorted arrays. This underlies
the nonzero finite-order case $p<1$ of the one-item-type algorithm in
Section~\ref{subsec:one-item-type-exact}; we account explicitly for
binary agent multiplicities and exact comparisons of fixed-degree
algebraic numbers. McCormick, Smallwood, and
Spieksma~\cite{McCormickSmallwoodSpieksma2001} give polynomial-time
algorithms for high-multiplicity scheduling with two job types,
including unrelated machines and compact machine classes in
Theorems~5.1--5.2, and covering variants in their Section~6. The
two-item-type egalitarian result in
Section~\ref{sec:two-item-types-egalitarian} is a consequence of these
methods. We provide a direct integer-hull proof and compressed
reconstruction for the welfare setting. Related exact allocation results
with both numbers of types fixed were established by Nguyen and
Rothe~\cite{NguyenRothe2023AIJ} using integer-cone methods;
Appendix~\ref{sec:exact-egalitarian-appendix} gives a direct application
of the underlying theorem of Goemans and Rothvo{\ss}~\cite{GoemansR20}.

Garg et al.~\cite{GargMV22} gave a PTAS for $p$-mean welfare with symmetric
agents having identical additive valuations. Our fixed-agent-type result
extends this setting to a fixed number of valuation types in the compact
input model. Their broader study also treats asymmetric Nash welfare;
arbitrary agent weights are outside the model considered here. For
identical subadditive valuations, Barman and
Sundaram~\cite{barman2020uniform} obtained constant-factor guarantees for
$p$-mean welfare in the demand-oracle model. Viswanathan and Zick~\cite{Viswanathan25} studied a general
framework for fair allocation under matroid-rank valuations, including
welfare objectives and connections to strategyproofness.

\section{Allocation Setting and Notation}
\label{sec:notation}

For every nonnegative integer $r$, write $[r]:=\{1,\ldots,r\}$, with
$[0]:=\varnothing$.  There are $n\geq1$ agents and $q\geq1$ item types.  Item
type $j\in[q]$ has multiplicity
$m_j\in\mathbb Z_{\geq0}$, and we put
$m:=\sum_{j=1}^q m_j,$ $m_{\max}:=\max_{j\in[q]} m_j$, and $\boldsymbol m:=(m_1,\ldots,m_q)$. 
We use the terms \emph{item} and \emph{good} interchangeably.
When agents are grouped into $K\geq1$ types, $A_1,\ldots,A_K$ form a partition of
$[n]$.  Type $k\in[K]$ contains $n_k:=|A_k|$ agents, and
$n=\sum_{k=1}^K n_k$, $\boldsymbol n:=(n_1,\ldots,n_K)$.

Utilities are additive and nonnegative.  In a setting with arbitrary agents,
$u_{i,j}\in\mathbb Q_{\geq0}$ denotes the value that agent $i$ assigns to one
item of type $j$.  For every count vector
$c=(c_1,\ldots,c_q)\in\mathbb R_{\geq0}^q$, define $u_i(c):=\sum_{j=1}^q u_{i,j}c_j.$ 
If $i\in A_k$, the item values depend only on the agent type, and we write
them as $u_{k,j}$.  In this case, define $u_k(c):=\sum_{j=1}^q u_{k,j}c_j.$ 
A bundle is an integral count vector
$x_i=(x_{i,1},\ldots,x_{i,q})\in\mathbb Z_{\geq0}^q$; its utility is
$u_i(x_i)$, or equivalently $u_k(x_i)$ when $i\in A_k$.
For a bundle assignment $x=(x_1,\ldots,x_n)$, we also write
$u_i(x):=u_i(x_i)$.  The assignment is an \emph{allocation} if $ \sum_{i=1}^n x_{i,j}=m_j$ for every $j\in[q].$ 
It is a \emph{partial allocation} if the preceding equalities are replaced by
inequalities.  Since all utilities are nonnegative, every partial allocation
can be completed without decreasing any agent's utility or the resulting
welfare.

For finite $p>0$, and also for $-\infty<p<0$ when every agent has positive utility, the
$p$-mean social welfare of a bundle assignment $x$ is
\[
 \PSW_p(x)
 :=
 \left(\frac1n\sum_{i=1}^n u_i(x_i)^p\right)^{1/p}.
\]
We refer to \(p\) interchangeably as the \emph{welfare parameter} or the \emph{order} of the generalized mean.
The limiting cases used in this paper are
\[
 \NSW(x)=\PSW_0(x):=\left(\prod_{i=1}^n u_i(x_i)\right)^{1/n}
 \qquad\text{and}\qquad
 \ESW(x)=\PSW_{-\infty}(x):=\min_{i\in[n]}u_i(x_i).
\]
For $p\leq0$, we use the standard convention that $\PSW_p(x)=0$ whenever
some agent has utility zero.  We denote the optimal welfare by ${\OPT}_p
 :=
 \max\{\PSW_p(x):x\text{ is an allocation}\}.$

For every finite $p$, define the transformed utility function
\[
 \phi_p(t)
 :=
 \begin{cases}
  t^p,&p>0,\\
  \log t,&p=0,\\
  -t^p,&p<0.
 \end{cases}
\]
For $p>0$, its domain is $[0,\infty)$ and $\phi_p(0):=0$; for $p\leq0$,
it is used only on $(0,\infty)$.  Throughout, $\log$ denotes the natural
logarithm.  For any bundle assignment on which every term is defined, put
\[
 \Phi_p(x):=\sum_{i=1}^n\phi_p(u_i(x_i)).
\]
For finite $p$, maximizing $\PSW_p$ is equivalent to maximizing $\Phi_p$: for $p>0$ this
follows from the monotonicity of
$z\mapsto z^{1/p}$, for $p=0$ from the monotonicity of the logarithm, and
for $p<0$ from the fact that $z\mapsto z^{1/p}$ is decreasing.  When
$p\leq0$, this equivalence is applied only to positive-utility bundle
assignments.

For finite $p\neq0$ and any bundle assignment $x=(x_1,\ldots,x_n)$ for which
the powers are defined, put
\[
 Q_p(x):=\sum_{i=1}^n u_i(x_i)^p.
\]
For positive-utility bundle assignments, put
\[
 \Lambda(x):=\sum_{i=1}^n\log u_i(x_i).
\]
Whenever the displayed quantities are defined, they satisfy
\[
 \PSW_p(x)=\left(\frac{Q_p(x)}n\right)^{1/p}
 \quad (p\in\mathbb R\setminus\{0\}),
 \qquad
 \PSW_0(x)=\exp\left(\frac{\Lambda(x)}n\right).
\]

The item multiplicities $m_j$ are encoded in binary throughout. All
algorithmic results use the fully compact representation: the input lists
the $K$ agent types, their utility vectors, and their multiplicities
$n_k$ in binary. Explicitly listed agents are included as the special
case $K=n$ and $n_k=1$ for every $k\in[K]$. In general, both $n$ and
$m$ may be exponential in the input length. A \emph{compressed allocation}
lists, for every agent type, each distinct bundle vector together with
the number of agents receiving that bundle. The algorithms in
Sections~\ref{sec:fixed-item-types} and~\ref{sec:constant-agent-types},
and the exact algorithm recalled in
Appendix~\ref{sec:exact-egalitarian-appendix}, return allocations in
this representation.

We study the problem $\MPSW$ of maximizing $\PSW_p$. By an exact
allocation algorithm, we mean an algorithm that returns an optimal
allocation in compressed form. The rational-threshold decision problem
asks whether $\OPT_p\geq\tau$ for a given rational $\tau$. Since
welfare values may be irrational, these two tasks require separate
complexity statements; see Remark~\ref{rem:one-item-type-threshold}.
A polynomial-time approximation scheme (PTAS) returns, for every fixed
$\varepsilon\in(0,1)$,
an allocation of welfare at least $(1-\varepsilon)\OPT_p$ in polynomial time.
An efficient PTAS (EPTAS) has a running time whose polynomial degree is
independent of $\varepsilon$, and a fully polynomial-time approximation
scheme (FPTAS) runs in time polynomial also in $1/\varepsilon$.
The welfare parameter $p$ is fixed in each algorithmic theorem, and constants
may depend on $p$.

When a rational accuracy parameter is part of the input, the running time
includes the time needed to read its encoding. We may first replace the
requested accuracy $\varepsilon_{\mathrm{req}}\in(0,1)$ by a dyadic power
$\varepsilon=2^{-h}$, with $h\in\mathbb Z_{\geq1}$ chosen so that
$\varepsilon_{\mathrm{req}}/2<\varepsilon\leq\varepsilon_{\mathrm{req}}$.
The new accuracy has binary encoding length $O(1+\log(1/\varepsilon))$,
and a $(1-\varepsilon)$ guarantee implies the requested guarantee.
For the algorithm descriptions and their analyses, $\varepsilon$ denotes
this preprocessed accuracy; the factor of at most two does not affect the
PTAS or FPTAS running-time classifications.

\medskip
\noindent\textbf{Rational approximation of the transformed objectives.}
The algorithms below replace values of $\phi_p$ by rational coefficients.
The following two lemmas give a common construction and quantify the
resulting error, including when multiplicities are encoded in binary. The standard (but technical) proof of the first lemma is given in Appendix~\ref{sec:tech-lemmas}.

\begin{lemma}[Directed rational approximation]
\label{lem:directed-rational-approximation}
Fix a finite rational $p$. Given a rational $\eta\in(0,1)$ and a rational
$v$ in the domain of $\phi_p$, one can compute a rational number
$\widetilde\phi_{p,\eta}(v)$ satisfying
\begin{align}
 \frac{v^p}{1+\eta}
 &\leq\widetilde\phi_{p,\eta}(v)\leq v^p,
 &&p>0,\label{eq:prelim-rational-positive}\\
 \log v-\log(1+\eta)
 &\leq\widetilde\phi_{0,\eta}(v)\leq\log v,
 &&p=0,\label{eq:prelim-rational-logarithmic}\\
 v^p
 &\leq-\widetilde\phi_{p,\eta}(v)\leq(1+\eta)v^p,
 &&p<0.\label{eq:prelim-rational-negative}
\end{align}
For $p>0$ and $v=0$, the output is zero. The running time and output
encoding length are polynomial in the binary encoding lengths of $v$
and $\eta$, with constants depending on $p$.
\end{lemma}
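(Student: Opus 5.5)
Write $p=a/b$ with $a\in\mathbb Z\setminus\{0\}$ and $b\in\mathbb Z_{\geq1}$ coprime; both are constants depending only on $p$. The idea is to approximate the algebraic number $v^p$ (or $\log v$) from the required side using exact rational arithmetic. We compute a one-sided rational approximation of a $b$-th root by bisection, and a one-sided rational approximation of the logarithm by truncated series with explicit remainder bounds. In every case the number of refinement steps is $O(\log(1/\eta)+\mathrm{size}(v))$. We first reduce the multiplicative error $(1+\eta)$ to an additive error relative to a crude a priori bound on the target.

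\emph{Case $p\neq0$.} For $v=0$ (only allowed when $p>0$) we output $0$. Otherwise let $w:=v^{|a|}$, a rational of encoding length $O(|a|\cdot\mathrm{size}(v))$, computed exactly. We need $r:=w^{1/b}$, which satisfies $2^{-L}\leq r\leq 2^{L}$ for some $L$ polynomial in $\mathrm{size}(v)$. Run bisection on $[2^{-L},2^{L}]$. At each step we test $y^b\leq w$ exactly in rational arithmetic; since $b$ is fixed, this is polynomial time. Stop after $O(L+\log(1/\eta))$ steps, which gives dyadic $\ell\leq r\leq h$ with $h-\ell\leq \eta' 2^{-L}\leq\eta' r$, where $\eta':=\eta/4$. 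Then $h/\ell\leq 1/(1-\eta')\leq 1+\eta/2$.

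We now choose the output according to the sign of $p$:
\begin{itemize}
\item If $p>0$, then $v^p=r$, and we output $\ell$. This gives $\ell\leq v^p$ and $\ell\geq r/(1+\eta)$.
\item If $p<0$, then $v^p=1/r$, and we output $-1/\ell$. This gives $1/\ell\geq v^p$ and $1/\ell\leq (1+\eta)/r$, exactly as required by \eqref{eq:prelim-rational-negative}.
\end{itemize}
The dyadic endpoints have $O(L+\log(1/\eta))$ bits, so the output encoding length is polynomial.

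\emph{Case $p=0$.} Here $v>0$ is rational, and we need $\widetilde\phi\in[\log v-\log(1+\eta),\log v]$. It suffices to approximate $\log v$ from below within additive error $\delta:=\eta/2$, since $\log(1+\eta)\geq\eta/2$ for $\eta\in(0,1)$. Write $v=2^{k}s$ with $k\in\mathbb Z$, $|k|$ polynomial in $\mathrm{size}(v)$, and $s\in[1,2)$ rational. Then $\log v=k\log2+\log s$.

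For $\log s$ we use $\log s=2\operatorname{artanh}(z)$ with $z=(s-1)/(s+1)\in[0,1/3)$. The series $2\sum_{j\geq0}z^{2j+1}/(2j+1)$ has nonnegative terms and geometric tail at most $z^{2N+1}/(1-z^2)\cdot 2$. Truncating after $N=O(\log(1/\delta))$ terms therefore gives a lower bound with additive error at most $\delta/2$.

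Similarly $\log2=2\operatorname{artanh}(1/3)$, and we truncate its series to a lower bound $\lambda$ with $\log2-\lambda\leq\delta/(2(|k|+1))$. The number of terms needed is $O(\log(|k|/\delta))$. If $k\geq0$, the value $k\lambda$ is a lower bound for $k\log 2$. If $k<0$, we need an upper bound for $\log2$, namely $\lambda$ plus the explicit tail bound.

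Summing these pieces gives a rational lower bound within additive error $\delta$. All the partial sums involve polynomially many terms of polynomial bit length.

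The main obstacle is purely bookkeeping. The approximations must stay on the correct side; in particular the sign of $k$ determines which bound for $\log 2$ is used. The bit lengths must also stay polynomial, which holds because every quantity is a sum of polynomially many rationals with polynomially bounded numerators and denominators.
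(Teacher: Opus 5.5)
Your proposal is correct and follows essentially the same route as the paper: bisection with exact rational comparisons of $y^b$ against a fixed rational power of $v$ when $p\neq0$, and for $p=0$ the decomposition $v=2^k s$ with the $\operatorname{artanh}$ series and a choice of endpoint for $\log 2$ that depends on the sign of $k$. The only differences are cosmetic. For $p<0$ you bracket $v^{|p|}$ and output $-1/\ell$, whereas the paper brackets $v^{p}$ directly and outputs the negated upper endpoint. You also truncate the series for $\log s$ and $\log 2$ at separate levels, where the paper uses a single $N$.
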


\begin{lemma}[Weighted objectives and comparisons]
\label{lem:weighted-rational-comparison}
Fix a finite rational $p$ and a rational $\eta\in(0,1)$. Let
$A=((\lambda_i,v_i))_{i=1}^s$ be a finite list of nonnegative rational
weights $\lambda_i$ and rational utilities $v_i$ in the domain of
$\phi_p$. Extend the objective notation to such lists by putting
\[
 W(A):=\sum_{i=1}^s\lambda_i,
 \qquad
 Q_p(A):=\sum_{i=1}^s\lambda_i v_i^p\quad(p\neq0),
 \qquad
 \Lambda(A):=\sum_{i=1}^s\lambda_i\log v_i\quad(p=0).
\]
Using the coefficients from
Lemma~\ref{lem:directed-rational-approximation}, define $\widetilde\Phi_{p,\eta}(A)
 :=\sum_{i=1}^s\lambda_i\widetilde\phi_{p,\eta}(v_i).$ 
This rational sum can be computed in time polynomial in the total
encoding length of the list and of $\eta$, and it satisfies
\begin{align}
 \frac{Q_p(A)}{1+\eta}
 &\leq\widetilde\Phi_{p,\eta}(A)\leq Q_p(A),
 &&p>0,\label{eq:prelim-weighted-positive}\\
 \Lambda(A)-W(A)\log(1+\eta)
 &\leq\widetilde\Phi_{0,\eta}(A)\leq\Lambda(A),
 &&p=0,\label{eq:prelim-weighted-logarithmic}\\
 Q_p(A)
 &\leq-\widetilde\Phi_{p,\eta}(A)\leq(1+\eta)Q_p(A),
 &&p<0.\label{eq:prelim-weighted-negative}
\end{align}
If $B$ is another such list and
$\widetilde\Phi_{p,\eta}(B)\geq\widetilde\Phi_{p,\eta}(A)$, then
\begin{equation}
\label{eq:prelim-objective-comparison}
 \begin{cases}
  Q_p(B)\geq Q_p(A)/(1+\eta),&p>0,\\
  \Lambda(B)\geq\Lambda(A)-W(A)\log(1+\eta),&p=0,\\
  Q_p(B)\leq(1+\eta)Q_p(A),&p<0.
 \end{cases}
\end{equation}
\end{lemma}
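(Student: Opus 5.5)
The plan is to apply Lemma~\ref{lem:directed-rational-approximation} term by term with the same $\eta$. The weighted inequalities then come from multiplying by the nonnegative weights $\lambda_i$ and summing. The comparison statement comes from chaining the two sides of these bounds for $A$ and $B$.

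\textbf{Computation.} For each $i\in[s]$, compute $\widetilde\phi_{p,\eta}(v_i)$ by Lemma~\ref{lem:directed-rational-approximation}. This takes time polynomial in the encoding lengths of $v_i$ and $\eta$, and the output has polynomial encoding length. The products $\lambda_i\widetilde\phi_{p,\eta}(v_i)$ and their sum over $s$ terms are rational arithmetic on polynomially many numbers of polynomial length. Since $s$ is at most the list length, the whole sum is computed in polynomial time.

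\textbf{Weighted bounds.} Multiply each per-term inequality by $\lambda_i\geq0$ and sum over $i$.
\begin{itemize}
\item For $p>0$, this gives $Q_p(A)/(1+\eta)\leq\widetilde\Phi_{p,\eta}(A)\leq Q_p(A)$ directly.
\item For $p=0$, the term $\lambda_i\log(1+\eta)$ sums to $W(A)\log(1+\eta)$, which is \eqref{eq:prelim-weighted-logarithmic}.
\item For $p<0$, multiply \eqref{eq:prelim-rational-negative} by $\lambda_i$ and sum. This gives $Q_p(A)\leq-\widetilde\Phi_{p,\eta}(A)\leq(1+\eta)Q_p(A)$.
\end{itemize}
The nonnegativity of the weights is the only property needed, since it preserves the direction of each inequality.

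\textbf{Comparison.} Assume $\widetilde\Phi_{p,\eta}(B)\geq\widetilde\Phi_{p,\eta}(A)$.
\begin{itemize}
\item For $p>0$: $Q_p(B)\geq\widetilde\Phi(B)\geq\widetilde\Phi(A)\geq Q_p(A)/(1+\eta)$.
\item For $p=0$: $\Lambda(B)\geq\widetilde\Phi(B)\geq\widetilde\Phi(A)\geq\Lambda(A)-W(A)\log(1+\eta)$.
\item For $p<0$: negating the assumption gives $-\widetilde\Phi(B)\leq-\widetilde\Phi(A)$. Hence $Q_p(B)\leq-\widetilde\Phi(B)\leq-\widetilde\Phi(A)\leq(1+\eta)Q_p(A)$.
\end{itemize}

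There is no substantive obstacle. The only care needed is in the bookkeeping: keep the sign conventions straight for $p<0$, and in the $p=0$ case make sure the error term is charged to $A$'s total weight $W(A)$, not $B$'s. In each comparison we use the upper bound for $B$ and the lower bound for $A$, which is exactly why only $W(A)$ appears.
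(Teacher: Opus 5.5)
Your proposal is correct and follows essentially the same route as the paper's proof. Both apply Lemma~\ref{lem:directed-rational-approximation} term by term, multiply each bound by its nonnegative weight and sum. Both then obtain each comparison by chaining the upper bound for $B$ with the lower bound for $A$, and note that the weighted rational sum has polynomial encoding length.
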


\begin{proof}
Multiplying each bound in
Lemma~\ref{lem:directed-rational-approximation} by its nonnegative
weight and summing gives
\eqref{eq:prelim-weighted-positive}--\eqref{eq:prelim-weighted-negative}.
For $p>0$, the assumed comparison consequently gives
\[
 Q_p(B)\geq\widetilde\Phi_{p,\eta}(B)
 \geq\widetilde\Phi_{p,\eta}(A)\geq\frac{Q_p(A)}{1+\eta}.
\]
For $p=0$, it gives $ \Lambda(B)\geq\widetilde\Phi_{0,\eta}(B)
 \geq\widetilde\Phi_{0,\eta}(A)
 \geq\Lambda(A)-W(A)\log(1+\eta).$ 
For $p<0$, negating the assumed comparison gives $ Q_p(B)\leq-\widetilde\Phi_{p,\eta}(B)
 \leq-\widetilde\Phi_{p,\eta}(A)\leq(1+\eta)Q_p(A).$

These are the three claims in~\eqref{eq:prelim-objective-comparison}.
Each coefficient has polynomial encoding length by the preceding
lemma. Multiplying by a binary-encoded rational weight and summing
over the explicitly listed entries preserves polynomial encoding
length and takes polynomial time. In particular, integer weights need
not be expanded into individual copies, and the numerical size of
$W(A)$ does not introduce an additional running-time factor.
\end{proof}

\section{Hardness Results}
\label{sec:hardness}

We establish hardness with two item types for every fixed nonzero
rational order below one, and with three item types for Nash and
egalitarian welfare. The two-type reduction makes each agent choose
between two prescribed bundle patterns. The three-type reductions use
separate constructions for the Nash and egalitarian objectives.
We then consider the complementary restriction of one agent type,
where the number of item types is unrestricted.

\begin{theorem}[Hardness with few item types]
\label{np-hard-pmsw}
For every fixed rational $p<1$ with $p\neq0$, deciding whether
$\OPT_p\geq\tau$ is $\mathsf{NP}$-hard with two item types.
For $p=0$ and $p=-\infty$, the same problem is $\mathsf{NP}$-hard
with three item types. These assertions hold for explicitly listed
agents, strictly positive integer item values, positive integer item
multiplicities, and a positive integer threshold, all encoded in binary.
Computing an exact optimal allocation is $\mathsf{NP}$-hard under
the same respective restrictions.
\end{theorem}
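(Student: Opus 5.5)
The plan is to reduce from a binary-encoded number problem. For two item types I would use \textsc{Partition}: positive integers $a_1,\ldots,a_n$ with $\sum_i a_i=2B$. For three item types I would use \textsc{Partition} or, if the third type needs it, a variant with cardinality constraints such as \textsc{Equal-Cardinality Partition}. In each reduction there is one agent per number $a_i$. The utilities are built so that in any allocation reaching the threshold, agent $i$ receives one of two prescribed bundles, $P_i^{(0)}$ or $P_i^{(1)}$. The item multiplicities are chosen so that the supply constraints hold exactly when the chosen patterns encode a valid partition. The last step converts the rational threshold into a positive integer, either by scaling or by restating the threshold in terms of $\Phi_p$.

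\emph{Two item types, $p\neq 0$ finite, $p<1$.} For agent $i$, take patterns of the form $P_i^{(0)}=(a_i,\ L_i)$ and $P_i^{(1)}=(0,\ L_i+c\,a_i)$, so that choosing pattern $1$ for agent $i$ uses $a_i$ fewer items of type one and $c a_i$ more items of type two. Set $m_1=\sum_i a_i - B$. I will pick the values $u_{i,1},u_{i,2}$ so that both patterns give agent $i$ the same utility $U$ (for example $U=1$ after scaling), with every other bundle affordable under the supply giving strictly less utility to some agent. The intended argument runs as follows. Since $\phi_p$ is strictly concave for $p<1$, the optimum of $\sum_i\phi_p(u_i)$ subject to the linear budget is attained only when all agents sit at the common utility level $U$. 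The threshold is $\tau=U$, or the corresponding $\Phi_p$-value $n\phi_p(U)$. Reaching $\tau$ then forces every agent to utility exactly $U$ with no slack in the supply. Because the type-two multiplicities $L_i$ are tuned to the exact supply, each agent must take one of the two extreme points of $\{x\in\mathbb Z_{\ge0}^2: u_i(x)= U\}$ that are consistent with the supply. I expect this to be the main obstacle. With two item types, the line $u_i(x)=U$ can contain intermediate integer points. To rule them out I would choose the coordinates of the two patterns so that the direction vector $(a_i,-c a_i)$, divided by $\gcd$, has integer points only at its endpoints inside the feasible box. An alternative is to pad the problem with a large number of auxiliary agents whose forced consumption pins the totals down. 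Strict concavity of $\phi_p$, via a strictly convex combination argument for each $p$ separately, should turn any deviation into a strict loss. Rationality of the threshold comes out automatically, because the threshold equals a common utility value.

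\emph{Three item types, $p=0$ and $p=-\infty$.} For $\ESW$ the same skeleton works with the threshold $\min_i u_i\ge U$. Here strict concavity is no longer available, so a third item type plays the role of a "budget-tightness" certificate. I will add a third type that every agent values, whose supply exactly matches the total needed. An agent then reaches $U$ only by taking a bundle from the prescribed pair, because any other bundle overspends on type three. For $\NSW$ I will use the product $\prod_i u_i\ge U^n$, which is an integer threshold. By AM--GM, the product attains $U^n$ exactly when the total utility is maximal and all agents are equal, and the three-type design makes total utility equal $nU$ only on the encoded partitions. Hardness of computing an optimal allocation follows in both regimes because the optimal value can be checked against $\tau$ in polynomial time from the allocation: compare the integers $\prod_i u_i$ and $U^n$, or compare the rationals $\sum_i u_i^p$ and $n\tau^p$ after clearing denominators, where $p=r/s$ is fixed and one raises both sides to the $s$-th power.
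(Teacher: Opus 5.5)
\textbf{There is a genuine gap in your two-type reduction, and it is structural.}

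Your two patterns differ by $(-a_i,\,c\,a_i)$ with a common $c$. The common $c$ is needed so that type-two consumption depends only on $\sum_{i\in I}a_i$. You also require both patterns to give utility $U$. Together these force $u_{i,1}=c\,u_{i,2}$ for every agent, so $u_i(x)=w_i(c\,x_1+x_2)$ and all agents share one exchange rate.

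Two failures follow.
\begin{enumerate}
\item The level set $\{u_i=U\}$ has direction $(1,-c)$ for every agent, whatever $a_i$ is. Since $c\,a_i\in\mathbb Z$, its intermediate lattice points $(t,\,L_i+c(a_i-t))$ can be avoided only if every $a_i$ equals the denominator of $c$. Once those points are available, any admissible $t_i\in[0,a_i]$ with $\sum_it_i=m_1$ meets both supplies exactly and gives every agent utility $U$. So no-instances are mapped to instances that reach the threshold.
\item Strict concavity of $\phi_p$ does not make an optimum equalize heterogeneous agents. Put $s_i:=c\,x_{i,1}+x_{i,2}$; then $\sum_is_i$ is fixed, and the continuous maximizer of $\sum_i\phi_p(w_is_i)$ has $u_i\propto w_i^{1/(1-p)}$. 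Hence $n\phi_p(U)$ is in general not an upper bound unless all $w_i$ agree. With identical agents, the question becomes an easy one-dimensional splitting problem.
\end{enumerate}

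\textbf{The missing idea is a price (Lagrangian) certificate whose two tight bundles have different utilities.} For $0<p<1$ the paper sets $u_i(x)=a_i^{(1-p)/p}x_1+H_i^{1/p}x_2$ with $H_i=pC+(1-p)a_i$. It proves (Lemmas~\ref{lem:two-type-positive-support} and~\ref{lem:two-type-negative-support}) that on integer bundles
\[
u_i(x)^p/p-x_1-Cx_2\le\tfrac{1-p}{p}\,a_i .
\]
Equality holds exactly at $(a_i,0)$ and $(0,1)$, whose utilities $a_i^{1/p}\neq H_i^{1/p}$ differ, and every other integer bundle is below by an inverse-polynomial gap. Summing over agents gives $\sum_iu_i(x_i)^p/p\le\frac{1-p}{p}\sum_ia_i+m_1+Cm_2$ for every allocation, with equality iff the allocation is canonical. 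With $a_i\in\{Q+b_i,Q\}$, $m_1=tQ+T$ and $m_2=t$, this encodes \textsc{Subset Sum}.

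This bound uses integrality of the second coordinate essentially: it fails for $y\in(0,1)$. That is why a concavity argument alone cannot supply it. The coefficients are irrational, so they must be rounded within the gap, and a rational threshold must be placed strictly inside the gap (Lemma~\ref{lem:two-type-rational-between-powers}). Rationality of $\tau$ is not automatic.

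\textbf{Nash welfare with three types.} AM--GM gives $\prod_iu_i\le U^n$ only if $\sum_iu_i\le nU$ for every allocation, and you give no reason for this with heterogeneous valuations. For identical agents it does hold, but then exact equal splitting with fixed $q$ is polynomial by Proposition~\ref{prop:known-exact-egalitarian}. The paper's mechanism is different:
\begin{itemize}
\item each $u_i$ is bounded by $(L+R)z_i$ for a common load $z_i=f_i/S+g_i/R$ with $\sum_iz_i=n$;
\item $\gcd(R,S)=1$ ensures that $z_i=1$ leaves only $(S,0)$ and $(0,R)$;
\item second-family bundle value $Le^{-a_i/L}$, together with rewarded items and a Taylor gap, forces $\sum_{i\in I}a_i=B$.
\end{itemize}

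\textbf{Egalitarian welfare with three types.} Two item types are polynomial (Theorem~\ref{thm:two-item-types-egalitarian}), so your two-type skeleton cannot carry the hardness. The phrase ``overspends on type three'' does not describe a mechanism. What works in the paper is an agent-independent price $Mx+(M+1)y+(2M+3)z$, with utilities equal to price$/h_i$ up to an $\varepsilon$-perturbation. The total price is $\sum_ih_i$, which forces every bundle to cost exactly $h_i$. A congruence modulo $M$ then leaves exactly two bundles per agent.

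\textbf{Exact-allocation hardness.} For $p=r/s\neq0$, raising $\sum_iu_i^{r/s}$ to the $s$-th power does not clear the radicals; this is the \textsc{Sqrt-Sum} obstacle of Remark~\ref{rem:one-item-type-threshold}. Instead, read the source solution off the integer structure of the returned allocation, as the paper does.
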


\begin{proof}
The two-type assertion follows from
Theorem~\ref{thm:two-types-finite-nonzero-hardness} and
Corollary~\ref{cor:two-types-finite-integer-hardness}.
The three-type assertions follow from
Theorems~\ref{thm:three-types-nash-hardness} and
\ref{thm:three-types-egalitarian-hardness}, respectively.
\end{proof}

We use \textsc{Subset Sum} for the two-type construction. For the
three-type constructions, we record the following restricted form
of \textsc{Partition}.

\begin{lemma}[Hardness of~\textsc{Balanced-Equal-Cardinality-Partition}]
\label{lem:becp}
It is $\mathsf{NP}$-hard to decide, for positive integers
$a_1,\ldots,a_n$ with both $n$ and $A:=\sum_i a_i$ even, whether
some set $I\subseteq[n]$ satisfies
$|I|=n/2$ and $\sum_{i\in I}a_i=A/2$.
\end{lemma}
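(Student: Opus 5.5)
We reduce from the classical \textsc{Partition} problem: given positive integers $b_1,\ldots,b_r$ with even sum $B$, decide whether some $J\subseteq[r]$ satisfies $\sum_{i\in J}b_i=B/2$. The idea is to pad the instance with dummy elements, so that every solution of the original instance turns into a half-cardinality solution, and then to shift all weights by a large constant, so that any set reaching half the total sum must also have exactly half the cardinality. The construction has size polynomial in the binary encoding of the input, which gives the claimed reduction.

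\textbf{Padding step.} Put $n:=2r$ and keep $a_i:=b_i$ for $i\in[r]$. Add $r$ dummy elements of weight zero. A partition $J$ of the original instance, together with $r-|J|$ dummies, gives a set of size exactly $r=n/2$ with sum $B/2$. Conversely, removing the dummies from a half-cardinality solution leaves a \textsc{Partition} solution. Since zeros are not allowed in the target problem, we do not stop here but proceed to the shift step.

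\textbf{Shift step.} Let $M:=B+1$ and define
\[
a_i':=a_i+M\quad(i\in[n]),
\qquad\text{so that}\qquad
A'=B+nM .
\]
Here $n$ is even, and $A'$ is even because $B$ is even and $n$ is even. All the $a_i'$ are positive. For a set $I\subseteq[n]$ we have
\[
\sum_{i\in I}a_i'=|I|M+\sum_{i\in I}a_i .
\]
Suppose this equals $A'/2=B/2+rM$. Reducing modulo $M$, and using $0\le\sum_{i\in I}a_i\le B<M$ and $0\le B/2<M$, we get
\[
\sum_{i\in I}a_i=B/2
\qquad\text{and hence}\qquad
|I|=r=n/2 .
\]
Conversely, any $I$ with $|I|=n/2$ and $\sum_{i\in I}a_i=B/2$ satisfies $\sum_{i\in I}a_i'=A'/2$. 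So the shifted instance is a yes-instance of the restricted problem exactly when the padded instance has a half-cardinality solution, which in turn holds exactly when the \textsc{Partition} instance is a yes-instance.

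\textbf{Main obstacle.} The only delicate point is the modular argument forcing the cardinality: it requires $M$ to exceed every achievable partial sum of the unshifted weights, including the contribution of the dummies. Choosing $M=B+1$ handles this. Since $M$ and all shifted weights have binary length polynomial in the input, the reduction runs in polynomial time.
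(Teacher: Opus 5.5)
Your reduction is correct and is essentially the paper's: you pad the \textsc{Partition} instance with $r$ dummy elements of weight zero and then apply an affine shift that makes all weights positive, so that a half-cardinality, half-sum set corresponds exactly to a partition of the original numbers. The only difference is the choice of shift. The paper uses $b_i\mapsto 2b_i+1$ with dummies equal to $1$, which keeps the total $2B+2t$ even without assuming $B$ even and relies on the explicit constraint $|I|=n/2$. Your shift by $M=B+1$ needs the harmless even-sum restriction of \textsc{Partition}, but it yields the slightly stronger fact that the sum condition alone already forces $|I|=n/2$.
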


\begin{proof}
Given a \textsc{Partition} instance $b_1,\ldots,b_t$ of positive
integers with sum $B$, form the $2t$ numbers
\[
 2b_1+1,\ldots,2b_t+1,
 \underbrace{1,\ldots,1}_{t\text{ copies}}.
\]
Their total is $2B+2t$. A set of exactly $t$ of these numbers has sum
$B+t$ if and only if the original numbers whose shifted copies it
contains have sum $B/2$. Any selected original subset can be padded
to cardinality $t$ using the added ones. The construction is polynomial
and its number of entries and total sum are even.
\end{proof}

\subsection{Two item types and nonzero finite orders}
\label{subsec:two-types-finite-hardness}

We first prove two inequalities whose equality cases are the two
bundle patterns used in the reduction. Their quantitative gaps allow
us to replace the ideal utility coefficients by rationals while
preserving the reduction. The proofs of the following technical lemmas are provided in Appendix~\ref{sec:tech-lemmas}.

\begin{lemma}\label{lem:two-type-positive-support}
Fix $p\in(0,1)$. Let $a,C,Z$ be positive integers such that
$a<C\leq Z$, and put
\[
 \gamma:=\frac{1-p}{p},\qquad
 H:=pC+(1-p)a,\qquad
 \alpha:=a^{\gamma},\qquad \beta:=H^{1/p}.
\]
For all integers $x,y\in\{0,\ldots,Z\}$,
\begin{equation}\label{eq:two-type-positive-support}
 F(x,y):=\frac{(\alpha x+\beta y)^p}{p}-x-Cy
 \leq \gamma a.
\end{equation}
Equality holds if and only if $(x,y)=(a,0)$ or $(x,y)=(0,1)$.
For every other such pair,
\begin{equation}\label{eq:two-type-positive-gap}
 F(x,y)\leq\gamma a-\delta_+,
 \qquad
 \delta_+:=\frac{\min\{p,1-p\}}{2Z^2}.
\end{equation}
\end{lemma}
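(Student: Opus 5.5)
The plan is to view $F$ through the one-dimensional concave function $g(s):=s^p/p$ on $[0,\infty)$. We have $F(x,y)=g(\alpha x+\beta y)-x-Cy$, and $F$ is jointly concave on $\mathbb R_{\geq0}^2$. The exponents are tuned so that $\gamma p/(1-p)=1$ and $\alpha^p=a^{1-p}$, which gives two identities. First, the one-variable function $f(x):=F(x,0)=a^{1-p}x^p/p-x$ has derivative $f'(x)=(a/x)^{1-p}-1$, vanishing exactly at $x=a$, with $f(a)=a/p-a=\gamma a$. Second, $\beta^p=H$, so $F(0,1)=H/p-C=\gamma a$. These verify the two equality cases. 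The rest of the argument splits into the cases $y=0$ and $y\geq1$.

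\emph{Case $y\geq1$.} For fixed $y$, I compute the partial derivative at $x=0$: $\partial_xF(0,y)=\alpha(\beta y)^{p-1}-1\leq\alpha\beta^{p-1}-1=(a/H)^{\gamma}-1$. Since $C>a$, we have $H=a+p(C-a)>a$, so this derivative is negative, and by concavity $F(x,y)\leq F(0,y)+x\bigl((a/H)^\gamma-1\bigr)$. Next, $y\mapsto F(0,y)=g(\beta y)-Cy$ is concave with derivative $\beta^py^{p-1}-C$, which equals $H-C=-(1-p)(C-a)\leq-(1-p)$ at $y=1$. Concavity then gives $F(0,y)\leq\gamma a-(1-p)(y-1)$ for $y\geq1$. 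Combining the two bounds, every pair with $y\geq2$ has gap at least $1-p$. For $y=1$ and $x\geq1$ the gap is at least $1-(a/H)^\gamma$. With $t:=1-a/H\geq p/H\geq p/Z$ (using $H\leq C\leq Z$), I would bound $(1-t)^\gamma\leq1-\gamma t$ via Bernoulli when $\gamma\leq1$, and $(1-t)^\gamma\leq1-t$ when $\gamma\geq1$. Either way the gap is at least $\min\{p,1-p\}/Z$.

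\emph{Case $y=0$.} Here $f$ is strictly concave with real maximizer $x=a$. It suffices to bound $f(a)-f(a\pm1)$ from below, since concavity handles integers farther from $a$ (including $x=0$). I would write $f(a)-f(a+1)=\int_a^{a+1}\bigl(1-(a/t)^{1-p}\bigr)\,dt$ and $f(a)-f(a-1)=\int_{a-1}^{a}\bigl((a/t)^{1-p}-1\bigr)\,dt$. On each integrand I would apply the same Bernoulli-type estimates as above: $1-(a/t)^{1-p}\geq(1-p)(t-a)/t$ for $t\geq a$, and the analogous bound below $a$. Integrating over an interval of length one, with $t\leq a+1\leq Z+1$, gives a gap of order $(1-p)/Z$. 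Since $Z\geq2$ here, this should dominate $\min\{p,1-p\}/(2Z^2)$.

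The step I expect to be the main obstacle is the constant bookkeeping in these two quantitative gaps, namely the $y=0$ neighbours $a\pm1$ and the pair $(1,1)$. The issue is making the Bernoulli-type inequalities hold uniformly for all $p\in(0,1)$, so that the clean bound $\delta_+=\min\{p,1-p\}/(2Z^2)$ comes out without case-specific constants. I would first try to prove a per-unit-step gap of $\min\{p,1-p\}/(2Z)$ in each case and then weaken it to the stated $\delta_+$. Strictness of all the other inequalities shows that equality occurs only at $(a,0)$ and $(0,1)$.
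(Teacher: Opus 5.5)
Your proposal is correct and follows the paper's proof essentially step for step. It uses the same split into $y=0$ and $y\geq1$, the same bound $\partial_xF\leq(a/H)^{\gamma}-1<0$ for $y\geq1$, the same $\min\{1,\gamma\}$ Bernoulli estimate giving $\min\{p,1-p\}/Z$ when $x,y\geq1$, and the same loss $(1-p)(y-1)$ at $x=0$, $y\geq2$.

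The only deviation is at $y=0$. There the paper uses the uniform curvature bound $-f''(t)\geq(1-p)/Z^2$ on $[1,Z]$ with Taylor's theorem, and treats $(0,0)$ separately via $\gamma a\geq1-p$. Your monotone reduction to the neighbours $a\pm1$ instead yields the sharper gap $(1-p)/(2Z)$; note that $a+1\leq C\leq Z$, so the $Z+1$ detour is unnecessary. For the bound below $a$, write it as $(a/t)^{1-p}-1\geq(1-p)(a-t)/a$. The literal analogue with denominator $t$ is false, by concavity of $s\mapsto s^{1-p}$.
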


\begin{lemma}\label{lem:two-type-negative-support}
Fix $r>0$. Let $a,C,Z$ be positive integers such that
$a<C\leq Z$ and $C<(1+1/r)a$. Put
\[
 \gamma:=\frac{r+1}{r},\qquad
 H:=(r+1)a-rC,\qquad
 \alpha:=a^{-\gamma},\qquad \beta:=H^{-1/r}.
\]
For all integers $x,y\in\{0,\ldots,Z\}$ with $x+y>0$,
\begin{equation}\label{eq:two-type-negative-support}
 G(x,y):=\frac{(\alpha x+\beta y)^{-r}}r+x+Cy
 \geq\gamma a.
\end{equation}
Equality holds if and only if $(x,y)=(a,0)$ or $(x,y)=(0,1)$.
For every other such pair,
\begin{equation}\label{eq:two-type-negative-gap}
 G(x,y)\geq\gamma a+\delta_-,
 \qquad
 \delta_-:=\frac{\min\{1,r\}}{2Z^{\lceil r\rceil+2}}.
\end{equation}
\end{lemma}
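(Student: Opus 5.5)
The plan is to split the pairs $(x,y)$ by the value of $y$, writing $s:=\alpha x+\beta y>0$ and $f(s):=s^{-r}/r$. The function $f$ is convex and strictly decreasing on $(0,\infty)$, and $G(x,y)=f(\alpha x+\beta y)+x+Cy$ is convex in $(x,y)$ over the reals. I would treat the three regimes $y=0$, $y=1$ and $y\ge 2$ separately. In each regime I first show the real inequality and then extract the quantitative gap from integrality.

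\emph{Case $y=0$.} Here $x\ge1$. Put $g(x):=G(x,0)=\alpha^{-r}x^{-r}/r+x$. Then $g'(x)=1-\alpha^{-r}x^{-r-1}$. Since $\alpha^{-r}=a^{r+1}$, the derivative vanishes exactly at $x=a$. The value there is $g(a)=a/r+a=\gamma a$, and strict convexity gives uniqueness of the minimizer. For the gap at integers $x\ne a$, convexity reduces to $x=a\pm1$. I would use the Taylor bound
\[
g(a\pm1)-g(a)\ \ge\ \tfrac12\min_{|t-a|\le1}g''(t),
\qquad g''(t)=(r+1)a^{r+1}t^{-r-2}.
\]
Taking $t\le a+1\le Z$ gives $g''(t)\ge (r+1)a^{r+1}Z^{-r-2}\ge Z^{-r-2}$. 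This is of the order of $\delta_-$: the exponent $\lceil r\rceil+2\ge r+2$ and $Z\ge1$ make it at least $\min\{1,r\}/(2Z^{\lceil r\rceil+2})$.

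\emph{Case $y=1$.} Put $h(x):=G(x,1)$ for $x\ge0$. First, $h(0)=\beta^{-r}/r+C=H/r+C=\gamma a$ by the definition of $H$. Next, $h$ is convex and
\[
h'(0)=1-\alpha\beta^{-r-1}=1-(H/a)^{\gamma}.
\]
Since $C>a$ we get $H=a-r(C-a)<a$, so $h'(0)>0$. Hence $h$ is increasing on $[0,\infty)$ and its unique minimum over $x\ge0$ is at $x=0$, equal to $\gamma a$. For the gap, $h(x)-h(0)\ge h'(0)\,x\ge 1-(H/a)^{\gamma}$ for integers $x\ge1$. Because $C-a\ge1$ we have $H\le a-r$, and hence
\[
1-(H/a)^{\gamma}\ \ge\ 1-(1-r/a)^{\gamma}\ \ge\ 1-(1-r/a)\ =\ r/a\ \ge\ r/Z,
\]
using $\gamma\ge1$. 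This dominates $\delta_-$. The hypothesis $C<(1+1/r)a$ is used exactly to guarantee $H>0$, so that $\beta$ is defined.

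\emph{Case $y\ge2$.} For fixed $x$, the $y$-derivative of $G$ is $C-\beta s^{-r-1}$. Whenever $y\ge1$ we have $s\ge\beta$, so $\beta s^{-r-1}\le\beta^{-r}=H$. The derivative is therefore at least $C-H=(r+1)(C-a)\ge r+1$. Integrating from $1$ to $y$ and combining with the previous case gives
\[
G(x,y)\ \ge\ G(x,1)+(y-1)(r+1)\ \ge\ \gamma a+r+1 .
\]
This is far more than $\delta_-$.

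Together the three cases give the inequality, the characterization of equality, and the gap. The main obstacle I expect is the bookkeeping of the quantitative constant in the case $y=0$. I would first try to bound the second derivative directly as above, and only fall back to an explicit comparison of $g(a\pm1)$ with $g(a)$ if needed. Everything else reduces to one-variable convexity together with the identities $\alpha^{-r}=a^{r+1}$ and $\beta^{-r}=H$.
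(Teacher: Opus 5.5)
Your proof is correct and follows essentially the paper's argument. It uses the same one-variable convexity analysis built on $\alpha^{-r}=a^{r+1}$ and $\beta^{-r}=H$, the second-derivative Taylor bound of order $Z^{-\lceil r\rceil-2}$ for $y=0$, the $x$-slope bound $1-(H/a)^{\gamma}\geq r/Z$, and the $y$-slope bound $C-H=(r+1)(C-a)$. The only difference is the order of reduction: you bound the $y$-derivative for every $x$ and reduce $y\geq2$ to $y=1$, whereas the paper first reduces each fixed $y\geq1$ to $x=0$. This changes nothing of substance.
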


\begin{lemma}
\label{lem:two-type-rational-between-powers}
Fix a positive integer $s$. Given positive rational numbers
$R_0<R_1$, one can compute a positive rational number $z$ such that $ R_0<z^s<R_1$. 
in time polynomial in the total binary encoding length of $R_0$
and $R_1$. The encoding length of $z$ is also polynomial.
\end{lemma}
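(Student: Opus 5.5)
The plan is to compute an $s$-th root approximation by bisection and then round it to a dyadic rational of controlled length. Put $\rho_0:=R_0^{1/s}$ and $\rho_1:=R_1^{1/s}$, so $0<\rho_0<\rho_1$. It suffices to find a rational $z$ with $\rho_0<z<\rho_1$ and short encoding: then $z^s$ lies strictly between $R_0$ and $R_1$ by strict monotonicity of $t\mapsto t^s$ on $(0,\infty)$. The key steps are a lower bound on the gap $\rho_1-\rho_0$ and an a priori bracket for the two roots.

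\emph{The gap.} Write $L$ for the total binary encoding length of $R_0$ and $R_1$. Every positive rational of length at most $L$ lies in $[2^{-L},2^{L}]$. The difference $R_1-R_0$ is a positive rational whose denominator divides the product of the two denominators, so $R_1-R_0\geq 2^{-2L}$. By the mean value theorem,
\[
 \rho_1-\rho_0=\frac{R_1-R_0}{s\,\xi^{(s-1)/s}}
\]
for some $\xi\in(R_0,R_1)\subseteq(0,2^{L}]$. Hence
\[
 \rho_1-\rho_0\geq \frac{2^{-2L}}{s\,2^{L}}=:g,
\]
and $\log(1/g)=O(L+\log s)$, which is $O(L)$ for fixed $s$.

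\emph{The search.} Choose $N:=\lceil\log_2(2/g)\rceil$, so that $2^{-N}\leq g/2$, and consider the grid $2^{-N}\mathbb Z$. The grid spacing is at most half the gap, so some grid point $z$ satisfies $\rho_0<z<\rho_1$. Concretely, take
\[
 z:=2^{-N}\bigl(\lfloor 2^{N}\rho_0\rfloor+1\bigr).
\]
This $z$ is the smallest grid point exceeding $\rho_0$, so $z\leq\rho_0+2^{-N}<\rho_1$. The integer $k:=\lfloor 2^{N}\rho_0\rfloor$ equals the largest integer with $k^s\leq 2^{Ns}R_0$. We find it by bisection over $[0,2^{N+L}]$, testing each candidate exactly by comparing the integer $k^s\cdot\mathrm{den}(R_0)$ with $2^{Ns}\mathrm{num}(R_0)$. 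Each test involves integers of $O(s(N+L))$ bits, and there are $O(N+L)$ tests. All of this is polynomial in $L$ because $s$ is fixed. The output $z$ has numerator $k+1\leq 2^{N+L}+1$ and denominator $2^N$, giving encoding length $O(L)$.

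The only delicate point I expect is certifying that the gap lower bound is polynomial, which is the mean-value estimate above. With that in hand, the bisection avoids any floating-point reasoning, since every comparison is an exact integer comparison. As a final safeguard, one may verify $R_0<z^s<R_1$ directly by exact rational arithmetic. If $s=1$, the statement is trivial: take $z=(R_0+R_1)/2$.
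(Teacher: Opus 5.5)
Your proof is correct and takes essentially the paper's route: bisection with exact integer comparisons of $s$-th powers, with a mean-value estimate and the bound $R_1-R_0\ge 2^{-O(L)}$ showing that $O_s(L)$ dyadic bits suffice. The difference is only in bookkeeping. The paper approximates the root of the midpoint $(R_0+R_1)/2$ and bounds $|z^s-R|$ using the derivative of $t^s$ on $[0,1+\lceil R_1\rceil]$, whereas you lower-bound the root gap $\rho_1-\rho_0$ and take the first dyadic grid point above $\rho_0$.
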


\begin{theorem}[Hardness with two item types]
\label{thm:two-types-finite-nonzero-hardness}
For every fixed rational $p<1$ with $p\neq0$, computing an
allocation that maximizes $\PSW_p$ is
$\mathsf{NP}$-hard even with two item types, explicitly listed agents,
and strictly positive rational item values. The problem of
deciding whether the optimal $p$-mean welfare is at least a
given positive rational threshold is $\mathsf{NP}$-hard under the same
restrictions. The item multiplicities are encoded in binary.
\end{theorem}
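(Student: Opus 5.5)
We reduce from a cardinality-constrained variant of \textsc{Subset Sum}: given positive integers $b_1,\ldots,b_n$, a target $T$ and a cardinality $k$, decide whether some $I\subseteq[n]$ has $|I|=k$ and $\sum_{i\in I}b_i=T$. This variant is $\mathsf{NP}$-hard by the same padding trick as in Lemma~\ref{lem:becp}. Adding a large common shift $L$ to every $b_i$, and replacing $T$ by $T+kL$, we obtain numbers $a_i$ in a narrow window $(\theta C, C)$ for a suitable integer $C$. For $p>0$ any $\theta<1$ suffices. For $p=-r<0$ we choose $\theta$ with $1/\theta<1+1/r$, so that $C<(1+1/r)a_i$ as Lemma~\ref{lem:two-type-negative-support} requires.

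\textbf{Construction.} Each number $a_i$ becomes an explicitly listed agent $i$. Agent $i$ gives each type-1 item value $\alpha_i=a_i^{\gamma}$ (respectively $a_i^{-\gamma}$ for $p<0$). It gives each type-2 item value $\beta_i=H_i^{1/p}$ (respectively $H_i^{-1/r}$), where $H_i$ is formed from $a_i$ and the common $C$ as in Lemma~\ref{lem:two-type-positive-support} (respectively Lemma~\ref{lem:two-type-negative-support}). We set $m_2:=k$ and $m_1:=\sum_i a_i-T$. We take $Z:=\max\{C,m_1,m_2\}$, which is polynomially encoded and bounds every $x_{i,1},x_{i,2}$ in any allocation.

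\textbf{Case $p\in(0,1)$: Lagrangian argument.} The common prices $1$ and $C$ play the role of Lagrange multipliers. Summing~\eqref{eq:two-type-positive-support} over the agents and using $\sum_i x_{i,1}=m_1$ and $\sum_i x_{i,2}=m_2$ gives
\[
Q_p(x)/p\leq \sum_i\gamma a_i+m_1+Cm_2=:V .
\]
Equality holds iff every agent receives exactly $(a_i,0)$ or $(0,1)$. Such an allocation exists precisely when the chosen set $I$ of agents receiving $(0,1)$ satisfies $|I|=k$ and $\sum_{i\notin I}a_i=m_1$, i.e.\ $\sum_{i\in I}a_i=T$. Otherwise~\eqref{eq:two-type-positive-gap} yields $Q_p(x)/p\leq V-\delta_+$. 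The case $p<0$ is symmetric, with $Q_p$ minimized and bounded below by $V$. Here we must make sure every agent gets a nonempty bundle, since otherwise the welfare is $0$ under our convention, which only helps the separation. Then~\eqref{eq:two-type-negative-gap} provides the gap $\delta_-$.

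\textbf{Rationalizing the values (main obstacle).} The values $\alpha_i,\beta_i$ are irrational, so we replace them by rationals $\tilde\alpha_i,\tilde\beta_i$. This perturbation step is where I expect the real work. For $p=s/t$ we need $\tilde\alpha_i$ with $\tilde\alpha_i^{\,s}$ (and similarly $\tilde\alpha_i^{\,t}$) within a prescribed relative error of the ideal value. Lemma~\ref{lem:two-type-rational-between-powers} provides this after taking appropriate integer powers of the rational bracketing interval. The relative error $\eta$ multiplies each $u_i(x_i)$ by a factor in $[1,1+\eta]$, hence each $u_i^p$ by a factor in $[1,(1+\eta)^{|p|}]$. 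Since all utilities are at most $Z$ times the largest value, hence at most exponential in the input size, we can choose $\eta=2^{-\mathrm{poly}}$ so that the total change in $Q_p/p$ over all $n$ agents is below $\delta_\pm/3$. This preserves a gap between yes-instances, with $Q_p/p\geq V-\delta/3$, and no-instances, with $Q_p/p\leq V-2\delta/3$, with the inequalities reversed for $p<0$.

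\textbf{Threshold and conclusion.} Finally we produce a rational threshold $\tau$ with $n\tau^p/p$ strictly between the two gap values. For $p=s/t$ this means $\tau^s$ must lie in an explicit rational interval raised to the power $t$. Such a $\tau$ is again supplied by Lemma~\ref{lem:two-type-rational-between-powers}, with care about the direction when $p<0$. This proves hardness of the threshold decision problem. Hardness of computing an optimal allocation follows as well. Given an optimal allocation, we can check in polynomial time whether every agent receives one of the two patterns $(a_i,0)$ or $(0,1)$, and by the gap this happens iff the instance is a yes-instance. All values are strictly positive rationals, the number of agents is explicit, and the multiplicities are binary, as claimed.
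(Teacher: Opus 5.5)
Your proposal is correct and follows essentially the same route as the paper. Both use a cardinality-padded Subset Sum instance, ideal coefficients from Lemmas~\ref{lem:two-type-positive-support} and~\ref{lem:two-type-negative-support} summed against the common prices $1$ and $C$ to get an equality-iff-canonical bound with a gap, a rational perturbation smaller than a fraction of that gap, a threshold from Lemma~\ref{lem:two-type-rational-between-powers}, and a canonicity check for allocation hardness. Two details to tighten for $p=-r<0$. First, the perturbation bound needs a \emph{lower} bound on nonempty-bundle utilities: every ideal coefficient is at least $Z^{-\lceil\gamma\rceil}$, so $u_i^{-r}\leq Z^{r\lceil\gamma\rceil}$. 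The upper bound you quote does not give this. Second, the threshold needs the benchmark $\gamma\sum_i a_i-m_1-Cm_2$ to be positive. Your large shift ensures this, but you should check it explicitly, as the paper does.
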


\begin{proof}
We reduce from \textsc{Subset Sum}. Let the instance consist
of positive integers $b_1,\ldots,b_t$ and an integer target
$T$, where $0\leq T\leq A:=\sum_{i=1}^t b_i$.
Choose a positive integer $D$ depending only on $p$, with
$D\geq2$; if $p=-r<0$, also require $D>r$.
Put
\begin{equation}\label{eq:two-type-source-construction}
 \begin{split}
 Q&:=D(A+1),\\
 a_i&:=Q+b_i\quad(1\leq i\leq t),\\
 a_i&:=Q\quad(t<i\leq2t),\\
 C&:=Q+A+1,\qquad n:=2t,\\
 m_1&:=tQ+T,\qquad m_2:=t,\qquad
 W:=\sum_{i=1}^n a_i=2tQ+A.
 \end{split}
\end{equation}
There are $n$ agents and two item types, with multiplicities
$m_1$ and $m_2$. Call an allocation \emph{canonical} if agent
$i$ receives either $(a_i,0)$ or $(0,1)$.
A full canonical allocation gives the second bundle to exactly
$t$ agents and the first bundle to the other $t$ agents. Thus
it exists if and only if exactly $t$ of the numbers $a_i$ sum
to $m_1$. By construction, this is equivalent to choosing some
of the original numbers $b_i$ with sum $T$: any such choice
can be padded to cardinality $t$ using the additional $t$
copies of $Q$.

Fix the integer $ Z:=\max\{2,n,m_1,m_2,C,a_1,\ldots,a_n\}.$ 
It has polynomial binary encoding length, and every coordinate
of every feasible bundle is at most $Z$. For every $i$,
$a_i<C$ and $C-a_i\geq1$. In the negative case, $\frac C{a_i}\leq\frac CQ=1+\frac1D<1+\frac1r.$ 
We first construct ideal utility coefficients, which may be
irrational, and then replace them by rational coefficients.

\medskip\noindent
\textbf{The ideal instance for $0<p<1$.}
Put $\gamma:=(1-p)/p$ and, for every agent $i$, define
\[
 H_i:=pC+(1-p)a_i,\qquad
 \alpha_i:=a_i^\gamma,\qquad
 \beta_i:=H_i^{1/p},\qquad
 u_i(x_i):=\alpha_i x_{i,1}+\beta_i x_{i,2}.
\]
By Lemma~\ref{lem:two-type-positive-support}, $\frac{u_i(x_i)^p}{p}-x_{i,1}-Cx_{i,2}\leq\gamma a_i,$ 
with equality precisely for the two canonical bundles.
Consequently, for an allocation $x$,
\begin{equation}\label{eq:two-type-positive-total}
 \Phi(x):=\sum_{i=1}^n\frac{u_i(x_i)^p}{p}
 \leq U:=\gamma W+m_1+Cm_2.
\end{equation}
Canonical allocations attain $U$, whereas every noncanonical
allocation has value at most $U-\delta_+$, where
$\delta_+=\min\{p,1-p\}/(2Z^2)$.
Maximizing $\Phi$ is equivalent to maximizing the $p$-mean.

\medskip\noindent
\textbf{The ideal instance for $p=-r<0$.}
Put $\gamma:=(r+1)/r$ and define
\[
 H_i:=(r+1)a_i-rC,\qquad
 \alpha_i:=a_i^{-\gamma},\qquad
 \beta_i:=H_i^{-1/r},\qquad
 u_i(x_i):=\alpha_i x_{i,1}+\beta_i x_{i,2}.
\]
The preceding bounds imply $0<H_i<a_i$.
For every nonempty bundle, Lemma~\ref{lem:two-type-negative-support}
gives $ \frac{u_i(x_i)^{-r}}r+x_{i,1}+Cx_{i,2}\geq\gamma a_i.$ 
Therefore, for an allocation $x$ with no empty
bundle,
\begin{equation}\label{eq:two-type-negative-total}
 \Psi(x):=\sum_{i=1}^n\frac{u_i(x_i)^{-r}}r
 \geq L:=\gamma W-m_1-Cm_2.
\end{equation}
Canonical allocations attain $L$, whereas every noncanonical
allocation with no empty bundle has cost at least
$L+\delta_-$, where
$\delta_-=\min\{1,r\}/(2Z^{\lceil r\rceil+2})$.
Furthermore,
\begin{equation}\label{eq:two-type-negative-benchmark-positive}
 L=t\left(\frac{2D}r-1\right)(A+1)
   +\frac{r+1}rA-T>0,
\end{equation}
because $D>r$ and $T\leq A$.
Minimizing $\Psi$ is equivalent to maximizing the $p$-mean
on allocations with positive utility for every agent.
An allocation with an empty bundle has $p$-mean zero.
Since $m_1\geq n$, a positive allocation exists, obtained by
giving one item of the first type to every agent and
distributing the remaining items arbitrarily. Hence no
optimal allocation has an empty bundle.

\medskip\noindent
\textbf{Rational utility coefficients.}
We now give the required polynomial-time construction of a
rational instance. Let $\delta$ denote $\delta_+$ or
$\delta_-$, according to the sign of $p$.
We first bound the magnitude of a single transformed
objective term on every feasible nonempty bundle.
For $0<p<1$, set $E:=\max\{\lceil\gamma\rceil,\lceil1/p\rceil\}$, and $F:=\frac{2Z^{E+1}}p.$ 
Both ideal coefficients are at most $Z^E$, because
$a_i\leq Z$ and $H_i\leq C\leq Z$. Thus
$u_i(x_i)\leq2Z^{E+1}$, and
$u_i(x_i)^p/p\leq F$. For $p=-r<0$, set $E:=\lceil\gamma\rceil$, $J:=\lceil Er\rceil$, $F:=\frac{Z^J}r.$ 
Both ideal coefficients are at least $Z^{-E}$: this is
immediate for $\alpha_i$, and for $\beta_i$ it follows from
$0<H_i<a_i\leq Z$ and $1/r<\gamma\leq E$.
Every nonempty integer bundle therefore has utility at least
$Z^{-E}$, giving $u_i(x_i)^{-r}/r\leq F$.

Define the rational constant
\[
 c_p:=
 \begin{cases}
  1,&0<p<1,\\
  \max\{1,r2^{\lceil r\rceil+1}\},&p=-r<0,
 \end{cases}
 \qquad
 \rho:=\min\left\{\frac14,\frac{\delta}{8nc_pF}\right\}.
\]
For each $i$, choose positive rational approximations
$\widehat\alpha_i,\widehat\beta_i$ satisfying
\[
 (1-\rho)\alpha_i\leq\widehat\alpha_i\leq(1+\rho)\alpha_i,
 \qquad
 (1-\rho)\beta_i\leq\widehat\beta_i\leq(1+\rho)\beta_i.
\]
All ideal coefficients are fixed rational powers of positive rationals
with polynomial encoding length. Apply
Lemma~\ref{lem:directed-rational-approximation} with accuracy $\rho$
and the corresponding exponent, taking the magnitude of its output
when that exponent is negative. The resulting approximation to any
such power $w$ lies in $[w/(1+\rho),w]$ or in $[w,(1+\rho)w]$.
Both intervals lie in $[(1-\rho)w,(1+\rho)w]$, since
$1/(1+\rho)\geq1-\rho$. This gives the required approximations in
polynomial time and with polynomial encoding length: the rational
$\rho$ has polynomial encoding length, because the exponents occurring
in $\delta$ and $F$ depend only on the fixed parameter $p$.

Write
$\widehat u_i(x_i):=\widehat\alpha_i x_{i,1}+\widehat\beta_i x_{i,2}$.
For every nonempty bundle $x_i$ there is an $e\in[-\rho,\rho]$
with $\widehat u_i(x_i)=(1+e)u_i(x_i)$. If $0<p<1$, then $\bigl|(1+e)^p-1\bigr|\leq\rho,$ 
because $(1-\rho)^p\geq1-\rho$ and
$(1+\rho)^p\leq1+\rho$; if $p=-r<0$, the derivative of
$z^{-r}$ on $[1/2,3/2]$ has absolute value at most
$r2^{r+1}\leq c_p$, and hence $\bigl|(1+e)^{-r}-1\bigr|\leq c_p\rho.$ 
It follows that, for $0<p<1$, rounding changes each summand $u_i^p/p$ in
$\Phi$ by at most $\rho F$. For $p=-r<0$, it changes each
summand $u_i^{-r}/r$ in $\Psi$ by at most $c_p\rho F$.
Thus, in both cases, each summand changes by at most $c_p\rho F$. Summing these bounds over all $n$ agents
shows that the absolute change in $\Phi$ or $\Psi$ is at most
\(
 n c_p\rho F\leq\frac{\delta}{8},
\)
by the choice of $\rho$.
For positive $p$, this also covers an empty bundle, whose
term remains zero. For negative $p$, empty bundles continue
to have zero welfare, and all nonempty bundles continue to
have positive utility.

All constructed item values are strictly positive. Every agent still values the second item type strictly more than the first after approximation; hence the two item types remain distinct. Indeed,
in either sign case the ideal ratio $\beta_i/\alpha_i$
is greater than $a_i\geq Q\geq4$, whereas relative
approximation with $\rho\leq1/4$ multiplies this ratio
by a factor of at least $(1-\rho)/(1+\rho)\geq3/5$.

\medskip\noindent
\textbf{Hardness of computing an optimal allocation.}
In a yes instance (of \textsc{Subset Sum}), there is a canonical allocation. For
$0<p<1$, its rounded transformed objective is at least
$U-\delta/8$, while every noncanonical allocation has
rounded objective at most $U-7\delta/8$.
For $p=-r<0$, its rounded transformed cost is at most
$L+\delta/8$, while every noncanonical positive allocation
has rounded cost at least $L+7\delta/8$.
Thus every optimal allocation of the rational instance is
canonical in a yes instance. In a no instance no full
canonical allocation exists. An exact allocation algorithm
would consequently decide the source instance by checking
whether its returned allocation is canonical. This check
uses only integer bundle coordinates and does not require
comparison of algebraic welfare values. There are only
$n=2t$ explicitly listed agents, so expanding a compressed
output for this check also takes polynomial time.

\medskip\noindent
\textbf{Hardness with a rational welfare threshold.}
For $0<p<1$, choose a positive rational $\tau$ satisfying
\begin{equation}\label{eq:two-type-positive-threshold}
 U-\frac{7\delta}8
 <\frac{n\tau^p}p
 <U-\frac\delta8.
\end{equation}
The interval is positive since $U\geq m_1\geq2$ and
$\delta\leq1/16$. For every allocation $x$ of the rounded
instance,
\[
 \PSW_p(x)\geq\tau
 \quad\Longleftrightarrow\quad
 \sum_{i=1}^n\frac{\widehat u_i(x_i)^p}p
 \geq\frac{n\tau^p}p.
\]
The bounds just established show that a yes instance has
an allocation strictly above the threshold, whereas every
allocation in a no instance is strictly below it.

For $p=-r<0$, choose a positive rational $\tau$ satisfying
\begin{equation}\label{eq:two-type-negative-threshold}
 L+\frac\delta8
 <\frac{n\tau^{-r}}r
 <L+\frac{7\delta}8.
\end{equation}
For allocations with no empty bundle,
\[
 \PSW_{-r}(\mathcal A)\geq\tau
 \quad\Longleftrightarrow\quad
 \sum_{i=1}^n\frac{\widehat u_i(\mathcal A_i)^{-r}}r
 \leq\frac{n\tau^{-r}}r.
\]
Again the yes and no instances are strictly separated.
Allocations containing an empty bundle have $p$-mean welfare zero which is strictly less than the positive threshold $\tau$.

For completeness, both thresholds can be constructed with
polynomially many rational bits. In the case $0<p<1$, write
$p=s/d$ in lowest terms. If
$a:=U-7\delta/8$ and $b:=U-\delta/8$, then
\eqref{eq:two-type-positive-threshold} is equivalent to $ \left(\frac{pa}n\right)^d
 <\tau^s<
 \left(\frac{pb}n\right)^d.$ 
For the negative case write $r=s/d$, and put
$a:=L+\delta/8$ and $b:=L+7\delta/8$.
Condition~\eqref{eq:two-type-negative-threshold} is
equivalent to $\left(\frac n{rb}\right)^d
 <\tau^s<
 \left(\frac n{ra}\right)^d.$ 
In both cases the endpoints are positive rationals of
polynomial encoding length. Lemma~\ref{lem:two-type-rational-between-powers}
therefore computes the required rational $\tau$ in
polynomial time. This proves the decision hardness as well
as the allocation hardness.
\end{proof}

\begin{corollary}\label{cor:two-types-finite-integer-hardness}
The hardness assertions of
Theorem~\ref{thm:two-types-finite-nonzero-hardness} also hold
with strictly positive integer item values. In the decision
problem, the welfare threshold may also be required to be
a positive integer.
\end{corollary}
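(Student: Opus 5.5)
The plan is to scale the rational instance of Theorem~\ref{thm:two-types-finite-nonzero-hardness} by a common positive integer. The $p$-mean is positively homogeneous: if every item value is multiplied by $\lambda>0$, every utility is multiplied by $\lambda$. Then $\PSW_p$ is multiplied by $\lambda$ for every finite $p\neq0$, and the set of optimal allocations is unchanged. Let $N$ be the least common multiple of the denominators of all $\widehat\alpha_i,\widehat\beta_i$. This $N$ has polynomial encoding length, since it is at most the product of polynomially many polynomial-length denominators. Replace every item value $v$ by $Nv$. The new values are strictly positive integers of polynomial length. Each agent still values the second type strictly more than the first, because ratios are preserved.

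For the allocation problem, the optimal allocations of the scaled instance coincide with those of the rational instance. In a yes instance they are canonical, and in a no instance no canonical allocation exists. The canonicity check from the theorem therefore applies verbatim.

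For the threshold problem, scaling alone turns the rational threshold $\tau$ into $N\tau$, which need not be an integer. I would therefore reselect the threshold inside the separating interval and then scale further. In the scaled instance, the admissible thresholds form an open interval of the form $(\sigma_0,\sigma_1)$. Here $\sigma_0,\sigma_1$ are positive reals given by \eqref{eq:two-type-positive-threshold} or \eqref{eq:two-type-negative-threshold} after multiplying all utilities by $N$. For $p>0$ this means the threshold $\tau'$ must satisfy $\frac{n\tau'^p}{p}\in(N^p(U-7\delta/8),\,N^p(U-\delta/8))$. For $p<0$ the analogous interval is $(N^{-r}(L+\delta/8),\,N^{-r}(L+7\delta/8))$.

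The cleanest route is to reuse the rational $\tau=\mu/\nu$ already produced by Lemma~\ref{lem:two-type-rational-between-powers}, which lies strictly inside the unscaled interval. I then multiply all item values by $N\nu$ instead of $N$. The new threshold is $N\nu\tau=N\mu$, a positive integer. The item values remain positive integers, and by homogeneity the separation is exactly preserved. For every allocation $x$, we have $\PSW_p(x)\geq\tau$ in the rational instance if and only if $\PSW_p(x)\geq N\nu\tau$ in the scaled instance. Allocations with an empty bundle still have welfare zero when $p<0$, which is below the positive threshold. All encoding lengths remain polynomial, since $\nu$ and $N$ are polynomial-length integers.

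The only point needing care is the homogeneity identity $\PSW_p(\lambda x)=\lambda\PSW_p(x)$ for $p<0$ with positive utilities, and the zero-utility convention. Both are immediate from the definitions in Section~\ref{sec:notation}, so I expect no real obstacle.
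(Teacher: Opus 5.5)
Your proposal is correct and matches the paper's proof. Both clear all denominators, including the threshold's, with one common positive integer multiplier, then use positive homogeneity of the $p$-mean, so optimal allocations and the threshold comparison are preserved. Taking the lcm of the utility denominators times $\nu$, instead of the paper's single product of all denominators, is an inessential variation.
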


\begin{proof}
Take the rational instance constructed in the theorem.
Let $M$ be the product of the denominators of all its utility
coefficients and, for the decision problem, of its threshold
$\tau$. Multiplying every utility coefficient by $M$ produces
strictly positive integers and preserves the two distinct
item types. The $p$-mean is positively homogeneous:
multiplying all utilities by $M$ multiplies every allocation's
welfare by $M$. Thus the set of optimal allocations is unchanged,
and the decision comparison is preserved by replacing $\tau$
with the positive integer $M\tau$. The binary encoding length
of $M$ is at most the sum of the encoding lengths of the
original denominators and is therefore polynomial.
\end{proof}

\subsection{Three item types and Nash welfare}
\label{subsec:three-types-nash-hardness}

The preceding construction uses the nonzero power of each utility.
For Nash welfare we use a different construction: two ordinary item
families determine the agents' choices, and a third item type enforces
the required sum within one family.

\begin{theorem}[Nash-welfare hardness with three item types]
\label{thm:three-types-nash-hardness}
For Nash welfare, deciding whether the optimum is at least a specified
positive rational threshold is $\mathsf{NP}$-hard with three item types and
explicitly listed agents. All item multiplicities, utilities, and the threshold may be
required to be positive integers of polynomial binary encoding length.
Computing an exact optimal allocation is $\mathsf{NP}$-hard under the same
restriction.
\end{theorem}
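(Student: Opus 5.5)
The plan is a reduction from \textsc{Balanced-Equal-Cardinality-Partition} (Lemma~\ref{lem:becp}). It follows the pattern of Theorem~\ref{thm:two-types-finite-nonzero-hardness}: an exact \emph{tangent-line (Lagrangian) certificate} forces every optimal allocation to be \emph{canonical}, canonical allocations exist exactly for yes instances, and a quantitative gap then gives integrality and the threshold. For Nash welfare the natural certificate is the per-agent inequality
\[
 \log u\leq \log u^*+\frac{u}{u^*}-1 ,
\]
with equality iff $u=u^*$. Summing over agents gives an upper bound on $\Lambda(x)$ that depends only on the item totals, provided the normalized prices $u_{i,j}/u_i^*$ are the same for all agents. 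Such a common price vector $\lambda$ would reduce optimality to ``every agent receives a bundle with $\lambda\cdot x_i=1$''. However, identical prices make the agents effectively identical up to scaling, and a fixed number of item types with identical agents is a high-multiplicity bin-packing feasibility question. That question is polynomial for fixed $q$ (Goemans--Rothvo{\ss}), so this cannot be the whole construction. Heterogeneity must enter through which bundles are feasible or optimal for which agent, not only through scaling.

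\textbf{Construction.} Given $a_1,\ldots,a_n$ with $n$ and $A$ even, I take $n$ agents and three item types. Type~$2$ (``big'') items have multiplicity $n/2$. Type~$1$ (``filler'') items have multiplicity $(n/2)N$ for a large integer $N$. Type~$3$ (``balancer'') items have multiplicity $A/2$. The intended canonical bundles are $(N,0,0)$ and $(0,1,a_i)$ for agent $i$. Canonical allocations therefore exist iff some set $I$ with $|I|=n/2$ has $\sum_{i\in I}a_i=A/2$. This is where the type-$2$ count enforces the cardinality and the type-$3$ count enforces the sum. Agent $i$ gets values $(1,V,c_i)$ chosen so that both canonical bundles have the same utility $u_i^*=N=V+a_ic_i$. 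This equality is necessary by the strict concavity of $\log$ along the segment between them.

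\textbf{Key step and main obstacle.} The hard part is the optimality certificate with agent-dependent $c_i$. The type-$3$ price $c_i/u_i^*$ now differs across agents, so plain summation of tangent inequalities no longer telescopes to a constant. I would first try to choose $c_i=C/a_i$-type values and rescale agents so that the discrepancy is exactly absorbed. The discrepancy should be absorbed by the per-agent constant $\gamma a_i$, as in Lemmas~\ref{lem:two-type-positive-support}--\ref{lem:two-type-negative-support}. Concretely, I would aim to prove a Nash analogue of the form
\[
 \log u_i(x)-\mu_1x_1-\mu_2x_2-\mu_3x_3\leq \kappa_i ,
\]
with common multipliers $\mu$, agent-specific constants $\kappa_i$, and equality exactly at the two canonical bundles. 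For that, I would pick the utility vector of agent $i$ as the gradient direction making $\log u_i$ tangent to the common hyperplane at both points. The proof would then verify the inequality on the integer box $\{0,\ldots,Z\}^3$ with an explicit gap $\delta\geq 1/\mathrm{poly}(Z)$, presumably by checking the off-segment directions via strict concavity and integrality, exactly as in Lemma~\ref{lem:two-type-positive-support}. Summing over agents bounds $\Lambda(x)$ by $\sum_i\kappa_i+\mu\cdot\boldsymbol m$, with equality iff the allocation is canonical, and by that value minus $\delta$ otherwise.

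\textbf{Rounding, integers, threshold.} Irrational ideal values are replaced by rationals via Lemma~\ref{lem:directed-rational-approximation} with relative error $\rho\ll\delta/(nZ)$. The gap survives by the same perturbation argument as in Theorem~\ref{thm:two-types-finite-nonzero-hardness}. The values are then scaled to positive integers as in Corollary~\ref{cor:two-types-finite-integer-hardness}, and $N\geq1$ ensures every agent can get positive utility. Since utilities are integers, the decision compares the integer $\prod_i u_i(x_i)$ with $\tau^n$. I would choose an integer $\tau$ whose $n$-th power lies strictly between the yes and no products. This is possible after a further scaling of all values by a large integer, which widens the multiplicative gap $e^{\delta}$ in absolute terms. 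Exact-allocation hardness follows by checking canonicity of the returned allocation.
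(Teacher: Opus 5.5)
Your proposal has a genuine gap at the step you yourself flag as the main obstacle. For the construction you write down, the certificate cannot exist, and the reduction is in fact incorrect.

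With filler value $1$ and both canonical bundles of utility $N$, tightness of $\log u_i(x)-\mu\cdot x\le\kappa_i$ at $(N,0,0)$ and at $(0,1,a_i)$ gives $\log N-N\mu_1=\kappa_i=\log N-\mu_2-a_i\mu_3$. Hence $\mu_2+a_i\mu_3=N\mu_1$ for every $i$. Once two of the $a_i$ differ, this forces $\mu_3=0$, and then adding one balancer to $(0,1,a_i)$ strictly increases $\log u_i-\mu\cdot x$.

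More directly, take $c_i=(N-V)/a_i$ and any canonical allocation with $i,k\in I$ and $a_i<a_k$. Moving one balancer from $k$ to $i$ gives
\[
 (N+c_i)(N-c_k)-N^2=c_ic_k\left(\frac{N(a_k-a_i)}{N-V}-1\right)>0,
\]
because $N(a_k-a_i)\geq N>N-V$. For instance, $N=100$, $V=50$, $a_i=1$, $a_k=2$ gives $150\cdot75>100^2$. So canonical allocations are not optimal in yes-instances. Checking canonicity of an optimal allocation then no longer decides the source instance, and you have no basis for placing the threshold.

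Rescaling agents cannot repair this: multiplying agent $i$'s values by $\lambda_i$ only adds the constant $\log\lambda_i$ to $\Lambda$. The underlying obstruction is that, for any common $\mu$, the perturbations $(0,1,a_i\pm1)$ confine $\mu_3$ to $[\log(1+c_i/u_i),-\log(1-c_i/u_i)]$, where $u_i=u_i(0,1,a_i)$. So $c_i/u_i$ must be essentially agent-independent, which is incompatible with $a_ic_i$ being constant. Separately, your claim that equal canonical utilities are forced by concavity is not right for integer certificates: in Lemma~\ref{lem:two-type-positive-support} the two canonical utilities differ.

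The paper does not try to make the agent-specific bundle individually optimal. Its first-family items are worth $(L-a_i)/S$ (ordinary) and $(L-a_i)/S+1$ (rewarded) to agent $i$, and its second-family items are worth $V_i/R$ with $V_i=Le^{-a_i/L}$.
\begin{itemize}
\item With loads $z_i=f_i/S+g_i/R$ one has $\sum_iz_i=n$ and $u_i\leq(L+R)z_i$. Strong concavity of $\sqrt t$ on the grid $\frac1{RS}\mathbb Z$ then penalizes every non-unit load.
\item $\gcd(R,S)=1$ forces the patterns $(S,0)$ or $(0,R)$, and the first-family item count fixes $|I|=n/2$.
\item The exponential values make each second-family agent contribute exactly $\log L-a_i/L$. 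The first-order dependence on the choice of $I$ therefore cancels, and Jensen plus Taylor leave the penalty $(B-W)^2/(n(L+A)^2)$ with $W=\sum_{i\in I}a_i$.
\item Thus the sum condition is enforced by a quadratic penalty, not by an item count attached to agent-dependent values. A second Taylor step forces $r_i=a_i$ rewarded items for $i\in I$.
\end{itemize}
Your own tightness computation points the same way. Any common-$\mu$ certificate forces $\log u_i(0,1,a_i)=\log N-N\mu_1+\mu_2+a_i\mu_3$, that is, utilities exponential in $a_i$, which is the role of $e^{-a_i/L}$ in the paper.

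Your remaining steps (rationalization with one-sided error, scaling to integers, and a threshold strictly inside the gap) match what the paper does once a valid gap is available.
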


\begin{proof}
We reduce from \textsc{Balanced-Equal-Cardinality-Partition}, which
is $\mathsf{NP}$-hard by Lemma~\ref{lem:becp}. Let
$a_1,\ldots,a_n$ be positive integers, where $n$ and
$A:=\sum_i a_i$ are even. We construct one agent for each integer $a_i$, so the
resulting welfare instance also has exactly $n$ agents. Set
\[
 s:=n/2,\qquad B:=A/2,\qquad S:=A+1,\qquad R:=A+2,
\]
and
\[
 \delta:=\frac{1}{4n^3R^2S^2},\qquad
 L:=\frac{2R}{\delta}=8n^3R^3S^2,\qquad
 \gamma:=\frac{1}{n(L+A)^2},\qquad
 \varepsilon:=\frac{\gamma}{8n}.
\]
In particular, $\gcd(R,S)=1$, $L>RA$, and every parameter has
polynomial binary encoding length.

\medskip
\noindent\textbf{An ideal construction and its gap.}
We first analyze real coefficients that will subsequently be replaced
by positive rational numbers with a certified error bound.
There is one agent for each $a_i$, and there are three item types:
\[
\begin{array}{c|c|c}
 \text{Item type}&\text{Multiplicity}&\text{Value to agent }i\\ \hline
 \text{First ordinary}&Ss-B&(L-a_i)/S\\[2pt]
 \text{First rewarded}&B&(L-a_i)/S+1\\[2pt]
 \text{Second ordinary}&Rs&V_i/R
\end{array}
\qquad V_i:=L\exp(-a_i/L).
\]
All multiplicities are positive integers. The three value vectors are
distinct: the rewarded value exceeds the ordinary first-family value,
and $V_i/R<L/R<(L-a_i)/S$, where the last inequality follows from
$L>Ra_i$ and $R-S=1$. For an allocation, let
$f_i$ be agent $i$'s number of first-family items, let $g_i$ be its
number of second-family items, and let $r_i$ be its number of rewarded
items. Define $z_i:=f_i/S+g_i/R.$ 
Full allocations satisfy $\sum_i z_i=n$ and $\sum_i r_i=B$.
It suffices to consider full allocations because every item has
strictly positive value. The ideal utility for agent $i$ is $ u_i=(L-a_i)f_i/S+V_i g_i/R+r_i.$ 
Since $0<V_i<L$ and $r_i\leq f_i\leq Rz_i$,
$u_i\leq(L+R)z_i$.

Suppose first that $z_i\neq1$ for at least one agent $i$. Each $z_i$ is a nonnegative
integer multiple of $1/(RS)$, and their sum is $n$. Therefore $ \sum_i(z_i-1)^2\geq\frac{2}{R^2S^2}.$ 
Let $f(t)=\sqrt{t}$. For $0<t\leq n$, since $n\geq1$, $f''(t)=-\frac{1}{4t^{3/2}}
 \leq-\frac{1}{4n^2}.$ 
Taylor's theorem about $t=1$, together with
$f(1)=1$ and $f'(1)=1/2$, yields
\[
 \sqrt{t}\leq
 1+\frac{t-1}{2}-\frac{(t-1)^2}{8n^2}
 \qquad (0<t\leq n).
\]
The same inequality holds at $t=0$.
Applying this inequality to each $z_i\in[0,n]$
and averaging, we obtain
\begin{align*}
 \frac1n\sum_i\sqrt{z_i}
 \leq 1+\frac{1}{2n}\sum_i(z_i-1)
       -\frac{1}{8n^3}\sum_i(z_i-1)^2
 =1-\frac{1}{8n^3}\sum_i(z_i-1)^2
 \leq1-\frac{1}{4n^3R^2S^2}
 =1-\delta,
\end{align*}
where the equality uses $\sum_i z_i=n$.
For a nonnegative vector $v$, write $\NSW(v):=(\prod_i v_i)^{1/n}$.
By the arithmetic--geometric mean
inequality applied to $\sqrt{z_1},\ldots,\sqrt{z_n}$,
\begin{align*}
 \NSW(u)
 &\leq (L+R)\NSW(z)\leq (L+R)\left(\frac1n\sum_i\sqrt{z_i}\right)^2\leq (L+R)(1-\delta)^2\leq (L+R)(1-\delta)\\
 &=L-R-R\delta<L-R,
\end{align*}
where we used $u_i\leq(L+R)z_i$, $0<\delta<1$ and $L\delta=2R$.
The inequality remains valid if a coordinate is zero. 
Write $ F_*:=n\log L-B/L$ 
for the ideal target log-product. If all utilities are positive, then
\[
  F_*-\sum_i\log u_i>F_*-n\log(L-R)
 =n\log\frac{L}{L-R}-\frac BL
 \geq\frac{nR-B}{L}
 \geq\frac1L
 \geq\gamma,
\]
where $-\log(1-x)\geq x$ was applied to $x=R/L$.
Thus every nonunit-load allocation has log-product strictly below
$F_*-\gamma$. An allocation with a zero utility has zero Nash welfare
and also cannot attain any positive threshold.

Suppose now that $z_i=1$ for every agent. Coprimality of $R$ and $S$
implies $(f_i,g_i)\in\{(S,0),(0,R)\}.$ 
Let $I:=\{i\in[n]:(f_i,g_i)=(S,0)\}$ be the set of agents receiving $S$ first-family items and put $W:=\sum_{i\in I}a_i$.
The total number of first-family items forces $|I|=s$. Every
rewarded item belongs to a first-family agent, so $ \sum_{i\in I}u_i=sL+B-W.$
 For $i\notin I$, $\log u_i=\log L-a_i/L$. Jensen's inequality
therefore yields
\[
 \sum_i\log u_i
 \leq s\log\left(L+\frac{B-W}{s}\right)
       +(n-s)\log L-\frac{A-W}{L}.
\]
The interval between $L$ and $L+(B-W)/s$ is contained in
$[L-A,L+A]$. On this interval the second derivative of $\log t$
is at most $-1/(L+A)^2$. Hence Taylor’s theorem therefore gives $\sum_i\log u_i
 \leq n\log L-\frac{A-B}{L}
       -\frac{(B-W)^2}{2s(L+A)^2}
 =F_*-\frac{(B-W)^2}{n(L+A)^2}.$ 
In particular, if $W\ne B$, integrality gives a gap ($F_*-\sum_i\log u_i$) of at least
$\gamma$. 
If $W=B$ but some first-family utility differs from $L$, put $d_i:=u_i-L$ for $i\in I$.
The deviations $d_i$ are integers satisfying
\(
 \sum_{i\in I}d_i=0.
\)
Since they are not all zero, at least one is positive and at least
one is negative. Consequently,
\(
 {\sum}_{i\in I}d_i^2\geq2.
\)
Each first-family utility lies in $[L-A,L+A]$, because $r_i\leq B$.
Taylor's theorem, together with
$(\log t)''=-1/t^2\leq-1/(L+A)^2$ on this interval, gives $\log u_i
 \leq\log L+\frac{d_i}{L}
       -\frac{d_i^2}{2(L+A)^2}.$ 
Summing over $i\in I$ and using the preceding observations yields $\sum_{i\in I}\log u_i
 \leq |I|\log L-\frac{1}{(L+A)^2}.$ 
Combining this with the previously established bound for the
remaining agents, we obtain
\[
 \sum_i\log u_i
 \leq F_*-\frac{1}{(L+A)^2}
 \leq F_*-\gamma,
\]
giving again a log-product gap of $\gamma$.

If the source instance has a balanced partition $I$, give each
$i\in I$ exactly $a_i$ rewarded items and $S-a_i$ ordinary
first-family items, and give each remaining agent $R$ second-family
items. All multiplicities are met exactly. The first-family utilities
are $L$, and the total log-product is $F_*$. Conversely, every
allocation that is not of this form for some balanced partition has
ideal log-product at most $F_*-\gamma$, or zero Nash welfare. Indeed, consider an allocation with positive Nash welfare which has $\sum_i\log u_i> F_*-\gamma$. By the above arguments, we must have $z_i=1$ for all $i$. Letting $I$ be the set of agents receiving $S$ first-family items and putting $W:=\sum_{i\in I}a_i$, we must have $W=B$ and hence, $u_i=L$ for every $i\in I$.
Since $u_i=L-a_i+r_i$, $(i\in I),$ we must have $r_i=a_i$. Thus each $i\in I$ receives exactly
$a_i$ rewarded items and $S-a_i$ ordinary first-family items,
while each agent outside $I$ receives exactly $R$ second-family
items. Moreover, $|I|=s$ and $\sum_{i\in I}a_i=B$, so $I$ is a
balanced partition. 

\medskip
\noindent\textbf{Replacing the ideal coefficients by rationals.}
For an odd integer $N\geq3$, define the rational polynomial $E_N(t):=\sum_{h=0}^{N}\frac{(-t)^h}{h!}.$
 For $0\leq t\leq1$, the alternating-series bounds imply
\[
 E_N(t)\leq e^{-t},\qquad
 e^{-t}-E_N(t)\leq\frac{1}{(N+1)!}\leq2^{-N}.
\]
Moreover, the odd partial sums are nondecreasing in $N$, and
$E_3(t)=1-t+t^2/2-t^3/6\geq E_3(1)=1/3$ on $[0,1]$, so \(E_N(t)\geq E_3(t)\geq 1/3\) for odd $N\ge 3$.
Since $0<E_N(t)\leq e^{-t}$ for odd $N$, the inequality
$\log(1+x)\leq x$ for $x\geq0$ gives
\begin{align*}
 0\leq\log(e^{-t})-\log E_N(t)
 &=\log\!\left(1+\frac{e^{-t}-E_N(t)}{E_N(t)}\right)\leq\frac{e^{-t}-E_N(t)}{E_N(t)}
 \leq3\cdot2^{-N}.
\end{align*}
Choose an odd $N\geq3$ satisfying $3\cdot2^{-N}\leq\varepsilon$.
This requires only $N=O(\log n+\log(L+A))$.

In the actual reduction, replace $V_i$ by the positive rational $\widetilde V_i:=L \cdot E_N(a_i/L).$ 
Use the same first-family values and all the same multiplicities.
The second-family item value is now $\widetilde V_i/R$.
The preceding bounds give $e^{-\varepsilon}V_i\leq\widetilde V_i\leq V_i.$ 
Consequently, for every allocation its actual utility
$\widetilde u_i$ satisfies $e^{-\varepsilon}u_i\leq\widetilde u_i\leq u_i.$

Indeed, only one nonnegative summand of the additive utility has been
reduced. For allocations with positive utilities this implies $ \sum_i\log u_i-n\varepsilon
 \leq\sum_i\log\widetilde u_i
 \leq\sum_i\log u_i.$
 
Define a positive rational Nash-welfare threshold by $t_*:=\frac{B}{nL}+\frac{\gamma}{2n}$,  and $\widetilde\tau:=L \cdot E_N(t_*).$ 
Here both $a_i/L$ and $t_*$ lie in $[0,1]$. The same logarithmic
error bound applied at \(t_*\) gives $ F_*-\frac{5\gamma}{8}
 \leq n\log\widetilde\tau
 \leq F_*-\frac\gamma2.$ 
If the source instance is a yes-instance, the allocation
corresponding to a balanced partition has log-product in the
rational instance at least $ \sum_i\log\widetilde u_i
 \geq F_*-n\varepsilon
 =F_*-\frac{\gamma}{8},$ 
which is strictly above
$n\log\widetilde\tau$. If the source instance is a no-instance,
every positive-welfare allocation has ideal log-product at most
$F_*-\gamma$. Its actual log-product is no larger, and is therefore
strictly below $n\log\widetilde\tau$. Zero-welfare allocations also
fall below the positive threshold. This proves the desired
equivalence using rational coefficients and a rational threshold.

\medskip
\noindent\textbf{Encoding length and exact optimization.}
All exponents in the Taylor polynomials are polynomially bounded in
the input bit length. Exact rational evaluation at a rational $t$
uses numerators and denominators with
$O(N\,\operatorname{bits}(t)+N\log N)$ bits; for example, a common
denominator is $\operatorname{den}(t)^N N!$. Thus the reduction
constructs all coefficients and the threshold in polynomial time
and with polynomial encoding length.

For a yes-instance, the rational perturbation leaves a strict gap
between partition allocations and every other allocation. Hence an
exact optimal allocation reveals a balanced partition by checking
$z_i=1$, the first-family set $I$, and $\sum_{i\in I}a_i=B$.
These are rational and integer checks and require no exact logarithm
or exponential evaluation. Multiplying all utilities and the
threshold by a common product of their denominators makes every
utility and the threshold a positive integer of polynomial bit
length, without changing the answer. This proves all assertions.
\end{proof}

\subsection{Three item types and egalitarian welfare}
\label{subsec:three-types-egalitarian-hardness}

For the egalitarian objective, a common price on the three item types
forces every feasible bundle to have a prescribed price. A small
rational perturbation leaves exactly two threshold-feasible bundles
at that price for each agent.

\begin{theorem}[Egalitarian-welfare hardness with three item types]
\label{thm:three-types-egalitarian-hardness}
Deciding whether an allocation of egalitarian welfare at least a specified
positive rational threshold exists is $\mathsf{NP}$-hard with three item types and
explicitly listed agents. The item multiplicities, utilities, and threshold
can all be required to be positive integers of polynomial binary encoding length.
Computing an exact optimal allocation is $\mathsf{NP}$-hard under the same
restriction. The utility matrix may be required to have rank at most
two, and all agents may be required to have the same strict ranking
of the three item types.
\end{theorem}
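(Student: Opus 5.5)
We reduce from \textsc{Balanced-Equal-Cardinality-Partition} (Lemma~\ref{lem:becp}), reusing the load skeleton of Theorem~\ref{thm:three-types-nash-hardness} but replacing the concavity gap by a \emph{common price} argument. This is only a plan: the step that forces the partition sum to be exactly $A/2$, rather than merely at most $A/2$, is not yet settled.

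\textbf{Construction.} Given $a_1,\ldots,a_n$ with $s:=n/2$, $B:=A/2$, fix coprime integers $S<R$ exceeding $A$. Use three item types, each with a base price:
\begin{itemize}
\item an ordinary first-family type with price $1/S$;
\item a rewarded first-family type with price $1/S$;
\item a second-family type with price $1/R$.
\end{itemize}
Choose multiplicities $Ss-B$, $B$ and $Rs$, so the total price of all items is exactly $n$. Each agent's utility is the scaled price plus a small perturbation:
\[
 u_i(x)=L\,\pi(x)+\lambda_i\,w(x).
\]
Here $\pi$ is the common price, $w$ is a fixed perturbation vector on the three types, and $\lambda_i$ depends affinely on $a_i$, with $L$ large and $\lambda_i\ll L/(nRS)$. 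The utility matrix is then spanned by $\pi$ and $w$, so it has rank at most two. Ordering the three coordinates strictly and keeping the perturbation small gives all agents the same strict ranking of the item types.

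\textbf{Step 1: every bundle has price one.} Set the threshold $\tau=L$, up to a perturbation-sized correction. Prices are multiples of $1/(RS)$ and sum to $n$. So if some agent's bundle has price other than $1$, some agent has price at most $1-1/(RS)$. That agent's utility is then at most $L-L/(RS)+O(\max_i\lambda_i\,m)<\tau$, provided $L$ is chosen large enough. Hence every allocation of egalitarian welfare at least $\tau$ gives each agent price exactly $1$. Coprimality of $R$ and $S$ then forces each agent's first- and second-family counts to be $(S,0)$ or $(0,R)$, exactly as in the Nash proof. The first-family multiplicity forces exactly $s$ agents $I$ to take first-family bundles.

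\textbf{Step 2: the threshold becomes a resource condition.} On price-one bundles the threshold condition reads $\lambda_i\,w(x_i)\geq0$, or a shifted version of it. I would choose $w$ and $\lambda_i$ so that an agent in $I$ with $r_i$ rewarded items has perturbation term proportional to $r_i-a_i$. This makes the threshold demand $r_i\geq a_i$ for $i\in I$. Summing over $I$ against the rewarded multiplicity $B$ gives $\sum_{i\in I}a_i\leq B$.

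\textbf{Step 3 (main obstacle): the reverse inequality.} With a single shared pool, every deficit I have tried is increasing in $W:=\sum_{i\in I}a_i$, so only $W\leq B$ is enforced. I would first try to tie the second-family agents to the same pool with an opposite-signed deficit. Concretely, give second-family agents a perturbed second-family value $(L-\kappa(\Delta-a_i))/R$ and let rewarded items carry a perturbation that they also value. Then the total rewarded demand is $W+\kappa(s\Delta-A+W)$ versus a supply $B+\kappa(s\Delta-B)$, but this is again monotone in $W$. If that fails, I would use the rank-two freedom to let rewarded items carry price slightly different from $1/S$. The price-one constraint of Step 1 would then couple $r_i$ to the first/second split, giving a second, independent linear constraint on $W$ of the opposite sign. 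If neither works, I would instead try to give the rewarded items slightly higher price, so that the count identity itself yields $W\geq B$. This is the step I expect to be hardest.

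\textbf{Step 4: completeness and finishing.} Once both inequalities hold, a balanced partition yields an allocation meeting $\tau$: give each $i\in I$ exactly $a_i$ rewarded items and $S-a_i$ ordinary first-family items, and give every other agent $R$ second-family items. Conversely, any allocation meeting $\tau$ yields a balanced partition. All quantities are rational, and I would scale by the common denominator to make utilities and the threshold positive integers. For exact optimization, I would check whether the returned allocation reaches $\tau$; since egalitarian welfare is a rational minimum, this comparison is exact.
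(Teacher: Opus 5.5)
Your Steps~1, 2 and~4 are fine as far as they go, but Step~3 is a genuine gap. Worse, it cannot be closed inside the skeleton fixed by your Construction and Step~1.

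Once every bundle is forced to have price exactly $1$, the only admissible bundles are $(S-r_i,r_i,0)$ and $(0,0,R)$. The second-family bundle is unique, so a second-family agent's threshold test depends only on $i$ and never on $W$. This is why your first repair stays monotone: such agents can never hold rewarded items. For a first-family agent, $u_i$ is affine in $r_i$. So for any choice of $w$, $\lambda_i$ and $\tau$, its test is a one-sided bound $r_i\geq\ell_i$ or $r_i\leq h_i$ within $[0,S]$; take $\ell_i=0$ or $h_i=S$ on the unconstrained side.

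An allocation with first-family set $I$, $|I|=s$, therefore exists iff two conditions hold. Every agent outside $I$ must pass its fixed test, and $\sum_{i\in I}\ell_i\leq B\leq\sum_{i\in I}h_i$. Since $S>B$, the right inequality is automatic as soon as $I$ contains an agent with $h_i=S$; otherwise every member has $\ell_i=0$ and the left one is automatic. Either way you are choosing $s$ agents to minimize or maximize a separable sum, subject to forced and forbidden memberships, which a simple greedy selection solves. So whenever Step~1 is in force, the threshold question for your instances is polynomial-time decidable, and no reduction can come out of this family.

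The root cause is that the common price target $1$, together with equal prices for ordinary and rewarded items, makes the price identity independent of $a_i$. An egalitarian threshold can only add separable one-sided constraints on top of that.

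The missing idea is to make the forced price agent-dependent, so that the price identity itself supplies the second inequality. The paper's construction is as follows.
\begin{itemize}
\item Prices are $C(x,y,z)=Mx+(M+1)y+(2M+3)z$ with $M=3A+1$, and agent targets are $h_i=M(M+1-a_i)$.
\item Utilities are $\bigl(C(x,y,z)+\varepsilon(a_iy-(M-3a_i)z)\bigr)/h_i$ with $\varepsilon=1/M^2$, the threshold is $1$, and the stocks are $s(M+1)-B$, $sM-3B$, $B$.
\item Meeting the threshold forces price at least $h_i$, and the total stock price is $\sum_i h_i$, so every agent pays exactly $h_i$.
\item Reducing modulo $M$ leaves either $P_i=(M+1-a_i,0,0)$ or $(z-a_i,M-3z,z)$. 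On the second branch, nonnegativity gives $z\geq a_i$. The perturbation there equals $\varepsilon M(a_i-z)/h_i$, so the threshold gives $z\leq a_i$.
\end{itemize}
Each agent thus has exactly two admissible bundles with fixed consumptions, $P_i$ and $Q_i=(0,M-3a_i,a_i)$. Since all items must be allocated, the type-3 stock gives $\sum_{i\in I}a_i=B$ as an equality, and the type-2 stock then gives $|I|=s$. No separate reverse inequality is ever needed. Rank two follows from $v_3=3v_2-(1+\varepsilon)v_1$, and the common strict ranking from $M<M+1+\varepsilon a_i<2M+3-\varepsilon(M-3a_i)$.

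Your remaining fallbacks (changing the rewarded price, or extracting $W\geq B$ from a count identity) only gesture in this direction and are not worked out. As written, the proposal does not prove the theorem.
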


\begin{proof}
We reduce from \textsc{Balanced-Equal-Cardinality-Partition}, which
is $\mathsf{NP}$-hard by Lemma~\ref{lem:becp}. Let
$a_1,\ldots,a_n$ be positive integers, where $n$ and
$A:=\sum_i a_i$ are even. 
Set
\[
 s:=n/2,\qquad B:=A/2,\qquad M:=3A+1,\qquad
 \varepsilon:=\frac1{M^2},\qquad
 h_i:=M(M+1-a_i).
\]
There is one agent for each source integer and there are three item
types. Their multiplicities and the value of one item to agent $i$ are
\[
\begin{array}{c|c|c}
 \text{Item type}&\text{Multiplicity}&\text{Value to agent }i\\ \hline
 1&s(M+1)-B&M/h_i\\[2pt]
 2&sM-3B&(M+1+\varepsilon a_i)/h_i\\[2pt]
 3&B&\bigl(2M+3-\varepsilon(M-3a_i)\bigr)/h_i.
\end{array}
\]
The welfare threshold is $1$. All multiplicities are positive
integers: $s\geq1$, $B=A/2>0$, $M+1>B$, and $M>3B$.
All values are positive rationals: $h_i>0$ and
$0<M-3a_i<M$, so the third numerator is greater than
$2M+3-1/M>0$. Moreover, for every agent the three item values
are strictly increasing in the displayed type order, since $M<M+1+\varepsilon a_i
 <2M+3-\varepsilon(M-3a_i).$ 
Thus the construction has exactly three distinct item types.
If $v_1,v_2,v_3$ denote its utility-column vectors, then $ v_3=3v_2-(1+\varepsilon)v_1,$ 
so the utility matrix has rank at most two.

\medskip
\noindent\textbf{A common resource price forces equality.}
For a bundle $(x,y,z)\in\mathbb Z_{\geq0}^3$, define its price $ C(x,y,z):=Mx+(M+1)y+(2M+3)z.$ 
This price is independent of the agent. Let $(x_i,y_i,z_i)$ be the bundle assigned to agent $i$. Agent $i$ obtains utility at least $1$ precisely when
\begin{equation}
\label{eq:three-esw-utility-test}
 C(x_i,y_i,z_i)+\varepsilon\bigl(a_i y_i-(M-3a_i)z_i\bigr)\geq h_i.
\end{equation}
Every bundle satisfying this inequality has price at least $h_i$.
Otherwise, since $C(x_i,y_i,z_i)$ and $h_i$ are integers,
\(
 C(x_i,y_i,z_i)\leq h_i-1.
\)
Moreover, $(M+1)y_i\leq C(x_i)<h_i\leq M(M+1),$ 
so $y_i<M$. Using $a_i\leq A$ and $M=3A+1$, we obtain $\varepsilon a_i y_i\leq\frac{A}{M}<1.$  
Since $M-3a_i>0$ and $z_i\geq0$, it follows that $C(x_i,y_i,z_i)
  +\varepsilon\bigl(a_i y_i-(M-3a_i)z_i\bigr)
 \leq h_i-1+\varepsilon a_i y_i
 <h_i,$ 
contradicting~\eqref{eq:three-esw-utility-test}. Hence $C(x_i,y_i,z_i)\geq h_i$.

The total price of all available items is
\begin{align*}
 M\bigl(s(M+1)-B\bigr)
 +(M+1)(sM-3B)+(2M+3)B
 =nM(M+1)-MA
 ={\sum}_{i=1}^n h_i.
\end{align*}
Feasibility therefore implies $\sum_{i=1}^n h_i
 \leq\sum_{i=1}^n C(x_i,y_i,z_i)
 \leq\sum_{i=1}^n h_i.$ 
Thus $\sum_{i=1}^n\bigl(C(x_i,y_i,z_i)-h_i\bigr)=0.$ 
Each summand is nonnegative, so every summand is zero.
Consequently, $C(x_i,y_i,z_i)=h_i\qquad\text{for every }i.$ 
Furthermore, every item must be assigned: leaving any item
unassigned would make the upper bound on the total assigned
price strict, since all three item prices are positive.

\medskip
\noindent\textbf{Every agent has exactly two feasible bundles at its
forced price.}
Fix agent $i$ and an integer bundle with $C(x_i,y_i,z_i)=h_i$. Reducing
this equality modulo $M$ shows that $y_i+3z_i=kM$ 
for some integer $k\geq0$. Substitution gives
\begin{equation}\label{eq:xy}
 y_i=kM-3z_i,
 \qquad
 x_i=(M+1)(1-k)-a_i+z_i.
\end{equation}
If $k\geq2$, nonnegativity of $x_i$ and $y_i$ would imply $(M+1)(k-1)+a_i\leq z_i\leq\frac{kM}{3}.$ 
This is impossible, because the left endpoint exceeds the right
endpoint by $\frac{(2k-3)M}{3}+(k-1)+a_i>0.$ 
Thus $k$ is either zero or one. When $k=0$, nonnegativity forces
$y_i=z_i=0$, giving the bundle $P_i:=(M+1-a_i,0,0).$
When $k=1$,~\eqref{eq:xy} and nonnegativity of $x_i$ and $y_i$ give $(x_i,y_i,z_i)=(z_i-a_i,M-3z_i,z_i)$, where $a_i\leq z_i\leq\lfloor M/3\rfloor.$ 
On such a bundle, the utility of agent $i$ is
\[
 u_i=1+\frac{\varepsilon}{h_i}
       \bigl(a_i y_i-(M-3a_i)z_i\bigr)=1+\frac{\varepsilon}{h_i}
       \bigl(M(a_i-z_i)\bigr).
\]
where we use $C(x_i,y_i,z_i)=h_i$.
As $z_i\geq a_i$, this utility is at most $1$, and it is at least $1$ if and only if $z_i=a_i$. Thus, among the bundles in the case $k=1$, exactly one meets the utility threshold, namely, $ Q_i:=(0,M-3a_i,a_i).$ 
Both $P_i$ and $Q_i$ are nonnegative integer bundles and give agent
$i$ utility exactly $1$. Thus these are precisely its two feasible bundles at price $h_i$.

\medskip
\noindent\textbf{Recovering the balanced partition.}
In a threshold-feasible allocation, let $I$ be the agents receiving
$Q_i$; all remaining agents receive $P_i$. The constraint of type~3 items
gives $ \sum_{i\in I}a_i=B.$ 
The constraint of type~2 items then gives $\sum_{i\in I}(M-3a_i)=M|I|-3B=sM-3B,$ 
so $|I|=s$. Hence $I$ is a balanced partition.

Conversely, if $I$ is a balanced partition, give $Q_i$ to each agent
$i\in I$ and $P_i$ to every other agent. The type~2 and type~3
constraints are met by the two balanced-partition equalities. The type~1
usage is
\[
 {\sum}_{i\notin I}(M+1-a_i)
 =(n-s)(M+1)-(A-B)
 =s(M+1)-B,
\]
which also matches its stock. Every agent obtains utility exactly
$1$. This proves both directions of the reduction.

\medskip
\noindent\textbf{Encoding length and exact optimization.}
Every stock, numerator, and denominator has polynomial binary
encoding length. There are only $n$ explicitly listed agents, and the
construction uses rational arithmetic on integers with
$O(\log(n+1)+\log(A+1))$ bits. Multiplying all utility coefficients
and the threshold by the product of their denominators converts
them to positive integers. The encoding length of that common
multiplier is at most the sum of the individual denominator lengths,
so this conversion remains polynomial.

An exact optimal-allocation algorithm would decide the source
problem by testing whether every returned utility is at least the
threshold. These are exact rational comparisons of polynomial
encoding length. Hence exact allocation optimization is $\mathsf{NP}$-hard as
well.
\end{proof}

The reductions with a fixed number of item types establish ordinary
$\mathsf{NP}$-hardness; they do not establish strong
$\mathsf{NP}$-hardness. Their numerical gaps need not be
inverse-polynomial in the binary input length. Thus these results are
consistent with the FPTAS in Section~\ref{sec:fixed-item-types}.

\subsection{One agent type and an unrestricted number of item types}
\label{subsec:one-agent-type-hardness}

\begin{theorem}[Hardness with one agent type]
\label{thm:one-agent-type-hardness}
For every fixed
$p\in(\mathbb Q\cap(-\infty,1))\cup\{-\infty\}$, the decision problem of
determining whether $\OPT_p\geq\tau$ is strongly $\mathsf{NP}$-hard, even
when there is only one agent type. Consequently, exact maximization of
$p$-mean social welfare is strongly $\mathsf{NP}$-hard under the same
restriction. The reduction uses positive integer utilities; the number of
item types is not fixed.
\end{theorem}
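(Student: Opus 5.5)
We reduce from \textsc{3-Partition}, which is strongly $\mathsf{NP}$-hard. An instance consists of $3t$ positive integers $b_1,\ldots,b_{3t}$ with $\sum_j b_j=tB$ and $B/4<b_j<B/2$, all bounded by a polynomial in the input length. It asks whether the integers can be split into $t$ triples each summing to $B$.

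\textbf{The instance.} We build one agent type with $n_1=t$ identical agents and $q=3t$ item types. Item type $j$ has multiplicity $m_j=1$ and the common utility $u_{1,j}=b_j$. The utilities are then positive integers bounded polynomially in magnitude, which is what strong hardness requires. We allow unrestricted item types precisely so that each $b_j$ can be its own type with multiplicity one. Every allocation gives agent utilities $U_1,\ldots,U_t$, nonnegative integers with $\sum_i U_i=tB$.

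\textbf{Reducing to equal bundles.} The key observation is that, for every $p<1$ (including $p=0$ and $p=-\infty$), the $p$-mean of nonnegative reals with fixed sum $tB$ is at most $B$. Equality holds if and only if all values equal $B$; this is the power-mean inequality $\PSW_p\le\PSW_1=B$ with its strict equality case. Hence $\OPT_p\ge B$ holds if and only if some allocation gives every agent utility exactly $B$. By the constraint $B/4<b_j<B/2$, that happens exactly when the \textsc{3-Partition} instance is a yes-instance. Taking the threshold $\tau:=B$, a positive integer, yields the decision reduction for every fixed $p$, and therefore exact maximization is strongly hard as well.

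\textbf{Quantitative gap.} I would also record the gap, which is needed later to rule out an FPTAS. In a no-instance some $U_i\ne B$, and by integrality $\sum_i(U_i-B)^2\ge2$.
\begin{itemize}
\item For $p=-\infty$ the gap is immediate: $\min_i U_i\le B-1$.
\item For finite $p<1$ I would apply a strong-concavity Taylor bound to $\phi_p$ on $[0,tB]$, as in the proof of Theorem~\ref{thm:three-types-nash-hardness}. This should give $\OPT_p\le B(1-1/\mathrm{poly}(t,B))$.
\item Agents with zero utility, when $p\le0$, are handled separately, since their welfare is then $0$.
\end{itemize}
The main obstacle is this strict-inequality gap for general rational $p<1$, uniformly across $p<0$, $p=0$, and $0<p<1$. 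I would prove it by bounding $\phi_p''$ on $[1,tB]$, or $[0,tB]$ for $p>0$, and then converting the $\Phi_p$ gap into a $\PSW_p$ gap through monotonicity. The decision reduction itself needs only the equality case.
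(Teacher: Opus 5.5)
Your proposal is correct and follows the paper's proof: the same \textsc{3-Partition} reduction to $t$ identical agents with threshold $\tau=B$, the power-mean inequality with its equality case (zero utilities for $p\le0$ give welfare $0<B$), and the bounds $B/4<b_j<B/2$ forcing the bundles to be triples. Your quantitative-gap section is not needed for this theorem; the paper proves the gap separately in Corollary~\ref{cor:no-fptas-one-agent-type}, uniformly in $p$, via $\PSW_p\le\PSW_\alpha$ for a fixed $\alpha\in(p,1)$ and a Taylor bound on $t^\alpha$. One point should be made explicit: the ``therefore'' for exact maximization should rest on the integer check that every bundle of a returned optimal allocation has utility exactly $B$, which your equality characterization supports, and not on comparing a possibly irrational welfare value with $\tau$.
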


\begin{proof}
We reduce from \textsc{3-Partition}. An instance consists of positive
integers $a_1,\ldots,a_{3r}$ and $B$ such that $\sum_{j=1}^{3r}a_j=rB$ and  $\frac B4<a_j<\frac B2$ for every $j\in[3r].$ The question is whether the numbers can be partitioned into $r$ triples,
each having sum $B$. Under these restrictions, \textsc{3-Partition} is
strongly $\mathsf{NP}$-complete.

Construct an instance with $r$ agents of a single agent type. For every
$j\in[3r]$, introduce one item of value $a_j$ to every agent. Equivalently,
one may initially regard each item as a type of multiplicity one and then
merge types having the same value. Thus there are at most $3r$ item types,
all utilities are positive integers, and the decision threshold is
$\tau:=B$.

Let $x$ be an allocation and put $v_i:=u_1(x_i)$ for every $i\in[r]$.
Because $x$ assigns every item exactly once, $\sum_{i=1}^r v_i=\sum_{j=1}^{3r}a_j=rB.$ 
Thus the arithmetic mean of $v_1,\ldots,v_r$ is $B$.

First let $p<1$ be finite. If $p\leq0$ and some $v_i=0$, then the
zero-utility convention gives $\PSW_p(x)=0<B$. Otherwise, the
generalized-mean inequality gives $\PSW_p(x)\leq B$. Since $p<1$, equality
holds if and only if $v_1=\cdots=v_r=B$. For $p=-\infty$,
$\PSW_{-\infty}(x)=\min_i v_i\leq B$; moreover, equality implies
$v_i\geq B$ for every $i$, and the displayed sum then forces $v_i=B$ for
every $i$. Hence, for every value of $p$ in the statement, $\PSW_p(x)\leq B$ with equality if and only if $
 v_i=B$ for every $i\in[r]$.

If the \textsc{3-Partition} instance is a yes-instance, assign the three
items corresponding to each triple to one agent. Every agent then has
utility $B$, so the resulting allocation meets the threshold.

Conversely, suppose that $\PSW_p(x)\geq B$. The preceding equivalence gives
$v_i=B$ for every $i\in[r]$. As $\frac B4<a_j<\frac B2$, each bundle contains exactly three items: a
bundle containing at most two items has value strictly less than $B$, whereas
a bundle containing at least four items has value strictly greater than $B$.
Because every item is allocated exactly once, the $r$ bundles yield a
partition into triples of sum $B$. Thus the original \textsc{3-Partition}
instance is a yes-instance.

The reduction is polynomial and copies the numerical values of the
\textsc{3-Partition} instance without increasing them. It therefore remains
a polynomial reduction when the source integers are polynomially bounded,
which proves strong $\mathsf{NP}$-hardness.

The claim for computing an exact optimal allocation does not require
an exact comparison of irrational welfare values. Given such an
allocation, check whether every positive-multiplicity bundle has integer
utility $B$. By the equivalence proved above, this happens if and only
if the source instance is a yes-instance, and the check takes polynomial
time for both explicit and compressed allocations.
\end{proof}

Strong hardness does not by itself rule out an FPTAS for a real-valued
objective. For the preceding reduction, an inverse-polynomial gap below
the optimum of a yes-instance gives the required approximation barrier.

\begin{corollary}[No FPTAS with one agent type]
\label{cor:no-fptas-one-agent-type}
For every fixed
$p\in(\mathbb Q\cap(-\infty,1))\cup\{-\infty\}$, maximizing $p$-mean
social welfare admits no FPTAS unless $\mathsf{P}=\mathsf{NP}$, even with
one agent type and positive integer utilities. Here the number of item
types is unrestricted.
\end{corollary}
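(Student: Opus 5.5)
We use the reduction from \textsc{3-Partition} in the proof of Theorem~\ref{thm:one-agent-type-hardness}, with values that are polynomially bounded by strong hardness. The argument shows that an FPTAS run with a suitable inverse-polynomial accuracy $\varepsilon$ separates yes-instances from no-instances. In a yes-instance, $\OPT_p=B$. In a no-instance, every allocation has some $v_i\neq B$. Since the $v_i$ are integers with $\sum_i v_i=rB$, the deviation vector $v-B\mathbf 1$ has $\sum_i(v_i-B)^2\geq2$.

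The key step is a quantitative gap: every allocation of a no-instance has $\PSW_p(x)\leq B(1-g)$, where $g$ is an explicit inverse polynomial in $r$ and $B$. The cases are handled as follows.

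\begin{itemize}
\item For $p=-\infty$: some $v_i\leq B-1$, so $\ESW\leq B-1=B(1-1/B)$.
\item For finite $p\leq0$ with a zero utility: the welfare is $0$.
\item Otherwise: we use a second-order bound on the concave transform $\phi_p$ on $[0,rB]$, as in the Nash reduction in Theorem~\ref{thm:three-types-nash-hardness}.
\end{itemize}

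In the last case we write $\phi_p(v_i)\leq\phi_p(B)+\phi_p'(B)(v_i-B)-c\,(v_i-B)^2$, where $c$ is a lower bound on $|\phi_p''|/2$ over $(0,rB]$. This gives $c\geq c_p(rB)^{p-2}$, using $|\phi_p''(t)|=|p(p-1)|t^{p-2}$ for $p\neq0$ and $t^{-2}$ for $p=0$. The linear terms cancel because $\sum_i(v_i-B)=0$. Hence $\Phi_p(x)\leq r\phi_p(B)-2c$. Translating back through $z\mapsto(z/r)^{1/p}$ (or $\exp(z/r)$ for $p=0$), and using the mean value theorem on that monotone map near $\phi_p(B)$, yields $\PSW_p(x)\leq B-\Omega\bigl(c\,B^{1-p}/(r\,|p|')\bigr)$. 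Here $|p|'$ is a constant depending on $p$. Concretely, for $0<p<1$ we get $Q_p\leq rB^p-2c'$ with $c'\ge p(1-p)(rB)^{p-2}/2$, so $\PSW_p\leq B(1-2c'/(rB^p))^{1/p}\leq B(1-2c'/(rB^p))$. The negative and Nash cases are analogous, with the direction of the inequality flipped for $p<0$. In every case $g\geq(rB)^{-O(1)}$, with constants depending only on $p$. For $p\leq 0$, the Taylor bound must also handle arguments near $0$; there we use $|\phi_p''|$ increasing as $t\to0$, so the minimum over $(0,rB]$ is attained at $rB$.

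Given the gap, we run the hypothetical FPTAS with $\varepsilon:=g/2$, which is inverse-polynomial because $B$ and $r$ are polynomially bounded in the unary-size \textsc{3-Partition} instance. On a yes-instance the FPTAS returns welfare at least $(1-g/2)B>(1-g)B$. The returned allocation is then not of the no-instance kind; by the equivalence in Theorem~\ref{thm:one-agent-type-hardness} this forces all $v_i=B$. To avoid comparing irrational values, we do not test the welfare. Instead we check whether every bundle has integer utility exactly $B$, which is an exact integer test. The returned allocation passes this test on a yes-instance, because any allocation with some $v_i\neq B$ has welfare at most $(1-g)B$. On a no-instance it never passes. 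The running time is polynomial in the input size and $1/\varepsilon$, so this decides \textsc{3-Partition}, giving $\mathsf P=\mathsf{NP}$.

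The main obstacle is the uniform gap derivation across all signs of $p$, particularly checking that $g$ stays inverse-polynomial after transporting the additive gap in $\Phi_p$ through the outer power or exponential map. The plan is to control this via monotonicity and an explicit derivative bound at $\phi_p(B)$.
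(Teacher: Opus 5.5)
Your proof is correct and shares its outer shell with the paper's: the \textsc{3-Partition} instances with polynomially bounded $B$, the integrality bound $\sum_i(v_i-B)^2\geq2$, an FPTAS call with inverse-polynomial accuracy, the exact integer test $v_i=B$, and the $p=-\infty$ case with $\varepsilon=1/(2B)$.

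The routes differ in how the gap is obtained for finite $p$. You apply the second-order bound to $\phi_p$ itself, separately for $p>0$, $p=0$ and $p<0$. You then push the additive deficit $2c$ in $\Phi_p$ back through $w\mapsto w^{1/p}$, $(-w)^{1/p}$ or $e^{w}$.

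The paper avoids this case analysis with power-mean monotonicity. It fixes $\alpha:=(1+\max\{p,0\})/2\in[1/2,1)$ with $\alpha>p$ and bounds $\PSW_p(x)\leq\PSW_\alpha(x)$. A single Taylor estimate for $t^\alpha$ on the normalized utilities $z_i=v_i/B\in[0,r]$ then gives the uniform rational gap $\delta=\alpha(1-\alpha)/(r^3B^2)$. There is no zero-utility split, no sign-dependent transport, and no need to rationalize $\varepsilon$.

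Your route costs more case work. For $p<0$ it also gives a weaker gap of order $r^{-|p|-3}B^{-2}$, because you lower-bound $|\phi_p''|$ at $rB$. For fixed $p$ this is still inverse-polynomial.

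One step needs more care than your sketch suggests. Each outer map has increasing derivative, so its derivative at $\phi_p(B)$ is the maximum on the relevant interval, not the minimum. You should first use monotonicity to move to $\phi_p(B)-2c/r$. Then apply the mean value theorem on $[\phi_p(B)-2c/r,\phi_p(B)]$ with the derivative at the left endpoint.

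This costs only a $p$-dependent constant, because $2c/r$ is small relative to $|\phi_p(B)|=B^p$ (or $2c/r$ itself is small when $p=0$):
\begin{itemize}
\item the ratio $(2c/r)/B^p$ is at most $1/4$ for $0<p<1$;
\item that ratio is at most $|p|(1+|p|)$ for $p<0$;
\item for $p=0$, $2c/r\leq1$.
\end{itemize}
Your explicit computation $\PSW_p\leq B(1-2c'/(rB^p))^{1/p}\leq B(1-2c'/(rB^p))$ for $0<p<1$ already sidesteps this. Finally, replace $g$ by a rational lower bound before passing $\varepsilon=g/2$ to the FPTAS.
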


\begin{proof}
Use the instances constructed in the proof of
Theorem~\ref{thm:one-agent-type-hardness}, restricting the source
\textsc{3-Partition} instances to the strongly $\mathsf{NP}$-hard family
in which $B$ is polynomially bounded in the source input length. There
are $r$ identical agents and $3r$ items of total value $rB$. For every
allocation $x$, the utilities $v_i:=u_1(x_i)$ are nonnegative integers
satisfying $\sum_{i=1}^r v_i=rB$. The source instance is a yes-instance
if and only if an allocation with $v_i=B$ for every $i$ exists.

First fix a finite rational $p<1$ and define the positive rational
constants and the rational gap parameter
\begin{equation}
 \alpha:=\frac{1+\max\{p,0\}}2,
 \qquad
 c:=\alpha(1-\alpha),
 \qquad
 \delta:=\frac{c}{r^3B^2}.
 \label{eq:no-fptas-gap-parameter}
\end{equation}
Thus $p<\alpha<1$ and $\alpha\geq1/2$. Normalize the utilities by
setting $z_i:=v_i/B$. Then $0\leq z_i\leq r$ and
$\sum_{i=1}^r z_i=r$. If the utilities are not all equal to $B$, the
integer deviations $v_i-B$ sum to zero and include both a positive
integer and a negative integer. Consequently,
\begin{equation}
 \sum_{i=1}^r(z_i-1)^2
 =\frac1{B^2}\sum_{i=1}^r(v_i-B)^2
 \geq\frac2{B^2}.
 \label{eq:no-fptas-integer-deviation}
\end{equation}

The function $f(t):=t^\alpha$ satisfies, for $0<t\leq r$, $f''(t)=-c\,t^{\alpha-2}\leq-\frac{c}{r^2},$ 
where we use $r\geq1$. By continuity, the inequality also holds at $t=0$. Taylor's theorem around $1$ gives
\begin{equation}
 t^\alpha
 \leq 1+\alpha(t-1)-\frac{c}{2r^2}(t-1)^2
 \qquad(0\leq t\leq r).
 \label{eq:no-fptas-concavity}
\end{equation}
Averaging this inequality at $z_1,\ldots,z_r$, the linear terms cancel.
For every allocation whose utilities are not all $B$,
\eqref{eq:no-fptas-integer-deviation} therefore implies
\begin{equation}
 \frac1r\sum_{i=1}^r z_i^\alpha
 \leq 1-\frac{c}{2r^3}\sum_{i=1}^r(z_i-1)^2
 \leq 1-\delta.
 \label{eq:no-fptas-power-gap}
\end{equation}
By monotonicity of generalized means in their order and by homogeneity,
\begin{equation}
 \PSW_p(x)
 \leq \PSW_\alpha(x)
 =B\left(\frac1r\sum_{i=1}^r z_i^\alpha\right)^{1/\alpha}
 \leq B(1-\delta)^{1/\alpha}
 \leq B(1-\delta).
 \label{eq:no-fptas-welfare-gap}
\end{equation}
The first inequality also holds under the zero-utility convention when
$p\leq0$ and some $v_i=0$. The last inequality uses
$0<\delta<1$ and $1/\alpha>1$.

Suppose an FPTAS exists for this fixed $p$. Run it on the constructed
instance with the rational accuracy parameter $\varepsilon:=\frac\delta2
 =\frac{c}{2r^3B^2}.$ 
On a yes-instance, $\OPT_p=B$, so the returned allocation $x$ satisfies $\PSW_p(x)\geq(1-\delta/2)B>(1-\delta)B.$ 
By \eqref{eq:no-fptas-welfare-gap}, its utilities must all equal $B$.
On a no-instance, no such allocation exists. Thus one can decide the
source instance by checking whether every utility in the returned
allocation equals the integer $B$. This check also applies to a
compressed allocation by checking every bundle with positive
multiplicity, and requires no evaluation or comparison of irrational
welfare values. Since $p$ is fixed, $c$ is a fixed positive rational,
and $\frac1\varepsilon=\frac{2r^3B^2}{c}$ 
is polynomially bounded on the chosen source family. The construction,
the accuracy encoding, the FPTAS running time, and the final integer
checks are therefore polynomial in the source input length. This would
imply $\mathsf{P}=\mathsf{NP}$.

For $p=-\infty$, use the same reduction and set
$\varepsilon:=1/(2B)$. Every allocation whose utilities are not all
$B$ has some integer utility at most $B-1$, and hence has egalitarian
welfare at most $B-1$. On a yes-instance an FPTAS would return welfare
at least $(1-1/(2B))B=B-1/2$, forcing every utility to equal $B$.
Checking the integer utilities again decides \textsc{3-Partition} in
polynomial time, since $1/\varepsilon=2B$ is polynomially bounded.
\end{proof}

The restriction in Corollary~\ref{cor:no-fptas-one-agent-type} fixes the
number $K$ of agent types, not the number $n$ of agents: the reduction
has $K=1$ but $n=r$ grows with the source instance. It also allows the
number $q$ of item types to grow. Thus the corollary does not preclude
an FPTAS when $q$ is fixed, nor does it establish the same obstruction
for a fixed number of agents.

\section{An FPTAS for $p$-Mean Social Welfare with a Fixed Number of Item Types}
\label{sec:fixed-item-types}

In this section, the number $q$ of item types is fixed, whereas the number
$K$ of agent types is unrestricted. Both the item multiplicities $m_j$
and the agent multiplicities $n_k$ are encoded in binary. In particular,
the total number $n$ of agents can be exponential in the input length.
We use the notation and welfare definitions introduced in Section~\ref{sec:notation}.

\begin{theorem}
\label{thm:fixed-item-types-ptas}
For every fixed number $q$ of item types and every fixed
$p\in\mathbb Q\cup\{-\infty\}$, maximizing $p$-mean social welfare under
additive nonnegative rational utilities admits an FPTAS in the fully
compact model. More precisely, for every rational
$\varepsilon\in(0,1)$, the algorithm returns a compressed allocation $x$
such that $\PSW_p(x)\geq(1-\varepsilon)\OPT_p$ in time polynomial in the
compact input length, the supplied encoding length of the accuracy
parameter, and $1/\varepsilon$. The degree of the polynomial depends
only on $q$ and $p$, and is independent of $K$, $n$, and $\varepsilon$.
\end{theorem}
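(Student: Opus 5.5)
The plan is to round each agent's bundle to a small set of canonical \emph{bundle classes}, and then solve an assignment problem in which both sides are compressed.

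\textbf{Step 1: bundle classes.} Fix $\mu:=\Theta(\varepsilon/q)$. For each item type $j$, round every possible count $x_{i,j}\in\{0,\ldots,m_j\}$ down to a geometric grid
\[
G_j:=\{0\}\cup\{\lfloor(1+\mu)^\ell\rfloor:\ell\geq0\}\cap[0,m_j].
\]
Then $|G_j|=O(\log m_j/\mu)$, and the grid $G:=G_1\times\cdots\times G_q$ has $O((\log m_{\max}/\mu)^q)$ vectors. This is polynomial in the input length and in $1/\varepsilon$ because $q$ is fixed. Rounding a bundle coordinatewise down to $G$ loses at most a factor $(1+\mu)$ per coordinate, hence at most a factor $1+\mu$ in each agent's additive utility. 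The issue is that the rounded-down bundles leave items unused. Since utilities are nonnegative, this is harmless: we only need feasibility of a \emph{partial} allocation, and completion never hurts (Section~\ref{sec:notation}).

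Conversely, any optimal allocation, after rounding, yields an assignment of agents to grid vectors $g\in G$ whose total consumption is at most $\boldsymbol m$. Its transformed objective is within a factor $(1+\mu)$ of optimal in each utility. For $p\le 0$ a zero utility may arise only if the original utility was zero, since rounding never sends a positive count to zero. Hence positivity is preserved.

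\textbf{Step 2: the compressed configuration program.} Introduce integer variables $y_{k,g}\ge0$, one for each agent type $k$ and each $g\in G$. Impose $\sum_g y_{k,g}=n_k$ and $\sum_{k,g} y_{k,g}\,g_j\le m_j$ for every $j\in[q]$. For finite $p$, maximize $\sum_{k,g} y_{k,g}\widetilde\phi_{p,\eta}(u_k(g))$, using the rational coefficients of Lemma~\ref{lem:directed-rational-approximation}. Lemma~\ref{lem:weighted-rational-comparison} then transfers approximate optimality of this linear objective back to $Q_p$ or $\Lambda$, with the weights being the counts $y_{k,g}$.

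For $p=-\infty$, binary-search the threshold over the polynomially many values $u_k(g)$ and keep only the grid vectors meeting it; this leaves a feasibility problem. The program has only $q$ coupling constraints plus one assignment constraint per agent type, so its LP relaxation has a basic optimum with at most $K+q$ nonzeros. At most $q$ agent types are therefore fractional.

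\textbf{Step 3: rounding the fractional types, the main obstacle.} Round each fractional $y_{k,g}$ down, and assign the few leftover agents of the at most $q$ split types. Each leftover agent takes the \emph{smaller} of the grid vectors used fractionally by its type, or the best vector that remains feasible; these number at most $q$ per split type. The danger is twofold. Leftover agents could receive very poor bundles, which affects $p\le0$ through zero utilities and $p>0$ only mildly. And the count of leftover agents is at most $q$ per type, which is negligible compared to $n$ only when $n$ is large.

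I would first try to avoid rounding altogether. With $q$ fixed, the problem decomposes into $K$ blocks linked by $q+1$ rows of small dimension, which suggests an exact integer solution via the Goemans--Rothvo{\ss} / Lenstra-type results for bounded dimension. As a fallback, I would guess the bundles of the $O(q)$ leftover agents by enumerating over $G^{O(q)}$, which is polynomial since $q$ is fixed. I would also pre-reserve their resources, so that after rounding every agent of a split type receives at least the lexicographically smaller fractional configuration of its type, scaled by $(1-\mu)$.

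Finally, choose $\eta,\mu=\Theta(\varepsilon)$ so that the losses from grid rounding, rational approximation, and the $p$-mean transform compose to $1-\varepsilon$. The degree of the resulting polynomial depends only on $q$ and $p$. Output the compressed allocation $\{(k,g,y_{k,g})\}$, with unused items given to any agent.
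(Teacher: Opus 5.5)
\paragraph{Verdict: the proof has a gap at Step~3.} Your Steps~1 and~2 are sound and parallel the paper: the geometric grid, the linear objective with directed rational coefficients, and the transfer via Lemma~\ref{lem:weighted-rational-comparison}. The argument stops where the difficulty lies, though. Step~3 is never carried out, and the loss it must control is not negligible.

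\paragraph{Why the leftover agents matter.} The compact model allows explicitly listed agents ($K=n$, $n_k=1$). So the up to $q$ agents left without an integral configuration after flooring a basic LP optimum can carry a constant fraction of the welfare. This hurts $p>0$ as much as $p\le0$, not ``only mildly''. For $p\le0$ or $p=-\infty$, a single such agent with a poor bundle destroys the guarantee.

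What remains for these agents is only the LP slack plus $\sum_g\{y_{k,g}\}g=f_k\bar g^{(k)}$. Here $f_k\in\mathbb Z$ is the number of leftover agents of type $k$, and $\bar g^{(k)}$ is a convex combination of that type's support. By additivity and concavity, $\bar g^{(k)}$ would be an excellent bundle if it were integral. But it need not contain any support configuration, and its floor can be empty: support $\{(1,0),(0,1)\}$ with weights $\tfrac12$ gives $\lfloor\bar g^{(k)}\rfloor=(0,0)$.

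Your rule ``take the smaller fractional configuration or the best feasible vector'' is therefore either infeasible or has no utility guarantee. Rounding up instead needs up to $q$ extra items of each type, taken from other agents at relative cost $O(\varepsilon)$. That is precisely the statement the proof must establish. The same gap appears in your $p=-\infty$ version, since LP feasibility of a threshold does not imply integral feasibility.

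\paragraph{Why neither remedy works.}
\begin{itemize}
\item \emph{Exact solution via Lenstra or Goemans--Rothvo{\ss}.} The configuration IP has $K|G|$ integer variables, $K+q$ rows, and coefficients up to $m_j$. With $K$ unbounded it is not of bounded dimension, so Lenstra does not apply. Encoding the type counts for Goemans--Rothvo{\ss} needs dimension $q+K$ and $K$ polytopes, i.e.\ time $\operatorname{enc}^{2^{O(q+K)}}$; this is why Proposition~\ref{prop:known-exact-egalitarian} fixes both $K$ and $q$. The underlying exact problem is also NP-hard already for $q=3$ (Theorem~\ref{np-hard-pmsw}).
\item \emph{Guessing the leftover agents' bundles.} This is circular: once they are fixed, the LP on the remaining agents again has a vertex with up to $q$ split agents.
\item \emph{Pre-reserving by shrinking a split type by $(1-\mu)$.} This frees only a $\mu$-fraction of that type's own consumption, which cannot pay for a whole configuration when $n_k$ is small. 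Counts below $1/\mu$ cannot be shrunk at all without losing whole items. Paying for $q$ configurations needs donors whose total consumption is $\gtrsim q/\mu$ times a configuration and whose individual counts are at least $1/\mu$. Guessing $\Theta(q/\mu)$ such donors per coordinate costs $(K|G|)^{\Theta(q^2/\mu)}$, which gives at most a PTAS.
\end{itemize}

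\paragraph{The missing idea.} The paper never splits agents fractionally.
\begin{itemize}
\item By Lemmas~\ref{lem:fixed-item-sparse-support} and~\ref{lem:fixed-item-universal-partition}, each type is cut into $O(2^q\log n_k)$ power-of-two blocks, and each block receives a single configuration.
\item A DP over blocks, in nonincreasing size order, tracks aggregate consumption rounded down to a power of two $\Delta_j\approx\eta L_j/H$. Here $L_j$ is the aggregate consumption of one guessed anchor block: the largest among blocks whose individual count in $j$ is at least $T=\lceil1/\eta\rceil$.
\item Admissibility (individual count below $T$, or aggregate at most $L_j$), divisibility of the block sizes, and the FREE suffix certificates keep the state count at $O((1+H^2/\eta)^q)$ per layer.
\item The total rounding excess is below $H\Delta_j\le\eta L_j$. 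It is deleted evenly from the anchor block, costing each of its agents at most a $2\eta$ fraction of a coordinate that is at least $1/\eta$.
\end{itemize}
Since only $q$ anchors are guessed, everything stays polynomial in $1/\varepsilon$.
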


\subsection{Algorithmic overview}
\label{subsec:fixed-item-overview}

The algorithm works with groups of identical agents throughout. We first
show that an optimum can be chosen so that agents of each type receive
at most $2^q$ distinct bundles. This statement concerns the existence of
a suitable optimum; it does not require us to compute its bundles or
their multiplicities. We next partition each type into a polynomial
number of blocks, each containing a power-of-two number of agents.
The partition depends only on $n_k$ and can represent every distribution
of $n_k$ agents among at most $2^q$ bundles. Consequently, after rounding
bundles to a geometric grid, a near-optimal allocation can be represented
by assigning one configuration to each block.

The main difficulty is that even one block may contain exponentially
many agents. We address it by distinguishing the number of items received
by an individual agent from the aggregate consumption of a block.
For each item coordinate, we guess an anchor block with maximum aggregate
consumption among blocks whose individual coordinate is large. A null
anchor records the case in which no individual coordinate is large.
The anchor configurations are fixed, and all other configurations are
required to respect the guessed bounds on large coordinates. Small
individual coordinates need not satisfy an aggregate anchor bound.

We round aggregate resource consumptions down to suitable powers of two
and process the remaining blocks in nonincreasing order of size. The
dynamic program records exact residual capacities only while they could
restrict the remaining blocks. Once a residual capacity is at least an
upper bound on the consumption of every remaining suffix assignment,
the coordinate is replaced by the symbol $\mathsf{FREE}$. This certificate
remains valid at later layers. Divisibility of the power-of-two block
sizes bounds the number of residual capacities below the suffix bound,
including in coordinates with a null anchor.

Resource rounding may introduce a small excess. We remove this excess
from the relevant anchor block, distributing the deletions as evenly as
possible among its agents. An anchor is used only for a large individual
coordinate, so every agent retains a multiplicative fraction of each
repaired coordinate and hence of her utility. In particular, the repair
preserves positive utility when $p\leq0$. The repair, completion, and
output are performed on compressed groups, without expanding any block.
Algorithm~\ref{alg:fixed-item-compact-fptas} summarizes the construction.

\begin{algorithm}[t]
\caption{Compact FPTAS for a fixed number of item types}
\label{alg:fixed-item-compact-fptas}
\begin{algorithmic}[1]
\Require Agent types $(n_k,(u_{k,j})_{j\in[q]})_{k\in[K]}$, item
multiplicities $(m_j)_{j\in[q]}$, fixed $p$, and accuracy $\varepsilon$
\State Delete empty types; handle the instance with no items; choose
$\eta$ as in \eqref{eq:fixed-item-eta}
\If{$p\leq0$ and the positive-utility flow test fails}
 \State \Return a compressed allocation assigning all goods to one agent
\EndIf
\State Construct the universal dyadic blocks, the configuration set
$\mathcal C$, and the rational objective coefficients
\State Set $V^{\mathrm{best}}\gets-\infty$
\For{each guess of a null anchor or an anchored block and configuration
for every coordinate}
 \State Check consistency, original capacities, and all fixed-block
coordinate bounds; discard a guess that fails a check
 \State Construct residual admissible sets; discard a guess with an empty set
 \State Choose the powers of two $\Delta_j$, order the residual blocks,
and compute the suffix bounds $M_{t,j}$
 \State Run Algorithm~\ref{alg:fixed-item-compact-dp} for this guess
 \If{a profile is returned with label $V>V^{\mathrm{best}}$}
  \State Store this profile and its guess; set $V^{\mathrm{best}}\gets V$
 \EndIf
\EndFor
\State Repair each excess by balanced deletions from its anchor block
\State Give all leftover items to one agent, and merge identical bundles
within each type
\State \Return the resulting compressed allocation
\end{algorithmic}
\end{algorithm}

\subsection{Sparse allocations and universal blocks}
\label{subsec:fixed-item-blocks}

The two structural lemmas below allow us to replace the agents by a
polynomial number of blocks while retaining a suitable optimum.

\begin{lemma}
\label{lem:fixed-item-sparse-support}
If $q\geq1$, there is an optimal allocation in which each agent type
receives at most $s:=2^q$ distinct bundle vectors.
\end{lemma}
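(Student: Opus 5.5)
We start from an optimal allocation and apply an exchange argument within one agent type at a time, using a potential to guarantee termination. First, an optimal allocation exists: the set of allocations is finite, since it is determined by integer bundles bounded by $\boldsymbol m$, although $n$ may be huge. Among all optimal allocations, we choose one that minimizes, for each type, a secondary potential. A natural choice is the total number of distinct bundle vectors summed over types. Another is the lexicographic spread $\sum_i\|x_i\|_2^2$, which is then minimized among optima.

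The core step is a parity argument in the spirit of Carathéodory/pigeonhole. Fix a type $k$ and suppose it receives more than $2^q$ distinct bundles. Classify each bundle vector $x\in\mathbb Z_{\ge0}^q$ by its parity class $x\bmod 2\in\{0,1\}^q$. With more than $2^q$ distinct bundles, two distinct bundles $x,y$ held by agents $i,i'$ of type $k$ share a parity class. Then $(x+y)/2$ is an integral vector. Replacing both bundles by $(x+y)/2$ keeps the total item usage unchanged, so the result is still an allocation. Both agents have the same linear utility $u_k$, and their new utilities are both $(u_k(x)+u_k(y))/2$. For every $p\le 1$, finite or $-\infty$, the welfare does not decrease. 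For finite $p\ne0$ this follows from concavity of $\phi_p$ (Jensen on two points); for $p=0$ it follows from concavity of $\log$; for $p=-\infty$ the minimum of the two utilities cannot decrease. Positivity of utilities is preserved when $p\le 0$. For $p>1$, which the theorem also allows, this averaging can decrease welfare. In that case we instead use the opposite move $x,y\mapsto (x-d, y+d)$ with $d=(x-y)/2$, which gives $(y,x)$, a trivial move. Moving one bundle to $x+(x-y)/2\cdot\ldots$ is not generally feasible. Instead, by convexity, one of the two directions $\pm d$ applied as $(x+d,y-d)$ or $(x-d,y+d)$ does not decrease $\phi_p(u(x))+\phi_p(u(y))$. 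We push in that direction as far as nonnegativity permits, in integer multiples of $d$, and obtain a bundle pair that is no worse.

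Termination is the main obstacle. In the concave case, averaging strictly decreases $\sum_i\|x_i\|_2^2$ because $x\ne y$, so the squared norm drops by $\|x-y\|^2/2>0$. This quantity is a nonnegative integer, so the process terminates. The process also never leaves the set of optimal allocations, since welfare does not decrease from an optimum. However, averaging can create a new bundle $(x+y)/2$ that is distinct from all others, so the number of distinct bundles is not monotone. This is why the potential is the squared-norm sum rather than the bundle count. At termination no two distinct bundles of a type share a parity class, which gives at most $2^q$ distinct bundles per type.

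For the convex case $p>1$, the analogous potential is reversed: the squared norm strictly increases and is bounded by $n\|\boldsymbol m\|^2$, so the process again terminates. We expect the paper's own argument to treat $p\le1$ as the primary case. We would verify carefully that the extreme-push version keeps integrality and that $d$ is integral; both hold because $x\equiv y \pmod 2$.
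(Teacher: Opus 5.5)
Your argument for $p\le 1$ (including $p=1$ and $p=-\infty$) is the paper's argument. Among the optimal allocations you take one minimizing $\sum_i\|x_i\|_2^2$, and you replace two distinct same-parity bundles of one type by their average. Concavity preserves optimality, and the potential drops by $\tfrac12\|x-y\|_2^2$. One small point: for $p\le 0$, your appeal to concavity of $\phi_p$ needs positive utilities, so the case $\OPT_p=0$ must be separated. There every allocation is optimal, so the averaging step trivially stays optimal; the paper instead gives all goods to one agent.

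The genuine gap is $p>1$, which the lemma must cover because Theorem~\ref{thm:fixed-item-types-ptas} allows every rational $p$. Your push move need not exist.

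Consider the line $\{(x+td,\,y-td):t\in\mathbb Z\}$ with $d=(x-y)/2$. Along it the objective is convex and symmetric about the average $t=-1$, and $t=-2$ is just the swap. So when $u_k(x)\neq u_k(y)$, the only nontrivial moves that do not lose welfare are $t\geq 1$ (or their mirror images $t\leq -3$). But $t=1$ can already violate nonnegativity.

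Concretely, take $q=2$, $p=2$, and one agent type with two agents, $u_{1,1}=1$, $u_{1,2}=0$, $m_1=m_2=2$. Giving the bundles $(2,0)$ and $(0,2)$ is optimal. It even maximizes $\sum_i\|x_i\|_2^2$ among optima. Yet both bundles lie in the parity class $(0,0)$. Here $d=(1,-1)$, and every move fails:
\begin{itemize}
\item the push gives $(3,-1)$, which is infeasible;
\item averaging gives $(1,1),(1,1)$, with strictly smaller welfare;
\item $t=-2$ is only the swap.
\end{itemize}
So the terminal property you rely on (no two distinct bundles of a type in one parity class) is false for $p>1$. Your reversed potential only guarantees termination, not the structure.

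The missing idea is a different exchange for the convex case: merging. Since $(\sum_i v_i)^p\geq\sum_i v_i^p$ for $p\geq 1$ and $v_i\geq 0$, you can give all bundles of a type to one of its agents and empty bundles to the others without decreasing $Q_p$. This leaves at most two distinct bundles per type, and $2\leq 2^q$ because $q\geq 1$. That is exactly how the paper handles $p\geq 1$.
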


\begin{proof}
First suppose that $p<1$ is finite and that, when $p\leq0$, an allocation
with positive welfare exists. Among all optimal allocations choose one
minimizing $\sum_{i=1}^n\|x_i\|_2^2$. Such a choice exists because the
set of allocations is finite. Suppose that two distinct bundles $x,y$
assigned to agents of the same type have the same parity in every
coordinate. Replace these two bundles by $(x+y)/2$ and $(x+y)/2$.
Both new bundles are integral, and their total consumption is unchanged.
Because the agents have the same additive utility function, their new
utilities both equal the average of their previous utilities.

For $p<1$, the function $\phi_p$ is concave on its domain. Thus the sum
of the two transformed utilities does not decrease. For $p\leq0$, the
two original utilities are positive, so the average remains in the
domain of $\phi_p$. The new allocation therefore remains optimal.
On the other hand, $ \|x\|_2^2+\|y\|_2^2
 -2\left\|\frac{x+y}{2}\right\|_2^2
 =\frac12\|x-y\|_2^2>0,$ 
contradicting the choice of the secondary objective. There is consequently
at most one distinct bundle in each of the $2^q$ parity classes.

For $p=-\infty$ and positive optimum, use the same secondary objective.
Averaging two utilities cannot decrease their minimum, so the same
argument applies. If $p\leq0$ and the optimum is zero, every allocation
is optimal. Assigning all goods to one agent and giving empty bundles
to all other agents uses at most two distinct bundles within any type.

Finally, suppose that $p\geq1$. Starting with an optimum, combine all
bundles assigned to a given type into the bundle of one agent of that
type, and give its other agents empty bundles. If the original utilities
within this type are $v_1,\ldots,v_{n_k}$, its contribution to $Q_p$
changes from $\sum_i v_i^p$ to $(\sum_i v_i)^p$, which is no smaller.
Doing this for every type preserves optimality and leaves at most two
distinct bundles per type. Since $q\geq1$, we have $2\leq2^q$.
\end{proof}

A histogram on at most $s$ categories is a vector
$(h_1,\ldots,h_t)\in\mathbb Z_{\geq0}^t$, where $t\leq s$, whose $j$th coordinate records the number of units in category $j$.
Its total multiplicity is $\sum_{j=1}^t h_j$.
We may append zero coordinates to regard it as a vector of length $s$. Here, the categories will represent distinct bundle vectors assigned
to agents of the same type, and each coordinate will record the number of agents receiving the corresponding bundle.

\begin{lemma}
\label{lem:fixed-item-universal-partition}
For every positive integer $N$ and integer $s\geq2$, one can compute a
partition of $N$ into $O(s(1+\log N))$ powers of two with the following
property: every histogram on at most $s$ categories with total
multiplicity $N$ can be represented by assigning one category to every
part of the partition. The partition can be computed in time polynomial
in $s$ and the binary encoding length of $N$.
\end{lemma}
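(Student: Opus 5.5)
Let $L:=\lfloor\log_2 N\rfloor$. The partition will contain, for every exponent $e\in\{0,\ldots,L\}$, roughly $s$ copies of $2^e$, trimmed so that the parts sum exactly to $N$. Concretely, write $N=\sum_{e\in S}2^e$ using the binary digits of $N$. I will use the multiset
\[
\mathcal P:=\{2^e\text{ with multiplicity } (s-1)\text{ for each } e<L\}\cup\{\text{binary parts of } N\},
\]
but its total exceeds $N$, so a cleaner choice is needed. I will instead build the partition greedily. Start from the binary representation of $N$, which has at most $L+1$ parts. Then, for $e=L,L-1,\ldots,1$, repeatedly split one copy of $2^e$ into two copies of $2^{e-1}$ until there are at least $s-1$ copies of each smaller power (or until there is nothing left to split). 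Each split preserves the sum. The number of copies of $2^e$ for $e<L$ is kept at most about $2s$, so the total part count is $O(s(1+\log N))$. The computation is clearly polynomial in $s$ and $\log N$.

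For the representation property, fix a histogram $(h_1,\ldots,h_s)$ with $\sum_j h_j=N$. I assign parts by processing powers from the smallest to the largest, maintaining residual demands $h_j'$. The invariant I want is as follows: after handling all powers below $2^e$, every residual $h_j'$ is divisible by $2^e$ and nonnegative. At level $e$, each category $j$ with $h_j'/2^e$ odd must receive one part of size $2^e$. There are at most $s$ such categories, so $s$ copies of $2^e$ suffice if available. Any further copies of $2^e$ are paired and treated as parts of size $2^{e+1}$: two parts of size $2^e$ assigned to the same category act like one part of size $2^{e+1}$. Here I need the number of spare copies to be even, or I must absorb one leftover copy. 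The parity condition holds automatically, because the total remaining sum $\sum_j h_j'$ equals the total of the remaining parts. Both totals are divisible by $2^e$, and the number of odd categories has the same parity as the number of copies of $2^e$ when both sides are counted modulo $2^{e+1}$. The last equality I would verify using the fact that all larger parts are multiples of $2^{e+1}$. At the top level the remaining parts, merged together, must match exactly. This holds because the sums agree and every residual is then a multiple of the largest available unit, with enough copies because of the merging.

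The main obstacle is the counting at level $e$: ensuring that at least as many copies of $2^e$ exist as there are odd categories, whenever $2^e$ can still be needed. I would resolve this by stating the construction as ``for every $e\le L$, the partition contains at least $\min\{s,\text{something forced by } N\}$ copies of $2^e$, plus carried merged copies.'' I would also argue that when fewer than $s$ copies exist at level $e$, it is because $N<s2^{e}$ roughly, in which case the number of categories with nonzero residual at that level is bounded accordingly. If this becomes messy, my fallback is a simpler universal partition: $s$ copies of $2^e$ for each $e<L'$, where $L'$ is chosen maximal with $s(2^{L'}-1)\le N$, and the remainder $N-s(2^{L'}-1)$ written in binary using powers at least $2^{L'}$ plus at most $s$ copies of $2^{L'}$. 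The greedy level-by-level argument then applies directly, since the low levels have full $s$ copies and the high part is a multiple-of-$2^{L'}$ instance handled by the same parity argument.
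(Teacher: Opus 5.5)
\textbf{There is a genuine gap: the partition itself is never correctly constructed.}

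Your representation mechanism is sound and is essentially the paper's. At level $e$, each of the $o_e$ categories with odd residual quotient takes one unit of size $2^e$. The mass identity forces the number of available units of size $2^e$ (original plus merged) to have the parity of $o_e$, so the spares pair up and can be carried upward. Once all remaining units have equal size, equality of sums forces the assignment. The paper spends the spare pairs immediately on even entries rather than carrying them up; both work. The missing piece is the inequality ``available units at level $e$ $\geq o_e$''. That inequality has to come from the construction, and neither construction you give supplies it.

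\textbf{The greedy top-down splitting.} Read as ``step $e$ tops up level $e-1$ to $s-1$ parts'', it does not keep $s-1$ parts per level, because the next step splits some of them again. Take $N=27$, $s=3$. Starting from $16+8+2+1$, it yields $8,8,4,2,2,1,1,1$. The histogram $(7,7,13)$ is then not representable: neither $7$ can use an $8$, and the parts below $8$ total $11<14$. So your heuristic that a deficit at level $e$ only occurs when $N<s2^e$ is false for this construction.

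\textbf{The fallback.} It is not correct as written either. With $L'$ maximal, $\rho:=N-s(2^{L'}-1)<s2^{L'}$, but $\rho$ need not be divisible by $2^{L'}$. For $N=27$, $s=3$ one gets $L'=3$ and $\rho=6$, so $\rho$ cannot be written with powers $\geq 2^{L'}$. If you instead use binary powers above $2^{L'}$, universality fails: $N=7$, $s=3$ gives $\{1,1,1,4\}$, which cannot represent $(2,2,3)$.

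\textbf{The missing idea.} Every level except the last should contain at least $s-1$ parts, with their number matching the parity of the remaining total, and the last level should consist only of equal parts. The paper achieves this bottom-up:
\begin{itemize}
\item choose $a_\ell\in\{s-1,s\}$ with $a_\ell\equiv N_\ell\pmod 2$ and create $a_\ell$ parts of size $2^\ell$;
\item set $N_{\ell+1}=(N_\ell-a_\ell)/2$;
\item once $N_\ell<s$, create $N_\ell$ parts of size $2^\ell$ and stop.
\end{itemize}
Parity then excludes $o_\ell=s$ when $a_\ell=s-1$, and the terminal level is exactly your equal-size case.

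\textbf{Repairing your fallback.} Write $\rho=q2^{L'}+r$ with $q\leq s-1$ and $0\leq r<2^{L'}$. Use $q$ parts of size $2^{L'}$, and add the binary digits of $r$ to the lower levels, which then have $s$ or $s+1$ parts each. Your level-by-level argument then goes through, and the part count is $O(s(1+\log N))$.
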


\begin{proof}
Set $N_0:=N$. At level $\ell\geq0$, if $N_\ell<s$, set
$a_\ell:=N_\ell$, create $a_\ell$ parts of size $2^\ell$, and terminate.
Otherwise, choose the unique $a_\ell\in\{s-1,s\}$ satisfying
$a_\ell\equiv N_\ell\pmod2$, create $a_\ell$ parts of size $2^\ell$,
and set $N_{\ell+1}:=\frac{N_\ell-a_\ell}{2}.$ 
Since $a_\ell\leq N_\ell$ and
$a_\ell\equiv N_\ell\pmod2$, the remaining total
$N_{\ell+1}=(N_\ell-a_\ell)/2$ is a nonnegative integer. Rearranging the recurrence and multiplying by $2^\ell$ gives $ a_\ell2^\ell
 =2^\ell N_\ell-2^{\ell+1}N_{\ell+1}.$ 
Let $L$ be the terminal level, where $a_L=N_L$. Summing over
the nonterminal levels and adding the terminal contribution yields $ \sum_{\ell=0}^{L}a_\ell2^\ell
 =N_0-2^LN_L+2^La_L
 =N_0=N.$ 
Thus the constructed parts have total size exactly $N$. At every nonterminal level
$N_{\ell+1}\leq N_\ell/2$, so there are at most
$1+\lfloor\log_2N\rfloor$ levels containing a positive residual.
Each level contributes at most $s$ parts. The recurrence and the output
use binary integers of $O(1+\log N)$ bits.

We now show that the constructed partition can represent every histogram with at most $s$ categories and total $N$. Pad the given histogram with zero entries so that
it has exactly $s$ coordinates. At a nonterminal level, let $o$ be the
number of odd-valued entries of the current histogram, whose total is
$N_\ell$. Then $o\equiv N_\ell\equiv a_\ell\pmod2$ and
$o\leq a_\ell$. Indeed, $o\leq s$, and when $a_\ell=s-1$ the value
$o=s$ is excluded by parity. Remove one unit from each odd entry.
The remaining entries are even. Next, remove another $a_\ell-o$ units in pairs, each pair
consisting of two units from the same category. Equivalently,
perform $(a_\ell-o)/2$ operations, each subtracting $2$ from
an entry that is at least $2$. All entries remain nonnegative
and even.
 There are enough pairs because $\frac{N_\ell-o}{2}\geq\frac{a_\ell-o}{2}.$ 
Assign the categories of these $a_\ell$ removed units to the parts
created at this level. Halve the remaining histogram. It is again an
integral histogram, now of total $N_{\ell+1}$, to which the same argument
applies. At a terminal level assign one part to each remaining unit.
Restoring the factors $2^\ell$ shows that the parts assigned to each
category sum to its original multiplicity.
\end{proof}
A numerical example illustrating the proof of Lemma~\ref{lem:fixed-item-universal-partition} is given in appendix~\ref{sec:example-fixed-item-universal-partition}

Apply Lemma~\ref{lem:fixed-item-universal-partition} to each
nonempty agent type $k$, with $N=n_k$ and $s=2^q$.
Each resulting part of size $b$ represents a \emph{block} of
$b$ agents of type $k$. Let $H$ be the total number of resulting blocks.
Block $h\in[H]$ has type $k(h)$ and size $b_h$, where $b_h$ is a power
of two. These block identifiers are fixed throughout the algorithm.
Writing $n_{\max}:=\max_k n_k$, we have
\begin{equation}
\label{eq:fixed-item-block-count}
 H=O_q\left(\sum_{k=1}^K\log(1+n_k)\right)
  =O_q\bigl(K\log(1+n_{\max})\bigr).
\end{equation}
In particular, $H$ is polynomial in the compact input length. Moreover, $\sum_{h:k(h)=k}b_h=n_k$  for every $k$, and $\sum_{h=1}^H b_h=n.$ 
All agents in a block will receive
the same bundle: assigning a bundle $x$ to a block $h$ means giving $x$ to each of its $b_h$ agents, consuming $b_hx$ items
in total. The block sizes are fixed by the lemma; the bundle
assigned to each block will be chosen by the algorithm. 
Individual agents within a block are used only for
the analysis and the final compressed repair.

\subsection{Configurations and objective coefficients}
\label{subsec:fixed-item-configurations}

We now give the remaining construction and prove
Theorem~\ref{thm:fixed-item-types-ptas}. Throughout, $p$ and $q$ are fixed.
Delete agent types and item types of multiplicity zero, and relabel the
retained types. If there are no items, return the empty bundle with
multiplicity $n_k$ for each retained agent type. Henceforth, $q\geq1$,
$m_j\geq1$, and $n_k\geq1$. Choose the rational internal accuracy
\begin{equation}
\label{eq:fixed-item-eta}
 \eta:=
 \begin{cases}
  \displaystyle\frac{\varepsilon}{3+1/|p|},
      &p\in\mathbb Q\setminus\{0\},\\[2mm]
  \varepsilon/4,&p=0,\\
  \varepsilon/3,&p=-\infty.
 \end{cases}
\end{equation}
Thus $0<\eta<1/3$ and $\eta=\Theta_p(\varepsilon)$. By the accuracy
convention of Section~\ref{sec:notation}, $\eta$ has encoding length
$O_p(1+\log(1/\varepsilon))$.

\medskip
\noindent
\textbf{Positive-utility feasibility for $p\leq0$.}
For $p\leq0$, construct a flow network with a source, an item-type vertex
for each $j\in[q]$, an agent-type vertex for each $k\in[K]$, and a sink.
The source-to-item arc has capacity $m_j$. The arc from item type $j$
to agent type $k$ is present if and only if $u_{k,j}>0$, and has capacity
$n_k$. The arc from agent type $k$ to the sink has capacity $n_k$.
All capacities are integral and encoded in binary.

An integral flow of value $n$ supplies each type $k$ with $n_k$ positively
valued items, which can be distributed one per agent of that type.
Conversely, from an allocation with all utilities positive, choose one
positively valued item from each agent's bundle. Counting these chosen
items by item type and agent type gives an integral flow of value $n$.
Thus positive utility is feasible exactly when the maximum flow is $n$.
This test uses $K+q+2$ vertices and at most $Kq+K+q$ arcs and is
polynomial in the compact input length.

If the test fails, the optimum is zero under the convention for $p\leq0$.
Return all items in the bundle of one agent, with empty bundles for the
remaining agents, using binary-encoded group multiplicities. We may therefore
assume below that positive welfare is feasible whenever $p\leq0$.

\medskip
\noindent
\textbf{Geometric configurations.}
For every item type $j$, let
\[
 \Gamma_j:=\{0\}\cup
 \left(
  \left\{\left\lceil(1+\eta)^\ell\right\rceil:
    \ell=0,\ldots,\left\lceil\log_{1+\eta}m_j\right\rceil\right\}
  \cap\{1,\ldots,m_j\}
 \right).
\]
Duplicate values are removed. The intersection removes values exceeding
the available multiplicity. A configuration is a vector
$C=(c_1(C),\ldots,c_q(C))\in\mathcal C:=\prod_{j=1}^q\Gamma_j$.
Put $S:=|\mathcal C|$ and $\ell_m:=1+\log(1+m_{\max})$. Since
$\log(1+\eta)\geq\eta/2$, $ |\Gamma_j|\leq2+\left\lceil\log_{1+\eta}m_j\right\rceil$, and 
 $S=O_q\bigl((\ell_m/\eta)^q\bigr).$
The algorithm generates $\Gamma_j$ exactly by successive rational
multiplication by $1+\eta$, taking ceilings, and stopping when the ceiling
exceeds $m_j$. No evaluation of a logarithm is needed.

For an integer $a\in\{0,\ldots,m_j\}$, let $r_j(a)$ be the largest
grid member not exceeding $a$. Then $\frac{a}{1+\eta}\leq r_j(a)\leq a.$ 
 
Indeed, the assertion is immediate for $a=0$. For $a\geq1$, take
$\ell=\lfloor\log_{1+\eta}a\rfloor$. Integrality of $a$ implies
$\lceil(1+\eta)^\ell\rceil\leq a$, so this number belongs to $\Gamma_j$.
Moreover, $ a<(1+\eta)^{\ell+1}
 \leq(1+\eta)\left\lceil(1+\eta)^\ell\right\rceil
 \leq(1+\eta)r_j(a).$ 

Choose an optimum as in Lemma~\ref{lem:fixed-item-sparse-support} and
round each bundle down coordinatewise to the grids. Feasibility is
preserved, and additivity gives $u_i\bigl((r_j(x_{i,j}))_{j\in[q]}\bigr)
 \geq\frac{u_i(x_i)}{1+\eta}$ for every agent $i$. 
Each type still receives at most $s$ distinct configurations. For each agent type, the configuration multiplicities (i.e., number of agents receiving each
distinct configuration) form a
histogram with at most $s=2^q$ categories. By
Lemma~\ref{lem:fixed-item-universal-partition}, we can assign
one configuration to each previously constructed block so that,
for every configuration, the sizes of the blocks assigned to it
sum to its required multiplicity. Denote the resulting block profile by
$C^*=(C_h^*)_{h\in[H]}$. By monotonicity and homogeneity of the $p$-mean,
\begin{equation}
\label{eq:fixed-item-comparator}
 \sum_{h=1}^H b_hc_j(C_h^*)\leq m_j
 \quad(j\in[q]),
 \qquad
 \PSW_p(C^*)\geq\frac{\OPT_p}{1+\eta}.
\end{equation}
Here and below, welfare of a block profile refers to its conceptual
expansion to individual agents; this expansion is never computed.
The profile $C^*$ is a comparator for the analysis. The algorithm
neither computes it nor requires it to be optimal over all block
configuration profiles. For $p\leq0$, all its utilities are positive.

\medskip
\noindent
\textbf{Rational objective coefficients.}
For finite $p$, apply
Lemma~\ref{lem:directed-rational-approximation} with accuracy $\eta$ to
every type--configuration pair on the domain of $\phi_p$. Write
$\widetilde\phi_{k,p}(C):=\widetilde\phi_{p,\eta}(u_k(C))$. Thus
\begin{align}
 \frac{u_k(C)^p}{1+\eta}
 &\leq\widetilde\phi_{k,p}(C)\leq u_k(C)^p,
 &&p>0,\label{eq:fixed-item-coeff-positive}\\
  \log u_k(C)-\log(1+\eta)
 &\leq\widetilde\phi_{k,0}(C)\leq\log u_k(C),
&&p=0,\label{eq:fixed-item-coeff-logarithmic}
\\
 u_k(C)^p
 &\leq-\widetilde\phi_{k,p}(C)\leq(1+\eta)u_k(C)^p,
 &&-\infty<p<0.\label{eq:fixed-item-coeff-negative}
\end{align}
At zero utility and $p>0$, set the coefficient to zero. For $p\leq0$,
exclude zero-utility configurations whenever they are considered for a
block of type $k$. The finite-$p$ score of block $h$ with configuration
$C$ is $b_h\widetilde\phi_{k(h),p}(C)$. For $p=-\infty$, the block's
contribution to the minimum is the exact rational utility $u_{k(h)}(C)$,
independently of $b_h$. There are at most $KS$ coefficients to compute.

\subsection{Anchor guesses and the compressed dynamic program}
\label{subsec:fixed-item-dynamic-program}

\medskip
\noindent
\textbf{Anchors for large individual coordinates.}
Set $T:=\lceil1/\eta\rceil$. A coordinate $j$ of configuration $C$ is
called large if its individual count satisfies $c_j(C)\geq T$.
For each $j\in[q]$, guess either a null anchor $g_j=\bot$, or a block
$g_j\in[H]$ and its configuration $G_{g_j}\in\mathcal C$ satisfying
$c_j(G_{g_j})\geq T$. If the same block is guessed for more than one
coordinate, every occurrence must specify the same configuration.
Let $U\subseteq[H]$ be the set of distinct anchored blocks and put
\begin{equation}
\label{eq:fixed-item-anchor-load}
 L_j:=
 \begin{cases}
  b_{g_j}c_j(G_{g_j}),&g_j\neq\bot,\\
  0,&g_j=\bot.
 \end{cases}
\end{equation}
There are at most $(1+HS)^q$ guesses, and $|U|\leq q$.

Retain only guesses for which the fixed blocks satisfy $\sum_{h\in U}b_hc_j(G_h)\leq m_j$
for every $j\in[q],$ 
and, when $p\leq0$, have $u_{k(h)}(G_h)>0$ for every $h\in U$.
In addition, check all fixed blocks in all coordinates:
\begin{equation}
\label{eq:fixed-item-fixed-coordinate-check}
 c_j(G_h)\geq T\quad\Longrightarrow\quad b_hc_j(G_h)\leq L_j
 \qquad(h\in U,\ j\in[q]).
\end{equation}
This check applies even if block $h$ was guessed as an anchor for a
different coordinate. In particular, a null anchor for $j$ requires
every fixed block to have individual coordinate less than $T$ in $j$.

For each residual block $h\notin U$, define its admissible configurations
by
\begin{equation}
\label{eq:fixed-item-admissible}
 \mathcal A_h:=
 \left\{C\in\mathcal C:
  c_j(C)<T\ \text{or}\ b_hc_j(C)\leq L_j
  \text{ for every }j\in[q]\right\}.
\end{equation}
When $p\leq0$, also require $u_{k(h)}(C)>0$. Discard a guess if any
residual admissible set is empty. For a retained guess, the residual
capacities are
\begin{equation}
\label{eq:fixed-item-residual-capacities}
 a_j:=m_j-\sum_{h\in U}b_hc_j(G_h)\geq0.
\end{equation}

There is a correct guess relative to $C^*$. For each $j$, if any block
has $c_j(C_h^*)\geq T$, choose a block maximizing
$b_hc_j(C_h^*)$ among such blocks, and guess its configuration in $C^*$.
Otherwise choose $g_j=\bot$. The same block always has the same
configuration, so repeated anchors are consistent. Feasibility of $C^*$
implies that the fixed blocks fit the capacities. The maximizing choices
give \eqref{eq:fixed-item-fixed-coordinate-check} in every coordinate
and retain every residual configuration $C_h^*$ in $\mathcal A_h$.
Positivity also holds when required. Hence this guess passes all checks.

\medskip
\noindent
\textbf{Rounded aggregate resources.}
For a retained guess and every coordinate $j$, choose the integer power
of two
\begin{equation}
\label{eq:fixed-item-resource-step}
 \Delta_j:=
 \begin{cases}
  1,&L_j=0\ \text{or}\ \eta L_j/H<1,\\
  2^{\lfloor\log_2(\eta L_j/H)\rfloor},&\eta L_j/H\geq1.
 \end{cases}
\end{equation}
The power can be found by exact comparisons of binary integers and
rationals. Define the rounded consumption of a residual block by
\begin{equation}
\label{eq:fixed-item-rounded-consumption}
 d_j(h,C):=\Delta_j
 \left\lfloor\frac{b_hc_j(C)}{\Delta_j}\right\rfloor.
\end{equation}
If $\Delta_j=1$, this consumption is exact. If $\Delta_j>1$, then
$\Delta_j\leq\eta L_j/H$. In every case with $L_j>0$,
\begin{equation}
\label{eq:fixed-item-anchor-step-ratio}
 \frac{L_j}{\Delta_j}<\frac{2H}{\eta}.
\end{equation}
Indeed, when $\eta L_j/H<1$, we have $\Delta_j=1$ and
$L_j<H/\eta$. Otherwise $\Delta_j>(\eta L_j/H)/2$.
The fixed blocks retain their original, unrounded consumptions.

\medskip
\noindent
\textbf{Suffix bounds.}
Order the residual blocks as $h_1,\ldots,h_r$, where $r:=H-|U|$, so
that their sizes $\beta_t:=b_{h_t}$ satisfy
$\beta_1\geq\cdots\geq\beta_r$. Their original identifiers are
unchanged. Define the individual and suffix maxima by
\[
\begin{aligned}
 A_{t,j}:=\max_{C\in\mathcal A_{h_t}}d_j(h_t,C),
 &\qquad(t\in[r]),\\
 M_{t,j}:=\sum_{\ell=t+1}^r A_{\ell,j}
 &\qquad(0\leq t\leq r).
\end{aligned}
\]
The maxima exist because the retained admissible sets are nonempty.
The quantity $M_{t,j}$ bounds the rounded consumption in coordinate $j$
of every assignment to the blocks remaining after layer $t$. It is the
sum of separate maxima; no assertion of simultaneous attainment of the
maxima is needed. From \eqref{eq:fixed-item-admissible},
\begin{equation}
\label{eq:fixed-item-suffix-bounds}
 A_{t,j}\leq T\beta_t+L_j,
 \qquad
 M_{t,j}\leq HT\beta_{t+1}+HL_j\quad(0\leq t<r).
\end{equation}
For the first inequality, a small individual coordinate contributes
less than $T\beta_t$, whereas a large coordinate contributes at most
$L_j$; rounding only reduces the contribution. The second inequality
uses the ordering of the suffix block sizes and the fact that there
are at most $H$ blocks. At the terminal layer $M_{r,j}=0$ by the
empty-sum convention, and the second bound is needed only when $t<r$.
If $r=0$, all suffix bounds are zero.

\medskip
\noindent
\textbf{State compression by suffix certificates.}
At layer $t$, each resource-state coordinate is either a nonnegative
integer $z_j<M_{t,j}$ or the symbol $\mathsf{FREE}$. An integer records
the exact remaining rounded capacity. The symbol records that the
remaining capacity is at least $M_{t,j}$ and therefore suffices for
every admissible assignment to the entire remaining suffix.

A free coordinate can remain free at every later layer. To verify this,
suppose that the actual rounded residual $R_j$ before processing block
$h_t$ satisfies $R_j\geq M_{t-1,j}$. For every $C\in\mathcal A_{h_t}$,
\begin{equation}
\label{eq:fixed-item-free-certificate}
 R_j-d_j(h_t,C)
 \geq M_{t-1,j}-A_{t,j}=M_{t,j}.
\end{equation}
Thus the next residual is nonnegative and has the required suffix
certificate. The exact value of a free coordinate is unnecessary for
future feasibility decisions.

For an input state $z$ at layer $t-1$ and $C\in\mathcal A_{h_t}$,
define $\operatorname{next}_t(z,C)$ coordinatewise as follows. A free
coordinate remains free. From an integer coordinate, calculate
$y_j:=z_j-d_j(h_t,C)$. If any $y_j<0$, the transition is undefined.
Otherwise, record $y_j$ when $y_j<M_{t,j}$ and record $\mathsf{FREE}$
when $y_j\geq M_{t,j}$. Initialize $z^0_j$ to $a_j$ if
$a_j<M_{0,j}$, and to $\mathsf{FREE}$ otherwise.

\medskip
\noindent
\textbf{Objective labels and transitions.}
The initial label includes the fixed blocks:
\[
 \kappa_U:=
 \begin{cases}
  \displaystyle\sum_{h\in U}b_h\widetilde\phi_{k(h),p}(G_h),
      &p\in\mathbb Q,\\[2mm]
  \displaystyle\min_{h\in U}u_{k(h)}(G_h),&p=-\infty.
 \end{cases}
\]
For $U=\varnothing$, use $\kappa_U=0$ for finite $p$ and
$\kappa_U=+\infty$ for $p=-\infty$. Given a label $v$, define
\[
 \operatorname{upd}_p(v,h,C):=
 \begin{cases}
  v+b_h\widetilde\phi_{k(h),p}(C),&p\in\mathbb Q,\\
  \min\{v,u_{k(h)}(C)\},&p=-\infty.
 \end{cases}
\]
For every reachable state $z$ at layer $t$, the DP label $D_t(z)$ is
the maximum objective value of a partial block assignment with that
compressed resource state. Unreachable states are absent from the
stored table and have label $-\infty$ by convention. The algorithm
generates states from reachable predecessors and never enumerates all
integers up to an original capacity.

\begin{algorithm}[t]
\caption{Compressed configuration dynamic program for a retained guess}
\label{alg:fixed-item-compact-dp}
\begin{algorithmic}[1]
\Require Fixed configurations $(G_h)_{h\in U}$, ordered residual blocks
$h_1,\ldots,h_r$, admissible sets, rounded consumptions $d_j$, residual
capacities $a_j$, and suffix bounds $M_{t,j}$
\State Initialize $z^0_j\gets a_j$ if $a_j<M_{0,j}$ and
$z^0_j\gets\mathsf{FREE}$ otherwise
\State Set $D_0(z^0)\gets\kappa_U$, with every other state absent
\For{$t=1,\ldots,r$}
 \State Initialize an empty table $D_t$
 \For{each stored state $z$ of $D_{t-1}$}
  \For{each $C\in\mathcal A_{h_t}$}
   \State Compute $z'\gets\operatorname{next}_t(z,C)$
   \If{the transition is defined}
    \State Set $v'\gets\operatorname{upd}_p(D_{t-1}(z),h_t,C)$
    \If{$z'$ is absent from $D_t$, or $v'>D_t(z')$}
     \State Set $D_t(z')\gets v'$ and store predecessor $(z,C)$
    \EndIf
   \EndIf
  \EndFor
 \EndFor
\EndFor
\If{$D_r$ is empty}
 \State \Return no reachable terminal state
\EndIf
\State Choose a state $z^{\mathrm{term}}$ with maximum label in $D_r$
\State Recover the residual configurations by backtracking and combine
them with $(G_h)_{h\in U}$ on the original block identifiers
\State \Return $V_{\mathfrak g}:=D_r(z^{\mathrm{term}})$ and the
pre-repair block profile $\bar C^{\mathfrak g}$
\end{algorithmic}
\end{algorithm}

\begin{lemma}
\label{lem:fixed-item-dp-correctness}
For every retained guess, Algorithm~\ref{alg:fixed-item-compact-dp}
maximizes its objective over all residual admissible assignments
satisfying $\sum_{t=1}^r d_j(h_t,C_{h_t})\leq a_j$ for every $j$.
The assertion also holds when $r=0$.
\end{lemma}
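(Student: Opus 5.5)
We prove Lemma~\ref{lem:fixed-item-dp-correctness} by induction over the layers, using an invariant that ties the compressed states to the actual rounded residuals. For a partial assignment $(C_{h_1},\ldots,C_{h_t})$ with $C_{h_\ell}\in\mathcal A_{h_\ell}$, let $R^t_j:=a_j-\sum_{\ell\leq t}d_j(h_\ell,C_{h_\ell})$ be its rounded residual. Call the prefix \emph{extendable-feasible} if $R^\ell_j\geq0$ for all $\ell\leq t$ and all $j$. Its \emph{compressed state} $\sigma_t$ has coordinate $j$ equal to $\mathsf{FREE}$ if $R^{\ell}_j\geq M_{\ell,j}$ for some $\ell\leq t$, and equal to $R^t_j$ otherwise.

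\textbf{Step 1 (invariant).} For every $t$, the states stored in $D_t$ are exactly the compressed states of prefixes that can be completed to a full assignment satisfying all capacity constraints. The label $D_t(z)$ equals the maximum objective of such a prefix with state $z$.

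We argue by induction on $t$. The base case $t=0$ is the initialization, since $z^0$ is the compressed state of the empty prefix and $D_0(z^0)=\kappa_U$. For the inductive step, we check that $\operatorname{next}_t$ reproduces the compressed state of the extended prefix.

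For a non-free coordinate, $z_j=R^{t-1}_j$ exactly. The transition subtracts $d_j$, rejects the extension if the result is negative, and otherwise marks the coordinate free once $y_j\geq M_{t,j}$; this matches the definition of $\sigma_t$.

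For a free coordinate, the certificate \eqref{eq:fixed-item-free-certificate} applied inductively from the layer at which the coordinate became free gives $R^{t}_j\geq M_{t,j}\geq0$ for every admissible choice. So no choice violates capacity in that coordinate, and dropping the exact value loses no information.

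\textbf{Step 2 (terminal states).} At $t=r$ we have $M_{r,j}=0$. Every defined terminal state therefore corresponds exactly to assignments with $\sum_t d_j(h_t,C_{h_t})\leq a_j$ for all $j$. Conversely, every such full assignment has all prefix residuals nonnegative, because the $d_j$ are nonnegative, so it follows a defined path.

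\textbf{Step 3 (optimality of labels).} The objective is separable across blocks and $\operatorname{upd}_p$ is monotone in the incoming label: addition for finite $p$, and $\min$ for $p=-\infty$. The Bellman argument therefore applies. Two prefixes with the same compressed state have identical sets of feasible completions, because feasibility of a completion depends only on the compressed state by Steps 1--2. Keeping only the best label per state then yields the optimum. Backtracking recovers a maximizing assignment, and combining it with $(G_h)_{h\in U}$ gives the returned profile, with value $\kappa_U$ plus the residual scores (or their minimum).

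\textbf{Case $r=0$.} There are no layers, and $D_0$ contains the single state $z^0$ with label $\kappa_U$. The empty assignment trivially satisfies the constraint because $a_j\geq0$ by \eqref{eq:fixed-item-residual-capacities}, so the claim holds.

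The main obstacle is Step 1 for free coordinates: we must show that collapsing distinct exact residuals into $\mathsf{FREE}$ never merges prefixes with different feasible completions. This is exactly what the suffix-certificate inequality \eqref{eq:fixed-item-free-certificate}, iterated from layer to layer, guarantees. The rest is routine, provided the equivalence "same state implies same completion set" is established before invoking the monotonicity of $\operatorname{upd}_p$.
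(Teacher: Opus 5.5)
Your proof is correct and follows essentially the same route as the paper. Both arguments run the same way: nonnegative consumptions together with the suffix certificate \eqref{eq:fixed-item-free-certificate} make the defined DP paths exactly the rounded-feasible residual assignments, two histories with the same compressed state admit the same feasible continuations, and monotonicity of $\operatorname{upd}_p$ then justifies keeping one label per state.

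One slip needs fixing. In Step~1 the stored states are the compressed states of \emph{prefix-feasible} prefixes, which is what your ``extendable-feasible'' definition actually says. They are not the states of prefixes that can be completed to a full feasible assignment, since dead-end states with small exact residuals are also stored. Your induction only uses, and only proves, the prefix-feasible version, so nothing else in the argument changes.
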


\begin{proof}
For an exact state coordinate, the stored value is the actual remaining
rounded capacity. For a free coordinate, its suffix certificate follows
at initialization or at the transition where it first becomes free,
and remains valid by \eqref{eq:fixed-item-free-certificate}. Thus every
generated path is feasible for the rounded capacities. Conversely,
consider any rounded-feasible residual assignment. Since consumptions
are nonnegative, each of its prefixes is feasible. Exact coordinates
therefore never reject its next configuration, while free coordinates
accept every admissible suffix choice. It defines a path in the state
transition graph.

It remains to justify retaining only one objective label at each state.
Any two histories with the same compressed state have identical exact
residuals in its integer coordinates. In its free coordinates, both
histories permit every admissible remaining suffix. They therefore have
the same feasible future choices. Adding a fixed block score and taking
the minimum with a fixed utility are both nondecreasing functions of the
previous label. A history with a smaller label can never improve the
result of any common continuation. Induction over the layers proves the
DP invariant and the optimality of the largest terminal label.

At layer $r$, all $M_{r,j}=0$, so the only possible state is the
all-free state. When $r=0$, initialization is already at this layer,
and the empty loop returns exactly the fixed profile, which fits the
original capacities by the retained-guess checks.
\end{proof}

\begin{lemma}
\label{lem:fixed-item-state-count}
Every DP layer has at most $O_q((1+H^2/\eta)^q)$ reachable states.
\end{lemma}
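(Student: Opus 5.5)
The bound will come from counting, coordinate by coordinate, how many values a state coordinate can take at layer $t$; the per-layer count is then the product over the $q$ coordinates. A coordinate is either $\mathsf{FREE}$ or an integer $z_j\in\{0,\ldots,M_{t,j}-1\}$. Every reachable integer value has the form $z_j=a_j-\sum_{\ell\leq t}d_j(h_\ell,C_\ell)$ for admissible configurations $C_\ell$. This is because a coordinate never returns from $\mathsf{FREE}$ to an integer, so an integer coordinate has been tracked exactly since initialization. Therefore, if every consumed amount $d_j(h_\ell,C)$ with $\ell\leq t$ is a multiple of some integer $g_{t,j}$, then all reachable integer values are congruent to $a_j$ modulo $g_{t,j}$. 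Hence there are at most $M_{t,j}/g_{t,j}+1$ of them.

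\textbf{Key divisibility step.} I would take $g_{t,j}:=\max\{\Delta_j,\beta_{t+1}\}$, with $\beta_{r+1}:=1$ by convention, and prove that it divides every $d_j(h_\ell,C)$ with $\ell\leq t$. All block sizes and all $\Delta_j$ are powers of two, so the larger of two such numbers is a multiple of the smaller. Two cases cover every processed block $h_\ell$.
\begin{itemize}
\item If $b_{h_\ell}\leq\Delta_j$, then $d_j=\Delta_j\lfloor b_{h_\ell}c_j(C)/\Delta_j\rfloor$ is a multiple of $\Delta_j$ by definition.
\item If $b_{h_\ell}>\Delta_j$, then $\Delta_j$ divides $b_{h_\ell}c_j(C)$. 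So the floor is exact and $d_j=b_{h_\ell}c_j(C)$ is a multiple of $b_{h_\ell}$, which is itself a multiple of $\Delta_j$.
\end{itemize}
Thus $d_j$ is always a multiple of $\Delta_j$. Since the blocks are processed in nonincreasing order of size, $b_{h_\ell}\geq\beta_t\geq\beta_{t+1}$. In the first case $\Delta_j\geq b_{h_\ell}\geq\beta_{t+1}$, and in the second case $b_{h_\ell}\geq\beta_{t+1}$. Either way $d_j$ is a multiple of a power of two that is at least $\beta_{t+1}$, hence of $\beta_{t+1}$ itself. Combining the two facts, $d_j$ is a multiple of $g_{t,j}$. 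This is exactly where the ordering of blocks and the dyadic block sizes from Lemma~\ref{lem:fixed-item-universal-partition} are used.

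\textbf{Counting.} For $t<r$, the suffix bound \eqref{eq:fixed-item-suffix-bounds} gives
\[
\frac{M_{t,j}}{g_{t,j}}\leq\frac{HT\beta_{t+1}}{\beta_{t+1}}+\frac{HL_j}{\Delta_j}<HT+\frac{2H^2}{\eta},
\]
using \eqref{eq:fixed-item-anchor-step-ratio} when $L_j>0$; the second term vanishes when $L_j=0$, which covers null anchors. At $t=r$ only the all-free state exists. Since $T=\lceil1/\eta\rceil\leq2/\eta$, each coordinate has at most $HT+2H^2/\eta+2=O(1+H^2/\eta)$ possible values, the extra one being $\mathsf{FREE}$. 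Taking the product over the $q$ coordinates gives $O_q((1+H^2/\eta)^q)$ states per layer.

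The main obstacle is the divisibility step. Using $\Delta_j$ alone fails when $\Delta_j=1$ and the blocks are huge, because then $M_{t,j}/\Delta_j$ is not bounded. Using $\beta_{t+1}$ alone fails to control the anchor term $HL_j/\beta_{t+1}$. Only the combined modulus $\max\{\Delta_j,\beta_{t+1}\}$ tames both terms of the suffix bound simultaneously.
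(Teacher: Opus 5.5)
Your proposal is correct and follows essentially the same argument as the paper: reachable integer coordinates lie in a single congruence class modulo a power-of-two modulus dividing every processed block's rounded consumption, and that class is counted inside $[0,M_{t,j})$ using the suffix bound together with \eqref{eq:fixed-item-anchor-step-ratio}. The only difference is that you take the modulus $\max\{\Delta_j,\beta_{t+1}\}$, whereas the paper takes $\max\{\Delta_j,\beta_t\}$; both work, because the suffix bound \eqref{eq:fixed-item-suffix-bounds} involves only $\beta_{t+1}\leq\beta_t$.
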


\begin{proof}
Layer zero has one state, and the terminal layer has at most one state.
Consider a nonterminal layer $1\leq t<r$ and a coordinate $j$. Set
$w_{t,j}:=\max\{\beta_t,\Delta_j\}$. Every rounded consumption of an
already processed residual block is divisible by $w_{t,j}$. Indeed,
block sizes and $\Delta_j$ are powers of two. If
$\beta_t\geq\Delta_j$, each earlier block size is divisible by
$\beta_t$ and hence by $\Delta_j$, so its aggregate consumption is
rounded exactly and is divisible by $\beta_t=w_{t,j}$. If
$\beta_t<\Delta_j$, divisibility by $w_{t,j}=\Delta_j$ follows
directly from the definition of $d_j$.

Consequently, every integer state coordinate at layer $t$ satisfies $z_j\equiv a_j\pmod{w_{t,j}}$, and $0\leq z_j<M_{t,j}.$  
There are at most $1+M_{t,j}/w_{t,j}$ such integers. Including the free
symbol, using \eqref{eq:fixed-item-suffix-bounds}, and observing that
$\beta_{t+1}\leq\beta_t\leq w_{t,j}$ and $\Delta_j\leq w_{t,j}$,
the number of coordinate states is at most
\begin{align*}
 2+\frac{M_{t,j}}{w_{t,j}}
 \leq2+HT+\frac{HL_j}{\Delta_j}
 \leq2+HT+\frac{2H^2}{\eta}
 =O\left(1+\frac{H^2}{\eta}\right).
\end{align*}
When $L_j>0$, the second inequality uses
\eqref{eq:fixed-item-anchor-step-ratio}; when $L_j=0$, its corresponding
term is zero. Also $T=\lceil1/\eta\rceil=O(1/\eta)$.
Multiplying the coordinate bounds over the fixed number $q$ of item
types proves the claim. This counts only potential reachable residuals
in the indicated congruence classes, so storing and generating the
reachable states suffices.
\end{proof}

\medskip
\noindent
\textbf{Comparison with the near-optimal block profile.}
Under the correct guess, every residual $C_h^*$ is admissible and the
fixed configurations agree with $C^*$. Therefore,
\[
 \sum_{h\notin U}d_j(h,C_h^*)
 \leq\sum_{h\notin U}b_hc_j(C_h^*)
 \leq m_j-\sum_{h\in U}b_hc_j(C_h^*)=a_j.
\]
Lemma~\ref{lem:fixed-item-dp-correctness} shows that the comparator
defines a feasible DP path, including the empty-loop case $r=0$.
In particular, at least one guess returns a profile.

Run the DP for every retained guess and select a returned profile with
maximum label. Labels for different guesses are comparable because they
include all fixed and residual blocks. They are rational, except for
the temporary empty-set initialization $+\infty$ in the minimum case;
every returned label is finite because $H\geq1$. Let $\bar C$ denote
the selected block profile. For finite $p$,
\begin{equation}
\label{eq:fixed-item-dp-dominance}
 \sum_{h=1}^H b_h\widetilde\phi_{k(h),p}(\bar C_h)
 \geq\sum_{h=1}^H b_h\widetilde\phi_{k(h),p}(C_h^*).
\end{equation}
For $p=-\infty$,
\begin{equation}
\label{eq:fixed-item-dp-minimum}
 \min_{h\in[H]}u_{k(h)}(\bar C_h)
 \geq\min_{h\in[H]}u_{k(h)}(C_h^*).
\end{equation}
Every profile assigns exactly $n_k$ agents to type $k$, but its original
item consumption may exceed the capacities because of resource rounding.

\subsection{Repair, approximation guarantee, and running time}
\label{subsec:fixed-item-analysis}

\medskip
\noindent
\textbf{Repairing original capacities.}
All quantities in this paragraph refer to the selected guess. For every
coordinate, let
\[
 \nu_j:=\max\left\{0,\ \sum_{h=1}^H b_hc_j(\bar C_h)-m_j\right\}.
\]
If $\Delta_j=1$, rounded residual consumptions are exact; together with
the fixed blocks they fit $m_j$, so $\nu_j=0$. If $\Delta_j>1$, the
difference between original and rounded consumption of one residual
block is less than $\Delta_j$. Whenever $\nu_j>0$, it follows that
\begin{equation}
\label{eq:fixed-item-excess}
 0<\nu_j<r\Delta_j\leq H\Delta_j\leq\eta L_j.
\end{equation}
Thus a violated coordinate has a nonnull anchor $g_j$, with
$L_j=b_{g_j}c_j(G_{g_j})$ and $c_j(G_{g_j})\geq T$.
If $r=0$, no coordinate is violated.

Remove exactly $\nu_j$ items of type $j$ from the $b_{g_j}$ agents of
its anchor, distributing the removals as evenly as possible. Each such
agent loses at most
\begin{equation}
\label{eq:fixed-item-individual-deletion}
 \left\lceil\frac{\nu_j}{b_{g_j}}\right\rceil
 \leq\eta c_j(G_{g_j})+1
 \leq2\eta c_j(G_{g_j}),
\end{equation}
where the last inequality uses
$c_j(G_{g_j})\geq T\geq1/\eta$. Because $2\eta<1$, the removals
are possible and leave every affected coordinate nonnegative. If a
block anchors several coordinates, perform this operation separately
in each of them. For each agent, an unchanged coordinate retains its
full count, and every changed coordinate retains at least its
$(1-2\eta)$ fraction. Additivity and nonnegative utilities therefore give
\begin{equation}
\label{eq:fixed-item-repair-utility}
 u_i(x_i^{\mathrm{rep}})\geq(1-2\eta)u_i(\bar C_i)
 \qquad\text{for every individual agent }i.
\end{equation}
The notation $\bar C_i$ denotes that agent's configuration in the
conceptual expansion of $\bar C$. Exactly the excess is removed in
each violated coordinate, so $x^{\mathrm{rep}}$ is a feasible partial
allocation. Positive utilities remain positive. Monotonicity and
homogeneity imply
\begin{equation}
\label{eq:fixed-item-repair-welfare}
 \PSW_p(x^{\mathrm{rep}})\geq(1-2\eta)\PSW_p(\bar C).
\end{equation}

The balanced repair has a compressed implementation. In an anchor block
of size $b$, write each excess assigned to this block as
$\nu_j=b\alpha_j+\rho_j$, where $0\leq\rho_j<b$. Remove $\alpha_j$
items of coordinate $j$ from every agent and one more from the first
$\rho_j$ agents in a common ordering of that block. Sort the nonzero
remainder endpoints $\rho_j$. Agents between consecutive endpoints
receive identical repaired bundles. A block with $d$ repaired
coordinates splits into at most $d+1$ groups. Across all anchors, at
most $q$ coordinates are repaired, so the full repaired profile has at
most $H+q$ groups. This computation uses only binary integers; no
individual ordering is explicitly expanded.

Finally, collect all unassigned items into the bundle of one agent,
splitting that agent from her current group if necessary. This adds at
most one group and cannot decrease welfare. Merge groups of the same
type with the same bundle by adding their multiplicities. The resulting
allocation $x$ has at most $H+q+1$ groups, with exact item totals and
exact type multiplicities.

\medskip
\noindent
\textbf{Approximation guarantee.}
We verify all four cases of the welfare parameter. In weighted block
notation, the quantities defined in Section~\ref{sec:notation} satisfy
\[
 Q_p(C)=\sum_{h=1}^H b_hu_{k(h)}(C_h)^p
       \quad(p\in\mathbb Q\setminus\{0\}),
 \qquad
 \Lambda(C)=\sum_{h=1}^H b_h\log u_{k(h)}(C_h)
\]
whenever the displayed terms are defined.

For finite $p$, the score dominance in
\eqref{eq:fixed-item-dp-dominance} allows us to apply
Lemma~\ref{lem:weighted-rational-comparison} to the block profiles
$C^*$ and $\bar C$, with weights $b_h$ and total weight $n$.

Suppose first that $p>0$. The lemma gives $Q_p(\bar C)\geq\frac{Q_p(C^*)}{1+\eta}.$ 
Taking the positive $1/p$ power and using
\eqref{eq:fixed-item-comparator},
\eqref{eq:fixed-item-repair-welfare}, and completion gives
\begin{equation}
\label{eq:fixed-item-positive-guarantee}
 \PSW_p(x)
 \geq\frac{1-2\eta}{(1+\eta)^{1+1/p}}\OPT_p.
\end{equation}
This conclusion also holds when $\OPT_p=0$, without division by the
optimum or any requirement that the utilities be positive.

Suppose next that $-\infty<p<0$. The positive-utility test has passed,
and all configurations considered for a block have positive utility.
Lemma~\ref{lem:weighted-rational-comparison} gives $ Q_p(\bar C)\leq(1+\eta)Q_p(C^*).$ 
Since $1/p<0$, taking the $1/p$ power reverses the inequality. Therefore $ \PSW_p(\bar C)\geq(1+\eta)^{-1/|p|}\PSW_p(C^*),$ 
and the repair and geometric rounding give
\begin{equation}
\label{eq:fixed-item-negative-guarantee}
 \PSW_p(x)
 \geq\frac{1-2\eta}{(1+\eta)^{1+1/|p|}}\OPT_p.
\end{equation}

For $p=0$, the logarithmic conclusion of the same lemma gives $ \Lambda(\bar C)\geq\Lambda(C^*)-n\log(1+\eta).$ 
Divide by $n$ and exponentiate to obtain
$\PSW_0(\bar C)\geq\PSW_0(C^*)/(1+\eta)$. The binary block
multiplicities cause no additional error because their sum is exactly
$n$. Hence
\begin{equation}
\label{eq:fixed-item-zero-guarantee}
 \PSW_0(x)\geq\frac{1-2\eta}{(1+\eta)^2}\OPT_0.
\end{equation}

Finally, when $p=-\infty$, the DP uses exact utilities, and
\eqref{eq:fixed-item-dp-minimum} directly gives
$\PSW_{-\infty}(\bar C)\geq\PSW_{-\infty}(C^*)$. Thus
\begin{equation}
\label{eq:fixed-item-minimum-guarantee}
 \PSW_{-\infty}(x)\geq\frac{1-2\eta}{1+\eta}\OPT_{-\infty}.
\end{equation}

For every fixed $a>0$, convexity of $(1+\eta)^{-a}$ gives
$(1+\eta)^{-a}\geq1-a\eta$. Multiplying by $1-2\eta>0$ yields
\[
 (1-2\eta)(1+\eta)^{-a}
 \geq1-(a+2)\eta+2a\eta^2
 \geq1-(a+2)\eta.
\]
Applying this with $a=1+1/|p|$, $2$, and $1$, respectively, gives
\[
 \PSW_p(x)\geq
 \begin{cases}
  \bigl(1-(3+1/|p|)\eta\bigr)\OPT_p,
        &p\in\mathbb Q\setminus\{0\},\\
  (1-4\eta)\OPT_0,&p=0,\\
  (1-3\eta)\OPT_{-\infty},&p=-\infty.
 \end{cases}
\]
The choice \eqref{eq:fixed-item-eta} proves the required
$(1-\varepsilon)$ approximation in every case.

\medskip
\noindent
\textbf{Arithmetic and bit complexity.}
There are at most $(1+HS)^q$ anchor guesses. By
Lemma~\ref{lem:fixed-item-state-count}, a retained guess has at most
$H$ DP layers, $O_q((1+H^2/\eta)^q)$ states per layer, and at most
$S$ candidate configurations per state. Thus the number of DP
arithmetic operations, apart from polynomial table-management factors,
is bounded by
\begin{align}
\label{eq:fixed-item-operation-bound}
 O_q\left((1+HS)^qHS(1+H^2/\eta)^q\right)
 &=O_q\left(H^{3q+1}\ell_m^{q(q+1)}
              \eta^{-(q^2+2q)}\right).
\end{align}
Constructing admissible sets, checking fixed blocks, computing suffix
maxima, and recovering predecessors fit the same polynomial bounds.
Sorting residual blocks and implementing tables by exact comparisons
also take polynomial time. The bound on $H$ in
\eqref{eq:fixed-item-block-count} involves only the encoding lengths
of the agent multiplicities, rather than their numerical sum.

Let $L$ be the compact input encoding length, excluding the separately
supplied accuracy parameter. Every block size, item capacity, anchor
load, rounded consumption, and suffix maximum has polynomial encoding
length in $L+\log(1/\varepsilon)$. For example, a product $b_hc_j(C)$
has at most the sum of the encoding lengths of $n$ and $m_j$, and a
suffix sum adds at most $O(\log H)$ further bits. The same bound holds
for residual capacities and their differences. Exact grid generation
takes $O(\ell_m/\eta)$ rational multiplications; its intermediate
numerators and denominators have lengths $ O_p\left(\frac{\ell_m}{\eta}
                 (1+\log(1/\varepsilon))\right).$ 
Thus generation of the grid also has polynomial bit complexity, even
though no approximate logarithms are used.

Each rational configuration utility has encoding length polynomial in
$L$. Lemma~\ref{lem:directed-rational-approximation} therefore gives each of the
$KS$ objective coefficients encoding length polynomial in
$L+\log(1/\varepsilon)$. A finite-$p$ DP label is the sum of at most
$H$ terms, each a binary block size times one coefficient. Forming a
common denominator from these at most $H$ coefficients already bounds
the numerator and denominator lengths by a polynomial in $L$ and
$1/\varepsilon$. Comparisons between labels are consequently exact
rational comparisons of polynomial encoding length. For $p=-\infty$,
every finite label is one rational configuration utility. The symbols
used for unreachable states and the empty minimum are handled directly.

Repair and completion use binary division, remainders, and sorting of
at most $q$ endpoints per anchor. Each resulting bundle coordinate is
at most $m_j$, each output multiplicity is at most $n_k$, and there are
at most $H+q+1$ groups before identical bundles are merged. Hence both
the construction time and the output length are polynomial in the
compact input length and $1/\varepsilon$. Reading the originally
supplied accuracy encoding adds only its input length. Since
$\eta=\Theta_p(\varepsilon)$ and $q,p$ are fixed, the complete
algorithm is an FPTAS, proving
Theorem~\ref{thm:fixed-item-types-ptas}.

For comparison, when both numbers of types are fixed, exact egalitarian
welfare maximization follows directly from Theorem~12 of Goemans and
Rothvo{\ss}~\cite{GoemansR20}. We record this consequence for completeness;
the reduction to integer-cone feasibility, its bit complexity, and the
compressed reconstruction are given in
Appendix~\ref{sec:exact-egalitarian-appendix}.

\begin{proposition}[Consequence of Goemans--Rothvo{\ss}]
\label{prop:known-exact-egalitarian}
For fixed numbers $K$ of agent types and $q$ of item types, maximizing
egalitarian social welfare under additive nonnegative rational utilities
admits an exact polynomial-time algorithm when both agent and item
multiplicities are encoded in binary. The running time is polynomial in
the compact input length, and the algorithm returns a compressed
allocation.
\end{proposition}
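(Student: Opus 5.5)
Our plan is to reduce egalitarian welfare maximization with fixed $K$ and $q$ to a polynomial number of integer-cone membership tests, each decided by Theorem~12 of Goemans and Rothvo{\ss}~\cite{GoemansR20}, and to locate the optimum by search over a polynomial candidate set.

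\textbf{Feasibility test for a threshold.} Fix a rational $\tau\geq0$. For each agent type $k$, let $P_k(\tau):=\{x\in\mathbb R_{\geq0}^q: u_k(x)\geq\tau,\ x_j\leq m_j\}$. This is a polytope in fixed dimension $q$, described by $q+2$ constraints of polynomial encoding length. An allocation with $\ESW\geq\tau$ exists exactly when $(\boldsymbol m,\boldsymbol n)$ is a sum of integer points from the lifted polytopes $P_k(\tau)\times\{e_k\}\subseteq\mathbb R^{q+K}$. The bundle constraints $x_j\leq m_j$ are harmless, since any feasible bundle satisfies them. The equality $\sum_i x_i=\boldsymbol m$ is enforced by the first $q$ coordinates, and $n_k$ agents of type $k$ by the last $K$ coordinates. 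We take the union $P(\tau)$ of the $K$ lifted polytopes, or equivalently work with their convex hull. Because the $e_k$ components are unit vectors, integer points of the convex hull with integral last coordinates lie in single pieces. Membership of $(\boldsymbol m,\boldsymbol n)$ in the integer cone generated by $P(\tau)\cap\mathbb Z^{q+K}$ is exactly the problem that Goemans--Rothvo{\ss} solve in time polynomial in the encoding, for fixed dimension $q+K$. Their structure theorem also provides a compressed witness: polynomially many integer points together with binary multiplicities. This witness is exactly a compressed allocation.

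\textbf{Locating the optimum.} The optimal value equals $u_k(x)$ for some type $k$ and some integer bundle $x\leq\boldsymbol m$, so a binary search over reals would not terminate exactly. Instead we binary search over the finite candidate set $\{u_k(x)\}$. This set is exponentially large but structured. One approach is to binary search on $\tau$ with precision finer than the minimum gap between distinct candidate values. Writing all utilities over a common denominator $D$, the candidates are integers divided by $D$, and there are at most $D\cdot\max_k u_k(\boldsymbol m)$ of them. Hence $O(\log(D\cdot u_{\max}m))$ tests suffice, which is polynomial. After finding the largest feasible $\tau^*$, we run the feasibility test once more at $\tau^*$ to extract the compressed allocation. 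The zero-optimum case is trivial at $\tau=0$.

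\textbf{Main obstacle.} The main obstacle is checking that the hypotheses of Theorem~12 of Goemans--Rothvo{\ss} match this setting. Their theorem is stated for a single polytope with a fixed number of vertices or facets. We must confirm that the lifted convex hull of $K$ pieces has polynomial facet complexity in fixed dimension, and that its running time is polynomial in the encoding length of the facet description, where the entries $m_j$ and $n_k$ appear in binary. We must also verify that the witness size is polynomial. Their bound gives at most $2^{2(q+K)+1}$ distinct points, which is a constant here, so the reconstruction is straightforward.
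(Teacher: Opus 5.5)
Your proposal is correct and follows essentially the same route as the paper: tagged configuration polytopes $P_k(\tau)\times\{e_k\}\subseteq\mathbb R^{q+K}$, the single target point $(\boldsymbol m,\boldsymbol n)$, the Goemans--Rothvo{\ss} integer-cone test with its constant-support witness serving as the compressed allocation, and binary search over multiples of $1/D$ (the paper equivalently scales all utilities by $D$ and searches the integers $\{0,\ldots,V\}$). The convex-hull step you flag as the main obstacle is unnecessary, because Theorem~12 as invoked in the paper already accepts a union of $h=K$ polytopes (giving the support bound $2^{2(q+2K)+1}$); your observation that every integer point of the hull lies in a single piece nonetheless makes that detour valid as well.
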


\subsection{Exact optimal allocations with one item type}
\label{subsec:one-item-type-exact}

The case of one item type admits an exact allocation algorithm even when
the number of agent types is unrestricted.  The main ingredient is the
classical reduction of separable concave resource allocation to selection
in sorted implicit arrays~\cite{GalilM79}.  We give the details to
account for binary agent multiplicities and exact arithmetic.

\begin{theorem}[Optimal allocations with one item type]
\label{thm:one-item-type-exact}
Suppose that $q=1$, while the number $K$ of agent types is unrestricted.
For every fixed $p\in\mathbb Q\cup\{-\infty\}$, an allocation maximizing
$\PSW_p$ can be computed in time polynomial in the compact input length,
with both item and agent multiplicities encoded in binary.  The algorithm
returns a compressed allocation using at most two bundle sizes per agent
type.
\end{theorem}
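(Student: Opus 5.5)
The allocation problem with $q=1$ reduces to choosing integer loads $y_i\ge 0$ with $\sum_i y_i=m$. Agent $i$ of type $k$ then has utility $u_k y_i$, where $u_k:=u_{k,1}$.

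\textbf{Preprocessing.} Types with $u_k=0$ contribute a constant and never help. For $p>0$ and for $p=1$ they simply receive empty bundles. For $p\le0$ they force $\OPT_p=0$, and then every allocation is optimal; we return all items to one agent, or also handle $m<n$ this way. So assume every $u_k>0$, and for $p\le0$ assume $m\ge n$ (otherwise some agent is empty and the optimum is $0$).

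\textbf{The case $p\geq1$.} Here $\phi_p$ is convex, so within a type it is optimal to concentrate everything. Across types, giving all $m$ items to one agent of a type maximizing $u_k$ is optimal, by superadditivity of $t^p$ for $p\geq1$. This uses two bundle sizes per type.

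\textbf{The case $p=-\infty$.} Binary search on the threshold $\tau$. An agent of type $k$ needs $\lceil \tau/u_k\rceil$ items, and feasibility means $\sum_k n_k\lceil\tau/u_k\rceil\le m$. The optimum is attained at a value $\tau=y u_k$ for some type $k$ and integer $y\le m$. We find the largest feasible candidate by a parametric search over the candidate sets $\{y u_k\}$: for each type, binary search on $y$; then compare across types with rational arithmetic. The leftover items go to one agent. Each type then uses at most two bundle sizes.

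\textbf{The case $p<1$ finite.} The objective $\sum_i \phi_p(u_{k(i)}y_i)$ is separable concave, so greedy marginal allocation is optimal. Equivalently, there is a threshold $\lambda$ such that each agent receives all units whose marginal gain exceeds $\lambda$, plus some units whose gain equals $\lambda$. Agents of the same type get the same count $y_k$ or $y_k+1$ in an optimum. Any two agents of one type differing by at least $2$ can be balanced without loss, by concavity. So the answer is determined by choosing, for each type, a base count $y_k$ and a number $e_k\le n_k$ of agents receiving one extra item, with the following conditions:
\begin{itemize}
\item $\sum_k(n_k y_k+e_k)=m$;
\item the marginal gains $g_k(y):=\phi_p(u_k(y+1))-\phi_p(u_k y)$ satisfy the KKT condition that every accepted marginal is at least every rejected marginal.
\end{itemize}
Define $N(\lambda):=\sum_k n_k\,\#\{y\ge0: g_k(y)>\lambda\}$. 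Each summand is found by binary search on $y\in[0,m]$, since $g_k$ is decreasing. Following Galil--Megiddo, we search over the implicit sorted arrays $(g_k(y))_y$, weighted by $n_k$, to find the marginal value $\lambda^*$ at rank $m$. Fix the units with $g>\lambda^*$ and distribute the remaining units among the units equal to $\lambda^*$ arbitrarily, filling whole groups in compressed form.

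\textbf{Main obstacle.} The main obstacle is the exact comparison of marginal gains $g_k(y)$. For rational $p=s/d$ these are differences of algebraic numbers $(u(y+1))^{p}-(u y)^{p}$; for $p=0$ they are logarithms of rationals, $\log((y+1)/y)$. For $p=0$ the comparison reduces to comparing rationals $(y+1)/y$, which is easy. For $p\neq0$ we must decide the sign of $a^{p}-b^{p}-c^{p}+e^{p}$ for rationals $a,b,c,e$. We would do this by repeated squaring and isolation: bring terms across and raise to the $d$-th power a bounded number of times, a fixed-degree process since $p$ is fixed. Alternatively, we use root-separation bounds for algebraic numbers of fixed degree and polynomial height, so that approximating with polynomially many bits via Lemma~\ref{lem:directed-rational-approximation} decides every comparison and detects equality exactly. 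With exact comparisons available, the selection uses $O(K\log m)$ comparisons and the output is compressed with two bundle sizes per type.
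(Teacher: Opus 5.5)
Your proposal is correct and follows essentially the same route as the paper. The case $p\ge1$ is handled by concentrating all items on an agent of a maximum-value type. The finite case $p<1$ uses weighted selection of the rank-$m$ entry among the $K$ strictly decreasing marginal sequences (with the binary multiplicities $n_k$ as weights), together with exact four-radical comparisons certified by a fixed-degree algebraic lower bound on nonzero values plus rational approximation. Your treatments of $p=0$ (type-independent marginals $\log((y+1)/y)$) and $p=-\infty$ (per-type binary search over the candidates $y\,u_k$) are harmless variants of the paper's balancing argument and integer-threshold search.

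Two small adjustments. First, rely on your second comparison method, since naive isolation and powering does not by itself clear the mixed radical terms once $d\ge3$. Second, for $p\le0$ make the infinite first marginal explicit, e.g.\ by pre-assigning one item per agent as the paper does.
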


\begin{proof}
Write $a_k:=u_{k,1}$ for the per-item value of type $k$, let $m$ be the
item multiplicity, and put $n=\sum_{k=1}^K n_k$.  Empty agent types can
be omitted.  If $m=0$, the assertion is immediate.  For $p\leq0$, if
some $a_k=0$ or $m<n$, every allocation has welfare zero.  Giving all
items to one agent is then optimal and has the required compressed
representation.  Hence, whenever $p\leq0$, we may assume that every
$a_k>0$ and $m\geq n$.

\medskip\noindent
\textbf{Finite $p<1$, $p\neq0$.}
For $0<p<1$, if every $a_k$ is zero, give all items to one agent;
this is optimal and uses at most two bundle sizes per type.
Otherwise, zero-valued types are assigned no items, and we restrict
attention to the remaining types.  Define
\[
 b:=\begin{cases}0,&0<p<1,\\1,&p<0,\end{cases}
 \qquad
 f_k(x):=\begin{cases}(a_kx)^p,&0<p<1,\\
                         -(a_kx)^p,&p<0.
          \end{cases}
\]
The domain of $f_k$ is $[0,\infty)$ in the first case and $[1,\infty)$
in the second.  Each $f_k$ is strictly increasing and strictly concave.
Maximizing $\PSW_p$ is equivalent to maximizing the sum of these
transformed utilities.  In the negative-$p$ case this equivalence uses
the fact that raising a positive number to the power $1/p$ reverses
order.  Moreover, an optimal allocation has positive welfare and hence
gives at least one item to every agent.

Give $b$ items to every agent and let $R:=m-bn$.  If $R=0$, this already
gives the desired allocation.  Otherwise, for each remaining type $k$
consider the strictly decreasing sequence
\begin{equation}
\label{eq:one-item-marginals}
 d_k(t):=f_k(b+t+1)-f_k(b+t),
 \qquad t=0,\ldots,R-1.
\end{equation}
Each entry occurs $n_k$ times, once for every agent of that type.
An agent receiving $z$ additional items contributes the prefix
$d_k(0)+\cdots+d_k(z-1)$ beyond its baseline utility.  Consequently,
the sum of the $R$ largest entries in the resulting multiset is an
upper bound on the additional transformed welfare of any allocation.
We next find these entries and show that they admit a feasible
allocation.  The multiset is represented implicitly by the $K$
sequences and their binary multiplicities; it is never expanded.

For a candidate value $v$, binary search in each decreasing sequence
computes
\[
 \begin{aligned}
 A_k(v)&:=\bigl|\{t\in\{0,\ldots,R-1\}:d_k(t)>v\}\bigr|,\\
 B_k(v)&:=\bigl|\{t\in\{0,\ldots,R-1\}:d_k(t)\geq v\}\bigr|.
 \end{aligned}
\]
Put $A(v):=\sum_k n_kA_k(v)$ and $B(v):=\sum_k n_kB_k(v)$.
The $R$-th largest entry has value $v$ precisely when
\begin{equation}
\label{eq:one-item-rank}
 A(v)<R\leq B(v).
\end{equation}
To find such a value, search each sequence $d_j$ in turn.  At a
midpoint $t$, compute the two ranks for $v=d_j(t)$.  If $A(v)\geq R$,
the desired value is strictly larger than $v$, so retain only indices
smaller than $t$.  If $B(v)<R$, retain only indices larger than $t$.
Otherwise, return $v$.  If this search exhausts the indices of one
sequence, proceed to the next.  Since the desired value belongs to
at least one sequence, one of these searches succeeds.  There are
$O(K\log(R+1))$ candidate evaluations, each requiring
$O(K\log(R+1))$ marginal comparisons.  Thus the total number of
comparisons is $O(K^2\log^2(R+1))$.

Let $\tau$ be the value returned.  Initially give every type-$k$ agent
$b+A_k(\tau)$ items.  This selects all marginal entries strictly larger
than $\tau$.  There remain $ E:=R-\sum_k n_kA_k(\tau)$ 
items.  Strict decrease of each sequence implies that
$B_k(\tau)-A_k(\tau)\in\{0,1\}$.  By
\eqref{eq:one-item-rank},
\[
 0<E\leq
 \sum_{k:\,B_k(\tau)=A_k(\tau)+1}n_k.
\]
For the types appearing in this sum, in any order, give one more
item to $\min\{E,n_k\}$ of their agents, and decrease $E$ by this
amount.  Stop when $E=0$.  All selected entries form prefixes for
their respective agents, so this is a feasible allocation attaining
the sum of the $R$ largest marginal entries.  It is therefore optimal.
Each type uses at most two consecutive bundle sizes, and their
multiplicities are computed with binary arithmetic.

\medskip\noindent
\textbf{Exact comparisons and bit complexity.}
Write the fixed parameter as $p=A/B$ in lowest terms, where
$A\in\mathbb Z\setminus\{0\}$ and $B\geq1$.  Comparing two entries
of~\eqref{eq:one-item-marginals} requires the sign of a sum of at most
four terms of the form $\pm r^{A/B}$.  Each $r$ is a rational number
of polynomial bit length, namely a per-item value multiplied by an
integer at most $m+1$.  Zero terms are treated separately; all
remaining $r$ are positive.  The number of terms, and the degrees of
their defining polynomials, depend only on $p$.

For completeness, exact signs can be determined as follows.  Each
$r^{A/B}$ is the positive root of $X^B-r^A$, after clearing
denominators.  A constant number of resultant computations produces
a nonzero integer polynomial $P$ whose roots include the difference
$\delta$ being compared.  The degree of $P$ is at most $B^4$ and its
coefficient bit lengths are polynomial in the input length.  Divide
out the largest power of $X$ that divides $P$, obtaining $Q$.
If $Q$ is constant, then $\delta=0$.  Otherwise, let $H\geq1$ bound
the absolute values of the coefficients of $Q$.  Since $Q$ has a
nonzero integer constant coefficient, applying Cauchy's root bound
to its reciprocal polynomial gives $ \delta\neq0\quad\Longrightarrow$
  $|\delta|\geq\frac{1}{1+H}.$  
Certified rational bisection of the positive $B$-th roots approximates
$\delta$ with absolute error less than $1/(4(1+H))$ in polynomial
time.  An approximate absolute value at most $1/(2(1+H))$ certifies
$\delta=0$; otherwise the approximation has the correct sign.
Thus each marginal comparison has polynomial bit complexity.
All ranks, multiplicities, and indices also have polynomial bit
length.  The selection and reconstruction therefore run in polynomial
time in the compact input length.  In particular, this argument never
compares a sum of a number of radicals that grows with $K$ or $n$.

\medskip\noindent
\textbf{Nash welfare, $p=0$.}
Under the positivity assumptions made above, maximizing Nash welfare
is equivalent to maximizing
\[
 \prod_{i=1}^n a_{k(i)}x_i
 =\left(\prod_{k=1}^K a_k^{n_k}\right)\prod_{i=1}^n x_i,
\]
where $k(i)$ is the type of agent $i$ and $x_i$ is its item count.
The first factor is independent of the allocation.  If
$x_i\geq x_j+2$, transferring one item from $i$ to $j$ increases
their product, since $ (x_i-1)(x_j+1)-x_ix_j=x_i-x_j-1>0.$ 
Hence a maximizing allocation has all item counts differing by at
most one.  Write $m=hn+r$, where $0\leq r<n$.  Give $h+1$ items to
$r$ agents and $h$ items to the others.  The agents receiving an extra
item can be assigned across types in one pass over the binary
multiplicities.  This gives an optimal compressed allocation with
at most two bundle sizes per type.

\medskip\noindent
\textbf{Egalitarian welfare, $p=-\infty$.}
Clear the denominators of the $a_k$ using a common positive integer
$D$, and write $c_k:=Da_k\in\mathbb Z_{>0}$.  The product of the input
denominators is a valid choice of $D$ and has polynomial bit length.
For an integer threshold $T\geq0$, an allocation with scaled minimum
utility at least $T$ exists if and only if
\begin{equation}
\label{eq:one-item-egalitarian}
 \sum_{k=1}^K n_k\left\lceil\frac{T}{c_k}\right\rceil\leq m.
\end{equation}
Indeed, every type-$k$ agent needs at least $\lceil T/c_k\rceil$
items, and these lower bounds can be met whenever their sum is at
most $m$.  All achievable scaled utilities are integers.  Binary
search for the largest feasible $T$ in
$[0,m\max_k c_k]$ therefore finds the exact optimum.  Each test of
\eqref{eq:one-item-egalitarian} uses polynomial-time binary arithmetic,
and the number of tests is polynomial in the input length.
Give each type-$k$ agent $\lceil T/c_k\rceil$ items and assign all
leftovers to one agent.  This is an optimal compressed allocation
with at most two bundle sizes per type.

\medskip\noindent
\textbf{Parameters $p\geq1$.}
Let $a_{\max}:=\max_k a_k$.  For any complete allocation,
\[
 \sum_{i=1}^n(a_{k(i)}x_i)^p
 \leq a_{\max}^p\sum_{i=1}^n x_i^p
 \leq a_{\max}^p\left(\sum_{i=1}^n x_i\right)^p
 =a_{\max}^pm^p.
\]
Giving all items to one agent of a type attaining $a_{\max}$ attains
this bound.  This completes the proof in every case.
\end{proof}

\begin{remark}[Allocation construction and threshold decision]
\label{rem:one-item-type-threshold}
Theorem~\ref{thm:one-item-type-exact} constructs an exact optimal
allocation.  It does not assert a polynomial-time algorithm for
comparing its welfare to an arbitrary rational threshold.  This
distinction matters even with one item type.

To see this, consider positive integers $b_1,\ldots,b_n,T$ and the
question whether $\sum_i\sqrt{b_i}\leq T$, an instance of
\textnormal{\textsc{Sqrt-Sum}}.  Set $p=-1/2$, take $m=n$ identical items, and
give agent $i$ per-item value $1/b_i$.  Every allocation with
positive welfare must give exactly one item to every agent.
Thus an optimal allocation is immediate, but its welfare is
\[
 \OPT_{-1/2}=\left(\frac{n}{\sum_{i=1}^n\sqrt{b_i}}\right)^2.
\]
The rational threshold $\tau:=n^2/T^2$ has polynomial encoding
length, and
\[
 \OPT_{-1/2}\geq\tau
 \quad\Longleftrightarrow\quad
 \sum_{i=1}^n\sqrt{b_i}\leq T.
\]
Hence the rational-threshold decision problem at $q=1$ and the
single fixed value $p=-1/2$ is at least as hard as
\textnormal{\textsc{Sqrt-Sum}}.  The allocation algorithm avoids this comparison:
each of its marginal comparisons involves at most four radicals,
whereas evaluating the final threshold can involve a sum with an
unbounded number of terms.
\end{remark}

\subsection{Exact egalitarian welfare with two item types}
\label{sec:two-item-types-egalitarian}

For egalitarian welfare, the case of two item types admits an exact
polynomial-time algorithm, even when the number of agent types is not
fixed. This is a consequence of the two-job-type algorithms of
McCormick, Smallwood, and Spieksma~\cite{McCormickSmallwoodSpieksma2001}.
We give a direct proof that also makes the dependence on the compact
input and the reconstruction of an allocation explicit.

\begin{theorem}
\label{thm:two-item-types-egalitarian}
For at most two item types and an arbitrary number $K$ of agent types,
exact egalitarian-welfare maximization with nonnegative rational
additive utilities is solvable in polynomial time. The running time is
polynomial in the compact input length, with both item and agent
multiplicities encoded in binary. The algorithm returns an optimal
compressed allocation using at most $K+2$ type--bundle pairs.
\end{theorem}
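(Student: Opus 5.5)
The plan is to reduce exact maximization to a polynomial number of feasibility tests for a rational threshold $T$, and to decide each test by a compact integer-hull argument. First we dispose of trivial cases. Delete empty types. If $q=1$, apply Theorem~\ref{thm:one-item-type-exact}. If positive welfare is infeasible, which the flow test of Section~\ref{subsec:fixed-item-configurations} detects, give everything to one agent. Then clear denominators so that all $u_{k,j}$ are nonnegative integers; all achievable utilities are then integers in $[0,\max_{k,j}u_{k,j}\,(m_1+m_2)]$. So binary search over the integer threshold $T$ finds the optimum exactly, with polynomially many tests.

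For a fixed $T>0$, the test asks whether the agents can be covered. Each type-$k$ agent needs a bundle $(x,y)\in\mathbb Z_{\ge0}^2$ with $u_{k,1}x+u_{k,2}y\ge T$, and the bundles must satisfy $\sum x_i\le m_1$ and $\sum y_i\le m_2$. The key structural step is the following. Let $P_k$ be the integer hull of $\{(x,y)\in\mathbb Z_{\ge0}^2: u_{k,1}x+u_{k,2}y\ge T\}$. Its minimal (Pareto) integer points that are vertices of $P_k$ form a convex staircase $(0,y_k^0),\dots,(x_k^0,0)$. These vertices can be computed in polynomial time, with polynomially many of them, by the standard continued-fraction / integer-hull algorithm in dimension two (Harvey's bound gives $O(\log)$ vertices). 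One should only ever give an agent a bundle on the lower boundary of $P_k$. Then I claim that a feasible allocation exists if and only if
\[
 \min\Bigl\{\textstyle\sum_k n_k y_k(\xi_k):\ \sum_k n_k\xi_k\le m_1,\ \xi_k\in[0,x_k^0]\Bigr\}\le m_2,
\]
where $y_k(\cdot)$ is the convex piecewise-linear lower boundary of $P_k$ and $\xi_k$ is the average first-coordinate consumption of type $k$. Necessity follows from convexity: averaging the bundles within a type yields a point of $P_k$. Sufficiency is the crux. Given a fractional optimum, we must realize the average $n_k(\xi_k,y_k(\xi_k))$ by $n_k$ integer points of $P_k$ whose sum is integral and lies within the capacities. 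For an optimal $\xi$ of the separable convex knapsack, greedy exchange by slopes shows that every type except at most one has $\xi_k$ at a breakpoint of $y_k$, hence at an integer vertex. The remaining type has $n_k\xi_k$ lying on a single edge between two consecutive vertices $v,w$. We split that type's agents between $v$ and $w$ with counts $\lfloor\cdot\rfloor/\lceil\cdot\rceil$. Rounding of this kind is also needed because the total $\sum n_k\xi_k$ may be fractional. This split makes the first coordinate integral and at most $m_1$, but the second coordinate may increase by up to one step of the edge. Repairing this is the main obstacle. My plan is to refine the argument by rounding $\xi$ downward on the fractional edge instead of to the nearer endpoint. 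Since $m_1,m_2$ are integers, exactness means I must minimize $\sum n_k y_k$ over integer totals $X=\sum n_k\xi_k\le m_1$ with $\xi$ supported on vertices except for one edge, where an integer split of counts is used. I would show that the function $g(X)$, the minimum integral second-coordinate consumption given first-coordinate consumption at most $X$, is computed by this "all but one type at vertices, one type mixing two adjacent vertices" structure. I would prove that by an exchange argument among pairs of types that both mix: shifting one agent between edges of two types changes the first coordinate by edge-length integers, and a lattice/pigeonhole argument (as in McCormick--Smallwood--Spieksma) bounds the number of mixing types to one.

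Algorithmically, we sort all edges of all $P_k$ by slope, with multiplicity $n_k$, exactly as in the proof of Theorem~\ref{thm:one-item-type-exact}. We then select edges greedily in order of best exchange rate until the first-coordinate budget is met, using binary search on a slope threshold with rank counts weighted by $n_k$. Finally we try, for the threshold type, the integer number of agents moved along its critical edge. The output gives each type at most two bundles in general, but only the critical type genuinely uses two, plus leftovers assigned to one agent. This yields at most $K+2$ type--bundle pairs, and all arithmetic is on binary integers of polynomial length.
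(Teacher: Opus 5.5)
\textbf{There is a genuine gap in the sufficiency step.} Your setup matches the paper: the reduction to integer thresholds $T$, the integer hulls $P_k(T)$, and the necessity direction by averaging within a type. Your fractional test, minimizing $\sum_k n_k y_k(\xi_k)$ subject to $\sum_k n_k\xi_k\le m_1$, is equivalent to the paper's criterion $(m_1,m_2)\in\sum_k n_kP_k(T)$.

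The problem is the repair you propose. You plan to restrict to allocations in which every type sits at a hull vertex and one critical type splits its agents between two adjacent vertices. That structural claim is false. Take one type with $u=(1,1)$, $n_1=3$, $T=2$ and $m=(3,3)$:
\begin{itemize}
\item The lower boundary of $P_1(2)$ has only the vertices $(0,2)$ and $(2,0)$.
\item Any split of the three agents between them consumes $(2h,6-2h)$, which never fits in $(3,3)$.
\item Yet giving every agent $(1,1)$ attains welfare $2$.
\end{itemize}
So a vertex-restricted $g(X)$, as in your plan to use ``an integer split of counts'' between vertices, would declare $T=2$ infeasible, and the binary search would return $1$ instead of $2$.

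Rounding $\xi$ downward on the critical edge does not help. It moves agents to the left vertex, giving $(2,4)$ in the example, which raises exactly the second-coordinate consumption that was in danger. Rounding upward gives $(4,2)$ and breaks the first capacity. Here both capacities are tight. The deferred exchange/pigeonhole argument cannot rescue a claim that is false.

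\textbf{The missing idea is a residual bundle that is an integer point of $P_k(T)$ but not a vertex.} In the paper, edges are merged in slope order, and parallel edges are processed one type at a time with intermediate collinear points kept. If $z$ dominates no boundary point, it lies in the axial triangle $\operatorname{conv}\{V,W,(W_1,V_2)\}$ of one segment. Fix all other types at vertices. Then $Z=z-C$ lies in $\operatorname{conv}\{sA,sB,(sc,sb)\}$, where $AB$ is an edge of the critical type and $s=n_k$. The reconstruction is:
\begin{itemize}
\item Set $h=\lfloor (Z_1-sa)/(c-a)\rfloor$, so $0\le h<s$.
\item Give $h$ agents $B$ and $s-h-1$ agents $A$.
\item Give the last agent $R=Z-hB-(s-h-1)A$.
\end{itemize}
A direct computation shows that $R$ is integral, that $R_1=a+r$ with $0\le r<c-a$, and that $R$ lies on or above the segment $AB$. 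Since $P_k(T)$ is upward closed and $P_k(T)\cap\mathbb Z^2=S_k(T)$, this agent also meets the threshold. In the example, $h=1$ and $R=(1,1)$.

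This step is what makes the fractional membership test exact, with no rounding loss. It also gives the bound $K+2$ correctly: the critical type uses up to three bundles $A,B,R$, every other type uses one, and no leftovers occur in this case.
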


\begin{proof}
If there are no item types, every agent receives the empty bundle and
the optimum is zero. Otherwise, we may add an item type of multiplicity
zero if there is only one item type. Write $m_1,m_2$ for the two item
multiplicities, $n_k$ for the number of agents of type $k$, and $ u_k(x,y)=a_kx+b_ky$ 
for their common utility. Types of multiplicity zero can be deleted.
Thus $n_k\geq1$ for every remaining type. We first solve feasibility
for a common welfare threshold $T$, and then optimize the threshold.

\medskip
\noindent\textbf{The integer hull for one agent type.}
For each type $k$, let
\begin{equation}
\label{eq:two-types-covering-polyhedron}
 S_k(T)=\{(x,y)\in\mathbb Z_{\geq0}^{2}:
              a_kx+b_ky\geq T\},
 \qquad
 P_k(T)=\operatorname{conv}(S_k(T)).
\end{equation}
The threshold $T=0$ is always feasible: give all items to one agent
and leave the other bundles empty, using at most $K+1$ type--bundle
pairs. Suppose henceforth that $T>0$.
If $a_k=b_k=0$ for some $k$, the threshold is infeasible. If exactly one
coefficient is positive, then $P_k(T)$ is a translated nonnegative
orthant. For example, if $a_k>0$ and $b_k=0$, then $ P_k(T)=(\lceil T/a_k\rceil,0)+\mathbb R_{\geq0}^{2}.$

Suppose that both coefficients are positive, and put $ X_k=\lceil T/a_k\rceil$,
  $Y_k=\lceil T/b_k\rceil.$
Every point of $S_k(T)$ dominates a point of
$S_k(T)\cap([0,X_k]\times[0,Y_k])$: if its first coordinate exceeds
$X_k$, the point $(X_k,0)$ suffices, and the analogous statement holds
for the second coordinate. Consequently,
\begin{equation}
\label{eq:two-types-bounded-hull}
 P_k(T)=
 \operatorname{conv}\bigl(S_k(T)\cap
                  ([0,X_k]\times[0,Y_k])\bigr)
       +\mathbb R_{\geq0}^{2}.
\end{equation}
The bounded integer hull in~\eqref{eq:two-types-bounded-hull} is the
integer hull of a rational polygon described by five inequalities.
The standard polynomial-time integer-hull algorithm in dimension two~\cite{Hartmann89},
also used by McCormick et al.~\cite{McCormickSmallwoodSpieksma2001},
therefore computes its vertices in polynomial time. The number and
encoding lengths of the vertices are polynomial in the input length.
The same is true of the finite lower boundary of $P_k(T)$.

Each $P_k(T)$ is upward closed: if $v\in P_k(T)$ and
$r\in\mathbb R_{\geq0}^{2}$, then $v+r\in P_k(T)$. Indeed,
$S_k(T)$ is closed under addition of nonnegative integer vectors, and
$\operatorname{conv}(\mathbb Z_{\geq0}^{2})=
\mathbb R_{\geq0}^{2}$. Its finite lower boundary is an integral
polygonal chain whose successive vertices have strictly increasing
first coordinates and strictly decreasing second coordinates. A
translated orthant has only one such vertex. Notice also that
\begin{equation}
\label{eq:two-types-hull-integer-points}
 P_k(T)\cap\mathbb Z^{2}=S_k(T),
\end{equation}
because the original covering halfplane and the nonnegative orthant
are convex and contain $S_k(T)$.

\medskip
\noindent\textbf{The aggregate lower boundary.}
Consider the Minkowski sum
\begin{equation}
\label{eq:two-types-aggregate}
 P(T)=\sum_{k=1}^{K}n_kP_k(T).
\end{equation}
A threshold-feasible allocation implies
$(m_1,m_2)\in P(T)$. To see this, distribute any unused items
arbitrarily, which preserves feasibility by monotonicity. For each
type $k$, the sum of its $n_k$ feasible bundles belongs to
$n_kP_k(T)$, since $P_k(T)$ is convex.

To construct the finite lower boundary of $P(T)$, start with the sum
of the leftmost lower vertices of the scaled polyhedra $n_kP_k(T)$.
Merge their lower edges in slope order. When edges are parallel,
process them successively and retain the resulting intermediate
collinear points. Every segment in this representation then changes
the contribution of exactly one type, from $n_kA$ to $n_kB$, where
$A,B$ are consecutive lower vertices of $P_k(T)$; every other type
contributes a fixed scaled vertex. The boundary has at most the sum
of the numbers of local lower edges. We can record the type and the
two local endpoints responsible for each segment, together with the
current vertex of every other type. This construction is polynomial
in the compact input length and never expands the agent
multiplicities.

Membership of $z=(m_1,m_2)$ in $P(T)$ can now be checked in polynomial
time. We prove that membership is also sufficient for an integral
allocation, and give a compressed reconstruction.

\medskip
\noindent\textbf{Locating the target.}
Suppose that $z\in P(T)$. First check whether $z$ dominates any recorded lower
boundary point $V$. If it does not, it belongs to the closed axial triangle
\begin{equation}
\label{eq:two-types-axial-triangle}
 \operatorname{conv}\{V,W,(W_1,V_2)\}
\end{equation}
for two consecutive recorded points $V,W$, where
$V_1<W_1$ and $V_2>W_2$.
Indeed, if the first coordinate of $z$ lies between those of $V$ and
$W$, membership places $z$ on or above the segment $VW$.
If $z_2\geq V_2$, then $z$ dominates $V$. Otherwise it lies in the
triangle~\eqref{eq:two-types-axial-triangle}. If $z$ lies to the right
of the finite chain, membership implies that it dominates the
rightmost lower vertex. A chain consisting of one vertex gives only
the domination case.

\medskip
\noindent\textbf{Reconstruction in the domination case.}
Write $V=\sum_{k=1}^{K}n_kA_k,$ 
where $A_k$ is the recorded integral vertex of $P_k(T)$.
Give every type-$k$ agent bundle $A_k$. By
\eqref{eq:two-types-hull-integer-points}, every such bundle meets the
threshold. Give the nonnegative integer remainder $z-V$ to one
agent. This produces a feasible allocation with at most $K+1$
type--bundle pairs.

\medskip
\noindent\textbf{Reconstruction in the edge case.}
Suppose that $z$ belongs to the triangle associated with a segment
that changes type $k$. Write
\[
 V=C+sA,\qquad W=C+sB,\qquad
 C=\sum_{\ell\ne k}n_\ell A_\ell,\qquad s=n_k,
\]
where $A=(a,b)$ and $B=(c,d)$ are consecutive lower vertices of
$P_k(T)$, with $c>a$ and $b>d$. Give every type-$\ell$ agent, for
$\ell\ne k$, bundle $A_\ell$. The remaining resource vector
$Z=z-C$ is integral and belongs to $\operatorname{conv}\{sA,sB,(sc,sb)\}.$ 
In particular,
\begin{equation}
\label{eq:two-types-edge-inequality}
 sa\leq Z_1\leq sc,
 \qquad
 Z_2\geq sb+\frac{Z_1-sa}{c-a}(d-b).
\end{equation}
Set $ h=\left\lfloor\frac{Z_1-sa}{c-a}\right\rfloor.$ 
Since the domination case was excluded, $Z_1<sc$; otherwise
$z$ would dominate $W$. Hence $0\leq h<s$. Define
\begin{equation}
\label{eq:two-types-residual-bundle}
 R=Z-hB-(s-h-1)A.
\end{equation}
This vector is integral. Put $r=Z_1-sa-h(c-a)$, $0\leq r<c-a.$ 
Then $R_1=a+r$. Subtracting $hd+(s-h-1)b$ from
\eqref{eq:two-types-edge-inequality} yields $ R_2\geq b+\frac{r}{c-a}(d-b).$ 
The right-hand side is the second coordinate of the point on the
segment $AB$ with first coordinate $a+r$. It is at least $d\geq0$.
Thus $R$ is nonnegative and lies on or above $AB$. Since $P_k(T)$
contains this segment and is upward closed, $R\in P_k(T)$.
Equation~\eqref{eq:two-types-hull-integer-points} gives $R\in S_k(T)$.

Give $h$ agents of type $k$ bundle $B$, give $s-h-1$ agents bundle
$A$, and give the remaining agent bundle $R$. The numbers of agents
are nonnegative integers summing to $s$, and
\eqref{eq:two-types-residual-bundle} shows that their total resource
use is exactly $Z$. Every agent meets the threshold. After omitting
zero multiplicities and merging equal bundles, type $k$ uses at
most three type--bundle pairs and every other type uses one.
The total is therefore at most $K+2$.

This proves
\begin{equation}
\label{eq:two-types-feasibility-equivalence}
 T\text{ is feasible}
 \quad\Longleftrightarrow\quad
 (m_1,m_2)\in\sum_{k=1}^{K}n_kP_k(T),
\end{equation}
together with polynomial-time compressed reconstruction. All the
numbers used above have polynomial encoding length. In particular,
multiplication by $n_k$, comparison of slopes, and the quotient
calculation defining $h$ operate directly on binary integers and
rationals.

\medskip
\noindent\textbf{Exact optimization.}
Let $D$ be the product of the denominators of all input utility
coefficients. Multiplying the coefficients by $D$ produces
nonnegative integer coefficients $a'_k,b'_k$. The encoding length of
$D$ is at most the sum of the denominator encoding lengths, so this
scaling is polynomial in the input length. Every attained
egalitarian welfare in the scaled instance is an integer, and $ U=\max_{k}\{a'_km_1+b'_km_2\}$ 
is an upper bound with polynomial encoding length. If $U=0$, give all items to one agent and empty bundles to
the others, using at most $K+1$ compressed records. Otherwise, binary search over the integer
thresholds $0,\ldots,U$, using
\eqref{eq:two-types-feasibility-equivalence}, finds the largest
feasible threshold in $O(\log(U+1))$ calls. Reconstruct the allocation
at that threshold. Dividing its welfare by $D$ gives the exact
optimum for the original instance, and does not change the
allocation. The number of type--bundle pairs is at most $K+2$, as
claimed.
\end{proof}

\section{A PTAS for $p$-Mean Welfare with a Constant Number of Agent Types}
\label{sec:constant-agent-types}

\subsection{Statement}

In this section, the number $K$ of agent types is fixed, whereas the number
$q$ of good types is part of the input.  We use the model, welfare
conventions, and compressed-output representation from
Section~\ref{sec:notation}.

\begin{theorem}
\label{thm:constant-agent-types}
For every fixed number $K$ of agent types, every fixed rational $p<1$, and every $\varepsilon>0$, maximizing $p$-mean social welfare under additive nonnegative utilities admits a polynomial-time approximation scheme in the high-multiplicity model.  More precisely, for fixed $K,p,\varepsilon$, the algorithm runs in time polynomial in $q$ and in the binary encoding length of the instance and returns an allocation $x$ satisfying $\PSW_p(x)\geq(1-\varepsilon)\OPT_p$.
\end{theorem}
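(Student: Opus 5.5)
Within each agent type, the welfare-optimal allocation should be nearly equal, because the objective is symmetric and concave for $p<1$ and all agents of a type share one utility. So I would first reduce to $K$ ``super-agents''. Fix an optimal allocation $x^*$ with all utilities positive when $p\leq0$; feasibility of positive welfare can be tested by the flow network of Section~\ref{sec:fixed-item-types}. Let $Y_k\in\mathbb Z_{\geq0}^q$ be the aggregate bundle given to type $k$. By concavity of $\phi_p$, the $n_k$ agents of type $k$ contribute at most $n_k\phi_p(u_k(Y_k)/n_k)$. The problem therefore splits into two steps:
\begin{itemize}
\item[(i)] choose an aggregate split $Y_1+\cdots+Y_K\leq\boldsymbol m$ approximately maximizing $\sum_k n_k\phi_p(u_k(Y_k)/n_k)$;
\item[(ii)] divide each $Y_k$ among the $n_k$ agents so that every agent gets utility close to $u_k(Y_k)/n_k$.
\end{itemize}
Step (ii) is where indivisibility bites, so I would run the argument with a guessing layer rather than as a clean reduction.

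\textbf{Steps of the algorithm.}
\begin{enumerate}
\item \textbf{Guess utility levels.} For each type $k$, guess a target per-agent utility $v_k$ on a $(1+\varepsilon)$-geometric grid. The range is bounded by polynomially-encoded quantities: from $\min$ positive item value up to $\max_k u_k(\boldsymbol m)$. Since $K$ is fixed, there are polynomially many guesses.
\item \textbf{Large and small item types.} Relative to a guess, call item type $j$ \emph{large} for type $k$ if $u_{k,j}>\varepsilon v_k$, and small otherwise.
\item \textbf{Small items.} These can be spread fractionally and then rounded greedily, with loss at most $\varepsilon v_k$ per agent. They behave like a divisible resource, so an LP (or a flow with $q$ commodities but only $K$ demand nodes) handles them.
\item \textbf{Large items.} An agent receives at most $1/\varepsilon$ large items' worth before exceeding $v_k$ substantially. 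Each agent's large part is then a ``pattern'' from a constant-size family once large items are grouped by value class. Following the configuration-LP approach of \cite{NER24} (PTAS for Nash with fixed agent types) and the identical-agents PTAS of \cite{GargMV22}, I would group large item types for each $k$ into $O_\varepsilon(\log)$ geometric value classes. A configuration is a multiset of at most $1/\varepsilon$ class-labels, giving constantly many configurations per type.
\item \textbf{Feasibility and rounding.} Write a mixed integer program. Integer variables count the agents of type $k$ using each configuration; there are constantly many of them. Continuous variables assign actual item types to classes and the small items. Lenstra's algorithm in fixed dimension, or the approach in \cite{NER24}, solves this in polynomial time with binary multiplicities. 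The continuous assignment polytope has a basic solution with few fractional variables (about $K\cdot O(1)$). These are rounded by losing a constant number of items per type, charged against an $\varepsilon$-fraction via the guesses.
\item \textbf{Choose the best guess.} Among feasible guesses, pick the one maximizing $\sum_k n_k\widetilde\phi_{p,\eta}(v_k)$ using Lemma~\ref{lem:weighted-rational-comparison}. Then use homogeneity and monotonicity, as in the FPTAS analysis, to get $(1-O(\varepsilon))\OPT_p$.
\end{enumerate}

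\textbf{Main obstacle.} The hardest step is the structural claim that an optimum can be transformed, losing only a $(1-\varepsilon)$ factor in $\PSW_p$, so that within each type all agents have utility within a $(1+\varepsilon)$ factor of a common level $v_k$. For $p<1$ this is not literally true: an optimum may have uneven utilities within a type, for instance when a type gets few huge indivisible items. I would allow a constant number of levels per type, i.e. guess a small histogram of levels whose agent counts are binary-encoded. I would justify this by a Lemma~\ref{lem:fixed-item-sparse-support}-style exchange argument combined with geometric rounding of utilities: rounding utilities to powers of $(1+\varepsilon)$ loses at most $1+\varepsilon$. Bounding the number of distinct levels needed per type by a function of $\varepsilon,p$ alone, using concavity to merge distant levels, is what I expect to require the most care.
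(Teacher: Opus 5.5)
\textbf{The gap is in the structural step you flag as the main obstacle: the repair you propose is false.} No bound on the number of utility levels per type in terms of $\varepsilon$ and $p$ alone exists. Take $K=1$, $n$ agents, and $q=n$ item types of multiplicity one with values $2,4,\ldots,2^n$. For $p\leq0$, every allocation of positive welfare gives each agent exactly one item, and hence so does every approximate solution. For $0<p<1$, every optimum does the same, by strict subadditivity of $t^p$. So the agents must receive $n$ utility levels with pairwise ratios at least $2$, and indivisibility forbids merging them. An exchange argument in the style of Lemma~\ref{lem:fixed-item-sparse-support} only gives $2^q$ distinct bundles, which is not constant here.

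The same example breaks the rest of your pipeline:
\begin{itemize}
\item The super-agent bound in step (i) is exponentially far from the optimum. For $p=0$ it gives $(2^{n+1}-2)/n$ against the true value $2^{(n+1)/2}$.
\item No single guess $v_k$ makes every agent's utility comparable to $v_k$.
\item Items worth far more than $v_k$ fit none of your configurations.
\item Selecting the guess by $\sum_k n_k\widetilde\phi_{p,\eta}(v_k)$ presupposes a common level per type.
\end{itemize}

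\textbf{The missing idea is Lemma~\ref{lem:residual-threshold}.} Instead of bounding the number of levels, split each type into \emph{singleton} agents and \emph{residual} agents. Take an optimum maximizing $\Phi_p$. Repeatedly remove from the residual set any agent holding a good worth more than the current residual average. The exchange of Lemma~\ref{lem:balancing-exchange} shows such an agent holds exactly one positive-value good, and the residual averages strictly decrease. At termination, every good held by a residual agent is worth at most the final average $\tau_k$, and every residual utility lies in $[\tau_k/2,2\tau_k]$.

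This structure removes your obstacle:
\begin{itemize}
\item Singletons need no guessed levels. A singleton holding a huge good $j$ is encoded by the item-indexed variable $x_{k,j}$ with linear objective coefficient $\phi_p(v_{k,j})$. Thus up to $q$ arbitrary levels enter the program as ordinary variables, with no enumeration.
\item Only one threshold per type is guessed.
\item A factor-$2$ window, rather than a $(1+\varepsilon)$ one, suffices. The configurations already range over normalized levels in $[1/(5\delta),4/\delta]$.
\end{itemize}

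\textbf{Your rounding step also needs more than you state.} With class granularity $1+\varepsilon$, the bound on fractional coordinates is $\Theta_K(\varepsilon^{-1}\log(1/\varepsilon))$. Dropping that many small items, each worth up to $\varepsilon v_k$, is not an $O(\varepsilon)$ loss, and the loss may fall on a single agent. Dropping a single large item can cost an agent a constant fraction of its utility. Lemma~\ref{lem:assignment-rounding} handles both issues:
\begin{itemize}
\item The large-class and agent-count rows are kept exact. Once the weighted small-value rows are removed, the residual system is a network matrix, so it rounds integrally.
\item The small-good loss is controlled by guessing each type's $d=\lceil F/\delta\rceil$ most valuable small types in the support and forcing them to be integral with $x_{k,j}\geq1$. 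The loss is then at most $\delta\widehat S_k$, which the reduced targets $(1-\delta)\mu(C)$ absorb during reconstruction.
\end{itemize}
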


\begin{remark}
Theorem~\ref{thm:constant-agent-types} concerns finite $p<1$ and
therefore includes $p=0$, corresponding to Nash social welfare.
The case $p=0$ was previously established by Nguyen, Elbassioni, and
Rothe~\cite{NER24}; the theorem extends the result to every fixed finite
rational $p<1$. The limiting case $p=-\infty$ is not claimed here.
By Corollary~\ref{cor:no-fptas-one-agent-type}, this PTAS cannot be
strengthened to an FPTAS unless $\mathsf{P}=\mathsf{NP}$, already with
one agent type.
\end{remark}

\begin{proof}
If $m=0$, return the allocation giving every agent the empty bundle.
This allocation is optimal and has one compressed record per nonempty
agent type. Henceforth assume $m>0$, and therefore $q\geq1$.
Empty agent types may be deleted, so assume $n_k\geq1$ for every
$k\in[K]$. If $\varepsilon\geq1$, any allocation satisfies the claimed
guarantee. Hence, assume $\varepsilon\in(0,1)$ and choose a positive
rational $\delta$ satisfying
\begin{equation}
\label{eq:constant-agent-delta}
 0<\delta\leq
 \begin{cases}
  \displaystyle
  \min\left\{
   \frac{1}{10},
   \frac{\varepsilon}{12+2/|p|}
  \right\},
  &p\neq0,\\[4mm]
  \displaystyle
  \min\left\{
   \frac{1}{10},
   \frac{\varepsilon}{14}
  \right\},
  &p=0.
 \end{cases}
\end{equation}

\subsection{Algorithmic overview}
\label{subsec:constant-agent-types-overview}

We first describe the complete algorithm at a high level. The subsequent
subsections define its ingredients and prove their guarantees. The structural
lemma separates, within every agent type, singleton agents from residual
agents controlled by a common utility threshold. The algorithm enumerates a
geometric approximation of every threshold, represents residual agents by a
constant collection of configurations, and solves the resulting aggregate
allocation problem. Support guessing leaves only a constant number of integer
variables. A totally unimodular residual program then rounds the remaining
aggregate variables, after which batched procedures reconstruct a compressed
allocation. Algorithm~\ref{alg:constant-agent-types-ptas} summarizes these
steps.

\begin{algorithm}[t]
\caption{PTAS for a fixed number of agent types}
\label{alg:constant-agent-types-ptas}
\begin{algorithmic}[1]
\Require A high-multiplicity instance, a fixed rational $p<1$, and
$\varepsilon\in(0,1)$
\If{$m=0$}
 \State \Return the compressed allocation giving every agent the empty bundle
\EndIf
\State Choose $\delta$ according to \eqref{eq:constant-agent-delta}
\If{$p\leq0$}
 \State Run the positive-utility flow test; if it fails, return any allocation
\EndIf
\State Construct the threshold sets $\mathcal T_1,\ldots,\mathcal T_K$
\For{each enumerated threshold vector not discarded by the positivity condition}
 \State Normalize and round the values and construct the retained configurations
 \State Form the rationalized configuration integer program
 \For{each enumerated small-good support tuple $\mathbf J$}
  \State Solve the restricted mixed-integer program; skip it if it is infeasible
  \State Fix its integer variables and compute an optimal basic solution of the
  resulting linear program
  \State Round the aggregate assignment by solving the reduced assignment LP
  \State Reconstruct and store the corresponding compressed allocation
 \EndFor
\EndFor
\State Evaluate all reconstructed candidates using directed rational
approximations of their actual utilities
\State \Return the candidate with the best certified objective value
\end{algorithmic}
\end{algorithm}

\subsection{Transformed objectives and positivity}
\label{subsec:transformed-objectives-positivity}

We use the transformed function $\phi_p$ from Section~\ref{sec:notation}.
For every finite $p<1$, it is increasing and strictly concave on its stated
domain.

For $p\leq0$, apply the positive-utility flow test from
Section~\ref{subsec:fixed-item-configurations}. Its construction and proof
apply to arbitrary $K$ and $q$ and run in polynomial time in the compact
input length. If the maximum flow has value less than $n$, then
$\OPT_p=0$; return any compressed allocation and terminate. Henceforth,
when $p\leq0$, assume that a positive allocation exists and discard
zero-utility configurations.

\subsection{The residual structural lemma}

\begin{lemma}[Balancing exchange]
\label{lem:balancing-exchange}
Let $I\subseteq\mathbb R$ be an interval, and let
$\Phi:I\to\mathbb R$ be strictly concave. Suppose
$H,L,H-d,L+d\in I$, $d>0$, and $H-d>L$. Then $\Phi(H-d)+\Phi(L+d)>\Phi(H)+\Phi(L).$
\end{lemma}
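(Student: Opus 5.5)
The plan is to reduce the claim to the standard fact that strict concavity makes the averaged value of two symmetric points exceed the averaged value of the two endpoints. The four points are ordered as
\[
 L<L+d\leq H-d<H \quad\text{or}\quad L<H-d<L+d<H .
\]
In both cases the pair $\{H-d,L+d\}$ has the same sum $H+L$ as $\{H,L\}$, and lies strictly inside the interval $[L,H]$. Strictness on the left uses $d>0$; on the right it uses $H-d>L$, which gives $L+d<H$. Since $I$ is an interval containing $L$ and $H$, the whole segment $[L,H]$ lies in $I$.

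The key step is to write each inner point as a convex combination of the endpoints with a strictly interior weight. Put $\lambda:=d/(H-L)$. The hypotheses $d>0$ and $H-d>L$ give $0<\lambda<1$, and $H>L$ holds automatically. Then
\[
 H-d=(1-\lambda)H+\lambda L,\qquad L+d=\lambda H+(1-\lambda)L.
\]
Strict concavity gives
\[
 \Phi(H-d)>(1-\lambda)\Phi(H)+\lambda\Phi(L),\qquad
 \Phi(L+d)>\lambda\Phi(H)+(1-\lambda)\Phi(L).
\]
Each inequality is strict because $\lambda\in(0,1)$ and $H\neq L$. Adding the two inequalities, the coefficients of $\Phi(H)$ and of $\Phi(L)$ each sum to $1$, which yields the claim.

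There is essentially no obstacle here. The only point needing care is to verify that $\lambda$ is strictly between $0$ and $1$ and that $H\neq L$, since these are exactly what make the strict form of concavity applicable. This is where the hypotheses $d>0$ and $H-d>L$ enter. Membership of the inner points in $I$ is assumed in the statement, and it also follows from convexity of $I$.
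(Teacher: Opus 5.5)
Your proposal is correct and is the paper's argument: set $\lambda:=d/(H-L)\in(0,1)$, write $H-d$ and $L+d$ as complementary convex combinations of $H$ and $L$, apply strict concavity to each, and add the two inequalities. The preliminary ordering of the four points is not needed for the proof but does no harm.
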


\begin{proof}
Set $\lambda:=d/(H-L)$. The assumptions give $0<\lambda<1$, and $ H-d=(1-\lambda)H+\lambda L$, $L+d=\lambda H+(1-\lambda)L.$ 
Strict concavity at these two convex combinations yields $\Phi(H-d)>(1-\lambda)\Phi(H)+\lambda\Phi(L),$ $\Phi(L+d)>\lambda\Phi(H)+(1-\lambda)\Phi(L).$ 
Adding proves the assertion.
\end{proof}

\begin{lemma}[Structure of an optimal allocation]
\label{lem:residual-threshold}
Fix a finite $p<1$. If $p\leq0$, assume that an allocation giving
every agent positive utility exists. There is an optimal allocation
$
 x^*=(x_{i,j}^*:i\in[n],\ j\in[q])
$
with the following property. For every agent type $k$, its agent set
$A_k$ can be partitioned as
$
 A_k=H_k\mathbin{\dot\cup}R_k,
$
where $R_k\neq\varnothing$, and there is a threshold $\tau_k\geq0$
such that:
\begin{enumerate}
\item Every agent $i\in H_k$ receives exactly one good having positive
      value to type $k$:
      $
       \sum_{j:u_{k,j}>0}x_{i,j}^*=1.
      $
      If this good is of type $j$, then $u_{k,j}>\tau_k$.

\item Every positive-value good assigned to a residual agent has value
      at most $\tau_k$. That is, for every $i\in R_k$ and $j\in[q]$, if $ x_{i,j}^*>0$ and $u_{k,j}>0$
       then $u_{k,j}\leq\tau_k.$

\item If $\tau_k>0$, then every $i\in R_k$ satisfies $\frac{\tau_k}{2}
       \leq u_k(x^*_i)
       \leq2\tau_k.$

\item If $\tau_k=0$, then $u_k(x^*_i)=0$ for every $i\in R_k$.
\end{enumerate}

The threshold is the average utility of the residual agents: $\tau_k
 =
 \frac{1}{|R_k|}
 \sum_{i\in R_k}u_k(x^*_i).$
\end{lemma}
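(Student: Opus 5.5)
The plan is to fix an optimal allocation, extract from strict concavity a single pairwise exchange inequality that holds inside every agent type, and then build $H_k$ and $R_k$ by an explicit peeling procedure. All four properties should then follow from the exchange inequality together with the choice of $\tau_k$ as the residual average.

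\textbf{Step 1: the exchange inequality.} Fix an optimal allocation $x^*$; when $p\leq0$ it has all utilities positive by assumption. Since $\PSW_p$ is maximized exactly when $\Phi_p=\sum_i\phi_p(u_i)$ is, and $\phi_p$ is strictly concave on its domain, I claim the following. For a fixed type $k$, two agents $i,l\in A_k$, and a good type $j$ with $x^*_{i,j}>0$ and $u_{k,j}>0$,
\[
 u_k(x_i^*)-u_{k,j}\leq u_k(x_l^*).
\]
To prove it, suppose instead that $H:=u_k(x^*_i)$, $L:=u_k(x^*_l)$ and $d:=u_{k,j}$ satisfy $H-d>L$. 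Moving one item of type $j$ from $i$ to $l$ is feasible, and the new utilities $H-d$ and $L+d$ stay positive when $p\leq0$. Lemma~\ref{lem:balancing-exchange} then strictly increases $\Phi_p$, which contradicts optimality. In particular the inequality holds with $l$ chosen as an agent of minimum utility in any subset of $A_k$.

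\textbf{Step 2: peeling.} For each type $k$, start with $R:=A_k$ and $H:=\varnothing$. While some $i\in R$ satisfies both of the following, move $i$ from $R$ to $H$:
\begin{itemize}
\item[(a)] $i$'s positive-value content is exactly one item, of some type $j$;
\item[(b)] $u_{k,j}>\operatorname{avg}(R)$, where $\operatorname{avg}(R):=\frac1{|R|}\sum_{i\in R}u_k(x^*_i)$.
\end{itemize}
Removing an agent whose utility exceeds the average strictly lowers the average, so the successive averages decrease and the loop terminates. The set $R$ never becomes empty, because a lone remaining agent cannot exceed its own average. Set $R_k:=R$, $H_k:=H$ and $\tau_k:=\operatorname{avg}(R_k)$. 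Since $\tau_k$ is at most the average at the moment each agent was removed, property~1 holds.

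\textbf{Step 3: property 2.} Suppose $i\in R_k$ holds a positive good $j$ with $u_{k,j}>\tau_k$, and let $l$ be a minimum-utility agent of $R_k$, so that $u_k(x^*_l)\leq\tau_k$. If $i$ holds any other positive item $g$, possibly a second copy of $j$, then
\[
 u_k(x^*_i)-u_{k,g}\geq u_{k,j}>\tau_k\geq u_k(x^*_l),
\]
which contradicts Step~1. Hence $i$ satisfies (a) and (b) with respect to the final $R$, so the loop would not have stopped. This is a contradiction, and property~2 follows.

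\textbf{Step 4: properties 3 and 4.} Property~4 is immediate: if the average of nonnegative utilities is $0$, every one of them is $0$. For property~3 assume $\tau_k>0$.

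For the upper bound, suppose $u_k(x_i^*)>2\tau_k$. Then $i$ holds a positive good $g$, and $u_{k,g}\leq\tau_k$ by property~2. This gives $u_k(x^*_i)-u_{k,g}>\tau_k\geq u_{\min}$, where $u_{\min}$ is the minimum utility in $R_k$, contradicting Step~1.

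For the lower bound, split into two cases.
\begin{itemize}
\item Every residual agent with utility at least $\tau_k$ holds at most one positive item. By property~2 such an agent has utility at most $\tau_k$, hence exactly $\tau_k$. All residual utilities are then at most $\tau_k$ and average $\tau_k$, so all equal $\tau_k$.
\item Some residual agent $t$ with $u_k(x^*_t)\geq\tau_k$ holds at least two positive items. Its cheapest positive item $g$ has $u_{k,g}\leq u_k(x^*_t)/2$, and Step~1 gives $u_{\min}\geq u_k(x^*_t)-u_{k,g}\geq u_k(x^*_t)/2\geq\tau_k/2$.
\end{itemize}

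I expect the main obstacle to be defining $H_k$ so that properties~1 and~2 hold simultaneously. The natural first attempt, peeling only the top agent, fails property~2. Peeling every single-good agent above the current average, together with the observation that averages only decrease, is what resolves it.
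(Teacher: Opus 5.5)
Your proof is correct and follows essentially the same route as the paper: you take an optimal allocation, peel from each type the agents whose unique positive-value good exceeds the current residual average, and use the strict-concavity exchange of Lemma~\ref{lem:balancing-exchange} to obtain properties 2 and 3. The differences are organizational only. You isolate the exchange as the single inequality $u_k(x_i^*)-u_{k,j}\leq u_k(x_l^*)$, build the single-good requirement into the peeling rule, and verify property~2 at termination, whereas the paper peels any agent holding a good above the average and derives the single-good property on the fly; by optimality the two rules coincide at every stage.
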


\begin{proof}
For $0<p<1$, choose an allocation $x^*$ maximizing the transformed
objective $\Phi_p$ over all allocations. For $p\leq0$, choose $x^*$ to
maximize $\Phi_p$ over the nonempty set of allocations giving every
agent positive utility. In either case, $x^*$ maximizes $\PSW_p$.

Fix an agent type $k$. We construct $H_k$, $R_k$, and $\tau_k$ while
leaving $x^*$ unchanged. Initially, let $H_k=\varnothing$ and
$R_k=A_k$. While $R_k$ is being constructed, write $ \tau(R_k)
 :=
 \frac{1}{|R_k|}
 \sum_{i\in R_k}u_k(x^*_i)$ 
for its current average utility.

Suppose that some agent $a\in R_k$ receives a good of type $j$
satisfying $ u_{k,j}>\tau(R_k).$ 
We claim that this is the only positive-value good received by $a$.
Indeed, since $u_k(x_i^*)$ has average $\tau(R_k)$ over $R_k$, there
is an agent $b\in R_k$ such that $ u_k(x^*_b)\leq\tau(R_k).$ 
Moreover, $b\neq a$, because $ u_k(x^*_a)\geq u_{k,j}>\tau(R_k).$

Suppose that $a$ receives another positive-value good, of type $j'$.
Let $x'$ be obtained from $x^*$ by transferring this good from $a$
to $b$; that is, $x'_{a,j'}=x_{a,j'}^*-1,$ and  $ x'_{b,j'}=x_{b,j'}^*+1,$ 
and all other entries remain unchanged. The allocation $x'$ is
feasible. Since the selected type-$j$ good remains with $a$, $u_k(x'_a)
 =
 u_k(x_a^*)-u_{k,j'}
 \geq u_{k,j}
 >
 \tau(R_k)
 \geq u_k(x_b^*),$ 
whereas $u_k(x'_b)=u_k(x_b^*)+u_{k,j'}$. When $p\leq0$, these
inequalities also show that both affected agents retain positive
utility, so $x'$ lies in the domain of $\phi_p$.

Lemma~\ref{lem:balancing-exchange} now gives $\phi_p\bigl(u_k(x'_a)\bigr)+\phi_p\bigl(u_k(x'_b)\bigr)
 >
\phi_p\bigl(u_k(x_a^*)\bigr)+\phi_p\bigl(u_k(x_b^*)\bigr).$ 
All other agents' bundles are unchanged, and hence
$\Phi_p(x')>\Phi_p(x^*)$, contradicting the optimality of $x^*$.
Therefore, $a$ receives exactly one positive-value good, namely the
selected type-$j$ good. Add $a$ to $H_k$ and remove it from $R_k$. 
The set $R_k$ contains at least two agents before this removal. Indeed, if
$R_k=\{a\}$, then $\tau(R_k)=u_k(x_a^*)$, and no good assigned to $a$
can have value greater than $u_k(x_a^*)$. Hence, if $r=|R_k|$ before
the removal and $\tau=\tau(R_k)$, then $r\geq2$. Since the removed agent
has utility $u_{k,j}>\tau$, the new residual average is $ \frac{r\tau-u_{k,j}}{r-1}<\tau.$ 

Continue removing agents in this way until no residual agent receives
a good whose value is greater than the current residual average. The
procedure terminates after at most $|A_k|-1$ removals. Let $H_k$ and
$R_k$ be the resulting sets, and put $ \tau_k
 :=
 \frac{1}{|R_k|}
 \sum_{i\in R_k}u_k(x_i^*).$ 
Each agent placed in $H_k$ receives exactly one positive-value good.
The value of that good exceeded the residual average at the time of
removal. Because the residual averages strictly decrease, its value
also exceeds the final average $\tau_k$. This proves property~1. Moreover, at termination, no residual agent receives a good whose value is
greater than $\tau_k$. This proves property~2.

Suppose that $\tau_k>0$. If some $a\in R_k$ had
$u_k(x^*_a)>2\tau_k$, then some $b\in R_k$ would satisfy
$u_k(x^*_b)\leq\tau_k$. Choose any positive-value good of type $j$
assigned to $a$. Property~2 gives $u_{k,j}\leq\tau_k$, and hence $ u_k(x^*_a)-u_{k,j}
 > \tau_k \geq u_k(x^*_b).$ Moving this good from $a$ to $b$ would strictly improve the transformed
objective by Lemma~\ref{lem:balancing-exchange}, contradicting
optimality. Therefore $u_k(x_i^*)\leq2\tau_k$ for every $i\in R_k$. Conversely, suppose that some $b\in R_k$ had
$u_k(x_b^*)<\tau_k/2$. Because $\tau_k$ is the residual average, there
would be an agent $a\in R_k$ with $u_k(x_a^*)>\tau_k$. Agent $a$ must
receive at least two positive-value goods; otherwise its unique
positive-value good would have value greater than $\tau_k$, contrary
to property~2. Choose a positive-value good of minimum value among those assigned to
$a$, and let $j$ be its type. Since $a$ receives at least two
positive-value goods, $ u_{k,j}\leq\frac{u_k(x^*_a)}{2}.$ 
It follows that $u_k(x^*_a)-u_{k,j}
 \geq\frac{u_k(x^*_a)}{2}
 >
 \frac{\tau_k}{2}
 >
 u_k(x^*_b).$ Moving this good from $a$ to $b$ would again contradict optimality by
Lemma~\ref{lem:balancing-exchange}. Thus
$u_k(x_i^*)\geq\tau_k/2$ for every $i\in R_k$, proving property~3.

Finally, if $\tau_k=0$, then $\sum_{i\in R_k}u_k(x^*_i)=0.$ 
Nonnegativity implies $u_k(x_i^*)=0$ for every $i\in R_k$, proving
property~4. When $p\leq0$, this case cannot occur because $x^*$ gives
every agent positive utility.

Applying this construction independently to every agent type proves
the lemma.
\end{proof}

\subsection{Guessing and rounding the residual thresholds}
Use the rational parameter $\delta$ fixed in
\eqref{eq:constant-agent-delta}. Fix also the optimal allocation $x^*$ and the sets
$H_k,R_k$ and thresholds $\tau_k$ given by
Lemma~\ref{lem:residual-threshold}. We refer to $x^*$ as the
\emph{canonical optimum}. This optimum and the preceding peeling
construction are used only in the analysis; the algorithm computes
neither of them.

For every agent type $k$, define
$
 U_k:=\sum_{j=1}^q m_j u_{k,j}.
$
Thus $U_k$ is an upper bound on the total utility that all available
goods could provide to type-$k$ agents. For every agent type $k$, define
$
 \mathcal J_k^+:=\{j\in[q]:u_{k,j}>0\}.
$
If $\mathcal J_k^+\neq\varnothing$, define
$
 u_k^{\min}:=\min_{j\in\mathcal J_k^+}u_{k,j}.
$
If $\mathcal J_k^+=\varnothing$, then $U_k=0$, every type-$k$ agent has utility
zero under every allocation, and we enumerate only the threshold
guess $0$.

Suppose first that $\tau_k>0$. By the definition of $\tau_k$, $|R_k|\tau_k
 =
 \sum_{i\in R_k}u_k(x_i^*).$ 
The sum on the right is positive and is composed of values of goods
assigned to agents in $R_k$. It is therefore at least
$u_k^{\min}$. Since $|R_k|\leq n_k$, we obtain $\tau_k\geq\frac{u_k^{\min}}{|R_k|}
 \geq\frac{u_k^{\min}}{n_k}.$ 
Moreover, $\tau_k
 \leq
 \sum_{i\in R_k}u_k(x_i^*)
 \leq
 U_k.$ 
Consequently, $\frac{u_k^{\min}}{n_k}\leq\tau_k\leq U_k.$ 

For a type $k$ with $U_k>0$, let $s_k$ be the smallest nonnegative
integer satisfying $ \frac{u_k^{\min}}{n_k}(1+\delta)^{s_k}\geq U_k,$ 
and define the set of threshold guesses
\begin{equation}\label{eq:Tk}
 \mathcal T_k
 :=
 \{0\}
 \cup
 \left\{
 \frac{u_k^{\min}}{n_k}(1+\delta)^s:
 s=0,1,\ldots,s_k
 \right\}.
\end{equation}
If $U_k=0$, define $\mathcal T_k:=\{0\}$.

If $\tau_k=0$, then $0\in\mathcal T_k$ is a correct guess. If
$\tau_k>0$, choose the smallest positive
$\widehat\tau_k\in\mathcal T_k$ satisfying
$\widehat\tau_k\geq\tau_k$. By minimality, $ \tau_k
 \leq
 \widehat\tau_k
 <
 (1+\delta)\tau_k.$ 
We call such a value $\widehat\tau_k$ a \emph{correct threshold
guess} for type $k$.

For fixed $\delta$, $ |\mathcal T_k|
 =
 O_\delta\left(
 1+\log\frac{n_kU_k}{u_k^{\min}}
 \right)$
 whenever $U_k>0$. The logarithm is bounded by a polynomial in the
binary encoding length of the instance. The guesses themselves also
have polynomial encoding length because $\delta$ is fixed and
rational. Since $K$ is fixed, all vectors $ (\widehat\tau_1,\ldots,\widehat\tau_K)
 \in
 \mathcal T_1\times\cdots\times\mathcal T_K$ 
can be enumerated in polynomial time, and at least one enumerated
vector is correct for every agent type.

\medskip
\noindent\textbf{Normalization and geometric rounding.}
Consider a type $k$ with $\widehat\tau_k>0$, and define $ \alpha_k:=\delta\widehat\tau_k.$ 
For every $j\in[q]$ with $u_{k,j}>0$, define the rounded normalized
value
\[
 \overline u_{k,j}
 :=
 (1+\delta)^{
 \left\lfloor
 \log_{1+\delta}\left(\frac{u_{k,j}}{\alpha_k}\right)
 \right\rfloor
 }.
\]
If $u_{k,j}=0$, set $\overline u_{k,j}:=0$. Thus, for every
$u_{k,j}>0$,
\begin{equation}\label{eq:geometric-rounding}
 \frac{u_{k,j}}{\alpha_k(1+\delta)}
 <
 \overline u_{k,j}
 \leq
 \frac{u_{k,j}}{\alpha_k}.
\end{equation}
The exponent can be computed in polynomial time by binary search,
using exact comparisons between rational powers.

Among the positive-value goods, call a good of type $j$

\begin{itemize}
\item \emph{small} for type $k$ if
      $0<\overline u_{k,j}<1$;

\item \emph{large} for type $k$ if
      $1\leq\overline u_{k,j}\leq1/\delta$;

\item \emph{huge} for type $k$ if
      $\overline u_{k,j}>1/\delta$.
\end{itemize}

Under a correct threshold guess, every huge good has original value
greater than $\tau_k$. Indeed, $ u_{k,j}
 \geq
 \alpha_k\overline u_{k,j}
 >
 \delta\widehat\tau_k\frac{1}{\delta}
 =
 \widehat\tau_k
 \geq
 \tau_k.$ 
It follows from Lemma~\ref{lem:residual-threshold} that any such good
assigned to a type-$k$ agent in $x^*$ is the unique positive-value
good assigned to an agent in $H_k$.

\medskip
\noindent\textbf{Regular configurations.}
Define
\[
 r_{\max}
 :=
 \left\lfloor
 \log_{1+\delta}\frac{1}{\delta}
 \right\rfloor,
 \qquad
 \mathcal R_\delta:=\{0,1,\ldots,r_{\max}\},
\]
and let $ B_\delta:=|\mathcal R_\delta|=r_{\max}+1.$ 
Every rounded large value is equal to $(1+\delta)^r$ for some
$r\in\mathcal R_\delta$.

For a type $k$ satisfying $\widehat\tau_k>0$, a regular configuration is a tuple
$
 C=
 \left(
 k(C),\,(a_r(C))_{r\in\mathcal R_\delta},\,\mu(C)
 \right),
$
where
\begin{itemize}
\item $k(C)\in[K]$ is an integer recording the agent type of the configuration;
\item $a_r(C)\geq0$, for $r\in\mathcal R_\delta$, is an integer representing the number of large goods of rounded normalized value $(1+\delta)^r$; and
\item $\mu(C)\geq0$ is an integer representing a rounded-down lower requirement on the total rounded normalized value of the small
      goods.
\end{itemize}

The estimated normalized utility of $C$ is $ w(C)
 :=
 \sum_{r\in\mathcal R_\delta}
 a_r(C)(1+\delta)^r+\mu(C).$ 
We retain only configurations satisfying
\begin{equation}\label{eq:retained}
 \frac{1}{5\delta}
 \leq
 w(C)
 \leq
 \frac{4}{\delta}.
\end{equation}
For every type $k$ with $\widehat\tau_k>0$, let $\mathcal C_k$ denote
the set of retained regular configurations $C$ satisfying $k(C)=k$.  For every type
$k$ with $\widehat\tau_k=0$, define
$
 \mathcal C_k:=\varnothing.
$
Finally, put $ \mathcal C
 :=
 \bigcup_{k:\widehat\tau_k>0}\mathcal C_k.$ 
Because the agent type is included as a component of every
configuration, the sets $\mathcal C_k$ are pairwise disjoint.

Because each large good contributes at least one to $w(C)$, every
retained configuration satisfies $ \sum_{r\in\mathcal R_\delta}a_r(C)
 \leq
 \frac{4}{\delta}$ and $\mu(C)\leq\frac{4}{\delta}.$ 
Hence, for each type $k$, the number of retained regular
configurations is at most $\left(
 \left\lfloor\frac{4}{\delta}\right\rfloor+1
 \right)^{B_\delta+1},$ 
which depends only on $\delta$. Across all agent types, the number of
regular configurations depends only on $K$ and $\delta$.

\medskip
\noindent\textbf{Representation of canonical residual agents.}
Assume that $\widehat\tau_k$ is a correct positive guess. Fix an agent
$i\in R_k$, and put $ V_i:=\frac{u_k(x_i^*)}{\alpha_k}.$ 
Lemma~\ref{lem:residual-threshold} and the correctness of the guess
give $\frac{1}{2\delta(1+\delta)}
 <
 V_i
 \leq
 \frac{2}{\delta}.$ 
Indeed, $u_k(x_i^*)\geq\frac{\tau_k}{2}$, $\alpha_k
 =
 \delta\widehat\tau_k
 <
 \delta(1+\delta)\tau_k,$ 
which gives the lower bound, while $u_k(x_i^*)\leq2\tau_k$,
 $\alpha_k\geq\delta\tau_k,$ gives the upper bound.

No huge good is assigned to $i$, because every huge good has original
value greater than $\tau_k$. Let $L_i$ and $S_i$ denote,
respectively, the total rounded normalized values of the large and
small goods assigned to $i$. Then $ \overline V_i:=L_i+S_i$ 
is the total rounded normalized utility of $i$. We get by~\eqref{eq:geometric-rounding} $\frac{V_i}{1+\delta}
 <
 \overline V_i
 \leq
 V_i.$

Define a regular configuration $C_i$ by retaining the number of large
goods in each class and setting $ \mu(C_i):=\lfloor S_i\rfloor.$ Its estimated normalized utility is 
$w(C_i)=L_i+\lfloor S_i\rfloor.$ 
Since $\lfloor S_i\rfloor>S_i-1$, we have $w(C_i)
 >
 \overline V_i-1
 >
 \frac{V_i}{1+\delta}-1.$
Consequently, $w(C_i)
 >
 \frac{1}{2\delta(1+\delta)^2}-1
 \geq
 \frac{1}{5\delta},$ where the last inequality uses $\delta\leq1/10$. Furthermore, $ w(C_i)
 \leq
 \overline V_i
 \leq
 V_i
 \leq
 \frac{2}{\delta}
 \leq
 \frac{4}{\delta}.$ 
Thus $C_i$ is retained by~\eqref{eq:retained}.

Because $w(C_i)$ is expressed in normalized units,
$\alpha_k w(C_i)$ is the estimated utility of $C_i$ on the original
utility scale. From
the preceding inequalities, $\alpha_kw(C_i)
 >
 \frac{u_k(x_i^*)}{1+\delta}-\alpha_k.$ Since $u_k(x_i^*)\geq\tau_k/2$ and
$\alpha_k<\delta(1+\delta)\tau_k$, we have $\alpha_k
 <
 2\delta(1+\delta)u_k(x_i^*).$ 
It follows that
\begin{align*}
 \alpha_kw(C_i)
 >
 \left(
 \frac{1}{1+\delta}-2\delta(1+\delta)
 \right)u_k(x_i^*)
 \geq
 (1-4\delta)u_k(x_i^*),
\end{align*}
where the final inequality holds for $\delta\leq1/10$.

\medskip
\noindent\textbf{Representation of canonical singleton agents.}
Suppose that $\widehat\tau_k$ is a correct positive guess. Consider an
agent $i\in H_k$, and suppose that its unique positive-value good is of
type $j$. If
$\overline u_{k,j}>1/\delta$, represent $i$ by the corresponding huge
singleton.

Suppose instead that $\overline u_{k,j}\leq1/\delta$. Since
$u_{k,j}>\tau_k$ and
$\alpha_k<\delta(1+\delta)\tau_k$, we have $ \overline u_{k,j}
 >
 \frac{u_{k,j}}{\alpha_k(1+\delta)}
 >
 \frac{1}{\delta(1+\delta)^2}.$ 
In particular, this good is large. Represent $i$ by the regular
configuration having one good in the corresponding large class,
no other large goods, and $\mu=0$. Its estimated normalized utility is
$w(C_i)=\overline u_{k,j}$. For $\delta\leq1/10$, $ \frac{1}{5\delta}
 <
 w(C_i)
 \leq
 \frac{1}{\delta}
 \leq
 \frac{4}{\delta},$ 
so this configuration is retained. Moreover, $\alpha_kw(C_i)
 >
 \frac{u_{k,j}}{1+\delta}
 \geq
 (1-4\delta)u_{k,j}.$

\medskip
\noindent\textbf{Zero-threshold types.}
Finally, suppose that the correct guess is
$\widehat\tau_k=\tau_k=0$. By
Lemma~\ref{lem:residual-threshold}, every agent in $R_k$ has utility
zero, while every agent in $H_k$ receives exactly one positive-value
good. This case can occur only when $0<p<1$, because for $p\leq0$ the
canonical optimum gives every agent positive utility. For a zero-threshold type, represent each agent in $R_k$ by a zero
configuration and represent each agent in $H_k$ by its positive
singleton with its exact, unnormalized value $u_{k,j}$. Goods having
zero value to type $k$ need not be represented and may temporarily
remain unused. We summarize these observations in the following lemma.

\begin{lemma}[Threshold guesses and bundle representation]
\label{lem:threshold-bundle-representation}
Fix $K$ and a rational $\delta\in(0,1/10]$ and let $\mathcal T_k$ be defined as in~\eqref{eq:Tk}. There exists a threshold vector $(\widehat\tau_1,\ldots,\widehat\tau_K)
 \in
 \mathcal T_1\times\cdots\times\mathcal T_K$ that satisfies, for every $k\in[K]$, $\widehat\tau_k=0$ 
if $\tau_k=0,$ 
and $\tau_k
 \leq
 \widehat\tau_k
 <
 (1+\delta)\tau_k$
if $\tau_k>0$. 
Moreover, for every type $k$ with
$\widehat\tau_k>0$, every bundle $x_i^*$ assigned to an agent
$i\in A_k$ in the canonical optimum can be represented in one of the
following ways:

\begin{enumerate}
\item If $i\in R_k$, then $x_i^*$ induces a retained regular
      configuration $C_i$ (satisfying~\eqref{eq:retained}).

\item If $i\in H_k$ and its unique positive-value good is huge, then
      $x_i^*$ is represented by the corresponding huge singleton.

\item If $i\in H_k$ and its unique positive-value good is not huge,
      then $x_i^*$ induces a retained regular configuration containing
      exactly that one large good.
\end{enumerate}

Let $\widetilde u_i$ denote the represented utility on the original
scale, namely 
\[
 \widetilde u_i
 :=
 \begin{cases}
 \alpha_k w(C_i),
 &\text{if $x_i^*$ is represented by a regular configuration},\\[1ex]
 \alpha_k\overline u_{k,j},
 &\text{if $x_i^*$ is represented by a huge type-$j$ singleton},
 \end{cases}
\]
Then
\[
 (1-4\delta)u_k(x_i^*)
 \leq
 \widetilde u_i
 \leq
 u_k(x_i^*).
\]

If $\widehat\tau_k=\tau_k=0$, then $0<p<1$, every agent in $R_k$
has utility zero, and every agent in $H_k$ receives exactly one
positive-value good. Such agents are represented, respectively, by
zero configurations and exact positive singletons.

Finally, the number of retained regular configurations over all agent
types is bounded by a function of $K$ and $\delta$ only.
\end{lemma}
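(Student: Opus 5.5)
This lemma packages the observations made just before its statement, so the plan is to assemble them in order rather than introduce new arguments.

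\textbf{Existence of a correct threshold vector.} Fix the canonical optimum $x^*$ and the sets $H_k,R_k$ and thresholds $\tau_k$ from Lemma~\ref{lem:residual-threshold}. For each type $k$ I would treat two cases. If $\tau_k=0$, then $0\in\mathcal T_k$ is the required guess. If $\tau_k>0$, the sum $|R_k|\tau_k=\sum_{i\in R_k}u_k(x_i^*)$ is positive and consists of values of positively valued goods. Hence it is at least $u_k^{\min}$, which gives $u_k^{\min}/n_k\leq\tau_k\leq U_k$. In particular $U_k>0$, and the definition of $s_k$ places some positive element of $\mathcal T_k$ at or above $\tau_k$. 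Taking the smallest such element, minimality and the geometric spacing give $\tau_k\leq\widehat\tau_k<(1+\delta)\tau_k$. I would take care that the case $s=0$ is covered: there the smallest grid point is already $\geq\tau_k$, and it is still below $(1+\delta)\tau_k$ because $u_k^{\min}/n_k\leq\tau_k$. Since the choices for different $k$ are independent, the product set contains a vector that is correct in every coordinate.

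\textbf{Representation and utility bounds.} I would first show that under a correct positive guess every huge good has value $>\widehat\tau_k\geq\tau_k$. Property~2 of Lemma~\ref{lem:residual-threshold} then keeps huge goods out of $R_k$.

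For $i\in R_k$, I would normalize $V_i=u_k(x_i^*)/\alpha_k$ and use property~3 to get $1/(2\delta(1+\delta))<V_i\leq2/\delta$. Rounding with \eqref{eq:geometric-rounding} and setting $\mu(C_i):=\lfloor S_i\rfloor$ yields $V_i/(1+\delta)-1<w(C_i)\leq V_i$, so $C_i$ satisfies \eqref{eq:retained} when $\delta\leq1/10$. For the lower bound I would use $\alpha_k<2\delta(1+\delta)u_k(x_i^*)$, which gives $\alpha_kw(C_i)\geq(1-4\delta)u_k(x_i^*)$. The upper bound $\alpha_kw(C_i)\leq u_k(x_i^*)$ follows because rounding and flooring only decrease values.

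For $i\in H_k$ there are two subcases. If the unique good is huge, the singleton value $\alpha_k\overline u_{k,j}$ lies in $(u_{k,j}/(1+\delta),u_{k,j}]$ by \eqref{eq:geometric-rounding}. Otherwise, $u_{k,j}>\tau_k$ forces $\overline u_{k,j}>1/(\delta(1+\delta)^2)>1/(5\delta)$, so the good is large and the one-good configuration is retained. The same ratio bound applies, and $1/(1+\delta)\geq1-4\delta$. Since $H_k$ agents hold no other positive-value goods, $u_k(x_i^*)=u_{k,j}$.

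\textbf{Zero-threshold types and counting.} For a zero threshold, property~4 gives zero utility on $R_k$, property~1 gives singletons on $H_k$, and positivity for $p\leq0$ rules this case out unless $0<p<1$. Finally, each retained configuration has $\sum_ra_r\leq4/\delta$ and $\mu\leq4/\delta$, and $B_\delta$ depends only on $\delta$. Multiplying over the $K$ types bounds the number of configurations by a function of $K$ and $\delta$ alone.

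\textbf{Main obstacle.} The step needing most care is the constant bookkeeping. One must verify $\frac{1}{2\delta(1+\delta)^2}-1\geq\frac1{5\delta}$ and $\frac1{1+\delta}-2\delta(1+\delta)\geq1-4\delta$ for all $\delta\leq1/10$. I would check both at $\delta=1/10$ and argue monotonicity in $\delta$. Everything else is routine.
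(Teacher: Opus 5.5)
Your proposal is correct and follows the paper's own argument essentially step by step: the bounds $u_k^{\min}/n_k\le\tau_k\le U_k$ that locate a grid point in $[\tau_k,(1+\delta)\tau_k)$, the observation that huge goods exceed $\widehat\tau_k$ and therefore avoid $R_k$, the normalization $V_i$ with $\mu(C_i)=\lfloor S_i\rfloor$ for residual agents, and the huge/large split for $H_k$ are exactly what the paper does. One small slip: configurations of different types should be counted by a sum (at most $K$ times the per-type bound $(\lfloor 4/\delta\rfloor+1)^{B_\delta+1}$), not a product, though this still gives a bound depending only on $K$ and $\delta$.
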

\subsection{The configuration formulation}
\label{subsec:configuration-formulation}

Fix an enumerated threshold vector
$(\widehat\tau_1,\ldots,\widehat\tau_K)$. Define
\[
 \mathcal K_+
 :=
 \{k\in[K]:\widehat\tau_k>0\}
 \qquad\text{and}\qquad
 \mathcal K_0
 :=
 \{k\in[K]:\widehat\tau_k=0\}.
\]
We call the types in $\mathcal K_+$ \emph{positive-threshold types}
and those in $\mathcal K_0$ \emph{zero-threshold types}. For a correct
threshold vector, these sets agree, respectively, with the types
satisfying $\tau_k>0$ and $\tau_k=0$.

When $p\leq0$, the positivity assumption implies that every true
threshold $\tau_k$ is positive. We may therefore discard every
enumerated threshold vector for which $\mathcal K_0\neq\varnothing$ in this case.

For every $k\in\mathcal K_+$, define
\[
 \mathcal S_k
 :=
 \{j\in[q]:0<\overline u_{k,j}<1\},
\]
\[
 \mathcal L_{k,r}
 :=
 \{j\in[q]:\overline u_{k,j}=(1+\delta)^r\},
 \qquad r\in\mathcal R_\delta,
\]
and
\[
 \mathcal J_k^{\mathrm{huge}}
 :=
 \left\{
 j\in[q]:
 \overline u_{k,j}>\frac1\delta
 \right\}.
\]
Thus $\mathcal S_k$ is the set of small good types,
$\mathcal L_{k,r}$ is the set of large good types in class $r$, and
$\mathcal J_k^{\mathrm{huge}}$ is the set of huge good types, all
relative to agent type $k$.

Recall that $\mathcal C_k$ is the set of retained regular
configurations of type $k$, and
$
 \mathcal C=\bigcup_{k=1}^K\mathcal C_k.
$
For $k\in\mathcal K_0$, we have $\mathcal C_k=\varnothing$.

\medskip
\noindent\textbf{Variables and represented values.}
For every $k\in[K]$ and $j\in[q]$, the variable $x_{k,j}$ is the number
of type-$j$ goods reserved for type-$k$ agents. For every
$C\in\mathcal C$, the variable $y(C)$ is the number of agents of type
$k(C)$ represented by $C$. For every $k\in\mathcal K_0$, the variable
$z_k$ is the number of type-$k$ agents represented by the zero
configuration.

Define the set of singleton pairs $\mathcal I_{\mathrm{sing}}
 :=
 \bigl\{
 (k,j):
 k\in\mathcal K_+,\ 
 j\in\mathcal J_k^{\mathrm{huge}}
 \bigr\}
 \cup
 \bigl\{
 (k,j):
 k\in\mathcal K_0,\ 
 j\in\mathcal J_k^+
 \bigr\}.$ 
For every $(k,j)\in\mathcal I_{\mathrm{sing}}$, define its represented
utility by
\[
 v_{k,j}
 :=
 \begin{cases}
 \alpha_k\overline u_{k,j},
 & k\in\mathcal K_+,\\[1ex]
 u_{k,j},
 & k\in\mathcal K_0.
 \end{cases}
\]
Thus, huge singletons use their rounded original-scale values, whereas
positive singletons of zero-threshold types use their exact values.

For every regular configuration $C\in\mathcal C$, define its
represented utility on the original scale by $ v(C):=\alpha_{k(C)}w(C).$ 
Define the exact transformed objective
\[
 \Psi_p(x,y)
 :=
 \sum_{C\in\mathcal C}
 \phi_p\bigl(v(C)\bigr)y(C)
 +
 \sum_{(k,j)\in\mathcal I_{\mathrm{sing}}}
 \phi_p(v_{k,j})x_{k,j}.
\]

\medskip
\noindent\textbf{Configuration ILP.}
The exact configuration formulation is the following integer linear
program.
\begin{subequations}
\label{eq:configuration-ip}
\begin{empheq}[
 left={\textnormal{(Conf-IP)}\quad\empheqlbrace}
]{align}
 &\underset{x,y,z}{\operatorname{maximize}}
 &&\Psi_p(x,y)
 &&\label{eq:configuration-objective}\\[-1mm]
 &\text{subject to:}\notag\\[-1mm]
 &&&\sum_{k=1}^Kx_{k,j}\leq m_j
 &&j\in[q],
 &&\label{eq:configuration-item-capacity}\\
 &&&\sum_{C\in\mathcal C_k}y(C)
   +\sum_{j\in\mathcal J_k^{\mathrm{huge}}}x_{k,j}=n_k
 &&k\in\mathcal K_+,
 &&\label{eq:configuration-agent-count-positive}\\
 &&&\sum_{j\in\mathcal L_{k,r}}x_{k,j}
   \geq\sum_{C\in\mathcal C_k}a_r(C)y(C)
 &&k\in\mathcal K_+, r\in\mathcal R_\delta,
 &&\label{eq:configuration-large-capacity}\\
 &&&\sum_{j\in\mathcal S_k}\overline u_{k,j}x_{k,j}
   \geq\sum_{C\in\mathcal C_k}\mu(C)y(C)
 &&k\in\mathcal K_+,
 &&\label{eq:configuration-small-capacity}\\
 &&&z_k+\sum_{j\in\mathcal J_k^+}x_{k,j}=n_k
 &&k\in\mathcal K_0,
 &&\label{eq:configuration-agent-count-zero}\\
 &&&x_{k,j}=0
 &&k\in[K],\ j\in[q]\text{ with }u_{k,j}=0,
 &&\label{eq:configuration-zero-value}\\
 &&&x_{k,j}\in\mathbb Z_{\geq0}
 &&k\in[K],\ j\in[q],
 &&\label{eq:configuration-x-domain}\\
 &&&y(C)\in\mathbb Z_{\geq0}
 &&C\in\mathcal C,
 &&\label{eq:configuration-y-domain}\\
 &&&z_k\in\mathbb Z_{\geq0}
 &&k\in\mathcal K_0.
 &&\label{eq:configuration-z-domain}
\end{empheq}
\end{subequations}
We denote this exact integer formulation by
$\operatorname{ConfIP}(\widehat{\boldsymbol\tau})$. It is used only for
analysis; the algorithm solves the rational formulation introduced
below.

Constraint~\eqref{eq:configuration-item-capacity} limits the reserved
supply. Constraint~\eqref{eq:configuration-agent-count-positive}
represents every positive-threshold agent either by a regular
configuration or by a huge singleton. Constraints
\eqref{eq:configuration-large-capacity} and
\eqref{eq:configuration-small-capacity} reserve the resources required
by the regular configurations. Constraint
\eqref{eq:configuration-agent-count-zero} represents every
zero-threshold agent either by a zero configuration or by an exact
positive singleton. Constraint~\eqref{eq:configuration-zero-value}
discards reservations that provide no utility. Some goods may remain
unreserved and will be assigned after reconstruction.

Let $\mathcal F(\widehat{\boldsymbol\tau})$ be the set of nonnegative
real vectors $(x,y,z)$ satisfying
\eqref{eq:configuration-item-capacity},
\eqref{eq:configuration-agent-count-positive},
\eqref{eq:configuration-large-capacity},
\eqref{eq:configuration-small-capacity},
\eqref{eq:configuration-agent-count-zero}, and
\eqref{eq:configuration-zero-value}. Let
$\mathcal F_{\mathrm{IP}}(\widehat{\boldsymbol\tau})$ be the subset
also satisfying \eqref{eq:configuration-x-domain}--%
\eqref{eq:configuration-z-domain}.

For $0<p<1$, a zero configuration contributes
$\phi_p(0)=0$ and is therefore omitted from
\eqref{eq:configuration-objective}. For $p\leq0$, threshold vectors
having a zero coordinate have already been discarded. Hence every
represented utility appearing in \eqref{eq:configuration-objective}
is positive, and $\phi_p$ is well-defined.

The normalization factors $\alpha_k$ cannot in general be omitted.
For $p\neq0$, the coefficient of a regular configuration contains the
factor $\alpha_{k(C)}^p$, which may differ between agent types. For
$p=0$, $ \log v(C)
 =
 \log\alpha_{k(C)}+\log w(C).$ 
By constraint
\eqref{eq:configuration-agent-count-positive}, exactly $n_k$
represented agents belong to every positive-threshold type $k$.
Thus the terms involving $\log\alpha_k$ sum to the constant
$n_k\log\alpha_k$ for that type. We nevertheless retain these terms
so that $v(C)$ always denotes utility on the original scale.

\begin{lemma}[Representation of the canonical optimum]
\label{lem:configuration-representation}
For a correct threshold vector, the canonical optimum $x^*$ induces
a feasible solution of $\operatorname{ConfIP}
(\widehat{\boldsymbol\tau})$. For every agent having
positive utility in $x^*$, its represented utility $\widetilde u_i$
satisfies
\begin{equation}
 (1-4\delta)u_k(x_i^*)
 \leq
 \widetilde u_i
 \leq
 u_k(x_i^*),
 \qquad i\in A_k.
 \label{eq:configuration-represented-utility}
\end{equation}
\end{lemma}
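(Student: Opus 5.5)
We construct the induced solution explicitly from $x^*$, the sets $H_k,R_k$ and the correct guess given by Lemma~\ref{lem:threshold-bundle-representation}, check each constraint of (Conf-IP), and then read the utility bounds off that lemma.

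\textbf{Defining the variables.} Fix a positive-threshold type $k\in\mathcal K_+$. For every regular configuration $C\in\mathcal C_k$, let $y(C)$ be the number of agents $i\in A_k$ whose induced representation is $C$. These agents are of two kinds: residual agents in $R_k$, and agents in $H_k$ whose unique positive good is large. For a huge type $j\in\mathcal J_k^{\mathrm{huge}}$, let the singleton part of $x_{k,j}$ count the agents of $H_k$ whose unique positive good has type $j$. Every agent of $A_k$ falls into exactly one of these cases by Lemma~\ref{lem:threshold-bundle-representation}, which gives \eqref{eq:configuration-agent-count-positive}.

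For the remaining good types, set
\[
 x_{k,j}:=\sum_{i\in A_k}x^*_{i,j}\qquad (u_{k,j}>0).
\]
In other words, we reserve for type $k$ all positive-value goods its agents actually receive, and we set $x_{k,j}=0$ whenever $u_{k,j}=0$. For huge $j$, this total coincides with the singleton count: a huge good has value greater than $\tau_k$, so by property~2 of Lemma~\ref{lem:residual-threshold} it never goes to an agent of $R_k$, and every agent of $H_k$ holds at most one positive good.

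For a zero-threshold type $k\in\mathcal K_0$, let $x_{k,j}$ count the agents of $H_k$ holding a type-$j$ good, and let $z_k:=|R_k|$. Agents in $R_k$ hold no positive-value good, since all their utilities are zero. Hence these counts equal the reserved totals, and \eqref{eq:configuration-agent-count-zero} holds.

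\textbf{Checking the constraints.} The item capacity \eqref{eq:configuration-item-capacity} holds because $x^*$ is an allocation and each good is counted for at most one type. For \eqref{eq:configuration-large-capacity}, the quantity $a_r(C_i)$ counts exactly the class-$r$ large goods held by agent $i$, and this holds both for residual agents and for large singletons. Summing over agents gives equality with the reserved class-$r$ goods. For \eqref{eq:configuration-small-capacity}, we have $\mu(C_i)=\lfloor S_i\rfloor\le S_i$ for residual agents and $\mu=0$ for singletons. Summing the $S_i$ yields at most $\sum_{j\in\mathcal S_k}\overline u_{k,j}x_{k,j}$, because residual agents hold no huge goods and every small good held by a type-$k$ agent is reserved. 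Integrality and nonnegativity are immediate.

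\textbf{Utility bounds.} The lower bound $\widetilde u_i\ge(1-4\delta)u_k(x_i^*)$ is exactly the one stated in Lemma~\ref{lem:threshold-bundle-representation} for types with $\widehat\tau_k>0$. For zero-threshold types, positive-utility agents lie in $H_k$ and are represented with exact values, so equality holds. For the upper bound, \eqref{eq:geometric-rounding} gives
\[
 \alpha_k\overline u_{k,j}\le u_{k,j}.
\]
It follows that
\[
 \alpha_k w(C_i)\le\alpha_k\overline V_i\le\alpha_k V_i=u_k(x_i^*),
\]
using $\lfloor S_i\rfloor\le S_i$ in the first step. The same argument covers singletons.

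\textbf{Main difficulty.} The only delicate point is making sure the reserved totals $x_{k,j}$ are consistent in the two roles they play. For huge types, $x_{k,j}$ counts singleton agents in \eqref{eq:configuration-agent-count-positive}. For small and large types, it counts supplied goods in \eqref{eq:configuration-large-capacity} and \eqref{eq:configuration-small-capacity}. Keeping these two roles consistent rests on properties~1 and~2 of Lemma~\ref{lem:residual-threshold}, together with the fact that huge goods exceed $\tau_k$.
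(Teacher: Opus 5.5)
Your proposal is correct and follows essentially the same route as the paper's proof. Both define $x_{k,j}$ as the aggregate of positive-value goods held by type-$k$ agents, set $y(C)$ to the number of agents inducing $C$ and $z_k:=|R_k|$, check each constraint using properties~1--2 of the residual-threshold lemma, and take the utility bounds from Lemma~\ref{lem:threshold-bundle-representation}. The only difference is that you rederive the upper bound $\widetilde u_i\le u_k(x_i^*)$ from the rounding inequality, whereas that lemma already states it.
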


\begin{proof}
Fix a correct threshold vector. For every $k,j$, set
\begin{equation}
 x_{k,j}:=
 \begin{cases}
  \sum_{i\in A_k}x_{i,j}^*,&u_{k,j}>0,\\
  0,&u_{k,j}=0.
 \end{cases}
 \label{eq:induced-x-definition}
\end{equation}
For $k\in\mathcal K_+$, let $I_k^{\mathrm{reg}}\subseteq A_k$ be the
type-$k$ agents represented by regular configurations in
Lemma~\ref{lem:threshold-bundle-representation}, and use the
configurations $C_i$ supplied there. Set $ y(C):=|\{i\in I_{k(C)}^{\mathrm{reg}}:C_i=C\}|.$ 
For $k\in\mathcal K_0$, set $ z_k:=|R_k|.$ 
%
All these variables are nonnegative integers. Ignoring zero-value
assignments can only decrease item use, so the item-capacity and
zero-value constraints hold. For a positive-threshold type, every agent
outside $I_k^{\mathrm{reg}}$ receives exactly one huge good, while no
regular agent receives a huge good. This proves the corresponding
agent-count equality. For a zero-threshold type, the positive-value
goods are exactly the singleton goods of $H_k$, and
$z_k=|R_k|=n_k-|H_k|$, proving its agent-count equality.

Every large or small good assigned to a positive-threshold type belongs
to a regular agent's bundle. Hence, for every $k\in\mathcal K_+$ and
$r\in\mathcal R_\delta$,
\[
 \sum_{j\in\mathcal L_{k,r}}x_{k,j}
 =\sum_{i\in I_k^{\mathrm{reg}}}a_r(C_i)
 =\sum_{C\in\mathcal C_k}a_r(C)y(C).
\]
Writing
$S_i:=\sum_{j\in\mathcal S_k}\overline u_{k,j}x_{i,j}^*$,
we also have
\[
 \sum_{C\in\mathcal C_k}\mu(C)y(C)
 =\sum_{i\in I_k^{\mathrm{reg}}}\lfloor S_i\rfloor
 \leq\sum_{j\in\mathcal S_k}\overline u_{k,j}x_{k,j}.
\]
These are the large- and small-capacity constraints, completing the
feasibility verification.

The represented-utility bound follows from
Lemma~\ref{lem:threshold-bundle-representation} for positive-threshold
types. For a positive singleton of a zero-threshold type,
$\widetilde u_i=u_k(x_i^*)$, so the same bound holds.
\end{proof}

\subsection{Solving and rounding the configuration formulation}
\label{sec:solving-configuration-formulation}

\medskip
\noindent\textbf{Directed rational objective coefficients.}
Apply Lemma~\ref{lem:directed-rational-approximation} with accuracy
$\delta$ to each positive rational represented value $v$. Write
$\widetilde c_p(v):=\widetilde\phi_{p,\delta}(v)$ for $0\leq p<1$ and
$\widetilde c_p(v):=-\widetilde\phi_{p,\delta}(v)$ for $p<0$. Thus
\begin{equation}
 \frac{v^p}{1+\delta}
 \leq
 \widetilde c_p(v)
 \leq
 v^p
 \qquad\text{if }0<p<1,
 \label{eq:approx1}
\end{equation}
and
\begin{equation}
 v^p
 \leq
 \widetilde c_p(v)
 \leq
 (1+\delta)v^p
 \qquad\text{if }p<0.
  \label{eq:approx2}
\end{equation}
For $p=0$, the same construction gives $ \log v-\delta
 \leq
 \widetilde c_0(v)
 \leq
 \log v.$ 
When $0<p<1$, we also put $\widetilde c_p(0)=0$.

The logarithmic bound follows from
$\log(1+\delta)\leq\delta$. The lemma computes all these coefficients
in polynomial time and with polynomial encoding length.

Recall that $v(C)=\alpha_{k(C)}w(C)$ is the represented value of a regular
configuration $C$, and that $v_{k,j}$ is the represented value of a
singleton pair $(k,j)\in\mathcal I_{\mathrm{sing}}$. Define the signed
rational coefficient
\[
 \widetilde\phi_p(v)
 :=
 \begin{cases}
  \widetilde c_p(v),&0\leq p<1,\\
  -\widetilde c_p(v),&p<0,
\end{cases}
 =\widetilde\phi_{p,\delta}(v),
\]
and the rational transformed objective
\[
 \widetilde\Psi_p(x,y)
 :=
 \sum_{C\in\mathcal C}
 \widetilde\phi_p\bigl(v(C)\bigr)y(C)
 +
 \sum_{(k,j)\in\mathcal I_{\mathrm{sing}}}
 \widetilde\phi_p(v_{k,j})x_{k,j}.
\]
Thus every approximated objective coefficient is a rational number of
polynomial encoding length. The sign in the definition of
$\widetilde\phi_p$ permits every formulation below to be written as a
maximization problem.

Fix an enumerated threshold vector $\widehat{\boldsymbol\tau}
 :=
 (\widehat\tau_1,\ldots,\widehat\tau_K).$ 
Define the rationalized configuration program
$\widetilde{\operatorname{ConfIP}}(\widehat{\boldsymbol\tau})$ by
maximizing $\widetilde\Psi_p(x,y)$ over
$(x,y,z)\in\mathcal F_{\mathrm{IP}}(\widehat{\boldsymbol\tau})$.
It has the same integral feasible solutions as
$\operatorname{ConfIP}(\widehat{\boldsymbol\tau})$ and differs only in
its objective coefficients.

\medskip
\noindent\textbf{Guessing the support of the small-good assignment.}
Let $D:=K(B_\delta+2)$, $F:=2D,$ and $d:=\left\lceil\frac{F}{\delta}\right\rceil.$ 
Here, $D\ge |\mathcal K_0|+(B_\delta+2)|\mathcal K_+|$ is an upper
bound on the total number of non-item constraints
\eqref{eq:configuration-agent-count-positive},
\eqref{eq:configuration-large-capacity},
\eqref{eq:configuration-small-capacity}, and
\eqref{eq:configuration-agent-count-zero}:
for each positive-threshold type, there is exactly one agent-count
constraint, $B_\delta$ large-class constraints, and one weighted
small-value constraint. For each zero-threshold type, there is exactly
one agent-count constraint and no large-class or weighted small-value
constraints.

For every $k\in\mathcal K_+$, enumerate every set
$J_k\subseteq\mathcal S_k$ satisfying $|J_k|\leq d$. If $|J_k|<d$,
interpret $J_k$ as the complete support of the type-$k$ small-good
assignment. The empty set is allowed. If $|J_k|=d$, interpret $J_k$ as
the $d$ highest-valued small good types in that support and define $\gamma_k:=\min_{j\in J_k}\overline u_{k,j}.$ 
In the second case, unguessed types of rounded value at most $\gamma_k$
may still be used. Define the set of forbidden small-good types by
\[
 \mathcal D_k(J_k)
 :=
 \begin{cases}
  \mathcal S_k\setminus J_k,
  &|J_k|<d,\\[1mm]
  \{j\in\mathcal S_k\setminus J_k:
    \overline u_{k,j}>\gamma_k\},
  &|J_k|=d.
 \end{cases}
\]

Fix a support tuple
$\mathbf J=(J_k)_{k\in\mathcal K_+}$ and put $\mathcal I_{\mathbf J}^x
 :=
 \{(k,j):k\in\mathcal K_+,\ j\in J_k\}.$ 
The associated restricted mixed-integer program is
\begin{subequations}
\label{eq:restricted-configuration-mip}
\begin{empheq}[
 left={\textnormal{($\widetilde{\mathrm{Conf\text{-}MIP}}$)}\quad
       \empheqlbrace}
]{align}
 &\underset{x,y,z}{\operatorname{maximize}}
 &&\widetilde\Psi_p(x,y)
 &&\label{eq:restricted-configuration-objective}\\[-1mm]
 &\text{subject to:}\notag\\[-1mm]
 &&&(x,y,z)\in\mathcal F(\widehat{\boldsymbol\tau})
 &&
 &&\label{eq:restricted-configuration-base}\\
 &&&x_{k,j}\geq1
 &&k\in\mathcal K_+,\ j\in J_k,
 &&\label{eq:restricted-configuration-protected}\\
 &&&x_{k,j}=0
 &&k\in\mathcal K_+,\ j\in\mathcal D_k(J_k),
 &&\label{eq:restricted-configuration-forbidden}\\
 &&&y(C)\in\mathbb Z_{\geq0}
 &&C\in\mathcal C,
 &&\label{eq:restricted-configuration-y-domain}\\
 &&&z_k\in\mathbb Z_{\geq0}
 &&k\in\mathcal K_0,
 &&\label{eq:restricted-configuration-z-domain}\\
 &&&x_{k,j}\in\mathbb Z_{\geq0}
 &&(k,j)\in\mathcal I_{\mathbf J}^x,
 &&\label{eq:restricted-configuration-integer-x}\\
 &&&x_{k,j}\in\mathbb R_{\geq0}
 &&(k,j)\in([K]\times[q])\setminus\mathcal I_{\mathbf J}^x.
 &&\label{eq:restricted-configuration-continuous-x}
\end{empheq}
\end{subequations}
We denote this program by
$\widetilde{\operatorname{ConfMIP}}
(\widehat{\boldsymbol\tau},\mathbf J)$. Variables fixed by
\eqref{eq:configuration-zero-value} or
\eqref{eq:restricted-configuration-forbidden} may be eliminated.

For each $k$, the number of possible sets $J_k$ is at most $\binom{q}{d}
 +  \sum_{h=0}^{d-1}\binom{q}{h}
 \leq  (q+1)^d.$ 
Consequently, the total number of restricted programs is at most $ (q+1)^{Kd}
 =
 q^{O_{K,\delta}(1)},$
 which is polynomial for fixed $K$ and $\delta$. Moreover, every restricted
program has at most $ |\mathcal C|+K+Kd
 =
 O_{K,\delta}(1)$
 integer variables. It can therefore be solved in polynomial time by
fixed-dimensional mixed-integer linear programming
\cite{len:j:integer-linear-programming}.

For a feasible restricted program, let $\boldsymbol\xi$ be the values of
all variables required to be integral in
\eqref{eq:restricted-configuration-y-domain}--%
\eqref{eq:restricted-configuration-integer-x} at an optimal solution.
After fixing these variables to $\boldsymbol\xi$, let
$\widetilde{\operatorname{ConfLP}}
(\widehat{\boldsymbol\tau},\mathbf J;\boldsymbol\xi)$ denote the
remaining linear program, with the constraints and rational objective
of~\eqref{eq:restricted-configuration-mip}. Choose an optimal basic
feasible solution of this program.

The following lemma establishes the required rounding property.

\begin{lemma}[Rounding the aggregate assignment]
\label{lem:assignment-rounding}
Fix an arbitrary enumerated threshold vector
$\widehat{\boldsymbol\tau}$.

If $\widetilde{\operatorname{ConfIP}}
(\widehat{\boldsymbol\tau})$ is feasible, then among the restricted
mixed-integer programs associated with the tuples
$\mathbf J=(J_k)_{k\in\mathcal K_+}$, at least one contains an optimal
solution of $\widetilde{\operatorname{ConfIP}}
(\widehat{\boldsymbol\tau})$.

For any feasible restricted mixed-integer program, the following
rounding assertion holds. Solve that program
optimally. Let $(\widehat x,\widehat y,\widehat z)$ be an optimal basic
feasible solution of the associated fixed-variable program
$\widetilde{\operatorname{ConfLP}}
(\widehat{\boldsymbol\tau},\mathbf J;\boldsymbol\xi)$.

Then one can compute in polynomial time a vector
$\overline x\in\mathbb Z_{\geq0}^{K\times q}$ such that
$(\overline x,\widehat y,\widehat z)$ has the following properties.

\begin{enumerate}
\item
The vector $(\overline x,\widehat y,\widehat z)$ satisfies
\eqref{eq:configuration-item-capacity},
\eqref{eq:configuration-agent-count-positive},
\eqref{eq:configuration-large-capacity},
\eqref{eq:configuration-agent-count-zero}, and
\eqref{eq:configuration-zero-value}.

\item
The singleton contribution to the rational objective satisfies
\[
 \sum_{(k,j)\in\mathcal I_{\mathrm{sing}}}
 \widetilde\phi_p(v_{k,j})\overline x_{k,j}
 \geq
 \sum_{(k,j)\in\mathcal I_{\mathrm{sing}}}
 \widetilde\phi_p(v_{k,j})\widehat x_{k,j}.
\]
Since the variables $\widehat y(C)$ and $\widehat z_k$ do not change
during the rounding, $\widetilde\Psi_p$ does not decrease.

\item
For every $k\in\mathcal K_+$, define $\widehat S_k
 :=
 \sum_{j\in\mathcal S_k}
 \overline u_{k,j}\widehat x_{k,j}.$ 
Then
\begin{equation}
\label{eq:small-rounding-guarantee}
 \sum_{j\in\mathcal S_k}
 \overline u_{k,j}\overline x_{k,j}
 \geq
 (1-\delta)\widehat S_k
 \geq
 (1-\delta)
 \sum_{C\in\mathcal C_k}\mu(C)\widehat y(C).
\end{equation}
\end{enumerate}
\end{lemma}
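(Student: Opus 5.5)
The plan has two parts: a support argument for the existence of a good restricted program, and a rounding argument based on total unimodularity and the basic structure of the fixed-variable LP.

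\textbf{Existence of a good restricted program.} Take an optimal solution $(x^\circ,y^\circ,z^\circ)$ of $\widetilde{\operatorname{ConfIP}}(\widehat{\boldsymbol\tau})$. For each $k\in\mathcal K_+$, let $P_k:=\{j\in\mathcal S_k:x^\circ_{k,j}\geq1\}$ be its small-good support. If $|P_k|<d$, put $J_k:=P_k$. Otherwise, let $J_k$ be $d$ elements of $P_k$ of highest rounded value, breaking ties by index; then every $j\in P_k\setminus J_k$ has $\overline u_{k,j}\leq\gamma_k$. Ties at $\gamma_k$ are not forbidden by $\mathcal D_k(J_k)$, so this is consistent. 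We check the constraints of $\widetilde{\operatorname{ConfMIP}}(\widehat{\boldsymbol\tau},\mathbf J)$:
\begin{itemize}
\item the protected constraints $x_{k,j}\geq1$ hold because $J_k\subseteq P_k$;
\item the forbidden constraints hold because $\mathcal D_k(J_k)$ is disjoint from $P_k$;
\item integrality holds because $x^\circ$ is integral.
\end{itemize}
The objective is the same in both programs, and the restricted program is a relaxation-subset of the integer program containing $x^\circ$. Hence $x^\circ$ is optimal for the restricted program in the sense required by the statement.

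\textbf{Rounding.} Fix $\boldsymbol\xi$, so that $y$, $z$, and $x_{k,j}$ for $(k,j)\in\mathcal I^x_{\mathbf J}$ are constants. The remaining free variables are the continuous $x_{k,j}$. The constraints involving them are of four kinds:
\begin{itemize}
\item item capacities;
\item agent counts, which involve only the huge or singleton $x$-variables;
\item large-class covering constraints;
\item one weighted small constraint per $k$.
\end{itemize}
I will bound the number of fractional coordinates in the basic solution $(\widehat x,\widehat y,\widehat z)$. Form the graph with type vertices $k$ and good vertices $j$, and edges for the fractional $\widehat x_{k,j}$. At a vertex of the polyhedron, the number of fractional variables is at most the number of tight non-integral-structure constraints. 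Excluding the item constraints, which form a transportation/assignment structure, there are at most $D$ non-item constraints. A standard counting argument in the style of Lenstra–Shmoys–Tardos (each connected component of the fractional support graph is a pseudoforest after removing the at most $D$ extra constraints) therefore shows that at most $F=2D$ small-good variables per solution are fractional outside a forest structure.

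Next, I will solve a reduced assignment LP. Keep the weighted small constraint and the objective as in the statement, and replace the large and agent-count requirements by the integral-right-hand-side constraints
\[
\sum_{j\in\mathcal L_{k,r}}x_{k,j}\geq\Bigl\lceil\sum_{j\in\mathcal L_{k,r}}\widehat x_{k,j}\Bigr\rceil\text{-type bounds},
\]
together with item capacities $\sum_k x_{k,j}\leq m_j$ and an objective-preservation constraint on singleton values. The system of item rows together with per-$(k,\text{class})$ rows is the incidence matrix of a bipartite-type laminar structure. Precisely: rows indexed by goods $j$ and by groups $(k,\text{class})$, with each variable $x_{k,j}$ appearing in exactly one good row and one group row. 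This matrix is totally unimodular. Dropping the weighted small rows then leaves an integral vertex that dominates $\widehat x$ on the linear singleton objective and on each group total. This gives properties 1 and 2. The left-hand sides of the large and count constraints are integers in $\widehat y,\widehat z$ because these are fixed integers, so the rounded group totals still satisfy them.

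\textbf{Main obstacle: property 3.} Dropping the small rows means small goods are handled only via their group totals. To control the loss, group the small goods of type $k$ into one group and round per good. The total small value lost is at most the number of fractional small variables times their maximal value. By the basic-solution count, at most $F$ small variables for type $k$ are fractional. Each of these has value at most $\gamma_k$ when $|J_k|=d$, since they are unguessed and not forbidden. The protected variables contribute at least $d\gamma_k\geq(F/\delta)\gamma_k$ to $\widehat S_k$, so the loss is at most $F\gamma_k\leq\delta\widehat S_k$. When $|J_k|<d$, all small variables are integral by fixing, so there is no loss. The delicate point I expect is making the TU rounding move only fractional coordinates, by rounding down each fractional small variable and rounding the rest via the TU system on fractional support, so that the $F$-bound applies.
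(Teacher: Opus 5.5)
\textbf{Verdict: there is a genuine gap, in the bound on fractional coordinates that property~3 rests on.}

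Your support-guessing part is the paper's argument and is fine. One remark there is wrong but harmless: $x^\circ$ need not be optimal for the restricted program, because that program relaxes integrality of every $x$-variable outside $\mathcal I_{\mathbf J}^x$. The lemma only asserts, and only needs, that the program contains $x^\circ$.

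\textbf{The gap.} For property~3 you need that at most $F=2D$ small coordinates of each type are fractional in $\widehat x$, a bound independent of $q$. An LST-style pseudoforest argument does not give this. It bounds the fractional edges of a component by the tight rows touching it, and these include one item row per good vertex, so the bound can be of order $q$. Your own conclusion (``at most $2D$ \dots\ fractional outside a forest structure'') leaves exactly such an unbounded forest of fractional small variables. Your claim that the number of fractional variables is at most the number of tight non-item constraints is also false, since the tight item rows contribute rank. So the step ``by the basic-solution count, at most $F$ small variables for type $k$ are fractional'' is not supported by what you proved.

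\textbf{The missing idea} is integrality of the residual item capacities. After subtracting the fixed integral values and the floors $\lfloor\widehat x_{k,j}\rfloor$, every item row has an integer right-hand side. Hence a tight item row meeting the fractional support contains at least two fractional variables. Every variable lies in exactly one item row, so the number $J$ of such rows satisfies $2J\le E$. The vertex rank condition on the $E$ fractional variables gives $E\le J+D$, because their nonnegativity and box bounds are slack. Together these yield $E\le 2D$ for \emph{all} fractional coordinates, with no forest remainder. This is the paper's Claim.

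\textbf{Secondary problems with the rounding LP as you first describe it.}
\begin{itemize}
\item Right-hand sides $\lceil\sum_{j\in\mathcal L_{k,r}}\widehat x_{k,j}\rceil$ can cut off $\widehat x$. You should use the integral requirements $\sum_C a_r(C)\widehat y(C)$ and the integral count residuals instead.
\item An objective-preservation row with rational coefficients destroys total unimodularity. Instead, maximize the singleton objective over the TU polytope, which has an integral optimum no worse than at $\widehat x$.
\item You say you keep the weighted small rows and then drop them. They must be dropped, since their coefficients $\overline u_{k,j}$ also break total unimodularity.
\item Over the whole polytope, an integral optimum may move arbitrarily many small coordinates.
\end{itemize}

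Your closing fix is sound: fix all integral coordinates, round fractional small coordinates down, and apply the bipartite-incidence TU system only to the remaining fractional coordinates. It is essentially the paper's Round-LP with residuals $f\in[0,1]$ on $\mathcal I_{\mathrm{frac}}$. Once $E\le 2D$ is established, your loss estimate $F\gamma_k\le (F/d)\widehat S_k\le\delta\widehat S_k$ goes through as you state.
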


\begin{proof}
Suppose first that
$\widetilde{\operatorname{ConfIP}}(\widehat{\boldsymbol\tau})$ is feasible,
and let $(x^\circ,y^\circ,z^\circ)$ be an optimal solution of this program.
We show that one of the restricted mixed-integer programs retains this
solution.
For every
$k\in\mathcal K_+$, define the support of its small-good assignment by $T_k^\circ
 :=
 \{j\in\mathcal S_k:x_{k,j}^\circ>0\}.$ If $|T_k^\circ|<d$, choose $J_k=T_k^\circ$. The restrictions associated
with $J_k$ require $x_{k,j}$ to be a positive integer for every
$j\in J_k$ and set $x_{k,j}=0$ for every
$j\in\mathcal S_k\setminus J_k$. These restrictions are satisfied by
$x^\circ$.

Suppose instead that $|T_k^\circ|\geq d$. Choose $J_k$ to be the set of
the $d$ highest-valued members of $T_k^\circ$, breaking ties
arbitrarily, and define $\gamma_k
 :=
 \min_{j\in J_k}\overline u_{k,j}.$
 Every good type in $T_k^\circ\setminus J_k$ has rounded value at most
$\gamma_k$. Therefore, $x^\circ$ satisfies $x_{k,j}=0 $ for every $ j\in\mathcal S_k\setminus J_k$
such that $\overline u_{k,j}>\gamma_k$.  
Moreover, $x_{k,j}^\circ\geq1$ for every $j\in J_k$.
 It follows that the restricted program associated with the tuple
$\mathbf J=(J_k)_{k\in\mathcal K_+}$ contains
$(x^\circ,y^\circ,z^\circ)$ as a feasible solution. Thus at least one
of the enumerated restricted programs contains an optimal solution of
$\widetilde{\operatorname{ConfIP}}(\widehat{\boldsymbol\tau})$.

We now prove the rounding assertion for any feasible restricted program.
Let $(\widehat x,\widehat y,\widehat z)$ be the solution specified in
the statement of the lemma. Thus, $\widehat y$, $\widehat z$, and the
variables $\widehat x_{k,j}$ declared integral by the restrictions
associated with $\mathbf J$ have been fixed at their optimal integral
values. The remaining coordinates of $\widehat x$ form an optimal basic
feasible solution of the resulting linear program.

Let $\mathcal V$ be the set of indices $(k,j)$ corresponding to these
remaining continuous variables. For every $(k,j)\in\mathcal V$,
introduce a residual variable $f_{k,j}$ by $ x_{k,j}
 =
 \lfloor\widehat x_{k,j}\rfloor+f_{k,j},$ $0\leq f_{k,j}\leq1.$ 
Equivalently, we intersect the feasible region with $\lfloor\widehat x_{k,j}\rfloor
 \leq
 x_{k,j}
 \leq
 \lfloor\widehat x_{k,j}\rfloor+1$
for every $(k,j)\in\mathcal V.$ 
These additional bounds contain $\widehat x$. Consequently,
$\widehat x$ remains optimal and remains a basic feasible solution of
the restricted linear program.
 At $\widehat x$, the value of the residual variable is $\widehat f_{k,j}
 :=
 \widehat x_{k,j}
 -
 \lfloor\widehat x_{k,j}\rfloor,$  $0\leq\widehat f_{k,j}<1$.
 Define $ \mathcal I_{\mathrm{frac}}
 :=
 \{(k,j)\in\mathcal V:0<\widehat f_{k,j}<1\},$ $E:=|\mathcal I_{\mathrm{frac}}|.$ We next bound the number of fractional components of $\widehat x$.
\begin{claim}
\label{clm:fractional-assignment-bound}
The basic feasible solution $\widehat x$ has at most $F=2D$
fractional components. Equivalently, $E\leq2D=F.$
\end{claim}

\begin{proof}
Fix every residual variable satisfying $\widehat f_{k,j}=0$ at zero and
consider the resulting system on the variables indexed by
$\mathcal I_{\mathrm{frac}}$. The vector
$(\widehat f_{k,j})_{(k,j)\in\mathcal I_{\mathrm{frac}}}$ remains a basic feasible
solution of this system. None of these variables is fixed by a bound
constraint because $0<\widehat f_{k,j}<1$ for every $(k,j)\in\mathcal I_{\mathrm{frac}}.$

Let $J$ be the number of tight item-capacity constraints containing at
least one variable indexed by $\mathcal I_{\mathrm{frac}}$. Besides these constraints,
there are at most $D$ non-item constraints. Since the active constraints
must have rank at least $E$ on the variables indexed by $\mathcal I_{\mathrm{frac}}$,
we obtain $ E\leq J+D.$

Every tight item-capacity constraint counted by $J$ contains at least
two fractional variables. Indeed, after the fixed integral values and
the integral floors have been subtracted, its residual right-hand side
is an integer. If the constraint contained exactly one fractional
variable, tightness would force that variable to have an integral value,
contradicting the definition of $\mathcal I_{\mathrm{frac}}$.

Every variable $f_{k,j}$ occurs in exactly one item-capacity constraint,
namely the constraint associated with good type $j$. Counting incidences
between the $J$ tight item-capacity constraints and the $E$ fractional
variables therefore gives $2J\leq E.$ 
Combining the preceding inequalities yields $ E
 \leq
 J+D
 \leq
 \frac E2+D.$ 
Hence $ E\leq2D=F,$ which proves the claim.
\end{proof}

We now define the reduced system used to round the components indexed by
$\mathcal I_{\mathrm{frac}}$. For every $(k,j)\in[K]\times[q]$, define its fixed
integral part by
\[
 x_{k,j}^{\mathrm{int}}
 :=
 \begin{cases}
 \widehat x_{k,j},
   &\text{if $x_{k,j}$ was fixed before the residual system was
     constructed},\\
 \lfloor\widehat x_{k,j}\rfloor,
   &\text{if $(k,j)\in\mathcal V$}.
 \end{cases}
\]
The first case includes variables declared integral by the restrictions
associated with $\mathbf J$ and variables fixed to zero. Thus
$x_{k,j}^{\mathrm{int}}$ is an integer for every $(k,j)$. Moreover,
$ \widehat x_{k,j}
 =
 x_{k,j}^{\mathrm{int}}+\widehat f_{k,j}$ 
for every $(k,j)\in\mathcal I_{\mathrm{frac}}$, 
whereas
$ \widehat x_{k,j}
 =
 x_{k,j}^{\mathrm{int}}
$ for every $(k,j)\notin\mathcal I_{\mathrm{frac}}$.
 For every good type $j\in[q]$, define the residual item capacity by $M_j^{\mathrm{res}}
 :=
 m_j-\sum_{k=1}^Kx_{k,j}^{\mathrm{int}}.$ 
For every $k\in\mathcal K_+$, define the residual agent-count requirement
by
\[
 N_k^{\mathrm{res}}
 :=
 n_k
 -
 \sum_{C\in\mathcal C_k}\widehat y(C)
 -
 \sum_{j\in\mathcal J_k^{\mathrm{huge}}}
 x_{k,j}^{\mathrm{int}}.
\]
For every $k\in\mathcal K_0$, define
\[
 N_k^{\mathrm{res}}
 :=
 n_k
 -
 \widehat z_k
 -
 \sum_{j\in\mathcal J_k^+}x_{k,j}^{\mathrm{int}}.
\]
Finally, for every $k\in\mathcal K_+$ and
$r\in\mathcal R_\delta$, define the residual large-class requirement by
\[
 L_{k,r}^{\mathrm{res}}
 :=
 \sum_{C\in\mathcal C_k}a_r(C)\widehat y(C)
 -
 \sum_{j\in\mathcal L_{k,r}}x_{k,j}^{\mathrm{int}}.
\]
All these residual quantities are integers.

Remove the weighted small-value constraints and fix every residual
variable outside $\mathcal I_{\mathrm{frac}}$ at zero. The remaining variables are
$f=(f_{k,j})_{(k,j)\in\mathcal I_{\mathrm{frac}}}$. We solve the following linear
program.
\begin{subequations}
\label{eq:reduced-assignment-system}
\begin{empheq}[
 left={\textnormal{(Round-LP)}\quad\empheqlbrace}
]{align}
 &\underset{f}{\operatorname{maximize}}
 &&\sum_{(k,j)\in\mathcal I_{\mathrm{frac}}\cap\mathcal I_{\mathrm{sing}}}
   \widetilde\phi_p(v_{k,j})f_{k,j}
 &&\label{eq:rounding-lp-objective}\\[-1mm]
 &\text{subject to:}\notag\\[-1mm]
 &&&\sum_{k:(k,j)\in\mathcal I_{\mathrm{frac}}}f_{k,j}
 \leq M_j^{\mathrm{res}}
 &&j\in[q]
 &&\label{eq:rounding-item-capacity}\\
 &&&\sum_{\substack{j\in\mathcal J_k^{\mathrm{huge}}\\
                     (k,j)\in\mathcal I_{\mathrm{frac}}}}f_{k,j}
 =N_k^{\mathrm{res}}
 &&k\in\mathcal K_+
 &&\label{eq:rounding-positive-count}\\
 &&&\sum_{\substack{j\in\mathcal J_k^+\\
                     (k,j)\in\mathcal I_{\mathrm{frac}}}}f_{k,j}
 =N_k^{\mathrm{res}}
 &&k\in\mathcal K_0
 &&\label{eq:rounding-zero-count}\\
 &&&\sum_{\substack{j\in\mathcal L_{k,r}\\
                     (k,j)\in\mathcal I_{\mathrm{frac}}}}f_{k,j}
 \geq L_{k,r}^{\mathrm{res}}
 &&k\in\mathcal K_+,\ r\in\mathcal R_\delta
 &&\label{eq:rounding-large-capacity}\\
 &&&0\leq f_{k,j}\leq1
 &&(k,j)\in\mathcal I_{\mathrm{frac}}
 &&\label{eq:rounding-bounds}
\end{empheq}
\end{subequations}
Let $\mathcal P_{\mathrm{red}}$ denote the feasible region of
\textnormal{(Round-LP)}.

The vector
$
 \widehat f
 :=
 (\widehat f_{k,j})_{(k,j)\in\mathcal I_{\mathrm{frac}}}
$
belongs to $\mathcal P_{\mathrm{red}}$. Indeed, it is obtained from
$\widehat x$ by subtracting the fixed integral parts. Under the residual
definitions above, constraints
\eqref{eq:configuration-item-capacity},
\eqref{eq:configuration-agent-count-positive},
\eqref{eq:configuration-agent-count-zero}, and
\eqref{eq:configuration-large-capacity} become, respectively,
\eqref{eq:rounding-item-capacity}--%
\eqref{eq:rounding-large-capacity}. The only structural constraints
omitted from \textnormal{(Round-LP)} are the weighted small-value
constraints~\eqref{eq:configuration-small-capacity}.

We next verify that the polytope $\mathcal P_{\mathrm{red}}$ is integral. Every
variable $f_{k,j}$ occurs in its item-capacity constraint and in at most
one additional constraint. More precisely, a huge-singleton variable
for a positive-threshold type occurs in one agent-count constraint, a
positive-singleton variable for a zero-threshold type occurs in one
agent-count constraint, and a large-good variable occurs in one
large-class constraint. A small-good variable occurs in no additional
constraint because the weighted small-value constraints have been
removed.

After multiplying every agent-count and large-class row by $-1$, each
column contains at most one entry equal to $1$ and at most one entry
equal to $-1$. The resulting matrix is a submatrix of a node-arc
incidence matrix and is therefore totally unimodular. Since the
right-hand sides and variable bounds in
\eqref{eq:rounding-item-capacity}--\eqref{eq:rounding-bounds} are integral,
$\mathcal P_{\mathrm{red}}$ is an integral polytope.

Since $\widehat f\in\mathcal P_{\mathrm{red}}$ and
$\mathcal P_{\mathrm{red}}$ is a nonempty bounded integral polytope,
\textnormal{(Round-LP)} has an optimal solution $ g\in\mathcal P_{\mathrm{red}}\cap\{0,1\}^{\mathcal I_{\mathrm{frac}}}.$ 
By optimality, the value of
\eqref{eq:rounding-lp-objective} at $g$ is at least its value at
$\widehat f$.

Define
\[
 \overline x_{k,j}
 :=
 \begin{cases}
 x_{k,j}^{\mathrm{int}}+g_{k,j},
   &(k,j)\in\mathcal I_{\mathrm{frac}},\\
 x_{k,j}^{\mathrm{int}},
   &(k,j)\notin\mathcal I_{\mathrm{frac}}.
 \end{cases}
\]
By the residual definitions,
\eqref{eq:rounding-item-capacity}--%
\eqref{eq:rounding-large-capacity} imply that $\overline x$ satisfies
\eqref{eq:configuration-item-capacity},
\eqref{eq:configuration-agent-count-positive},
\eqref{eq:configuration-agent-count-zero}, and
\eqref{eq:configuration-large-capacity}. Variables constrained by
\eqref{eq:configuration-zero-value} were fixed at zero and remain zero.
We next prove item~2. For every
$(k,j)\notin\mathcal I_{\mathrm{frac}}$, the construction gives
$\overline x_{k,j}=\widehat x_{k,j}$. For every
$(k,j)\in\mathcal I_{\mathrm{frac}}$, we have $ \overline x_{k,j}-\widehat x_{k,j}
 =
 g_{k,j}-\widehat f_{k,j}.$ 
Consequently,
\begin{align*}
 &\sum_{(k,j)\in\mathcal I_{\mathrm{sing}}}
 \widetilde\phi_p(v_{k,j})
 \bigl(\overline x_{k,j}-\widehat x_{k,j}\bigr)=
 \sum_{(k,j)\in\mathcal I_{\mathrm{frac}}\cap\mathcal I_{\mathrm{sing}}}
 \widetilde\phi_p(v_{k,j})
 \bigl(g_{k,j}-\widehat f_{k,j}\bigr).
\end{align*}
Both $g$ and $\widehat f$ belong to $\mathcal P_{\mathrm{red}}$.
By the choice of $g$, the right-hand side is nonnegative. This proves
item~2. It remains to prove the small-good guarantee. Fix
$k\in\mathcal K_+$ and recall that $\widehat S_k
 =
 \sum_{j\in\mathcal S_k}
 \overline u_{k,j}\widehat x_{k,j}.$

Suppose first that $|J_k|=d$. Every variable $x_{k,j}$ with
$j\in J_k$ was declared integral and was therefore fixed before
$\widehat x$ was chosen. Moreover, every unguessed small-good variable
with rounded value greater than $ \gamma_k
 =
 \min_{j\in J_k}\overline u_{k,j}$ 
was fixed to zero. Consequently, every small-good coordinate of type
$k$ that belongs to $\mathcal I_{\mathrm{frac}}$ has rounded value at most $\gamma_k$.

By Claim~\ref{clm:fractional-assignment-bound}, at most $F$ coordinates
can change. Each such coordinate can decrease by less than one.
Therefore, $\sum_{j\in\mathcal S_k}
 \overline u_{k,j}\overline x_{k,j}
 \geq
 \widehat S_k-F\gamma_k.$ 
On the other hand, the $d$ protected variables satisfy
$\widehat x_{k,j}\geq1$ for every $j\in J_k$. Hence $\widehat S_k
 \geq
 \sum_{j\in J_k}
 \overline u_{k,j}\widehat x_{k,j}
 \geq
 d\gamma_k.$ 
It follows that $ F\gamma_k
 \leq
 \frac Fd\widehat S_k
 \leq
 \delta\widehat S_k,$ 
where the last inequality follows from
$d=\lceil F/\delta\rceil$. Therefore, $\sum_{j\in\mathcal S_k}
 \overline u_{k,j}\overline x_{k,j}
 \geq
 (1-\delta)\widehat S_k.$

Suppose now that $|J_k|<d$. In this case, $J_k$ is interpreted as the
complete support of the small-good assignment. Every variable
$x_{k,j}$ with $j\in J_k$ was declared integral, and every variable
$x_{k,j}$ with $j\in\mathcal S_k\setminus J_k$ was fixed to zero.
Therefore, no small-good coordinate of type $k$ belongs to
$\mathcal I_{\mathrm{frac}}$, and $\sum_{j\in\mathcal S_k}
 \overline u_{k,j}\overline x_{k,j}
 =
 \widehat S_k.$

Finally, constraint~\eqref{eq:configuration-small-capacity} implies $ \widehat S_k
 \geq
 \sum_{C\in\mathcal C_k}
 \mu(C)\widehat y(C).$
 Combining this inequality with the preceding two cases gives
\[
 \sum_{j\in\mathcal S_k}
 \overline u_{k,j}\overline x_{k,j}
 \geq
 (1-\delta)\widehat S_k
 \geq
 (1-\delta)
 \sum_{C\in\mathcal C_k}\mu(C)\widehat y(C).
\]
This proves item~3 and completes the proof.
\end{proof}
\medskip
\noindent\textbf{Loss due to the rational objective coefficients.}
We next quantify the loss caused by replacing the exact objective
coefficients by their directed rational approximations. Fix a threshold
vector $\widehat{\boldsymbol\tau}$, and let $Z=(x,y,z)$ be any feasible
solution of $\operatorname{ConfIP}(\widehat{\boldsymbol\tau})$. Let
$Z^\circ=(x^\circ,y^\circ,z^\circ)$ be an optimal solution of
$\widetilde{\operatorname{ConfIP}}(\widehat{\boldsymbol\tau})$.
By the first conclusion of Lemma~\ref{lem:assignment-rounding}, some
enumerated tuple $\mathbf J^\circ$ gives a restricted program containing
$Z^\circ$. Let
$\widehat Z=(\widehat x,\widehat y,\widehat z)$ be the optimal basic
solution selected from that restricted program, and let
$\overline Z=(\overline x,\widehat y,\widehat z)$ be the rounded integral
vector produced by the lemma. The vector $\overline Z$ satisfies the
constraints
\eqref{eq:configuration-item-capacity},
\eqref{eq:configuration-agent-count-positive},
\eqref{eq:configuration-large-capacity},
\eqref{eq:configuration-agent-count-zero}, and
\eqref{eq:configuration-zero-value}, as well as the weakened small-value
bound~\eqref{eq:small-rounding-guarantee}; it need not satisfy
\eqref{eq:configuration-small-capacity}.

For $p\neq0$, define the exact power objective of a configuration
solution by
\[
 Q_p^{\mathrm{conf}}(Z)
 :=
 \sum_{C\in\mathcal C}v(C)^py(C)
 +
 \sum_{(k,j)\in\mathcal I_{\mathrm{sing}}}
 v_{k,j}^p x_{k,j},
\]
where zero configurations, which can occur only when $0<p<1$, contribute
zero. For $p=0$, define
\[
 \Lambda^{\mathrm{conf}}(Z)
 :=
 \sum_{C\in\mathcal C}\log(v(C))y(C)
 +
 \sum_{(k,j)\in\mathcal I_{\mathrm{sing}}}
 \log(v_{k,j})x_{k,j}.
\]
These are weighted objectives of the form considered in
Lemma~\ref{lem:weighted-rational-comparison}, with weights $y(C)$ and
$x_{k,j}$. For $p\leq0$, all represented utilities are positive.

Write $\widetilde\Psi_p(Z')$ for
$\widetilde\Psi_p(x',y')$ when $Z'=(x',y',z')$. Since the selected
restricted program contains $Z^\circ$, its optimal solution
$\widehat Z$ has rational objective at least that of $Z^\circ$.
Lemma~\ref{lem:assignment-rounding} ensures that rounding
$\widehat Z$ to $\overline Z$ does not decrease this objective.
Together with the optimality of $Z^\circ$ in the rationalized
configuration program, this gives $\widetilde\Psi_p(\overline Z)
 \geq\widetilde\Psi_p(\widehat Z)
 \geq\widetilde\Psi_p(Z^\circ)
 \geq\widetilde\Psi_p(Z).$ 
Apply Lemma~\ref{lem:weighted-rational-comparison} with accuracy
$\delta$ to the weighted lists represented by $Z$ and $\overline Z$.
For $0<p<1$, it gives
\begin{equation}
\label{eq:positive-configuration-comparison}
 Q_p^{\mathrm{conf}}(\overline Z)
 \geq
 \frac{1}{1+\delta}Q_p^{\mathrm{conf}}(Z).
\end{equation}
For $p<0$, it gives
\begin{equation}
\label{eq:negative-configuration-comparison}
 Q_p^{\mathrm{conf}}(\overline Z)
 \leq
 (1+\delta)Q_p^{\mathrm{conf}}(Z).
\end{equation}
Finally, for $p=0$, constraints
\eqref{eq:configuration-agent-count-positive} and
\eqref{eq:configuration-agent-count-zero} imply that the total weight
of represented utilities is $n$, including for fractional configuration
vectors. The logarithmic conclusion of the lemma, together with
$\log(1+\delta)\leq\delta$, therefore gives
\begin{equation}
\label{eq:nash-configuration-comparison}
 \Lambda^{\mathrm{conf}}(\overline Z)
 \geq
 \Lambda^{\mathrm{conf}}(Z)-n\delta.
\end{equation}

\subsection{Reconstructing a compressed allocation}
\label{sec:compressed-reconstruction}

For every feasible restricted mixed-integer program associated with an
enumerated threshold vector, Lemma~\ref{lem:assignment-rounding}
produces an
integral vector $ \overline Z
 :=
 (\overline x,\widehat y,\widehat z).$ 
The vector $\overline x$ specifies how many goods of each type are
reserved for each agent type, but it does not specify their distribution
among the individual agents. We now construct such a distribution in
compressed form.

\begin{lemma}[Compressed reconstruction]
\label{lem:compressed-reconstruction}
Let $\overline Z=(\overline x,\widehat y,\widehat z)$ be an integral
vector satisfying the conclusions of
Lemma~\ref{lem:assignment-rounding}, including
\eqref{eq:small-rounding-guarantee}.
Then $\overline Z$ can be converted in polynomial time into a feasible
compressed allocation $x^{\mathrm{rec}}$ with the following properties.

\begin{enumerate}
\item
Every agent of type $k\in\mathcal K_+$ represented by a regular
configuration $C\in\mathcal C_k$ receives utility at least $(1-6\delta)v(C)
 =
 (1-6\delta)\alpha_k w(C).$
\item
Every agent represented by a singleton pair
$(k,j)\in\mathcal I_{\mathrm{sing}}$ receives one type-$j$ good and,
before any unused goods are assigned, satisfies
\begin{equation}
\label{eq:reconstructed-singleton-utility}
 u_k(x_i^{\mathrm{rec}})\geq v_{k,j}.
\end{equation}

\item
Every agent represented by a zero configuration initially receives no
good having positive value to its type and therefore initially has
utility zero. Assigning unused goods at the end may only increase this
utility.

\item
For every good type $j\in[q]$, at most $m_j$ copies are assigned, and
the number of distinct bundle patterns in $x^{\mathrm{rec}}$ is
$O_{K,\delta}(q)$.
\end{enumerate}
\end{lemma}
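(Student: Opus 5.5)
We handle each agent type $k$ separately, since the reserved vectors $\overline x_{k,\cdot}$ for different types use disjoint copies of the goods. Property~4's capacity bound then follows from \eqref{eq:configuration-item-capacity}. \emph{Singletons} are easy. For $(k,j)\in\mathcal I_{\mathrm{sing}}$, we create one group of $\overline x_{k,j}$ type-$k$ agents, each receiving one type-$j$ good, so each gets utility $u_{k,j}\geq v_{k,j}$ by \eqref{eq:geometric-rounding}. The agent-count equalities \eqref{eq:configuration-agent-count-positive} and \eqref{eq:configuration-agent-count-zero} guarantee that after these groups exactly $\sum_C\widehat y(C)$, respectively $\widehat z_k$, type-$k$ agents remain. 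The $\widehat z_k$ zero-configuration agents receive the empty bundle, which gives item~3.

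\emph{Large goods.} For $k\in\mathcal K_+$ and each class $r$, constraint \eqref{eq:configuration-large-capacity} gives at least $\sum_C a_r(C)\widehat y(C)$ reserved large goods in class $r$. We order the good types of $\mathcal L_{k,r}$ and the configurations of $\mathcal C_k$, and we fill the demand $a_r(C)$ of each of the $\widehat y(C)$ agents greedily in a fixed order. This is a ``staircase'' filling of consecutive slots by consecutive supplies. The result is that the agents of one configuration split into $O(|\mathcal L_{k,r}|+|\mathcal C_k|)$ intervals, on each of which the bundle is constant. The break points are computed by prefix sums with binary arithmetic, so nothing is expanded. Every agent then has rounded large value exactly $\sum_r a_r(C)(1+\delta)^r$.

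\emph{Small goods: the main obstacle.} Write $\mu$ for demands. The total small rounded value available is at least $(1-\delta)\sum_C\mu(C)\widehat y(C)$ by \eqref{eq:small-rounding-guarantee}, and each small good has rounded value less than $1$. I would first process all agents with a single interleaved greedy. Sort agents by configuration and give them small goods in sequence, closing an agent as soon as its accumulated rounded small value reaches $(1-\delta)\mu(C)-1$; each agent overshoots by less than one item, which is the only overshoot. Because whole items are assigned, an agent may receive part of a type's supply and its successor the rest. To keep the representation compressed, I would instead handle each small type $j$ in bulk. One computes how many consecutive agents of a configuration can be served entirely from type $j$ with $\lceil((1-\delta)\mu(C)-1)/\overline u_{k,j}\rceil$ copies each; this is one division. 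Only one boundary agent is split per type, so there are $O(|\mathcal S_k|+|\mathcal C_k|)$ groups. The delicate part is showing that the supply does not run out. Per agent, the waste is below one unit of rounded value; subtracting $1$ from each target absorbs this, because the targets sum to at most $(1-\delta)\sum\mu\widehat y-\sum\widehat y$. That bound is at most the available amount, since the waste per agent is less than the removed unit. For agents with $\mu(C)$ small the target may become negative, and then nothing is needed.

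\emph{Utility bound.} Each regular agent now gets rounded normalized value at least
\[
 w(C)-\delta\mu(C)-1\geq w(C)-\delta w(C)-5\delta w(C)=(1-6\delta)w(C),
\]
using $1\leq5\delta w(C)$ from \eqref{eq:retained}. Multiplying by $\alpha_k$ and using $u_{k,j}\geq\alpha_k\overline u_{k,j}$ gives item~1. Summing the interval counts gives $O_{K,\delta}(q)$ bundle patterns in total. I expect the main work to be the careful accounting that the bulk small-good greedy fits inside the reserved supply while remaining compressed.
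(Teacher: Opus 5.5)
Your proposal is correct and follows essentially the same route as the paper. It reads off the singleton and zero groups from the agent-count equalities, uses a staircase batching of the large goods, and uses a batched fixed-order greedy for the small goods whose feasibility rests on the invariant that the remaining small supply dominates the remaining targets $(1-\delta)\mu(C)$. It then closes with the same estimate $L(C)+(1-\delta)\mu(C)-1\geq(1-6\delta)w(C)$.

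The only real difference is in the small-good step. You round each agent up past the lowered target $(1-\delta)\mu(C)-1$, whereas the paper fills up to but not beyond $(1-\delta)\mu(C)$ with boundary bundles. Your version avoids the paper's empty-bundle corner case and even gives $O(|\mathcal S_k|+|\mathcal C_k|)$ groups per type.

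Two small points should be stated explicitly. First, all batches in a configuration block must refer to one common agent order, so that interval counts add rather than multiply. Second, leftover goods are finally given to a single agent, so that the output is a full allocation.
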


\begin{proof}
We first partition the agents of each type according to the integral
variables $\widehat y$, $\widehat z$, and $\overline x$.

Fix $k\in\mathcal K_+$. For every $C\in\mathcal C_k$, choose a set
$A_{k,C}\subseteq A_k$ of cardinality $|A_{k,C}|=\widehat y(C).$ For every $j\in\mathcal J_k^{\mathrm{huge}}$, choose a set
$A_{k,j}^{\mathrm{sing}}\subseteq A_k$ of cardinality $ |A_{k,j}^{\mathrm{sing}}|
 =
 \overline x_{k,j}.$ 
Constraint~\eqref{eq:configuration-agent-count-positive} implies that
these sets can be chosen to form a partition of $A_k$.
 Now fix $k\in\mathcal K_0$. Choose a set
$A_{k,0}\subseteq A_k$ of cardinality $|A_{k,0}|=\widehat z_k,$
and, for every $j\in\mathcal J_k^+$, choose a set
$A_{k,j}^{\mathrm{sing}}\subseteq A_k$ of cardinality $ |A_{k,j}^{\mathrm{sing}}|
 =
 \overline x_{k,j}.$
Constraint~\eqref{eq:configuration-agent-count-zero} implies that these
sets can be chosen to form a partition of $A_k$.
 For every $k\in[K]$ and $j\in[q]$, reserve $\overline x_{k,j}$ copies
of type $j$ for agents of type $k$. By
\eqref{eq:configuration-item-capacity}, these reservations can be chosen
disjointly.
All remaining copies are temporarily left unassigned.
 For every singleton pair
$(k,j)\in\mathcal I_{\mathrm{sing}}$, assign one reserved type-$j$ good
to each agent in $A_{k,j}^{\mathrm{sing}}$. It remains to distribute the
goods reserved for agents represented by regular configurations.

\medskip
\noindent\textbf{Assignment of large goods.}
Fix $k\in\mathcal K_+$ and $r\in\mathcal R_\delta$. Define $D_{k,r}
 :=
 \sum_{C\in\mathcal C_k}a_r(C)\widehat y(C).$
By constraint~\eqref{eq:configuration-large-capacity},
\begin{equation}
\label{eq:large-supply-reconstruction}
 D_{k,r}
 \leq
 \sum_{j\in\mathcal L_{k,r}}\overline x_{k,j}.
\end{equation}
Thus the goods reserved for type $k$ contain sufficiently many goods
from the large-value class $\mathcal L_{k,r}$.

Fix orders $\mathcal C_k=(C_1,\ldots,C_s)$ and $\mathcal L_{k,r}=(j_1,\ldots,j_t).$
 For every configuration $C\in\mathcal C_k$ and every
$h\in\{1,\ldots,a_r(C)\}$, define a slot label $(C,r,h)$. Every agent
in $A_{k,C}$ must receive one class-$r$ good in each such slot. Algorithm~\ref{alg:large-good-assignment}, given below, fills these slots in batches.

\begin{algorithm}[t]
\caption{Batched assignment of class-$r$ large goods to type-$k$ agents}
\label{alg:large-good-assignment}
\begin{algorithmic}[1]
\Require Ordered configurations
         $(C_1,\ldots,C_s)$; multiplicities $\widehat y(C_\nu)$;
         demands $a_r(C_\nu)$; ordered good types
         $(j_1,\ldots,j_t)$; reserved supplies
         $\overline x_{k,j_\ell}$
\Ensure A list $\mathcal B_{k,r}$ of batched assignments filling every
        class-$r$ slot
\State $\mathcal B_{k,r}\gets\emptyset$
\For{$\ell=1,\ldots,t$}
    \State $b_\ell\gets\overline x_{k,j_\ell}$
        \Comment{Remaining reserved supply of type $j_\ell$}
\EndFor
\State $\ell\gets1$
\For{$\nu=1,\ldots,s$}
    \State $N\gets\widehat y(C_\nu)$
    \For{$h=1,\ldots,a_r(C_\nu)$}
        \State $a\gets0$
            \Comment{Number of occurrences of slot $(C_\nu,r,h)$ filled}
        \While{$a<N$}
            \While{$b_\ell=0$}
                \State $\ell\gets\ell+1$
            \EndWhile
            \State $\lambda\gets\min\{N-a,b_\ell\}$
            \State Append
            $(C_\nu,r,h,a+1,\lambda,j_\ell)$
            to $\mathcal B_{k,r}$
            \Comment{Assign type $j_\ell$ to positions
                     $a+1,\ldots,a+\lambda$}
            \State $a\gets a+\lambda$
            \State $b_\ell\gets b_\ell-\lambda$
        \EndWhile
    \EndFor
\EndFor
\State \Return $\mathcal B_{k,r}$
\end{algorithmic}
\end{algorithm}

A tuple
$
 (C,r,h,a+1,\lambda,j)
$
in $\mathcal B_{k,r}$ means that the agents in positions
$a+1,\ldots,a+\lambda$ of the block $A_{k,C}$ receive a type-$j$ good
in their slot $(C,r,h)$. Thus a batch is stored using only a constant
number of binary-encoded integers and indices; the agents in the batch
are not listed individually.

Algorithm~\ref{alg:large-good-assignment} is well-defined. At every
iteration, the total number of unfilled class-$r$ slots is at most the
total remaining reserved supply. This holds initially by
\eqref{eq:large-supply-reconstruction}, and each assignment decreases
both quantities by $\lambda$. Hence,
whenever an unfilled slot remains, the inner loop finds an index
$\ell\leq t$ with $b_\ell>0$.

The algorithm fills every occurrence of every slot label. Therefore,
each agent represented by $C$ receives exactly $a_r(C)$ goods from
$\mathcal L_{k,r}$. Moreover, it uses at most
$\overline x_{k,j}$ copies of each good type $j$.

Each execution of the innermost loop either exhausts one good type or
completes one slot label. Consequently, for fixed $k$ and $r$, the
number of generated batches is at most $ |\mathcal L_{k,r}|
 +
 \sum_{C\in\mathcal C_k}a_r(C).$ 
Summing over all $k\in\mathcal K_+$ and
$r\in\mathcal R_\delta$ gives
\begin{align*}
 &\sum_{k\in\mathcal K_+}
 \sum_{r\in\mathcal R_\delta}
 \left(
 |\mathcal L_{k,r}|
 +
 \sum_{C\in\mathcal C_k}a_r(C)
 \right)\leq
 Kq+
 \sum_{C\in\mathcal C}
 \sum_{r\in\mathcal R_\delta}a_r(C)
 =
 O_{K,\delta}(q).
\end{align*}
The last equality follows because the large-value classes are disjoint
for each agent type, $|\mathcal C|=O_{K,\delta}(1)$, and every retained
configuration contains at most $4/\delta$ large goods. Hence the
large-good assignment is computed in polynomial time and has a
polynomial-size compressed representation.

\medskip
\noindent\textbf{Assignment of small goods.}
Fix $k\in\mathcal K_+$. For every regular configuration
$C\in\mathcal C_k$, define its small-good target by $ t(C):=(1-\delta)\mu(C).$ 
The weakened small-good constraint gives
\begin{equation}
\label{eq:small-good-supply-reconstruction}
 \sum_{j\in\mathcal S_k}
 \overline u_{k,j}\overline x_{k,j}
 \geq
 \sum_{C\in\mathcal C_k}
 t(C)\widehat y(C).
\end{equation}

Fix arbitrary orders of the configurations in $\mathcal C_k$ and of
the small good types in $\mathcal S_k$. Within each block $A_{k,C}$,
fix a conceptual ordering of its agents; assignment batches always
refer to consecutive agents in this order. For every
$j\in\mathcal S_k$, let $b_j$ denote the residual number of reserved
type-$j$ goods. Initially, $ b_j:=\overline x_{k,j}.$ 
These residual supplies are maintained throughout the processing of
all configuration blocks of type $k$.

For a positive target $t$, the \emph{boundary type} is the first
positive-supply type at which the cumulative residual rounded value
reaches $t$. The associated \emph{boundary bundle} contains all
residual copies of the preceding types and as many copies of the
boundary type as possible without exceeding $t$. The invariant proved
below shows that this type exists whenever the algorithm requires it.

The assignment is performed in compressed form by
Algorithm~\ref{alg:small-good-assignment}.
\begin{algorithm}[t]
\caption{Batched assignment of small goods to type-$k$ agents}
\label{alg:small-good-assignment}
\begin{algorithmic}[1]
\Require Configuration blocks $A_{k,C}$ and reserved supplies
$\overline x_{k,j}$ for $j\in\mathcal S_k$
\Ensure A compressed partial assignment of the reserved small goods to
all agents represented by regular configurations
\State Initialize $b_j\gets\overline x_{k,j}$ for every
$j\in\mathcal S_k$
\For{each $C\in\mathcal C_k$ in the fixed order}
    \State Set $A'\gets A_{k,C}$ and
    $t\gets(1-\delta)\mu(C)$
    \If{$t=0$}
        \State Assign an empty small-good bundle to every agent in
        $A'$ and set $A'\gets\emptyset$
    \EndIf
    \While{$A'\neq\emptyset$}
        \State List the small good types having positive residual
        supply as $j_1,\ldots,j_\ell$, in the fixed order
        \Comment{The invariant below implies $\ell\geq1$}
        \If{$\overline u_{k,j_1}>t$}
            \State Assign an empty small-good bundle to every agent in
            $A'$ and set $A'\gets\emptyset$
        \Else
            \State Let $j_r$ be the first type for which the total
            residual value of $j_1,\ldots,j_r$ is at least $t$
            \State Form a boundary bundle from all residual copies of
            $j_1,\ldots,j_{r-1}$ and as many copies of $j_r$ as fit
            without exceeding $t$
            \State Assign the boundary bundle to one agent in $A'$ and
            update the residual supplies and $A'$
            \State Using the remaining copies of $j_r$, assign the
            largest feasible pure type-$j_r$ bundle to a maximum-size
            batch of agents in $A'$
            \State Update the residual supply of $j_r$ and $A'$
        \EndIf
    \EndWhile
\EndFor
\State \Return the recorded assignment batches
\end{algorithmic}
\end{algorithm}

We describe the two assignment steps more precisely. Suppose that
$j_r$ is the boundary type, and consider the residual supplies
immediately before the boundary bundle is formed. Put
\[
 R
 :=
 t-
 \sum_{h=1}^{r-1}
 \overline u_{k,j_h}b_{j_h}
 \qquad\text{and}\qquad
 h_r
 :=
 \left\lfloor
 \frac{R}{\overline u_{k,j_r}}
 \right\rfloor.
\]
The boundary bundle is $P
 :=
 \sum_{h=1}^{r-1}b_{j_h}\mathbf e_{j_h}
 +h_r\mathbf e_{j_r},$ 
where $\mathbf e_j$ denotes a bundle containing one type-$j$ good.
Assign $P$ to one agent in $A'$ and remove that agent from $A'$.
After this assignment, the residual supplies of
$j_1,\ldots,j_{r-1}$ are zero, and $b_{j_r}$ is decreased by $h_r$.

Next, define $q_r
 :=
 \left\lfloor
 \frac{t}{\overline u_{k,j_r}}
 \right\rfloor.$
 If $q_r>0$, assign $q_r$ copies of type $j_r$ to each of $\lambda
 :=
 \min\left\{
 |A'|,
 \left\lfloor
 \frac{b_{j_r}}{q_r}
 \right\rfloor
 \right\}$ 
agents in $A'$. If $\lambda>0$, record these identical assignments as
one batch. Remove their recipients from $A'$ and decrease $b_{j_r}$ by
$\lambda q_r$. If $q_r=0$, no copy of type $j_r$ fits into an empty
bundle. In the next iteration, the remaining agents in the current
configuration block therefore receive empty small-good bundles (that is, they get assigned zero copies of every type
$j\in\mathcal S_k$, but they may still receive the large goods
specified by their regular configurations).

\begin{claim}\label{cl:target}
For every agent $i$ represented by a regular configuration $C_i$, the
rounded normalized value $s_i$ of the small-good bundle assigned to $i$
satisfies $s_i\leq t(C_i)$.
\end{claim}
\begin{proof}
Indeed, if $i$ receives a boundary bundle, then $s_i
 =
 \sum_{h=1}^{r-1}
 \overline u_{k,j_h}b_{j_h}
 +
 h_r\overline u_{k,j_r}.$
 Since $ h_r
 =
 \left\lfloor
 \frac{
 t(C_i)-
 \sum_{h=1}^{r-1}
 \overline u_{k,j_h}b_{j_h}
 }{
 \overline u_{k,j_r}
 }
 \right\rfloor,$
 we have $h_r\overline u_{k,j_r}
 \leq
 t(C_i)-
 \sum_{h=1}^{r-1}
 \overline u_{k,j_h}b_{j_h}.$
 Consequently, $s_i\leq t(C_i)$.

If $i$ belongs to a single-type batch, then it receives $q_r$ copies
of type $j_r$, where $q_r
 =
 \left\lfloor
 \frac{t(C_i)}{\overline u_{k,j_r}}
 \right\rfloor.$
 Thus $ s_i
 =
 q_r\overline u_{k,j_r}
 \leq t(C_i).$
 Finally, an agent receiving an empty small-good bundle has $s_i=0$,
which is at most $t(C_i)$ because $t(C_i)\geq0$. Therefore, $s_i\leq t(C_i)$
for every assigned agent $i$.~\end{proof}

We next verify that the boundary type required by the algorithm always
exists. Consider any stage of the algorithm for the fixed agent type
$k$. Let $\mathcal U_k$ be the set of type-$k$ agents whose small-good
bundles have not yet been assigned. For every $i\in\mathcal U_k$, let
$C_i$ denote the unique configuration $C\in\mathcal C_k$ such that
$i\in A_{k,C}$. Define $ S_{\mathrm{rem}}
 :=
 \sum_{j\in\mathcal S_k}
 \overline u_{k,j}b_j$
 and $T_{\mathrm{rem}}
 :=
 \sum_{i\in\mathcal U_k}t(C_i).$ 
Thus $S_{\mathrm{rem}}$ is the total rounded normalized value of the
reserved small goods that have not yet been assigned, whereas
$T_{\mathrm{rem}}$ is the sum of the small-good targets of all agents
whose small-good bundles have not yet been assigned.

We maintain the invariant $ S_{\mathrm{rem}}\geq T_{\mathrm{rem}}.$ 
Initially, $b_j=\overline x_{k,j}$ for every
$j\in\mathcal S_k$, and $\mathcal U_k$ contains all agents represented
by regular configurations of type $k$. Hence
\begin{align*}
 S_{\mathrm{rem}}
 &=
 \sum_{j\in\mathcal S_k}
 \overline u_{k,j}\overline x_{k,j}\geq
 \sum_{C\in\mathcal C_k}
 t(C)\widehat y(C)=
 \sum_{i\in\mathcal U_k}t(C_i)
 =
 T_{\mathrm{rem}},
\end{align*}
where the inequality follows from
\eqref{eq:small-good-supply-reconstruction}.

Suppose that the algorithm assigns small-good bundles to a set
$B\subseteq\mathcal U_k$ of agents. For every $i\in B$, let $s_i$ be
the rounded normalized value of the bundle assigned to $i$. By
Claim~\ref{cl:target}, $s_i\leq t(C_i)$ for every $i\in B$.
After this assignment, the updated quantities satisfy $S_{\mathrm{rem}}'
 =
 S_{\mathrm{rem}}-\sum_{i\in B}s_i$ and $ T_{\mathrm{rem}}'
 =
 T_{\mathrm{rem}}-\sum_{i\in B}t(C_i).$ 
Consequently, $S_{\mathrm{rem}}'-T_{\mathrm{rem}}'
 =  S_{\mathrm{rem}}-T_{\mathrm{rem}}  +  \sum_{i\in B}\bigl(t(C_i)-s_i\bigr)\geq0.$
 Thus the invariant is preserved after every individual assignment and
every batched assignment.

Suppose that the algorithm is processing an agent with positive target
$t$. Since this agent belongs to $\mathcal U_k$, the invariant gives $ S_{\mathrm{rem}}
 \geq
 T_{\mathrm{rem}}
 \geq t.$ 
Hence the residual small goods have total rounded normalized value at
least $t>0$. In particular, at least one residual small-good type exists.
Therefore, unless its first type already has value greater than $t$, the
smallest prefix whose total value reaches $t$ exists, and the boundary
type $j_r$ is well defined.

It remains to prove the lower bound on the value assigned to each agent.
Consider first an agent receiving a boundary bundle. By the minimality
of $r$, $ 0<R\leq \overline u_{k,j_r}b_{j_r}.$ Moreover, $h_r\overline u_{k,j_r}
 \leq R$ and $ h_r\overline u_{k,j_r}
 >
 R-\overline u_{k,j_r}.$ 
Therefore, the rounded normalized value $s_i$ of the boundary bundle
satisfies $t-\overline u_{k,j_r}
 <
 s_i
 \leq t.$ 
Since $j_r\in\mathcal S_k$, we have
$\overline u_{k,j_r}<1$, and hence $ t-1<s_i\leq t.$

An agent in the subsequent single-type batch receives $q_r$ copies of
type $j_r$. From the definition of $q_r$, $q_r\overline u_{k,j_r}
 \leq t
 <
 (q_r+1)\overline u_{k,j_r}.$ Consequently, $t-\overline u_{k,j_r}
 <
 q_r\overline u_{k,j_r}
 \leq t,$ 
so the value assigned to this agent is also greater than $t-1$.

Finally, suppose that an agent receives an empty small-good bundle.
Then either $t=0$ or the first residual small good has value greater
than $t$. In the latter case, $ t<\overline u_{k,j_1}<1,$ 
and therefore $0>t-1$. Thus every agent $i$ represented by a regular
configuration $C$ receives a small-good bundle of rounded normalized
value $s_i$ satisfying $t(C)-1<s_i\leq t(C)$.
In particular,
\begin{equation}\label{eq:individual-small-guarantee}
 s_i\geq(1-\delta)\mu(C)-1.
\end{equation}

Finally, we bound the number of batches created by the small-good
assignment. Consider one configuration block $A_{k,C}$ and an
iteration whose boundary type is $j_r$. The boundary bundle exhausts
the residual supplies of every type
$j_1,\ldots,j_{r-1}$. It creates one assignment batch, consisting of a
single agent.

After updating the residual supplies, suppose that $q_r>0$. The
algorithm assigns $q_r$ copies of type $j_r$ to $\lambda
 =
 \min\left\{
 |A'|,
 \left\lfloor
 \frac{b_{j_r}}{q_r}
 \right\rfloor
 \right\}$
 agents and records these identical bundles as one additional batch. If
agents remain after this assignment, then $\lambda
 =
 \left\lfloor
 \frac{b_{j_r}}{q_r}
 \right\rfloor.$
 Consequently, the updated residual supply $b_{j_r}'$ satisfies $ 0\leq b_{j_r}'<q_r.$ 
If $b_{j_r}'=0$, type $j_r$ is already exhausted. Otherwise, $ b_{j_r}'\overline u_{k,j_r}
 <
 q_r\overline u_{k,j_r}
 \leq t(C).$ 
Hence, in the next iteration, the residual copies of $j_r$ precede the
new boundary type and are all included in the next boundary bundle.
They are therefore exhausted in that iteration.

It follows that, within a fixed configuration block, every iteration
either exhausts at least one small good type, terminates the block, or
is followed by an iteration that exhausts its boundary type. Moreover,
each iteration creates at most one boundary-bundle batch and one
single-type batch, and at most one batch of empty small-good bundles is
created for the entire block. Therefore, the number of batches created
for $A_{k,C}$ is $O(|\mathcal S_k|+1)$. Summing over all regular
configuration blocks of type $k$ gives $ O\bigl(
 |\mathcal C_k|(|\mathcal S_k|+1)
 \bigr)
 =
 O_{K,\delta}(q),$
 because $|\mathcal C_k|=O_{K,\delta}(1)$ and
$|\mathcal S_k|\leq q$. The number of distinct small-good bundle
patterns is at most the number of recorded batches and is therefore
also $O_{K,\delta}(q)$.

The algorithm manipulates only residual supplies and batch
multiplicities. These quantities are computed by additions,
subtractions, comparisons, and integer divisions applied to
binary-encoded integers and rationals of polynomial encoding length.
In particular, the algorithm does not enumerate individual goods or
agents. It therefore constructs the compressed small-good assignment
in time polynomial in the binary encoding length of the instance.

The preceding procedures define the reconstructed allocation as follows.
Within every block $A_{k,C}$, both procedures use the same fixed ordering
of the agents. Each recorded batch is an interval in this ordering.
Take the common refinement of the interval partitions induced by all
large-good and small-good batches. Its number of intervals is at most
one plus the total number of their endpoints and is therefore
$O_{K,\delta}(q)$, rather than the product of the numbers of batches.
On every interval of the refinement, all agents receive the same
combined bundle.

Together with the singleton and zero-configuration assignments, these
combined batches specify an individual allocation, which we denote by
$x^{\mathrm{rec}}$. Conceptually,
$x^{\mathrm{rec}}_{i,j}$ is the number of type-$j$ goods assigned to
agent $i$. The algorithm does not enumerate all these coordinates;
instead, it stores the bundle on each nonempty interval together with
the interval's binary-encoded size. Identical bundles belonging to the
same agent type can be merged by adding their multiplicities.

\medskip
\noindent\textbf{Utility guarantee.}
Consider an agent $i\in A_{k,C}$ represented by a regular configuration
$C\in\mathcal C_k$. Define $ L(C)
 :=
 \sum_{r\in\mathcal R_\delta}
 a_r(C)(1+\delta)^r.$ 
Thus $w(C)=L(C)+\mu(C).$ 
The large goods assigned to $i$ have total rounded normalized value
exactly $L(C)$. By \eqref{eq:individual-small-guarantee}, the total
rounded normalized value of the goods assigned to $i$ is at least
 $L(C)+(1-\delta)\mu(C)-1
 \geq  (1-\delta)w(C)-1.$
Every retained regular configuration satisfies
$w(C)\geq1/(5\delta)$. Therefore, $1\leq5\delta w(C)$, and $ (1-\delta)w(C)-1
 \geq
 (1-6\delta)w(C).$
 By~\eqref{eq:geometric-rounding}, $u_{k,j}
 \geq
 \alpha_k\overline u_{k,j}$
 for every good type $j$ having positive value to type $k$. Consequently,
\begin{equation}
\label{eq:reconstructed-regular-utility}
 u_k(x_i^{\mathrm{rec}})
 \geq
 (1-6\delta)\alpha_k w(C)
 =
 (1-6\delta)v(C).
\end{equation}

Now suppose that $i\in A_{k,j}^{\mathrm{sing}}$. If
$k\in\mathcal K_+$, then $(k,j)$ is a huge singleton pair, and $ u_k(x_i^{\mathrm{rec}})
 =
 u_{k,j}
 \geq
 \alpha_k\overline u_{k,j}
 =
 v_{k,j}.$
 If $k\in\mathcal K_0$, the singleton is represented exactly, and
 $ u_k(x_i^{\mathrm{rec}})
 =
 u_{k,j}
 =
 v_{k,j}.$ 
Every agent in $A_{k,0}$ initially receives no good having positive
value to type $k$ and therefore initially has utility zero.

\medskip
\noindent\textbf{Compressed representation and completion.}
The preceding endpoint bound gives $O_{K,\delta}(q)$ regular bundle
patterns. The singleton assignments contribute at most $Kq$ additional
patterns, and the zero configurations contribute at most $K$ patterns.
Every bundle coordinate and pattern multiplicity is obtained using
arithmetic on binary-encoded integers and rationals. The partial
allocation therefore has a polynomial-size compressed representation
and can be constructed in polynomial time.

Any goods that were not used, including unreserved goods and unused
reserved goods, may now be assigned to one fixed agent. This changes at
most one existing bundle pattern and introduces at most one new pattern.
Because all utilities are nonnegative, this completion cannot decrease
any agent's utility and preserves all the lower bounds established
above. Since the reservations obey
\eqref{eq:configuration-item-capacity} and no reserved copy is used more
than once, the resulting allocation is feasible. Omit records of
multiplicity zero and merge identical bundles within each agent type.
This gives the required compressed allocation and proves the lemma.
\end{proof}

\medskip
\noindent\textbf{Selection of the final candidate.}
For every enumerated threshold vector not discarded above and every
feasible associated restricted mixed-integer program, apply
Lemma~\ref{lem:assignment-rounding} followed by
Lemma~\ref{lem:compressed-reconstruction}. Let $\mathfrak X$ be the
resulting collection of feasible reconstructed allocations. For the
correct threshold vector, the first conclusion of
Lemma~\ref{lem:assignment-rounding} supplies an enumerated tuple
$\mathbf J^\circ$ whose restricted program yields a candidate. Hence
$\mathfrak X\neq\emptyset$.

Every $x\in\mathfrak X$ is represented by a polynomial-size collection
$\mathcal P(x)$ of records $(k,P,\lambda)$, indicating that $\lambda$
agents of type $k$ receive the bundle pattern
$P=(P_1,\ldots,P_q)$. Its utility to an agent of type $k$ is $u_k(P)$.
The objectives defined in Section~\ref{sec:notation} can be evaluated
directly from this compressed representation.  For $p\neq0$,
\[
 Q_p(x)
 =
 \sum_{(k,P,\lambda)\in\mathcal P(x)}
 \lambda u_k(P)^p,
\]
and, for $p=0$,
\[
 \Lambda(x)
 =
 \sum_{(k,P,\lambda)\in\mathcal P(x)}
 \lambda\log u_k(P).
\]
The rational coefficients used in the configuration formulations
approximate the represented values $v(C)$ and $v_{k,j}$. After
reconstruction, however, an agent's actual utility $u_k(P)$ need not
equal its represented value. We therefore compute fresh rational
approximations from the actual utilities of the reconstructed bundle
patterns. Because $\mathcal P(x)$ has polynomial size, this computation
does not require enumerating the agents.

For each $x\in\mathfrak X$, apply
Lemma~\ref{lem:weighted-rational-comparison} with accuracy $\delta$ to
the utility list with entries $(\lambda,u_k(P))$ indexed by
$\mathcal P(x)$. The weights sum to $n$. Denote the resulting rational
score by $\widetilde Q_p^{\mathrm{rec}}(x)$ when $0<p<1$, its negative
by $\widetilde Q_p^{\mathrm{rec}}(x)$ when $p<0$, and the score by
$\widetilde\Lambda^{\mathrm{rec}}(x)$ when $p=0$. The lemma gives,
using $\log(1+\delta)\leq\delta$ in the logarithmic case,
\begin{align*}
 \frac{Q_p(x)}{1+\delta}
 &\leq \widetilde Q_p^{\mathrm{rec}}(x)\leq Q_p(x)
 &&\text{if }0<p<1,\\
 Q_p(x)
 &\leq \widetilde Q_p^{\mathrm{rec}}(x)
 \leq(1+\delta)Q_p(x)
 &&\text{if }p<0,\\
 \Lambda(x)-n\delta
 &\leq \widetilde\Lambda^{\mathrm{rec}}(x)\leq\Lambda(x)
 &&\text{if }p=0.
\end{align*}
For $0<p<1$, a zero-utility pattern contributes zero exactly. When
$p\leq0$, all enumerated vectors with a zero threshold were discarded;
the reconstruction therefore gives every agent positive utility.
Choose $x^{\mathrm{out}}$ to maximize
$\widetilde Q_p^{\mathrm{rec}}$ if $0<p<1$, minimize it if $p<0$, and
maximize $\widetilde\Lambda^{\mathrm{rec}}$ if $p=0$.

The polynomial-time guarantee in
Lemma~\ref{lem:weighted-rational-comparison} applies directly to these
compressed lists. The selection rule maximizes the signed rational
score in all three cases. The comparison conclusion of the lemma
therefore implies that, for every $x\in\mathfrak X$,
\begin{equation}
\label{eq:final-candidate-comparison}
 \begin{cases}
 Q_p(x^{\mathrm{out}})
 \geq Q_p(x)/(1+\delta),
 & 0<p<1,\\[2mm]
 Q_p(x^{\mathrm{out}})
 \leq(1+\delta)Q_p(x),
 & p<0,\\[2mm]
 \Lambda(x^{\mathrm{out}})
 \geq\Lambda(x)-n\delta,
 & p=0.
 \end{cases}
\end{equation}
These bounds will be used in the approximation analysis.

\subsection{Approximation analysis}

Put
$
 \beta:=1-6\delta.
$
Since $\delta\leq1/10$, we have $0<\beta<1$ and
$
 1-4\delta\geq\beta.
$

Let $x^*$ be the canonical optimal allocation, and let $Z^*$ be the
integer configuration solution induced by $x^*$ under the correct
threshold vector, as constructed in
Lemma~\ref{lem:configuration-representation}. For every $k\in[K]$ and
$i\in A_k$, let
$\widetilde u_i^*$ denote its represented utility in $Z^*$. Equation~\eqref{eq:configuration-represented-utility} gives $\widetilde u_i^*
 \geq
 (1-4\delta)u_k(x_i^*)
 \geq
 \beta u_k(x_i^*)$
 whenever $u_k(x_i^*)>0$. When $0<p<1$ and
$u_k(x_i^*)=0$, the corresponding represented utility is also zero,
so the same inequality holds. Consequently,
\begin{equation}
\label{eq:representation-beta}
 \widetilde u_i^*
 \geq
 \beta u_k(x_i^*)
 \qquad k\in[K],\ i\in A_k.
\end{equation}

Apply the coefficient-loss analysis to $Z=Z^*$. For the correct
threshold vector, let $\mathbf J^\circ$ be the tuple guaranteed by the
first conclusion of Lemma~\ref{lem:assignment-rounding}, and let
$Z^{\mathrm{corr}}$ be the resulting rounded integral vector
$\overline Z$. Let $x^{\mathrm{corr}}\in\mathfrak X$ be its reconstruction
from Lemma~\ref{lem:compressed-reconstruction}. For every $k\in[K]$ and
$i\in A_k$, let
$\widetilde u_i^{\mathrm{corr}}$ denote its represented utility in
$Z^{\mathrm{corr}}$.

If $i$ is represented by a regular configuration, equation~\eqref{eq:reconstructed-regular-utility} gives
 $ u_k(x_i^{\mathrm{corr}})
 \geq
 \beta\widetilde u_i^{\mathrm{corr}}.$ 
If $i$ is represented by a singleton pair $(k,j)$, then
equation~\eqref{eq:reconstructed-singleton-utility} gives
 $ u_k(x_i^{\mathrm{corr}})
 \geq
 v_{k,j}
 =
 \widetilde u_i^{\mathrm{corr}}
 \geq
 \beta\widetilde u_i^{\mathrm{corr}},$ 
where the last inequality uses $0<\beta\leq1$.

Finally, a zero configuration has represented utility zero, and the
completion step can only increase the actual utility of its agent.
Therefore,
\begin{equation}
\label{eq:reconstruction-beta}
 u_k(x_i^{\mathrm{corr}})
 \geq
 \beta\widetilde u_i^{\mathrm{corr}}
 \qquad k\in[K],\ i\in A_k.
\end{equation}

\medskip
\noindent\textbf{The case $0<p<1$.}
If $Q_p(x^*)=0$, then the optimal welfare is zero and the approximation
claim is immediate. Hence assume $Q_p(x^*)>0$.
 Raising~\eqref{eq:representation-beta} to the positive power $p$ and
summing over the agents gives $Q_p^{\mathrm{conf}}(Z^*)
 \geq
 \beta^p Q_p(x^*).$
 Similarly, \eqref{eq:reconstruction-beta} gives
 $ Q_p(x^{\mathrm{corr}})
 \geq
 \beta^p Q_p^{\mathrm{conf}}(Z^{\mathrm{corr}}).$ 
Applying, in order,
\eqref{eq:final-candidate-comparison},
\eqref{eq:reconstruction-beta},
\eqref{eq:positive-configuration-comparison}, and
\eqref{eq:representation-beta} yields
\begin{align*}
 Q_p(x^{\mathrm{out}})
 &\geq
 \frac{1}{1+\delta}Q_p(x^{\mathrm{corr}})
 \geq
 \frac{\beta^p}{1+\delta}
 Q_p^{\mathrm{conf}}(Z^{\mathrm{corr}})
 \geq
 \frac{\beta^p}{(1+\delta)^2}
 Q_p^{\mathrm{conf}}(Z^*)
 \geq
 \frac{\beta^{2p}}{(1+\delta)^2}
 Q_p(x^*)
 .
\end{align*}
Therefore,
\[
 \frac{\PSW_p(x^{\mathrm{out}})}
 {\PSW_p(x^*)}
 =
 \left(
 \frac{Q_p(x^{\mathrm{out}})}
 {Q_p(x^*)}
 \right)^{1/p}
 \geq
 \beta^2(1+\delta)^{-2/p}.
\]

For every $c>0$, Bernoulli's inequality gives $(1+\delta)^{-c}\geq1-c\delta.$
 Moreover, $ \beta^2=(1-6\delta)^2\geq1-12\delta.$
 Since both $\beta^2$ and $(1+\delta)^{-2/p}$ belong to $[0,1]$, it
follows that $\beta^2(1+\delta)^{-2/p}
 \geq
 1-\left(12+\frac{2}{p}\right)\delta.$ 
Hence
\begin{equation}
\label{eq:positive-final-ratio}
 \frac{\PSW_p(x^{\mathrm{out}})}
 {\PSW_p(x^*)}
 \geq
 1-\left(12+\frac{2}{p}\right)\delta.
\end{equation}

\medskip
\noindent\textbf{The case $p<0$.}
Put $a:=|p|$, so that $p=-a$. The flow test in
Subsection~\ref{subsec:transformed-objectives-positivity} guarantees the
existence of a positive allocation and hence a positive canonical
optimum. Since threshold vectors with a zero coordinate are discarded
when $p\leq0$, every reconstructed candidate used here is also positive.

Because the exponent $p=-a$ is negative,
\eqref{eq:representation-beta} implies $ Q_p^{\mathrm{conf}}(Z^*)
 \leq
 \beta^{-a}Q_p(x^*),$ 
and \eqref{eq:reconstruction-beta} implies $Q_p(x^{\mathrm{corr}})
 \leq
 \beta^{-a}Q_p^{\mathrm{conf}}(Z^{\mathrm{corr}}).$
 Applying, in order,
\eqref{eq:final-candidate-comparison},
\eqref{eq:reconstruction-beta},
\eqref{eq:negative-configuration-comparison}, and
\eqref{eq:representation-beta}, we obtain
\begin{align*}
 Q_p(x^{\mathrm{out}})
 &\leq
 (1+\delta)Q_p(x^{\mathrm{corr}})
 \leq
 (1+\delta)\beta^{-a}
 Q_p^{\mathrm{conf}}(Z^{\mathrm{corr}})
 \leq
 (1+\delta)^2\beta^{-a}
 Q_p^{\mathrm{conf}}(Z^*)
 \leq
 (1+\delta)^2\beta^{-2a}
 Q_p(x^*).
\end{align*}

Raising this inequality to the power $-1/a$ reverses its direction.
Therefore,
\[
 \frac{\PSW_p(x^{\mathrm{out}})}
 {\PSW_p(x^*)}
 \geq
 \beta^2(1+\delta)^{-2/a}.
\]
As in the positive case, $\beta^2(1+\delta)^{-2/a}
 \geq
 1-\left(12+\frac{2}{a}\right)\delta.$
 Since $a=|p|$, we conclude that
\begin{equation}
\label{eq:negative-final-ratio}
 \frac{\PSW_p(x^{\mathrm{out}})}
 {\PSW_p(x^*)}
 \geq
 1-\left(12+\frac{2}{|p|}\right)\delta.
\end{equation}

\medskip
\noindent\textbf{The case $p=0$.}
The same positivity argument applies when $p=0$.

Taking logarithms in \eqref{eq:representation-beta} gives $ \Lambda^{\mathrm{conf}}(Z^*)
 \geq
 \Lambda(x^*)+n\log\beta.$
 Similarly,~\eqref{eq:reconstruction-beta} gives $ \Lambda(x^{\mathrm{corr}})
 \geq
 \Lambda^{\mathrm{conf}}(Z^{\mathrm{corr}})+n\log\beta.$
 Applying, in order,
\eqref{eq:final-candidate-comparison},
\eqref{eq:reconstruction-beta},
\eqref{eq:nash-configuration-comparison}, and
\eqref{eq:representation-beta} gives
\begin{align*}
 \Lambda(x^{\mathrm{out}})
 &\geq
 \Lambda(x^{\mathrm{corr}})-n\delta
 \geq
 \Lambda^{\mathrm{conf}}(Z^{\mathrm{corr}})
 +n\log\beta-n\delta
 \geq
 \Lambda^{\mathrm{conf}}(Z^*)
 +n\log\beta-2n\delta
\geq
 \Lambda(x^*)
 +2n\log\beta-2n\delta.
\end{align*}
Exponentiating after dividing by $n$ yields $ \frac{\PSW_0(x^{\mathrm{out}})}
 {\PSW_0(x^*)}
 \geq
 \beta^2e^{-2\delta}.$
 Since $e^{-2\delta}\geq1-2\delta$ and
$\beta^2\geq1-12\delta$, we obtain
\begin{equation}
\label{eq:nash-final-ratio}
 \frac{\PSW_0(x^{\mathrm{out}})}
 {\PSW_0(x^*)}
 \geq
 \beta^2e^{-2\delta}
 \geq
 1-14\delta.
\end{equation}

By the choice of $\delta$ in~\eqref{eq:constant-agent-delta}, equations
\eqref{eq:positive-final-ratio},
\eqref{eq:negative-final-ratio}, and
\eqref{eq:nash-final-ratio} then imply
\[
 \PSW_p(x^{\mathrm{out}})
 \geq
 (1-\varepsilon)\PSW_p(x^*)
 =
 (1-\varepsilon)\OPT_p.
\]

\subsection{Running time}

Let $L$ denote the binary encoding length of the instance. For fixed
$K$, $p$, and $\varepsilon$, the parameters $\delta$, $B_\delta$,
$D$, $F$, and $d$ are constants. If $U_k>0$, equation~\eqref{eq:Tk} gives $ |\mathcal T_k|
 =
 2+\left\lceil
 \log_{1+\delta}
 \left(\frac{n_kU_k}{u_k^{\min}}\right)
 \right\rceil
 =
 O_\delta(L).$
 If $U_k=0$, then $|\mathcal T_k|=1$. Hence the number of threshold
vectors is $\prod_{k=1}^K|\mathcal T_k|
 =
 O_\delta(L^K),$
 which is polynomial because $K$ is fixed.

For each threshold vector, the rounded values in
\eqref{eq:geometric-rounding} can be computed in polynomial time using
exact comparisons with integer powers of the fixed rational base
$1+\delta$. The exponents have magnitude $O_\delta(L)$, and the resulting
rational numbers have polynomial encoding length. By~\eqref{eq:retained},
the number $|\mathcal C|$ of retained regular configurations is
$O_{K,\delta}(1)$.

For every threshold vector, Section~\ref{sec:solving-configuration-formulation} enumerates at most
$
 (q+1)^{Kd}
$
support tuples. Each resulting restricted mixed-integer program has
$O(Kq+|\mathcal C|+K)$ variables and constraints, polynomially bounded
coefficient lengths, and at most $ |\mathcal C|+K+Kd=O_{K,\delta}(1)$
 integer variables. It can therefore be solved in polynomial time by
fixed-dimensional mixed-integer programming. The basic feasible
solution and the integral solution of the reduced system~\eqref{eq:reduced-assignment-system} can also
be computed in polynomial time; by Claim~\ref{clm:fractional-assignment-bound}, that system contains at
most $F=2D=O_{K,\delta}(1)$ fractional coordinates.

The reconstruction algorithms in Section~\ref{sec:compressed-reconstruction} produce
$O_{K,\delta}(q)$ assignment batches. Thus, the reconstructed
allocation can be computed in polynomial time. By
Lemma~\ref{lem:weighted-rational-comparison}, the same holds for its
directed rational objective approximation, without expanding the
binary-encoded agent or good multiplicities.

Consequently, the total running time is $\left(\prod_{k=1}^K|\mathcal T_k|\right)
 (q+1)^{Kd}
 \operatorname{poly}_{K,p,\delta}(L)
 =
 \operatorname{poly}_{K,p,\varepsilon}(L),$
 where $q\leq L$. Since the enumeration contains a correct threshold
vector and the tuple $\mathbf J^\circ$ guaranteed by
Lemma~\ref{lem:assignment-rounding}, the approximation analysis applies
to the best candidate returned by the algorithm.
\end{proof}

\section{Conclusion}
The exact algorithms and hardness results establish sharp item-type
boundaries for computing an optimal allocation: hardness starts at two
item types for every fixed finite rational $p<1$, $p\neq0$, and at three
for egalitarian welfare. Nash welfare is NP-hard with three item types;
its complexity with two item types remains unsettled by our results,
already for explicitly listed agents and binary item multiplicities.
The hardness results also apply to rational-threshold decision, whereas
the polynomial exact-allocation results need not: even the one-item-type
case $p=-1/2$ contains \textsc{Sqrt-Sum} as a threshold problem.

For every fixed number of item types, we give an FPTAS for every fixed
$p\in\mathbb Q\cup\{-\infty\}$. For a fixed number of agent types and
an unrestricted number of item types, we give a PTAS for every fixed
finite rational $p<1$. Both algorithms run in the compact input model and
return compressed allocations. The one-agent-type strong-hardness
reduction and its quantitative gap rule out an FPTAS in the latter
setting unless $\mathsf{P}=\mathsf{NP}$.

The exact algorithms for one item type and for egalitarian welfare with
two item types follow classical resource-allocation and scheduling
methods. Our proofs make their compact-input implementation and output
explicit. Appendix~\ref{sec:exact-egalitarian-appendix} similarly records
the known exact egalitarian algorithm when both numbers of types are
fixed, as a consequence of Goemans and Rothvo{\ss}~\cite{GoemansR20}.

Our work leaves several important problems open for future research. The first is to complete the picture of approximability by settling the complexity of maximizing Nash welfare with two item types. A second natural question is whether the problem can be solved exactly in polynomial time when both the number of agent types and the number of item types are constant; so far, this has been answered affirmatively only for egalitarian welfare. Finally, a promising direction is to study the computation of fair allocations under standard fairness criteria, including envy-freeness (EF) and its relaxations EF1 and EFX, proportionality (PROP) and PROP1, and the maximin share (MMS). In particular, it would be interesting to determine whether computing an allocation that satisfies such a criterion is tractable or $\np$-hard when the number of item types is small, especially when fairness is combined with efficiency requirements such as Pareto optimality or maximum Nash welfare. For instance, \cite{azizchore23} gives an algorithm for computing an EF allocation with two item types whose running time is polynomial in $n$ and $m$; an algorithm that is polynomial in $n$ and $\log m$, and hence in the input size when item multiplicities are encoded in binary, would be highly desirable.

\paragraph{Use of generative AI.}
The authors used ChatGPT iteratively to refine and improve an initial draft of this paper. The authors have verified all proofs and accept full responsibility for the accuracy and correctness of the results presented in the paper.

\bibliographystyle{plain}
	\bibliography{ref}

\appendix

\section{Proofs of the Technical Lemmas in Sections~\ref{sec:notation} and~\ref{sec:hardness}}\label{sec:tech-lemmas}

\subsection{Proof of Lemma~\ref{lem:directed-rational-approximation}}
\begin{proof}
The case $p>0$ and $v=0$ is immediate, so suppose that $v>0$, and let
$B$ be its encoding length. First let $p=a/d\neq0$, where the nonzero
integer $a$ and the positive integer $d$ are fixed. We have
$2^{-B}\leq v\leq2^B$. Thus, for a sufficiently large integer constant
$c$ depending only on $p$, the dyadic numbers
$l_0:=2^{-cB}$ and $u_0:=2^{cB}$ satisfy
$0<l_0\leq v^p\leq u_0$.
Set $\xi:=\eta/(1+\eta)$. Starting with the interval
$[l_0,u_0]$, repeatedly bisect the current interval and retain
a half containing $v^p$, until the interval has width at most
$\xi l_0$. Each comparison is exact: for a positive rational
midpoint $z$, compare $z^d$ with the rational number $v^a$, also when
$a<0$. After $O_p(B+\log(1/\eta))$ iterations we obtain $[l,u]$ with $ l\leq v^p\leq u,$ 
 $ u-l\leq\xi l_0\leq\xi v^p.$ 
For $p>0$, output $l$. Then
$l\geq(1-\xi)v^p=v^p/(1+\eta)$.
For $p<0$, output $-u$. Then
$u\leq(1+\xi)v^p\leq(1+\eta)v^p$.
The endpoints and the intermediate powers have encoding length
polynomial in $B+\log(1/\eta)$. The exact stopping test also uses the
encoding of $\eta$, so the complete computation has the claimed bit
complexity.

For $p=0$, write $v=2^e y$ with $e\in\mathbb Z$, $1\leq y<2$, and
$|e|=O(B)$. For a rational $t\in[1,2]$, put
$z=(t-1)/(t+1)\in[0,1/3]$. The identity $ \log t=2\sum_{\ell=0}^{\infty}\frac{z^{2\ell+1}}{2\ell+1}$
 shows that the rational partial sum $ P_N(t):=2\sum_{\ell=0}^{N-1}\frac{z^{2\ell+1}}{2\ell+1}$
 satisfies, for every integer $N\geq1$,
\begin{align*}
 0\leq\log t-P_N(t)
 =2\sum_{\ell=N}^{\infty}\frac{z^{2\ell+1}}{2\ell+1}
 \leq\frac{2}{2N+1}
       \sum_{\ell=N}^{\infty}z^{2\ell+1}
 =\frac{2z^{2N+1}}{(2N+1)(1-z^2)}
 \leq\frac{3}{4(2N+1)}\,9^{-N}
 \leq9^{-N}.
\end{align*}
Put $h:=9^{-N}$. The preceding bounds give
\[
 \log y\in[P_N(y),P_N(y)+h],
 \qquad
 \log2\in[P_N(2),P_N(2)+h].
\]
Since $\log v=\log y+e\log2$, it follows that
\(
 \log v\in[L_N,U_N],
\)
where
\[
 L_N:=P_N(y)+eP_N(2)+\min\{e,0\}h,
 \qquad
 U_N:=P_N(y)+eP_N(2)+(1+\max\{e,0\})h.
\]
Indeed, if $e\geq0$, multiplication by $e$ preserves
the endpoint order; if $e<0$, it reverses that order.
Both endpoints are rational, and $ U_N-L_N=(1+|e|)9^{-N}.$ 
Choose $N$ so that this width is at most $\eta/2$,
and output the lower endpoint $L_N$. Since
\[
 \log(1+\eta)=\int_0^\eta\frac{dt}{1+t}
 \geq\frac{\eta}{1+\eta}\geq\frac\eta2,
\]
this output satisfies~\eqref{eq:prelim-rational-logarithmic}.
Only $O(1+\log(B+1)+\log(1/\eta))$ terms are required. Their exact
rational evaluation and the test for choosing $N$ take time polynomial
in the encoding lengths of $v$ and $\eta$. The output has polynomial
encoding length as well.
\end{proof}

\subsection{Proof of Lemma~\ref{lem:two-type-positive-support}}

\begin{proof}
For $y=0$, the function in~\eqref{eq:two-type-positive-support} is $ F(x,0)=\frac{a^{1-p}x^p}{p}-x.$
 For real $x>0$, its derivative is $(a/x)^{1-p}-1$.
Consequently, its unique maximum on $[0,\infty)$ is attained at
$x=a$, with value $\gamma a$.

Now let $y\geq1$. The partial derivative with respect to $x$ is $ \alpha(\alpha x+\beta y)^{p-1}-1
 \leq \alpha\beta^{p-1}-1
 =\left(\frac aH\right)^{\gamma}-1<0,$
 where $H>a$. Thus, for fixed $y\geq1$, the maximum is attained
uniquely at $x=0$. At $x=0$ the function is
$Hy^p/p-Cy$, whose derivative on $[1,\infty)$ is at most $ H-C=(1-p)(a-C)<0.$ 
Its maximum on that interval is attained uniquely at $y=1$ and
equals $H/p-C=\gamma a$. This proves the inequality and its
equality cases.

It remains to bound the loss at the other integer pairs. Suppose that
$y=0$ and $x>0$, with $x\neq a$. Every real number $t$ between $x$ and
$a$ satisfies $1\leq t\leq Z$. Hence
\[
 -\frac{\partial^2F}{\partial x^2}(t,0)
 =(1-p)a^{1-p}t^{p-2}
 =(1-p)\frac{a^{1-p}t^p}{t^2}
 \geq\frac{1-p}{Z^2},
\]
because $a^{1-p}\geq1$, $t^p\geq1$, and $t^2\leq Z^2$.
Since $\frac{\partial F}{\partial x}(a,0)=0$, Taylor's theorem gives
\begin{equation}\label{e1-1}
 F(a,0)-F(x,0)
 \geq\frac{1-p}{2Z^2}(x-a)^2
 \geq\frac{1-p}{2Z^2},
\end{equation}
where the last inequality uses that $x$ and $a$ are distinct integers.
At $(0,0)$, the loss is $\gamma a\geq1-p$.

Suppose that $y\geq1$ and $x\geq1$. For fixed $y$, the preceding partial derivative bound gives
\[
 \frac{\partial F}{\partial x}(t,y)
 \leq \left(\frac aH\right)^\gamma-1
 \qquad\text{for every }t\in[0,x].
\]
Since $F(0,y)\leq\gamma a$, integrating this inequality yields
\begin{align*}
 \gamma a-F(x,y)
 &\geq F(0,y)-F(x,y)=-\int_0^x\frac{\partial F}{\partial x}(t,y)\,dt\geq x\left[1-\left(\frac aH\right)^\gamma\right]\geq 1-\left(\frac aH\right)^\gamma,
\end{align*}
where the last step uses $x\geq1$ and $H>a$.

For every $z\in(0,1)$ and $\gamma>0$, $1-z^\gamma\geq\min\{1,\gamma\}(1-z).$ Indeed, if $\gamma\geq1$, then $z^\gamma\leq z$; if
$0<\gamma<1$, concavity gives
$z^\gamma\leq1+\gamma(z-1)$.
Applying this inequality with $z=a/H$, we obtain
\begin{align}\label{e1-2}
 \gamma a-F(x,y)
 &\geq \min\{1,\gamma\}\left(1-\frac aH\right)=\min\{1,\gamma\}\frac{p(C-a)}{H}=\min\{p,1-p\}\frac{C-a}{H}\geq\frac{\min\{p,1-p\}}{Z}.
\end{align}
Here we used $H-a=p(C-a)$, $\gamma=(1-p)/p$,
$C-a\geq1$, and $H\leq Z$.
Finally, suppose that $x=0$ and $y\geq2$. Since $ F(0,t)=\frac{Ht^p}{p}-Ct,$
 for every $t\geq1$ we have $\frac{\partial F}{\partial t}(0,t)
 =Ht^{p-1}-C
 \leq H-C
 =-(1-p)(C-a),$
 where we used $p<1$ and hence $t^{p-1}\leq1$.
Since $F(0,1)=\gamma a$, integration gives
\begin{align}\label{e1-3}
 \gamma a-F(0,y)
 &=F(0,1)-F(0,y)=-\int_1^y\frac{\partial F}{\partial y}(0,t)\,dt\geq(y-1)(1-p)(C-a)\geq1-p,
\end{align}
because $y-1\geq1$ and $C-a\geq1$. Equations~\eqref{e1-1},~\eqref{e1-2} and~\eqref{e1-3} imply
\eqref{eq:two-type-positive-gap}.
\end{proof}

\subsection{Proof of Lemma~\ref{lem:two-type-negative-support}}
\begin{proof}
The assumptions imply $0<H<a$. For $y=0$ and real $x>0$,
\[
 G(x,0)=\frac{a^{r+1}x^{-r}}r+x,
 \qquad
 \frac{\partial G}{\partial x}(x,0)
 =1-\left(\frac ax\right)^{r+1}.
\]
This function has its unique minimum at $x=a$, with value
$\gamma a$.

For $y\geq1$ and $x\geq0$, the partial derivative with respect
to $x$ satisfies
\[
 1-\alpha(\alpha x+\beta y)^{-r-1}
 \geq1-\alpha\beta^{-r-1}
 =1-\left(\frac Ha\right)^\gamma>0.
\]
Thus the minimum for fixed $y\geq1$ is attained uniquely at
$x=0$. At $x=0$ the function is $Hy^{-r}/r+Cy$, whose
derivative on $[1,\infty)$ is at least $ C-H=(r+1)(C-a)>0.$ 
Its minimum is therefore attained uniquely at $y=1$, with
value $H/r+C=\gamma a$.

For the quantitative bound, first suppose that $y=0$ and
$x\neq a$. On the interval between the positive integers $x$
and $a$,
\[
 \frac{\partial^2G}{\partial t^2}(t,0)
 =(r+1)a^{r+1}t^{-r-2}
 \geq Z^{-\lceil r\rceil-2}.
\]
Taylor's theorem gives an excess of at least
$1/(2Z^{\lceil r\rceil+2})$.
If $y\geq1$ and $x\geq1$, increasing $x$ from zero causes an
excess of at least $1-\left(\frac Ha\right)^\gamma
 \geq1-\frac Ha
 =\frac{r(C-a)}a\geq\frac rZ,$ 
because $\gamma>1$. If $x=0$ and $y\geq2$, increasing $y$
from one causes an excess of at least
$(r+1)(C-a)\geq1$. This proves
\eqref{eq:two-type-negative-gap}.
\end{proof}

\subsection{Proof of Lemma~\ref{lem:two-type-rational-between-powers}}
\begin{proof}
Put $R:=\frac{R_0+R_1}{2},$  $\Delta:=R_1-R_0,$ and $ B:=1+\lceil R_1\rceil.$ 
We will compute a rational approximation to the positive root
$\alpha:=R^{1/s}$. Since $B\geq1$ and $B^s\geq B>R$, we have
$0<\alpha<B$. The root $\alpha$ is used only in the analysis;
the algorithm does not need to evaluate it.

Start with the rational interval $[\ell,u]=[0,B]$, which contains
$\alpha$. At each step, compute its midpoint $ m:=\frac{\ell+u}{2}$ and compare $m^s$ with $R$. If $m^s<R$, replace $\ell$ by $m$;
if $m^s>R$, replace $u$ by $m$. If $m^s=R$, return $z=m$,
which already satisfies the required inequalities.
Because $t\mapsto t^s$ is strictly increasing on $[0,\infty)$,
each update retains $\alpha$ and halves the interval's width.

All comparisons are exact rational comparisons. Specifically,
if $m=a/b$ and $R=c/d$, with positive denominators, then comparing
$m^s$ with $R$ is equivalent to comparing the integers
$da^s$ and $cb^s$.

Continue until $u-\ell\leq
 \eta:=\frac{\Delta}{2sB^{s-1}},$
 and return the rational midpoint $z=(\ell+u)/2$.
Since $\alpha\in[\ell,u]$, we have $ |z-\alpha|
 \leq\frac{u-\ell}{2}
 \leq\frac{\Delta}{4sB^{s-1}}.$ 
The derivative of $t^s$ on $[0,B]$ is at most $sB^{s-1}$.
The mean value theorem therefore gives $|z^s-R|
 =|z^s-\alpha^s|
 \leq sB^{s-1}|z-\alpha|
 \leq\frac{\Delta}{4}.$
 Since $R$ is the midpoint of $[R_0,R_1]$, it follows that $R_0
 <R-\frac{\Delta}{4}
 \leq z^s
 \leq R+\frac{\Delta}{4}
 <R_1.$

Finally, let $L$ be the total binary encoding length of $R_0$
and $R_1$. The integer $B$ has $O(L)$ bits, and the positive
rational number $\Delta$ satisfies $\Delta\geq2^{-O(L)}$.
Thus the number of bisection steps is at most
\[
 \left\lceil
 \log_2\left(\frac{B}{\eta}\right)
 \right\rceil
 =
 \left\lceil
 \log_2\left(\frac{2sB^s}{\Delta}\right)
 \right\rceil
 =O_s(L).
\]
Every endpoint and midpoint is a dyadic rational with
$O_s(L)$ bits. Since $s$ is fixed, the integer powers and
comparisons at each step take polynomial time. This proves
both the running-time and output-length bounds.
\end{proof}

\section{An Illustrative Example for the proof of Lemma~\ref{lem:fixed-item-universal-partition}}\label{sec:example-fixed-item-universal-partition}
Let $N=27$ and $s=3$. We first construct the partition independently
of the histogram. At each nonterminal level, choose
$a_\ell\in\{2,3\}$ with the same parity as $N_\ell$, and set
$N_{\ell+1}=(N_\ell-a_\ell)/2$. This gives
\[
\begin{array}{c|c|c|l}
 \text{Level }\ell & N_\ell & a_\ell & \text{Parts created}\\ \hline
 0 & 27 & 3 & 1,1,1\\
 1 & 12 & 2 & 2,2\\
 2 & 5  & 3 & 4,4,4\\
 3 & 1  & 1 & 8
\end{array}
\]
The construction terminates at level $3$, since $N_3=1<s$.
Thus the fixed partition is $27=1+1+1+2+2+4+4+4+8.$

Now consider the histogram $h=(13,8,6)$, with categories labeled
$A,B,C$. We assign categories to the parts as follows. At level
$\ell$, each unit of the current histogram represents $2^\ell$
units of the original histogram.

\begin{enumerate}
\item
At level $0$, the current histogram is $(13,8,6)$ and $a_0=3$.
Only the first entry is odd. Remove one unit from category $A$,
leaving $(12,8,6)$. To remove exactly three units, remove a further
pair from category $B$, leaving $(12,6,6)$. Assign the three
size-$1$ parts to $A,B,B$, respectively. Halving the remaining
histogram gives $ (12,6,6)/2=(6,3,3).$

\item
At level $1$, the current histogram is $(6,3,3)$ and $a_1=2$.
Remove one unit from each of the odd entries, corresponding to
categories $B$ and $C$. Assign the two size-$2$ parts to these
categories. Halving the remaining histogram gives $ (6,2,2)/2=(3,1,1).$

\item
At level $2$, the current histogram is $(3,1,1)$ and $a_2=3$.
All three entries are odd, so remove one unit from each category.
Assign the three size-$4$ parts to $A,B,C$, respectively.
Halving the remaining histogram gives $ (2,0,0)/2=(1,0,0).$

\item
At level $3$, the current histogram is $(1,0,0)$.
This is the terminal level, so assign the single size-$8$ part
to category $A$.
\end{enumerate}

The parts assigned to each category recover its original
multiplicity:
\[
\begin{aligned}
 A &: 1+4+8=13,\\
 B &: 1+1+2+4=8,\\
 C &: 2+4=6.
\end{aligned}
\]
The part sizes depend only on $N$ and $s$, whereas their category
labels depend on the histogram. The proof shows that the same
fixed partition can represent every histogram on at most three
categories with total multiplicity $27$.

\section{Exact Egalitarian Welfare with Few Agent and Item Types}
\label{sec:exact-egalitarian-appendix}

For completeness, we give the proof of
Proposition~\ref{prop:known-exact-egalitarian}. The exact polynomial-time
algorithm is a direct corollary of Theorem~12 of Goemans and
Rothvo{\ss}~\cite{GoemansR20}; the role of the following construction is to
express the allocation constraints and a prescribed minimum utility in the
form required by their theorem. Throughout this appendix, both the number
$K$ of agent types and the number $q$ of item types are fixed, and both
agent and item multiplicities are encoded in binary.

For a set $X\subseteq\mathbb Z^d$, its \emph{integer cone} is
\[
 \operatorname{int.cone}(X)
 :=
 \left\{
  \sum_{x\in X}\lambda_x x:
  \lambda_x\in\mathbb Z_{\geq0},\quad
  \bigl|\{x\in X:\lambda_x>0\}\bigr|<\infty
 \right\}.
\]
The empty sum is permitted. For a rational polyhedron $P$, let
$\operatorname{enc}(P)$ denote the binary encoding length of an inequality
description of $P$. We use the following bounded-polytope form of the
integer-cone theorem.

\begin{theorem}[Goemans--Rothvo{\ss}, Theorem~12~\cite{GoemansR20}]
\label{thm:GR-union}
Fix positive integers $d$ and $h$. Given bounded rational polytopes
$P_1,\ldots,P_h\subseteq\mathbb R^d$ and a rational polyhedron
$Q\subseteq\mathbb R^d$, one can decide whether
\[
 \operatorname{int.cone}
 \left(
  \bigcup_{r=1}^h(P_r\cap\mathbb Z^d)
 \right)
 \cap Q
 \neq\varnothing
\]
in time
\[
 \left(\sum_{r=1}^h\operatorname{enc}(P_r)\right)^{2^{O(d+h)}}
 \operatorname{enc}(Q)^{O(1)}.
\]
In the affirmative case, the algorithm returns integer points
$x\in P_r\cap\mathbb Z^d$ and nonnegative integral multiplicities
$\lambda_{r,x}$ such that
\[
 \sum_{r=1}^h\sum_{x\in P_r\cap\mathbb Z^d}\lambda_{r,x}x\in Q.
\]
The returned multiplicities have polynomial binary encoding length, and
the number of pairs $(r,x)$ with $\lambda_{r,x}>0$ is at most
$2^{2(d+h)+1}$.
\end{theorem}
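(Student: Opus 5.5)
The plan is to reduce the union of polytopes to a single polytope in a slightly higher dimension and then invoke the single-polytope structure theorem of Goemans and Rothvo{\ss}, together with fixed-dimensional integer programming~\cite{len:j:integer-linear-programming}.

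\textbf{Step 1: lifting.} Set $D:=d+h$ and embed each $P_r$ as $P_r':=P_r\times\{\mathbf e_r\}\subseteq\mathbb R^{D}$, where $\mathbf e_r\in\mathbb R^h$ is a unit vector. Let $P:=\operatorname{conv}\bigl(\bigcup_r P_r'\bigr)$; this is a bounded rational polytope with $\operatorname{enc}(P)$ polynomial in $\sum_r\operatorname{enc}(P_r)$. I would describe $P$ by the standard disjunctive extended formulation and then project, or simply work with that extended formulation throughout.

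Next I need that $P\cap\mathbb Z^{D}=\bigcup_r(P_r'\cap\mathbb Z^{D})$. The last $h$ coordinates of any point of $P$ lie in the standard simplex. An integral point of the simplex is a unit vector $\mathbf e_r$. On the face $\{\,\cdot\,\}\times\{\mathbf e_r\}$, the polytope $P$ coincides with $P_r'$. Hence the integer points of $P$ are exactly the lifted integer points of the $P_r$. Consequently the question becomes whether $\operatorname{int.cone}(P\cap\mathbb Z^{D})\cap(Q\times\mathbb R^h_{\geq0})\neq\varnothing$. A witness for this can be projected back, with each lifted point remembering its index $r$.

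\textbf{Step 2: the structure theorem.} For a single polytope $P\subseteq\mathbb R^{D}$, one computes in time $\operatorname{enc}(P)^{2^{O(D)}}$ a set $X\subseteq P\cap\mathbb Z^{D}$ of size $\operatorname{enc}(P)^{2^{O(D)}}$ with the following property. Every vector of $\operatorname{int.cone}(P\cap\mathbb Z^{D})$ can be written as $\sum_x\lambda_x x$ such that:
\begin{itemize}
\item the support has size at most $2^{2D+1}$;
\item all points with $\lambda_x>1$ lie in $X$;
\item at most $2^{2D}$ points of the support lie outside $X$, each with multiplicity one.
\end{itemize}
I would take this as the main cited input. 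If the statement must be proved from scratch, this is the step I expect to be the real obstacle. It is the core of Goemans--Rothvo{\ss}: cover $P$ by parallelepipeds with integral vertices and apply a pigeonhole exchange argument to show that large multiplicities can be shifted onto vertices of these parallelepipeds. I would first try to reproduce that argument, using the bound that the number of integral vertices of the integer hull of $P$ is polynomial in fixed dimension.

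\textbf{Step 3: guessing and integer programming.} Enumerate all subsets $X'\subseteq X$ with $|X'|\leq2^{2D}$; there are $|X|^{2^{O(D)}}$ of them. For each $X'$, write down an integer program. It has integer multiplicity variables $\lambda_x$ for $x\in X'$, together with at most $2^{2D}$ free integer point variables $y_1,\ldots,y_t\in P\cap\mathbb Z^{D}$, each used with multiplicity one. The constraint is $\sum_{x\in X'}\lambda_x x+\sum_\ell y_\ell\in Q\times\mathbb R^h_{\geq0}$. This program has $2^{O(D)}$ integer variables, so Lenstra-type algorithms solve it in time $(\operatorname{enc}(P)+\operatorname{enc}(Q))^{2^{O(D)}}$. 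Moreover, a feasible solution with polynomial encoding length exists whenever any feasible solution exists.

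\textbf{Step 4: assembling the bounds.} Multiplying the number of guesses by the integer-programming cost gives the stated running time $\bigl(\sum_r\operatorname{enc}(P_r)\bigr)^{2^{O(d+h)}}\operatorname{enc}(Q)^{O(1)}$. For the $\operatorname{enc}(Q)$ factor to stay polynomial, I would rely on the polynomial dependence of Lenstra-type algorithms on the right-hand side. The returned points, projected back to $\mathbb R^d$ and tagged with their index $r$, form the required certificate. Its support is at most $|X'|+t\leq2^{2D+1}=2^{2(d+h)+1}$, as stated.
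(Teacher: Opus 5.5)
The paper gives no proof of this statement. It is imported verbatim from Goemans and Rothvo{\ss}~\cite{GoemansR20} and used only as a black box in Appendix~\ref{sec:exact-egalitarian-appendix}. Your proposal is therefore a derivation the paper does not attempt, and its architecture is sound.

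The lifting $P=\operatorname{conv}\bigl(\bigcup_r P_r\times\{\mathbf e_r\}\bigr)\subseteq\mathbb R^{d+h}$ does identify $P\cap\mathbb Z^{d+h}$ with the tagged integer points of the $P_r$: integral tags must be unit vectors, and the slice of $P$ at $\mathbf e_r$ is exactly $P_r\times\{\mathbf e_r\}$. Replacing $Q$ by $Q\times\mathbb R^h_{\geq0}$ is harmless. The guess-then-Lenstra step is the standard way to turn the structure theorem into an algorithm.

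This route explains why $d+h$ appears both in the exponent and in the support bound $2^{2(d+h)+1}$. What it does not buy is independence from the hard part. The single-polytope structure theorem carries all the weight, so you are deriving one Goemans--Rothvo{\ss} statement from another rather than proving it.

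Three details need tightening for the stated bounds to come out.

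First, the claim that $\operatorname{enc}(P)$ is polynomial should be justified by the fixed dimension. If $P_r$ has $m_r$ inequalities, it has at most $m_r^{d}$ vertices. So $P$ has $V\leq\sum_r m_r^{d}$ vertices and at most $(V+1)^{d+h}$ facets, which can be found by enumeration. Thus $\operatorname{enc}(P)\leq\bigl(\sum_r\operatorname{enc}(P_r)\bigr)^{O(d(d+h))}$, which is absorbed into $2^{O(d+h)}$. Your fallback of working in the disjunctive extended formulation throughout does not prove the statement. The structure theorem would then be applied in dimension about $(h+1)(d+1)$, giving an exponent $2^{O(dh)}$ and a larger support bound.

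Second, your Step~2 as written bounds only the total support by $2^{2D+1}$ and the part outside $X$ by $2^{2D}$. Up to $2^{2D+1}$ support points may then lie in $X$, so enumerating only $|X'|\leq 2^{2D}$ in Step~3 is not justified by what you stated. There are two fixes:
\begin{itemize}
\item use the form of the structure theorem that bounds the support inside and outside $X$ separately by $2^{2D}$ (this is how the total $2^{2D+1}$ arises); or
\item enumerate pairs $(X',t)$ with $t\leq 2^{2D}$ and $|X'|+t\leq 2^{2D+1}$.
\end{itemize}

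Third, the factor $\operatorname{enc}(Q)^{O(1)}$ is not a matter of right-hand-side dependence. The constraint matrix of $Q$, applied to the guessed points, enters the integer program. Your intermediate bound $(\operatorname{enc}(P)+\operatorname{enc}(Q))^{2^{O(D)}}$ would not give the stated factor. The correct reason is that Lenstra--Kannan runs in time $n^{O(n)}L^{O(1)}$, with the exponent of the input length $L$ independent of $n$. With $n=2^{O(D)}$ integer variables, this is a factor depending only on $d+h$ times a fixed polynomial in $\operatorname{enc}(P)+\operatorname{enc}(Q)$. The $|X|^{2^{O(D)}}$ guesses contribute only powers of $\operatorname{enc}(P)$.
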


\begin{proof}[Proof of Proposition~\ref{prop:known-exact-egalitarian}]
We use the notation and compressed representation from
Section~\ref{sec:notation}. In particular, $n\geq1$. Delete agent types
of multiplicity zero, so every retained $n_k$ is positive. If $m=0$, every
agent receives the empty bundle and the optimum egalitarian welfare is
zero. Listing the empty bundle with multiplicity $n_k$ for each agent
type $k$ gives the required compressed allocation. Henceforth, assume
$m>0$ and delete item types of multiplicity zero; thus $q\geq1$ and
$m_j\geq1$ for every $j\in[q]$.

\medskip
\noindent\textbf{Integral utilities and the search interval.}
Write
$u_{k,j}=a_{k,j}/d_{k,j}$, where
$a_{k,j}\in\mathbb Z_{\geq0}$ and $d_{k,j}\in\mathbb Z_{\geq1}$, and
define
\begin{equation}
\label{eq:exact-esw-scaling}
 D:=\prod_{k=1}^K\prod_{j=1}^q d_{k,j},
 \qquad
 \widehat u_{k,j}:=D u_{k,j},
 \qquad
 \widehat u_k(c):=\sum_{j=1}^q\widehat u_{k,j}c_j.
\end{equation}
All scaled item utilities are nonnegative integers. The binary encoding
length of $D$ is at most one plus the sum of the binary encoding lengths
of the input denominators, so this scaling takes polynomial time and
produces numbers of polynomial encoding length. Scaling every utility by
the same positive factor $D$ multiplies the minimum utility of every
allocation by $D$ and preserves its optimality.

Put
\begin{equation}
\label{eq:exact-esw-value-bound}
 U:=\max\left\{1,\max_{k\in[K],\,j\in[q]}\widehat u_{k,j}\right\},
 \qquad
 V:=\max\left\{1,\max_{k\in[K]}
                    \sum_{j=1}^q m_j\widehat u_{k,j}\right\}.
\end{equation}
Every utility in the scaled instance is an integer between zero and $V$.
Moreover, $V\leq q m_{\max}U$, so $\log(V+1)$ is polynomially bounded
by the binary encoding length of the original instance. The optimum
scaled egalitarian welfare is therefore an integer in $\{0,\ldots,V\}$.

\medskip
\noindent\textbf{Feasibility of a minimum-utility threshold.}
Let $e_k$ be the $k$th unit vector of $\mathbb Z^K$. For an integer
$T\in\{0,\ldots,V\}$ and each agent type $k\in[K]$, define the
tagged configuration polytope
\begin{equation}
\label{eq:exact-esw-polytopes}
 P_k(T):=
 \left\{
  (c,y)\in\mathbb R^{q+K}:
  y=e_k,\quad
  0\leq c_j\leq m_j\ (j\in[q]),\quad
  \widehat u_k(c)\geq T
 \right\}.
\end{equation}
This is a bounded rational polytope, possibly empty. Its integer points
are precisely the configurations giving scaled utility at least $T$ to
one agent of type $k$, with the last $K$ coordinates recording that
agent's type. The configuration bounds allow an empty bundle when
$T=0$ and do not exclude any bundle in a feasible allocation. Define the
singleton target polyhedron
\begin{equation}
\label{eq:exact-esw-target}
 Q^{\mathrm{alloc}}:=\{(\boldsymbol m,\boldsymbol n)\}
 \subseteq\mathbb R^{q+K}.
\end{equation}
There is an allocation with scaled egalitarian welfare at least $T$ if
and only if
\begin{equation}
\label{eq:exact-esw-cone-feasibility}
 \operatorname{int.cone}
 \left(
  \bigcup_{k=1}^K(P_k(T)\cap\mathbb Z^{q+K})
 \right)
 \cap Q^{\mathrm{alloc}}
 \neq\varnothing.
\end{equation}

To prove the forward implication, let $x$ be such an allocation. For
each agent $i\in A_k$, the vector $(x_i,e_k)$ belongs to
$P_k(T)\cap\mathbb Z^{q+K}$. Summing these integer points over all
agents gives $(\boldsymbol m,\boldsymbol n)$: the first $q$ coordinates
express the item-allocation equalities, and the remaining $K$
coordinates count the agents of each type. Grouping repeated points
gives the required integer-cone representation.

Conversely, suppose~\eqref{eq:exact-esw-cone-feasibility} holds, and let
$\lambda_{k,c}\in\mathbb Z_{\geq0}$ be the multiplicity of the tagged
point $(c,e_k)$ in a corresponding decomposition. Comparing the last
$K$ coordinates with the target gives $ \sum_c\lambda_{k,c}=n_k$ for  $k\in[K]$, 
and comparing the first $q$ coordinates gives $\sum_{k=1}^K\sum_c\lambda_{k,c}c_j=m_j
 \qquad(j\in[q]).$ 
Thus, for every $k$, assigning configuration $c$ to
$\lambda_{k,c}$ agents of type $k$ assigns a bundle to every agent and
uses exactly the available multiplicity of every item type. Every used
configuration satisfies $\widehat u_k(c)\geq T$, so the resulting
allocation has scaled egalitarian welfare at least $T$. The
type--configuration pairs and their multiplicities already constitute
its compressed representation.

Apply Theorem~\ref{thm:GR-union} with dimension $d=q+K$, with $h=K$
polytopes, and with $Q=Q^{\mathrm{alloc}}$. Since $K$ and $q$ are fixed,
this decides~\eqref{eq:exact-esw-cone-feasibility} in polynomial time
and, when feasible, returns a corresponding compressed allocation.

\medskip
\noindent\textbf{Exact optimization and reconstruction.}
Threshold zero is feasible, since all utilities are nonnegative and all
items can be assigned to an arbitrary agent. Threshold feasibility is
monotone: if an allocation attains a threshold $T$, it attains every
smaller nonnegative threshold. Binary search on
$\{0,\ldots,V\}$ therefore finds the largest feasible threshold $T^*$
using $O(\log(V+1))$ integer-cone feasibility tests. Invoke the same
algorithm at $T^*$ to obtain a decomposition, omit zero multiplicities,
and merge repeated type--configuration pairs. The resulting compressed
allocation attains the exact optimum of the scaled instance, because
every attainable minimum utility is an integer. Its welfare in the
original instance is exactly $T^*/D$.

This argument also covers zero utilities. In particular, if an agent
type values all available items at zero, every positive threshold is
infeasible and the algorithm returns an optimal allocation of welfare
zero. If all item utilities are zero, the convention in
\eqref{eq:exact-esw-value-bound} gives $V=1$ and the same search returns
$T^*=0$. Item types that are worthless to every agent may be retained in
the polytopes; the target still ensures that all their copies are
allocated.

\medskip
\noindent\textbf{Bit complexity and compressed output.}
Let $L$ denote the binary encoding length of the original instance.
For each tested $T$, the coefficients in
\eqref{eq:exact-esw-polytopes} have polynomial encoding length in $L$,
as do the coordinates of the singleton target in
\eqref{eq:exact-esw-target}. Equalities can be written as pairs of
inequalities. Consequently,
\[
 \sum_{k=1}^K\operatorname{enc}(P_k(T))=\operatorname{poly}(L),
 \qquad
 \operatorname{enc}(Q^{\mathrm{alloc}})=\operatorname{poly}(L).
\]
The exponent in Theorem~\ref{thm:GR-union} depends only on $K$ and $q$,
and the number of feasibility tests is polynomial in $L$. The overall
running time is therefore $\operatorname{poly}_{K,q}(L)$.

The returned decomposition has at most
$2^{2(q+2K)+1}$ nonzero type--configuration pairs. Every configuration
coordinate lies between zero and the corresponding $m_j$, and the
type-count equalities imply $\lambda_{k,c}\leq n_k\leq n$ for every
pair. Thus every configuration and multiplicity has polynomial binary
encoding length. The algorithm outputs these pairs directly, without
expanding either the binary-encoded agent population or the item
multisets. This proves the claimed exact polynomial-time algorithm and
the polynomial size of its compressed output.
\end{proof}

\end{document}